\newtheorem*{Definition}{Definition}
\newtheorem*{Proposition}{Proposition}
\newcommand{\be}{\begin{equation}}
\newcommand{\ee}{\end{equation}}
\newcommand{\ba}{\begin{eqnarray}}
\newcommand{\ea}{\end{eqnarray}}
\title{{\sf Quantum Field Theory of}\\
{\sf Black Hole Perturbations with Backreaction:}\\
{\sf I. General framework}} 
\author{
{\sf T. Thiemann}$^1$\thanks{{\sf 
thomas.thiemann@gravity.fau.de}}\\
\\
{\sf $^1$ Institute for Quantum Gravity, FAU Erlangen -- N\"urnberg,}\\
{\sf Staudtstr. 7, 91058 Erlangen, Germany}\\
}
\date{{\small\sf \today}}
\begin{document} 

\maketitle

{\sf

\begin{abstract}
In a seminal work, Hawking showed that natural states for free 
quantum matter fields on classical spacetimes that solve 
the spherically symmetric vacuum Einstein equations are KMS states 
of non-vanishing temperature. Although Hawking's calculation does 
not include backreaction of matter on geometry, it is more than plausible 
that the corresponding Hawking radiation leads to black hole evaporation
which is in principle observable.

Obviously, an improvement of Hawking's calculation including backreaction
is a problem of quantum gravity. Since no commonly accepted quantum field 
theory of general relativity is available yet, it has been difficult to 
reliably derive the backreaction effect. An obvious approach is to
use black hole perturbation theory of a Schwarzschild black hole 
of fixed mass and to quantise those perturbations. But it is not 
clear how to reconcile perturbation theory with gauge invariance beyond 
linear perturbations.

In a recent work we proposed a new approach to this problem that 
applies when the physical situation has an approximate symmetry, 
such as homogeneity (cosmology), spherical symmetry (Schwarzschild) or 
axial symmetry (Kerr). The idea, which is surprisingly feasible, 
is to first construct the non-perturbative physical (reduced) Hamiltonian of 
the reduced phase space of fully gauge invariant observables and only 
then to apply perturbation theory directly in terms of observables. 
The task to construct observables 
is then disentangled from perturbation theory, thus allowing to unambiguosly 
develop perturbation theory to arbitrary orders.    

In this first paper of the the series
we outline and showcase this approach for spherical symmetry
and second order in the perturbations for Einstein-Klein-Gordon-Maxwell
theory. Details and generalisation to other matter and symmetry 
and higher orders 
will appear in subsequent companion papers.
\end{abstract}

\section{Introduction}
\label{s1}

Black holes are fascinating objects. Not only are black hole binaries 
among the most important sources for gravitational radiation 
\cite{1} and are supermassive black holes good candidates for 
active galactic nuclei (AGN) connected with a rich astrophysical phenomenology
\cite{1a}, they are also the source of numerous 
debates and speculations in classical and quantum gravity \cite{2}.
Indeed, as summarised in the seminal singularity theorems 
by Penrose (Nobel prize 2020) \cite{HE}, black holes clearly indicate that 
General Relativity is an incomplete theory and must be supplemented 
by quantum considerations, thus 
resolving the classical singularity. 
For instance the classical black hole area theorem \cite{HE} combined with
quantum field theory on Schwarzschild spacetime \cite{4} leads to 
the speculation that black holes carry an intrinsic entropy measured by the 
area of the event horizon \cite{5} which hints at a deep connection between
quantum field theory, classical general relativity and thermodynamics. 
The fact that the entropy is apparently 
measured by a two dimensional rather than 
three dimensional region in spacetime leads to the speculation that 
general relativity is a holographic theory \cite{6} which delivers a strong
motivation for holographic approaches to quantum gravity such as 
the modern string theory approach based on the AdS/CFT conjecture \cite{7}.  

While Hawking's original calculation only considered free quantum matter on 
a classical vacuum Schwarzschild spacetime of fixed, time independent mass, 
which violates the Einstein equations as the corresponding energy momentum 
tensor is obviously non-vanishing,
the presence of the corresponding black body radiation makes it more than
plausible that the black hole loses mass. This is a pure quantum effect 
forbidden by the classical area theorem. It leads to sthe so-called 
information paradox \cite{8} which maybe sketched as follows:
We imagine that the Hilbert space of the entire 
system made from geometry and matter can be considered as a tensor 
product where one factor corresponds to the observables located 
in the spacetime region behind 
the event horizon (black hole region) and the other to the observables 
located in the region
outside of it (asymptotic region). Given an initial pure state we can 
form its partial trace with respect to the black hole region which leads to 
a mixed state for the outside algebra. 
This mixed state should correspond to the KMS (temperature)
state discovered by Hawking and its von Neumann entanglement entropy 
should correspond to the black hole entropy (or information) 
\cite{9}. We use the 
Heisenberg picture and describe the dynamics by unitary evolution of
operators acting on the the total algebra of observables while the 
state remains unchanged and in particular pure for the total 
algebra. 

As the black hole shrinks due to 
Hawking radiation, the outside region and and thus the algebra 
of operators located therein of outside 
observables grows while the algebra of inside observables shrinks. 
The total algebra of all observables remains the same during the entire
process since the whole system is closed. 
When the black hole is gone, the outside algebra becomes the 
total algebra again. If the semiclassical consideration that leads to 
the Hawking radiation picture which was derived for large black holes 
remains valid also for small black holes, at the end of the process 
we have a portion of spacetime isometric to Minkowski spacetime 
which means that not all of spacetime (namely the black hole region)
can be reconstructed from the data available at future null infinity.
Therefore information available at past null infinity (e.g. multipole 
mements of ingoing radiation) 
was lost (e.g. the outgoing radiation only carries information about 
mass, charge and angular momnentum), entropy was created, the state 
therefore is not pure i.e. we     
still have a KMS state. Thus this state would be truly mixed but 
{\it now for the total algebra}. 
However, in the 
Heisenberg picture the state does not change and remains pure which 
is a contradiction. Note that 
the representation of the $^\ast-$algebra of 
observables that derives from a pure state via the GNS construction 
is irreducible while that for a mixed state it is properly reducible,
i.e. there are non-trivial invariant subspaces with corresponding 
projectors \cite{10}. This means that there 
are drastic differences between the two situations, somehow the whole 
representation of the algebra has changed. 

To resolve the apparent contradiction one has several possibilities of which 
we just mention two, see \cite{2} for some of the speculations on possible 
mechanisms. 
The first possibility is that the semiclassical picture breaks down at 
some point and that indeed at the end of the evaporation the state is 
pure, the von Neumann entropy vanishes, information was preserved. 
For instance the evaporation could be incomplete 
leaving a ``remnant'' so that there remains an inside region and the 
outside state can stay mixed. 
The second possibility is that the evaporation 
is complete and ends with a mixed state. Then the corresponding projections
on invariant subspaces  
must arise by a dynamical mechanism, i.e. the quantum dynamics cannot
be unitary. The von Neumann entropy is non-vanishing, information was lost. 

The reason why we repeat here this well known discussion is to highlight two 
facts:\\ 
First, that black holes are ideal 
laboratories for quantum gravity. Any candidate theory of quantum gravity
must pass the test that consists, among other things, in 1. explaining 
the microscopic origin of the Bekenstein-Hawking entropy, 2. resolving
the black hole information paradox, 3. deriving the end product of 
black hole evaporation and the fate of the black hole singularity, 4.
describing Hawking radiation including backreaction from first 
principles and 5. determining the truth value of the (weak) cosmic censorship
conjecture \cite{12a} (i.e. that singularities cannot communicate
with future null infinity of an asymptotically flat spacetime 
so that predictability from a Cauchy surface holds in classical GR).\\
Second, that one needs a sufficiently reliable framework in order to 
turn these speculations into precise statements. 
For instance there has been considerable progress 
concerning the microscopic origin of the black hole entropy 
of stationary black holes both 
in String Theory (ST) \cite{13} and in Loop Quantum Gravity (LQG) \cite{14}.
These considerations have already lead to a more useful local definition 
of black
hole horizon which, in contrast to the event horizon, does not rely on 
knowing the entire spacetime \cite{14a}.
As far as black hole radiation from dynamical black holes is concerned 
important insights have been delivered by the two dimensional exactly 
solvable model \cite{15}, the Vaidya metric model \cite{16} corresponding
to null radiation and the spherically symmetric scalar boson star model
\cite{17}. Concerning singularity avoidance of exactly spherically symmetric
black holes see \cite{LQG-BH,SF-BH}.
 
However, it is certainly fair to say that we are very far from understanding
all aspects of quantum black hole physics. As the discussion reveals, many 
puzzles about black holes have their origin in the attempts to extend
Hawking's calculation, which was performed without taking backreaction into 
account, to treating the case with backreaction. The simplest of such 
considerations uses the Stefan-Boltzmann law to relate the power of the 
black hole, i.e. minus the time derivative of its mass, 
to its temperature which in turn is a function of the mass. This leads to 
(suppressing numerical factors)
\be \label{1.1}
M(t)=[M(0)^3-\frac{t}{t_P}\;M_P^3]^{1/3}
\ee
where $M(0),M_P,t_P$ are initial mass, Planck mass and Planck time 
respectively. Clearly (\ref{1.1}) can at best be an approximation 
because $\dot{M}$ diverges at the {\it evaporation time}
$t_E=[M(0)/M_P]^3\; t_P$ which is of the order of the age of the universe for 
sufficiently small primordial black holes.
Thus precisely in the last stages of the evaporation
process do we expect significant deviations from (\ref{1.1}) that lead 
to a resolution of the singularity and to settling the question whether there 
is a remnant thus contributing to dark matter \cite{19}. Such 
deviations are therefore smoking guns for quantum gravity finger prints 
provided that primordial black abundances are sufficiently large
\cite{FermiLat}.\\
\\
In order to compute the actual time dependence 
of the black hole mass 
and the 
deviation from (\ref{1.1})
we need a proper quantum gravity calculation from first principles. 
This is a very complex task within any approach to quantum gravity. 
For instance, in LQG \cite{20} one would need to first find exact solutions to 
the quantum Einstein equations (Wheeler DeWitt equation) \cite{21}
which can 
be interpreted as black hole states and then study their relational dynamics
in terms of quantum Dirac observables. Such a programme is indeed 
conceivable in the reduced phase space approach \cite{22}.  
This {\it non-perturbative} 
programme is still under development and currently 
non-perturbative renormalisation methods \cite{23}
are being applied to fix point 
the details of the corresponding physical (reduced) Hamiltonian.

To make progress before this step is completed, in this series of works 
we take a {\it perturbative} route. The idea is to separate the degrees of 
freedom into (spherically) symmetric background and non-symmetric 
perturbations and to expand all relevant quantities of the theory such as 
constraints with respect to that split. This is of course well known 
and there is a rich literature on the subject starting with the seminal work
by Regge, Wheeler and Zerilli \cite{24} in the Lagrangian setting and Moncrief 
\cite{25} in the Hamiltonian setting. See also \cite{26} for a modern 
account. Common to these works however is, that the background is 
not considered as a dynamical entity. Therefore the phase space of the 
system is coordinatised just by the perturbative degrees of freedom.
The constraints of the system truncated to first order in the perturbations 
generate gauge transformations on this ``frozen'' phase space and lead 
to a notion of first order gauge invariant objects. To the best of our 
knowledge, higher orders have not been considered, partly because 
the second order frozen constraints do not close under Poisson brackets
in contrast to the first order ones.

Unfortunately 
this kind of analysis is inappropriate precisely when we want to 
take the dynamical interplay between background and perturbations into 
account, that is, backreaction. A similar question also arises in cosmology 
and in \cite{Gomar} it was shown how to extend the action of the first order 
constraints to the full phase space such that they still close and define 
a notion of ``unfrozen'' gauge invariance to first order. In \cite{Gomar}
only a partial reduction of the gauge invariance was performed, i.e. 
there is still a single second order constraint that was left unreduced.
This remaining single constraint trivially commutes with itself so that 
a consistent quantum theory can be defined \cite{Hybrid}. 

To the best of our knowledge, the formalism of \cite{Gomar}
has not been extended yet to the question of   
gauge invariance at higher 
orders or to backgrounds which are not homogeneous.
In \cite{pa129} we developed a general framework to precisely do this.
It turns out that the approach of \cite{Gomar} is embedded in a general
reduction algorithm that can be performed to any order and for 
any Killing symmetry. It directly
computes the physical (reduced) Hamiltonian perturbatively and 
including backreaction 
with respect to (Dirac) observables which are gauge 
invariant to all orders. The reason why it computes the fully reduced 
Hamiltonian and not the partially reduced constraints is that 
partial reduction combined with perturbation theory does not lead to
a consistent algebra of perturbed left over constraints when there is 
more than one which is the case for example when we have spherical
symmetry rather than homogeneity. This way a notion of n-th order 
perturbative gauge invariance including backreaction, for which no 
consistent definition is known to the best knowledge of the author, 
is never necessary, perturbation theory and non-perturbative gauge 
invariance are disentangled. An important ingredient of \cite{pa129} 
is a symplectic chart consisting of four sets of canonical pairs:
Symmetric versus non-symmetric and gauge versus true. The (non-)symmetric 
gauge degrees of freedom are adapted to a similar split of the constraints 
into those that generate gauge transformations that do (not) preserve 
the symmetry and are used to reduce them.
Thus the idea to make progress on the question of backreaction
in Hawking radiation is to use the framework of \cite{pa129} and 
to compute the reduced phase space and Hamiltonian to the desired order and 
with the desired matter content. This reduced phase space strategy is 
then the basis to develop the quantum theory.

Some of the immediate questions that arise when approaching black hole 
perturbation theory with backreaction are the following:\\
1.\\
Which black hole symmetry (or solution) should be used? \\
2.\\
When considering a dynamical background, should one only allow for 
a dynamics of the parameters of the symmetric background solution 
to the Einstein (vacuum?) equations or should 
one allow for a dynamics of all background fields 
compatible with the symmetry?\\     
3.\\
As we want to study the late stages of the evaporation process and thus 
want to ``look into the singularity'' how can we make sure that 
we can explore both the interior and the exterior of the black hole 
when the location of the quantum horizon becomes fuzzy?\\ 
\\
Concerning the first question, assuming that the black hole no hair 
theorem and suitable energy conditions on the energy momentum tensor of 
admissable matter holds \cite{HE}, it suffices to consider black holes 
of either axi-symmetry or spherical symmetry. As shown in \cite{Page},
semiclassical considerations suggest that charge and angular momentum are 
radiated off much faster than mass. Hence, starting from a primordial 
black hole created close to the big bang and evaporating today, it has 
reached spherical symmetry long before evaporation. Thus, as we are 
interested in the late stages of the evaporation process, it appears to be 
well motivated to consider spherical symmetry.
Notice, however, that the 
methods of \cite{pa129} are immediately applicable also to axial symmetry
for which much of the tedious work has been performed by 
Teukolsky \cite{Teukolsky}. Hence to check whether these semiclassical 
considerations are justified, one can repeat the steps outlined in the 
present paper also for the Kerr case. 

To explain the relevance of the second question, recall that
according to Birkhoff's theorem a spherically
symmetric vacuum black hole is uniquely defined by a single parameter (the 
mass) up to diffeomorphisms. On the other hand a spherically symmetric 
spacetime is uniquely defined by four functions that 
depend on a radial and time coordinate. Hence one can either assume 
that the backreaction just affects the mass parameter making 
it time and radially dependent or one can assume that the backreaction 
affects all four functions. Of course there is some redundancy in the second 
description due to the presence of residual radial and temporal 
diffeomorphisms (as well as gauge transformations generated by the primary 
constraints)  
that act on those four functions but how these should be
consistently accomodated in the presence of perturbation theory is a priori
not clear. It is also not clear whether one should use the first description
at all when one considers matter that gives rise to a non vanishing 
spherically symmetric energy momentum tensor (such as scalar matter) so 
that the vacuum solution is not an exact solution to the system.\\
The framework of \cite{pa129} gives an unambiguous answer to this question:   
One has to start from the description in terms of the four functions. The 
matter content and the interplay between the residual gauge transformations
that preserve the symmetry and those gauge transformations that do not, 
automatically dictate the precise reduced form of the metric. 

Finally concerning the third question, we must choose a coordinate system
such that the metric is regular across any potential horizon, in particular
the coordinate system must not depend on any dynamical parameters. 
In the ideal case it should simplify the quantisation of the 
reduced Hamiltonian as much as possible. Since the reduction process involves 
splitting the degrees of freedom into gauge and true degrees of freedom, 
these requirements select suitable gauges. For spherical symmetry we 
will therefore impose the Gullstrand-Painleve\'e gauge \cite{26} that 
the spatial spherically symmetric metric be the flat Euclidean one 
which indeed simplifies the quantisation of the reduced Hamiltonian since 
e.g. spatial curvature terms vanish and the Laplacian becomes the flat 
one with explicitly known spectrum. In particular, the spacetime manifold 
has the topology of $\mathbb{R}^4$ for each asymptotic 
end. This moves the information 
about the non-trivial 4-curvature into the extrinsic curvature.  
Note that GP coordinates cover 
either the advanced or retarded Finkelstein charts of Schwarzschild 
spacetime and therefore in the classical theory 
provide a convenient chart to explore both one asymptotic end 
and either the black hole or white hole region of the full 
Kruskal spacetime up until the singularity. Whether the classical 
singularity is resolved and replaced for example by  
a black hole -- white hole transition is then a question to be answered by the 
quantum theory which can be accomodated in this framework by working 
with two copies of GP manifolds. \\
\\
The focus of the present paper is on the general structure of this 
programme and we use the showcase of second order perturbations of 
the physical Hamiltonian in the presence of scalar and Maxwell matter. 
We will not write out the details of the corresponding expressions which 
are reserved to our companion papers \cite{NT-SS, NT-RN, N-SA}.
Generalisations to 
fermionic (in particular neutrino) matter, 
axial symmetry and higher orders
will be subject of future 
publications. The architecture of this article is as follows:\\
\\

In section \ref{s2} we basically introduce our notation and 
apply \cite{pa129} to spherical symmetry. In particular 
we review spherical tensor harmonics \cite{STH} and how they give 
rise to constraints $C$ and $Z$ respectively which preserve or do 
not preserve the symmetry respectively. Adapted to those we have gauge 
canonical pairs $(q,p)$ and $(x,y)$ respectively which are symmetric and 
non-symmetric respectively. These are complemented by true (observable) 
canonical pairs $(Q,P)$ and $X,Y$ respectively which are also
symmetric and non-symmetric respectively. 

In section \ref{s0} we include a conceptual overview over many interrelated 
topics associated with black hole evaporation such as 1. whether the 
vacuum black hole mass parameter $M$ in fact can change at all during 
time evolution or rather has the status of an integration constant such as 
$M(0)$ in (\ref{1.1}), 2. what replaces $M$ as a measure of evaporation 
in case it is presereved, a natural candidate being the area 
of the apparent horizon,
3. how to describe black hole white hole transitions within the perturbative 
framework, 4. what kind of Fock structures of black hole perturbative 
QFT are selected by the use of GP coordinates, 5. which kind of 
Hawking effects are to be expected and 6. how singularity resolution would 
manifest itself.  

In section \ref{s4} we construct the exact reduced or physical 
Hamiltonian based on the GP gauge for $q$ and the trivial 
gauge $x=0$ for $x$ following \cite{pa129} which is natural in view 
of the algebraic structure of the constraints. This requires 
the full machinery of decay conditions on the background and 
perturbation fields and the solution of the constraints. 
We also construct the exact physical lapse and shift following from 
the stability analysis of the gauge condition 
using the asymptotic structure at spatial infinity. This 
distinguishes gauge diffeomorphisms from symmetry diffeomorphims and  
opens access to the full spacetime metric.
The resulting expressions obtained,  
while non-perturbative, necessarily are implicit. To 
obtain eplicit expressions, perturbation methods must be invoked. This
entails two steps: 1. standard non-gauge invariant 
perturbative expansions of the constraints
and 2. assembling different bits and pieces of those into gauge invariant 
contributions. 

In section \ref{s3} we perform the first step  
and determine the general perturbative form of the classical 
constraints to all orders, that is, we simply expand the constraints in their 
polynomial form into spherical tensor harmonics. While tedious, this 
step is straightforward and just involves the recoupling theory of 
angular momentum or equivalently harmonic analysis on the sphere. In 
polynomial form, the gravitational contribution 
to the constraints is of degree ten which is also the top degree 
of the perturbative epansion. Performing the sphere 
integral returns therefore an expression which is exact.  

In section \ref{s5} we perform the second step and perturbatively solve the 
constraints $C=0,Z=0$ respectively for $p,y$ in terms of 
$q,x,Q,P,X,Y$ and imposing gauges on $q,x$ following the algorithm 
of \cite{pa129}. While the formulae provided in \cite{pa129} cover all
orders, we detail out the concrete expression only up to second order. 
As explained 
in \cite{pa129} certain degrees of freedom (e.g. core mass and charge)
encoded in $q,p$ are retained among the $(Q,P)$ due to the presence of 
boundary terms in the constraints. These also serve to perform a second 
Taylor expansion in order to explicitly solve the differential equations
that occur when solving for $p,y$.

In section \ref{s6} we give a brief introduction to various notions 
of black hole horizons and argue that in the present situation with a 
distinguished notion of time defined by the free falling GP observers 
the apparent horizon and its area is an important quantity that 
captures important information about the degree of black hole evaporation.
We show that the pertubative scheme developed in previous sections 
extends to all orders also to the apparent horizon and its area. 

In section \ref{s7} we enter the quantum regime. Since second 
order perturbation theory of the reduced Hamiltonian 
reproduces the Regge-Wheeler and Zerilli 
free Hamiltonians for the perturbations in a GP spacetime 
of given mass, its Fock quantisation is the starting point of 
perturbative QFT for the perturbations with the higher order terms in 
the reduced Hamiltonian considered as interaction terms. While we do not
complete this step in the present paper we sketch all steps towards 
this goal, i.e. we formulate QFT in GP spacetime. This includes a discussion
of mode functions that are valid thoughout the black or white 
hole region and 
an asymptotic region for each asymptotic end of the spacetime 
(of which there are two in the case of a black hole white hole 
transition). The mathematical challenge is to gain sufficient control 
over those mode functions in GP spactime and to formulate 
junction conditions in the transition region. As a regularising 
method we consider an Einstein-Rosen type bridge of gluing a {\it past}
ingoing GP spacetime with a {\it future} outgoing GP spacetime which 
is foliated by proper GP time Cauchy surfaces. Once established,
the Fock quantisation 
can then be applied to the apparent horizon area and its perturbation theory.

In section \ref{s8} we touch upon the  question of backreaction,
i.e. interaction between symmetric and non-symmetric true degrees of 
freedom. This can be non-trivial already within Einstein-Maxwell theory
when the mass $M$ can change dynamically due to the details of
imposing the the GP gauge. In this case one can use space adiabatic perturbation
theory \cite{SAPT}. We used this already in applcation to cosmology 
\cite{ST}. When the matter content goes beyond that 
of Maxwell fields then new challenges arise because then the  
symmetric ``slow'' sector is also a field theory with infinitely many degrees 
of freedom, not only the ``fast'' non-symmetric sector.  

In section \ref{s9} we construct the conserved Noether current 
that follows from the reduced Hamiltonian and which can be used to 
compute the classical energy flux. It can also be used to construct 
the analog of grey body factors for the corresponding Hawking radiation in 
the quantum theory.

In section \ref{s10} we summarise and give an outlook into further work 
under development, in particular the contact with phenomenology.

In appendix \ref{sa} we review in a simple setting the Hamiltonian 
distinction between symmetry and gauge and how decay behaviours of fields 
and constraint smearing functions as well as concepts of variational 
analysis come into play.

In appendix \ref{sb} we apply this to vacuum black holes and show that 
next to the mass there exists a second Dirac observable. The difference 
between symmetry and gauge diffeomorphisms 
explains why this is not in contradiction to 
Birkhoff's theorem. This is relevant because if the second variable 
shows up in the spacetime metric (depending on the details of the 
GP gauge condition) then the black hole mass $M$ is not 
a constant of motion as soon as gravitational perturbations are 
present.

In appendix \ref{sd} we complement the GP description of appendix 
\ref{sb} by the Kantowski-Sachs description which has recently received 
much interest in black hole singularity resolution scenarios 
and how they are matched.

In appendix \ref{se} we include elements of the analysis of generalised GP 
coordinates and free falling observers and foliations in black hole 
white hole transition spacetimes and regularised versions thereof. It 
contains also a consistent mechanism that reconciles the existence of 
the second Dirac observable without it appearing in the reduced Hamiltonian.

\section{Spherical tensor harmonics, symmetry and gauge degrees of freedom}
\label{s2}

In the first subsection we summarise the relevant information on spherical 
tensor harmonics \cite{STH}. These guide our notation and serve to 
identify the gauge and true degrees of freedom as well as the symmetric and 
non-symmetric degrees of freedom. In the second subsection 
we interpret these in terms 
of the notation applied in the general framework of \cite{pa129}. In 
the third 
we show how to perform the perturbative expansion of all constraints
in closed form.

\subsection{Spherical tensor harmonics}
\label{s2.1}

Let $\theta^1:=\theta\in [0,\pi],\; \theta^2:=\varphi\in [0,2\pi)$ be 
spherical polar coordinates on $S^2$, 
\be \label{2.1}
\Omega_{AB}:=
\delta^1_A\; \delta^1_B+\sin^2(\theta)\; \delta^2_A \delta^2_B; \;\;A,B,C,..
=1,2;\;\;
\omega:=\sqrt{\det(\Omega)};\;\; d\mu:=\frac{\omega}{4\pi}\; d^2\theta;\;\;
\eta_{AB}:=\omega\; \epsilon_{AB} 
\ee
respectively round metric on $S^2$, its associated scalar density of weight 
one, corresponding normalised measure and skew pseudo-metric of density 
weight zero where $\epsilon_{12}=+1$. Let $D_A$ be 
the torsion free $\Omega$ compatible covariant differential. 
The corresponding Riemann tensor is easily computed to be 
$R_{ABCD}=\eta_{AB}\; \eta_{CD}$ with Ricci tensor 
$R_{AB}=\Omega_{AB}$ and Ricci scalar $R=2$. All indices 
are moved with $\Omega$ or its inverse where $\Omega^{AC}\;\Omega_{CB}=
\delta^A_B$. We define the Laplacian $\Delta=D_A\; D^A$. 

Let 
$L_{l,m},\; l=0,1,2,...; m=-l, -l+1,..,l$ be the real valued orthonormal
basis of $L_2(S^2,d\mu)$ defined by Legendre polynomials.
In terms of the usual complex valued $Y_{l,m}$ with $\bar{Y}_{l,m}=
Y_{l,-m}$ we have 
$L_{l,0}:=Y_{l,0},\; \sqrt{2}\;L_{l,m}=Y_{l,m}+Y_{l,-m};\; m>0,\; 
i\sqrt{2}\;L_{l,m}=Y_{l,-m}-Y_{l,m};\; m<0$. 
The scalar harmonics are simply the $L_{l,m}$. We define the 
even ``e'' and odd ``o'' vector harmonics for $l>0$ by 
\be \label{2.2}
\sqrt{l(l+1)} \; L_{A;e,l,m}:=D_A\; L_{l,m}:=     
\sqrt{l(l+1)} \; L_{A;o,l,m}:=\eta_A\;^B\; D_B\; L_{l,m}
\ee
which are orthonormal with respect to the inner product
\be \label{2.3}
<L_{\alpha,l,m},L_{\beta,l',m'}>_{L_2^2}
:=\int_{S^2}\; d\mu\; 
\overline{L_{A;\alpha,l,m}} \;\Omega^{AB}\; L_{B;\beta,l',m'}
=\delta_{\alpha\beta}\;\delta_{l,l'}\; \delta_{m,m'}
\ee 
with $\alpha,\beta\in\{e,o\}$. The terminology used in the literature 
referring to ``even'' and ``odd'' is not entirely consistent. 
A better qualifier would be ``polar'' and ``axial'' i.e. $L_{A;e,l,m}$ 
does not involve the pseudo tensor $\eta_{AB}$ while $L_{A;o,e,l,m}$ does.  
An equivalent characetrisation is that under reflection
$\theta\mapsto \pi-\theta,\;      
\varphi\mapsto \varphi+\pi$ the 1-form (or the corresponding vector field)
$L_{A;e/o,l,m}$ has the opposite/same intrinsic 
parity as $L_{l,m}$ which is $(-1)^l$.
Thus ``even,odd'' should not be confused with the intrinsic parity 
of $L_{l,m}$ which is defined to be even/odd when $l$ is even/odd.
With the understanding of ``even, odd'' as polar, axial the scalar 
perturbations are all even.
The $L_{A;\alpha,l,m}$ are complete i.e. every one form that is square
integrable in the sense of (\ref{2.3}) can be expanded in terms of them
and that expansion converges to it with respect to the $L_2^2$ norm. 
This will be shown below. 

Next consider for $l\ge 0$ the horizontal 
``h'' and for $l\ge 2$ the 
even ``e'' respectively  odd ``o'' symmetric (with respect to tensor indices)
2-tensor harmonics  
\ba \label{2.4}
\sqrt{2}\;L_{AB;h,l,m} &:=& \Omega_{AB}\; L_{l,m}
\nonumber\\
\sqrt{2(l^2-1)(l+1)(l+2)}\;L_{AB;e,l,m} &:=& (D_A\; D_B-\frac{1}{2}
\Omega_{AB}\Delta)\;L_{lm}
\nonumber\\
\sqrt{2(l^2-1)(l+1)(l+2)}\;L_{AB;o,l,m} &:=& D_{(A}\; \eta_{B)}\;^C D_C
\;L_{lm}
\ea
where both the horizontal and even 2-tensor have parity $(-1)^l$ and the odd 
2-tensor has parity $(-1)^{l+1}$. The motivation for the term ``horizontal''
will become clear only in the next section. In the literature one 
refers to both the horizontal and even tensors as ``even'' since they are 
both polar while the ``odd'' tensors are axial. Note that 
the even and odd tensors in contrast to the horizontal tensors 
are tracefree with respect to $\Omega$. 
 
The tensors (\ref{2.4}) are orthonormal
with respect to the inner product
\be \label{2.5}
<L_{\alpha,l,m},L_{\beta,l',m'}>_{L_2^4}
:=\int_{S^2}\; d\mu\; 
\overline{L_{AC;\alpha,l,m}} \;\Omega^{AB}\;\Omega^{CD}\; 
L_{BD;\beta,l',m'}
=\delta_{\alpha\beta}\;\delta_{l,l'}\; \delta_{m,m'}
\ee 
with $\alpha,\beta\in \{h,e,o\}$.
The $L_{AB;\alpha,l,m}$ are complete i.e. every 2-tensor that is square
integrable in the sense of (\ref{2.5}) can be expanded in terms of them
and that expansion converges to it with respect to the $L_2^4$ norm. 

Tensor harmonics for tensors of higher rank can be constructed analogously 
using the building blocks $\Omega, D, L_{l,m}$. The orthonormality 
can be established by exploiting that $D$ and $\tilde{D}=\eta\cdot D$
are anti-self adjoint as  
operators 
$D:\; L_2\to L_2^2,\; L_2^2\to L_2^4$ (derivative),
$D:\; L_2^2\to L_2,\; L_2^4\to L_2^2$ (divergence)
and similar for $\tilde{D}$. The completeness can be 
established relying on the completeness of the scalar harmonics as well 
as the fact that $-\Delta L_{l,m}=l(l+1)\; L_{l,m}$.
For instance we have for a given vector field $v^A$ with divergence 
$d=D_A v^A$ and curl $c=\tilde{D}_A v^A$ 
\ba \label{2.6}
&& \Delta v^A= 
D_B\; D^B \; v^A 
=D_B\;[ (D^B \; v^A-D^A\; v^B)+D^A v^B]
=D_B\;(\eta^{BA}\;c)
+ (D^B D^A-D^A D^B) v_B + D^A \; d
\nonumber\\
& =& \tilde{D}^A\;c+ D^A d+R^{BA}\;_{BC} v^C
=\tilde{D}^A\;c+ D^A d+v^A
\ea
As $d,c$ can be expanded into scalar harmonics and $D,\tilde{D}$ annihilate 
the $l=0$ contributions we find that $-(-\Delta+1)v$ can be expanded into 
vector harmonics. However the operator $-\Delta+1$ is positive definite 
whence 
\be \label{2.7}
v^A=-(-\Delta+1)^{-1}[D^A d + \tilde{D}^A c]
\ee
and since 
$D_A \Delta f=(\Delta-1) D_A f,\;
\tilde{D}_A \Delta f=(\Delta-1) \tilde{D}_A f$ we find that after expanding 
$d,c$ into scalar harmonics labelled by $l\not=0$ 
we can simply replace $(-\Delta+1)^{-1}$
by $[l(l+1)]^{-1}$. 

In our application to perturbation theory the Hilbert spaces 
$L_2,L_2^2, L_2^4$ appear naturally in first order and 
some terms of second order. 
In other terms of second order and those of higher order one 
encounters higher order contractions of the 
$L_{l,m},\;L_{A,e/o,l,m},\; L_{AB,h/e/o,l,m}$ that are integrated 
over $S^2$ with measure $\mu$. These can be computed by combining 
Clebsch-Gordan decomposition 
\be \label{2.8}
L_{l,m}\; L_{l',m'}
=\sum_{|l-l'|\le \tilde{l}\le l+l'; \tilde{m}=m+m'}\; 
c_{l,m;l',m';\tilde{l},\tilde{m}}\; L_{\tilde{l},\tilde{m}'} 
\ee
with expressing $D_A$ in terms of the angular momentum operators 
$L_\mu,\; \mu=1,2,3$ which act diagonally or as ladder operators on the 
$L_{l,m}$.   

Some useful identities are 
\ba \label{2.8a}
&& D_A \Omega_{BC} = D_A\; \omega = 0
\nonumber\\
&& D_A\; L^A_{\alpha,l,m}=-\sqrt{l(l+1)}\;\delta_\alpha^e\; L_{l,m}
\nonumber\\
&& D_A\; L^{AB}_{h,l,m} = \sqrt{l(l+1)/2}\; L^B_{e,l,m}
\nonumber\\
&& D_A\; L^{AB}_{\alpha,l,m} = -\frac{1}{2}\sqrt{(l-1)(l+2)/2}\; 
L^B_{\alpha,l,m};\;\;\alpha=e,o
\nonumber\\
&& D_A L_{l,m}=\sqrt{l(l+1)} \; L_{A;e,l,m}
\nonumber\\
&& D_A L_{B;e,lm} 
=\sqrt{2(l-1)(l+2)} \; L_{AB;e,l,m}-\sqrt{l(l+1)/2}\; L_{AB;h,l,m}
\nonumber\\
&& D_A L_{B;o,lm} 
=\sqrt{2(l-1)(l+2)} \; L_{AB;o,l,m}
+\eta_{[A}\;^C\; D_{B]} \; L_{C;e,l,m}
\ea

\subsection{Classification of symmetry and gauge degrees of freedom}
\label{s2.2}

As mentioned, for the purpose of concrete illustration we focus 
on the following matter content: A charged scalar field $\Phi$ 
with potential $V$ which may serve to build a boson star and the 
Maxwell field $A$. In the canonical setting \cite{12a} assuming 
global hyperbolicity the spacetime manifold $M$ is diffeomorphic to 
$\mathbb{R}\times \sigma$ where $\sigma$ is a three manifold and can 
be foliated by Cauchy surfaces $\Sigma_t$ labelled by $t\in \mathbb{R}$.
What follows can be done in any spacetime dimension, we consider 
the case of four dimensions.\\
\\ 
We thus have 
the following ingredients ($\mu,\nu,\rho,...=1,2,3$ are spatial tensor 
indices with respect to coordinates $x^\mu$ on the manifold $\sigma$):\\
1. Gravitational degrees of freedom:\\
$(S^0,W_0),\;(S^\mu,W_\mu),\;(m_{\mu\nu},W^{\mu\nu})$ where $S^0,S^\mu$  
are called    
lapse and shift functions parametrising the embeddings $\sigma\to \Sigma_t$
and $m$ is the intrinsic metric of $\sigma$. The $W_0,W_\mu,W^{\mu\nu}$ 
are the 
respective conjugate momenta. We denote by $m^{\mu\nu}$ the inverse 
of $m_{\mu\nu}$ and $R[m]$ the Ricci scalar of $m$ constructed from
the torsion free covariant differential $\nabla_\mu$ compatible 
with $m_{\mu\nu}$.\\
2. Scalar degrees of freedom:\\
$(\Phi,\Pi)$ where $\Pi$ is the conjugate momentum of the scalar field
$\Phi$ 
on $\sigma$ which we take as a real valued SO(2) dublett.\\
3. Electromagnetic degrees of freedom:\\
$(S_0, W^0),\; (A_\mu,E^\mu)$ where $S_0$ is the temporal
component of the 4-connection and $A_\mu$ is its spatial component. Again 
$W^0,E^a$ are the conjugate momenta and $B^\mu=\epsilon^{\mu\nu\rho}\; 
\partial_\nu\; 
A_\rho,\;E^\mu$ are referred to as magnetic and electric fields respectively.
We also refer to $F_{\mu\nu}:=2\partial_{[\mu} A_{\nu]}$ 
as the curvature of $A_\mu$.\\
4. Primary constraints:\\
$B_0:=W_0,\; B_\mu:=W_\mu,\; B^0:=W^0$\\
5. Secondary constraints:\\
\ba \label{2.9}
V_0 &:=& V_0^E + V_0^{KG} + V_0^M
\nonumber\\
V_0^E &:=& \frac{1}{\sqrt{\det(m)}}\;
[m_{\mu\rho} \; m_{\nu\lambda}-\frac{1}{2} m_{\mu\nu}\; 
m_{\rho\lambda}]\; W^{\mu\nu}\; W^{\rho\lambda}
-\sqrt{\det(m)}\; R[m]
\nonumber\\
V_0^{KG} &:=& 
\frac{||\Pi||^2}{2\sqrt{\det(m)}}+\frac{1}{2}
\sqrt{\det(m)}\;m^{\mu\nu}\; [D_\mu\Phi]^T\; [D_\nu \Phi]
+\sqrt{\det(m)}\; U(||\Phi||^2)
\nonumber\\
V_0^M &:=& \frac{m_{\mu\nu}}{2\sqrt{\det(m)}}\;[E^\mu\; E^\nu+B^\mu \; B^\nu]
\nonumber\\
V_\mu &:=& V_\mu^E + V_\mu^{KG} + V_\mu^M
\nonumber\\
V_\mu^E &:=& -2\; \nabla_\nu W^\nu\;_\mu
\nonumber\\
V_\mu^{KG} &:=& \Pi^T\; \Phi_{,\mu}
\nonumber\\
V_\mu^M &:=& E^\nu (\partial_\mu A_\nu)-(E^\nu\; A_\mu)_{,\nu}
\nonumber\\
G &:=& G^M +G^{KG}:= \partial_\mu\; E^\mu+\Pi^T\;\epsilon\; \Phi
\ea
where $\epsilon$ is the skew matrix in 2 dimensions, $D=d+\epsilon A$, 
$||\Phi||^2=\Phi^T\Phi$ of which the potential $U$ is a polynomial
(e.g. mass term) and $\nabla$ is the covariant differential 
determined by $m$.
Here $V_0$ is referred to as the Hamiltonian constraint, $V_\mu$ as the 
spatial diffeomorphism constraints and $G$ as the Gauss constraint.
We have labelled the respective contributions due to 
Einstein, Klein-Gordon and Maxwell fields respectively by $E,KG,M$ 
respectively.\\
6. Unreduced Hamiltonian ($v^0,v^\mu,v_0$ are velocities that remain
undeterminded by the Legendre transform)\\
\be \label{2.10}
H=\int_\sigma\; d^3x\;
[v^0\; B_0+v^\mu\; B_\mu+v_0\; B^0+S^0 \; V_0+S^\mu\; V_\mu+S_0\; G]
\ee
7. Symplectic potential (we normalise by the unit sphere area and 
$d$ is the exterior differential on fioeld space)
\be \label{2.11}
4\pi \;\Theta=\int_\sigma\; d^3z\; 
[W_0 \; dS^0+W_\mu \; dS^\mu+W^0 \; dS_0+
W^{\mu\nu}\; dm_{\mu\nu}+\Pi^T \; d\Phi+E^\mu \; dA_\mu]
\ee
The Hamiltonian density is a linear combination of all consraints as 
it is always true for generally covariant field theories. As there is a 
``boundary'' at spatial infinity, one has to add boundary terms to 
(\ref{2.10}) ensuring that (\ref{2.10}) continues to be differentable 
and convergent also when the functions $S^0,S_\mu,S_0$ do not vanish at 
infinity but we will not display them here as we will automatically
encounter them in a later stage of the analysis. The 
velocities $v^0,v^\mu, v_0$ are arbitrary ``Lagrange multipliers'' that 
one could not solve for when performing the Legendre transform. 
The stabilisation
(preservation in time) of the primary constraints $B_0,B_\mu,B^0$ implies 
the secondary constraints $D_0,D_\mu,G$. Their stabilisation leads to no new 
constraints because their Poisson algebra closes: The hypersurface 
deformation algebra generated by the $V_0,V_\mu$ closes by itself while 
$V_0,G$ are invariant under $G$ and the $V_\mu,G$ close among themselves.
All primary constraints obviously close among each other (they are 
Abelian since they involve only momenta $W_0,W_\mu,W^0$) and they have 
vanishing 
Poisson brackets with the secondary constraints because these do not 
involve the variables $S^0,S^\mu,S_0$.

Before proceeding it is convenient to get rid off the primary constraints 
$B_0, B_\mu, B^0$ by gauge fixing the variables $S^0,S^\mu,S_0$ conjugate 
to $W_0, W_\mu, W^0$. This is accomplished by imposing suitable gauge 
conditions $K=(K^0, K^\mu, K_0)$ on the variables 
$(m_{\mu\nu}, W^{\mu\nu}), (\Phi,\Pi),(A_\mu, 
E^\mu)$. In order that these conditions are stable under the gauge flow,
the variables $S^0,S^\mu,S_0$ become given functions 
$S_\ast=(S^0_\ast, S^\mu_\ast,
S_0^\ast)$ of 
$(m_{\mu\nu}, W^{\mu\nu}), (\Phi,\Pi),(A_\mu, E^\mu)$ and are thus also 
fixed. The reduced Hamiltonian acts only on the subset of true degrees
of freedom, i.e. those among 
$(m_{\mu\nu}, W^{\mu\nu}), (\Phi,\Pi),(A_\mu, E^\mu)$ not determined by 
$K=0$ and $V=(V_0,V_\mu,G)=0$ and is determined by first computing the Poisson 
bracket between functions of the true degrees of freedom and the unreduced 
Hamiltonian and then evaluating the result at $S=S_\ast, K=V=0$. It follows 
that the terms proportional to $B=(B_0, B_\mu, B^0)$ can be ignored from 
the outset. The correspondingly simplified  
Hamiltonian coincides with (\ref{2.10}) except that the terms 
proportional to $B_0,B_\mu,B^0$ are dropped and that $S^0,S^a,S_0$ acquire 
now the status of arbitrary smearing functions of the 
constraints that appear (\ref{2.10}) on that reduced phase space which 
will later be fixed by the stability requirement for 
the gauge fixing condition.\\
\\   
We now make the contact with the general framework \cite{pa129}. To do 
this we must introduce two splits of the degrees of freedom into i. symmetric 
ones and non-symmetric ones on the one hand and 
ii. gauge and observable (true) ones on the other. 
The symmetry split is essentially dictated by the decomposition of the 
fields into tensor harmonics. In particular it induces a symmetry 
split among the fields $S^0, S^\mu, S_0$. Their symmetry split induces a 
symmetry split 
of the secondary constraints $V_0,V_\mu,G$ which then suggests a natural
additional gauge split on all configuration and momentum variables. 

Accordingly we begin with the spherical harmonics decomposition 
using the polar coordinate system $z^A:=\theta^A,\;A=1,2$ (see the previous 
subsection) and one radial coordinate 
$z^3:=r$ for each asymptotic end where\\ 
$z^\mu=r(\sin(\theta)\cos(\varphi),\sin(\theta)\sin(\varphi),\cos(\theta))$
are the ususal Cartesian coordinates at spatial infinity. We have 
the following symmetry split of the functions $S^0,S^\mu,S_0$
\ba \label{2.12}
S^0 &=:& f^v+\sum_{l>0,|m|\le l} \; g^{v,l,m}\; L_{l,m} 
\nonumber\\
S^3 &=:& f^h+\sum_{l>0,|m|\le l} \; g^{h,l,m}\; L_{l,m}
\nonumber\\
S^A &=:& 0+\sum_{l>0,|m|\le l,\;\alpha\in \{e,o\}} \; 
g^{\alpha,l,m}\; L^A_{\alpha,l,m},\;\;A=1,2 
\nonumber\\
S_0 &=:& f^M+\sum_{l>0,|m|\le l}\; g^{M,l,m}\; L_{l,m} 
\ea
where the labels ``v'' and ``h'' mean ``vertical'' and 
``horizontal'' respectively 
and capture the fact that the Hamiltonian and spatial diffeomprphism 
constraints respectively generate spacetime diffeomorhisms 
transversal (vertical) and tangential (horizontal) to the Cauchy surfaces.
The functions $f^v,\; g^{v,l,m},\;f^h,\;g^{h,l,m},\; g^{\alpha,l,m},\;
f^M, g^{M,l,m}$ depend only on the radial coordinate $r$. The contributions 
to $S^0, S^3, S_0$ given by the functions 
$f^v, f^h, f^M$ are theorefore spherically symmetric and thus are referred 
to as ``symmetric'' smearing functions while the contributions 
to $S^0, S^3, S^A, S_0$ defined by the functions 
$g^{v,l,m}, g^{h,l,m}, g^{\alpha,l,m}, g^{M,l,m}$ 
are not spherically symmetric and 
thus are referred to as ``non-symmetric'' smearing functions.

We perform an analogous split for the fields 
$m_{\mu\nu},\;W^{\mu\nu},\;\Phi,\Pi,A_\mu,E^\mu$ 
\ba \label{2.13}
m_{33} &=:& q^v_E + \sum_{l>0,|m|\le l}\; x^{v,l,m}_E\; L_{l,m}
\nonumber\\
m_{3A} &=:& 0 + \sum_{l>0,|m| \le l,\alpha\in \{e,o\}}\; 
x^{\alpha,l,m}_E\; L_{A;\alpha,l,m}
\nonumber\\   
m_{AB} &=& q^h_E\;\Omega_{AB}
+\sum_{l>0,|m|\le l}\; 
x^{h,l,m}_E\; L_{AB;h,l,m}
+\sum_{l>1,|m| \le l,\alpha\in \{e,o\}}\; 
X^{\alpha,l,m}_E\; L_{AB;\alpha,l,m}
\nonumber\\
W^{33} &=:& \omega\;[p_v^E+\sum_{l>0,|m|\le l}\; y_{v,l,m}^E\; L_{l,m}]
\nonumber\\
W^{3A} &=:& \frac{\omega}{2}\;[0+\sum_{l>0,|m|\le l,\alpha\in \{e,o\}}\; 
y_{\alpha,l,m}^E\; L^A_{\alpha,l,m}]
\nonumber\\
W^{AB} &=:& \omega\;[\frac{p_h^E}{2} \;\Omega^{AB}+
\sum_{l>0,|m| \le l}\; 
y_{h,l,m}^E\; L^{AB}_{h,l,m}
+\sum_{l>1,|m|\le l,\alpha\in \{e,o\}}\; 
Y_{\alpha,l,m}^E\; L^{AB}_{\alpha,l,m}]
\nonumber\\
\Phi &=:& Q_{KG}+\sum_{l>0,|m| \le l}\; X^{l,m}_{KG}\; L_{l,m}
\nonumber\\
\Pi &=:& \omega\;[P^{KG}+\sum_{l>0,|m| \le l}\; Y_{l,m}^{KG}\; L_{l,m}]
\nonumber\\
A_3 &=:& q_M + \sum_{l>0,|m| \le l}\; x^{l,m}_M\; L_{l,m}
\nonumber\\
A_C &=:& 0 + \sum_{l>0,|m| \le l,\alpha\in\{e,o\}}\; X^{\alpha,l,m}_M\; 
L_{C;\alpha,l,m}
\nonumber\\
E^3 &=:& \omega\;[p^M + \sum_{l>0,|m| \le l}\; y_{l,m}^M\; L_{l,m}]
\nonumber\\
E^C &=:& \omega\;[0 + \sum_{l>0,|m| \le l,\alpha\in\{e,o\}}\; 
Y_{\alpha,l,m}^M\; 
L^C_{\alpha,l,m}]
\ea
We have payed attention to the fact that the momenta conjugate to the 
respective configuration variables are respective dual tensors 
that carry density weight one rather than zero and thus have pulled 
out a factor of $\omega=\sqrt{\det(\Omega)}$ (see the previous section).
The labels E, KG, M mean again Einstein, Klein-Gordon, Maxwell degrees of 
freedom.
We have grouped the coefficient functions of the tensor harmonics 
that appear in (\ref{2.13}) and which only depend on the radial coordinate 
$r$ into the following four groups following the general notation 
of \cite{pa129}:\\
1. symmetric gauge\\
$\{(q^a,p_a)\}:=\{(q^v_E,p_v^E),(q^h_E,p_h^E),(q_M,p^M)\}$\\
2. symmetric true:\\
$\{(Q^A,P_A)\}:=\{(Q_{KG},P^{KG}\}$\\
3. non-symmetric gauge ($l>0,|m|\le l,\; \alpha\in \{v,h,e,o\}$):\\
$\{(x^j,y_j)\}:=\{(x^{\alpha,l,m}_E,\;y_{\alpha,l,m}^E),(x^{l,m}_M,\;
y_{l,m}^M)\}$\\
4. non-symmetric true ($\alpha\in \{e,o\}$ and $l>0$ for KG,M
while $l>1$ for E):\\
$\{(X^J,Y_J)\}:=\{(X^{\alpha,l,m}_E,\; Y_{\alpha,l,m}^E),\;
(X^{l,m}_{KG},\;Y_{l,m}^{KG}),\;
X^{\alpha,l,m}_M,\; Y_{\alpha,l,m}^M)\}$.\\
The labels $a,b,c,..;\; A,B,C,..;\; j,k,l,..;\;J,K,L,..$ take the 
corresponding values which includes the value of the coordinate $r$.
The unfortunate doubling of the labels $A,B,C,..=1,2$ of components of tensors 
on $S^2$ with the range of $A,B,C,..=r\in [0,\infty)$ for $Q^A,P_A$ 
does not cause confusion 
because we will get rid of the spherical harmonics right away so 
that they play no role any more below.  

When plugging the decomposition (\ref{2.13}) into the symplectic potential
(\ref{2.11}) we find due to $\omega\; d^3z=4\pi\; dr\; d\mu$ and the 
normalisation of the spherical harmonics with respect to $\mu$ (see
the previous subsection)
\ba \label{2.14}
&& \Theta = p_a\; dq^a+P_A\; dQ^A+y_j\; dx^j+Y_J\;dX^J  
\\
&=& \int_0^\infty\; dr\;
\{
[p_v^E\; dq^v_E+p_h^E\; dq^h_E+p^M\; dq_M]
+[(P^{KG})^T\; dQ_{KG}]\}
\nonumber\\
&& +
\sum_{l>0,|m|\le l}
[(\sum_{\alpha\in \{v,h,e,o\}} \;y_{\alpha,l,m}^E\; dx^{\alpha,l,m}_E)
+y_{l,m}^M\; dx^{l,m}_M]
\nonumber\\
&& +\sum_{l>0,|m|\le l}\; (Y_{l,m}^{KG})^T\;dX^{l,m}_{KG}]
+
\sum_{\alpha\in \{e,o\}}\;
[\sum_{l>1,|m|\le l}\; Y_{\alpha,l,m}^E\;dX^{\alpha,l,m}_E
+\sum_{l>0,|m|\le l}\; Y_{\alpha,l,m}^M\;dX^{\alpha,l,m}_M]
\}
\nonumber
\ea
which shows that the pairs listed in the decomposition 
$(q,p), (Q,P), (x,y), (X,Y)$ are indeed conjugate so that we have 
for instance 
\be \label{2.15}
\{y_{\alpha,l,m}^E(r),x^{\alpha',l',m'}_E(r')\}=\delta^{(1)}(r,r')\;
\delta_\alpha^{\alpha'}\;\delta_l^{l'}\;\delta_m^{m'}
\ee
etc. (in particular Poisson brackets between fields from different 
species E,KG,M vanish) where $\delta^{(1)}(r,r')$ is the $\delta$ distribution
on the positive real line if we consider one asymptotic end.

\subsection{Perturbative decomposition}
\label{s2.3}

When plugging the decompositions (\ref{2.12}), (\ref{2.13})
into the Hamiltonian (\ref{2.10}) one would like to integrate out the 
angle dependence. This is immediately possible for the contributions 
to $H$ from spatial diffeomorphism and Gauss constraint 
$V_\mu, G$ (recall that we could already delete the 
piece depending on $v^0,v^\mu,v_0$) because these are homogeneous 
polynomials of degree three and and two respectively in the 
fields (\ref{2.12}) and (\ref{2.13}) and just requires to apply
the normalisation
of the spherical harmonics and the Clebsch-Gordan decomposition. However,
the contribution to $H$ from $V_0$ is non-polynomial in the 
metric field (all other degrees of freedom enter polynomially). 
While 
one could in principle try to integrate out the angular dependence 
of the non-polynomial constraints directly 
and non-perturbatively, it is not known whether one can   
actually do this in closed form. Moreover,
as we are interested in perturbation theory with respect to the 
non-symmetric degrees of freedom $x,y,X,Y$, we may as well perform 
the perturbative expansion before the angular integration. One 
then obtains a perturbation series which is infinite but only due 
to field $m_{\mu\nu}$, the series is finite as far as the other fields are 
concerned and each term in that series can 
be integrated in closed form using again Clebsch-Gordan theory.  

One can avoid this infinite series as follows: The Hamiltonian 
constraint $V_0$ depends on $\sqrt{\det(m)}^{\pm 1}$ in order that 
each term has net density weight unity. If we multiply $V_0$ by 
$\sqrt{\det(m)}$ then only $\det(m)$ appears which is a cubic polynomial
in $m$. Then $\sqrt{\det(m)}\; [V_0^M+V_0^{KG}$ and the piece of 
$\sqrt{\det(m)}\; V_0^E$ not involving the Ricci scalar is already 
polynomial. The Ricci scalar contains a term linear that is 
derivated and a term quadratic but without derivatives  
in the Christoffel symbol 
$\Gamma^\mu_{\nu\rho}=m^{\mu\lambda}\;\Gamma_{\lambda\nu\rho}$ 
where $\Gamma_{\lambda\nu\rho}$ is homogeneously linear and 
these terms are contracted with the inverse metric. 
Thus $\tilde{V}_0:=\sqrt{\det(m)}^5\; V_0$ is polynomial in all variables.
It is quadratic in all momenta $W^{\mu\nu},\; \Pi,\; E^\mu$, quadratic
in $A_\mu$ and quadratic in $\Phi$ if $U$ is just a mass term, otherwise 
higher if $U$ is a higher order polynomial. 
Since $\det(m)\; m^{\mu\nu}$ is a polynomial of degree two, in $\tilde{V}_0^E$
all terms are of order eight in   
$m_{\mu\nu}$, in $\tilde{V}_0^{KG}$ the term independent of respectively 
dependent on $\Pi$ has 
degree eight (or higher if there is a non quadratic potential) respectively six
and $\tilde{V}_0^M$ has 
degree seven in $m_{\mu\nu}$. Thus $\tilde{V}_0$ is polynomial in all 
fields. Since the smearing function $S^0$ is arbitrary, we can absorb 
$\sqrt{\det(m)}^{-5}$ into it, thererby defining 
$\tilde{S}^0=[\det(m)]^{-5/2} S^0$ which has density weight minus five.
Then we can still use the first formula in (\ref{2.12}) with 
$S^0$ replaced by $\tilde{S}^0$ if we multiply its right hand side by 
$\omega^{-5}$. 
 
To see that this is allowed, note while the ``rescaling'' of $S^0,V_0$ 
by $\gamma:=\det(m)]^{5/2}>0$ does have a non-trivial effect on both 
the constraint and the smearing function, it has absolutely no effect 
on the reduced Hamiltonian which is what we are interested in. 
To see the latter, recall that given gauge fixing conditions say of the form 
$G=K-\tau$ ($K$ are functions on the phase space, $\tau$ are some 
time dependent coordinate conditions) for constraints $C$ with smearing
functions $f$ the reduced Hamiltonian $H_r$ acting on functions $F$ 
of the true degrees of freedom is computed by the formula
$\{H_r,F\}=\{C(f),F\}_{C=G=f-\hat{f}=0}$ where $f=\hat{f}$ solves 
$\{C(f),G\}=\dot{\tau}$. Since $C,\tilde{C}=\gamma C$ have the same 
zeroes, the gauge fixing is the same and the matrices $\gamma$
cancel in $H_r$ because $\hat{\tilde{f}}=\gamma^{-1}\hat{f}$ when 
$C=0$.  
 
With this understanding, all constraints are finite polynomials in all 
fields (of top degree twelve e.g. for a charged scalar 
field and for at most quadratic potential otherwise
of degree nine plus the degree of the potential) 
and plugging in
(\ref{2.12}), (\ref{2.13}) allows us to carry out all angular integrals in 
closed form, thereby yielding an {\it exact} closed expression for 
the Hamiltonian written as a polynomial of degree two in all $p,P,y,Y$; 
of degree two in $x_M,X_M$; of degree two or the degree of the potential 
in $Q_{KG}, X_{KG}$; of degree at most nine in $q_E,x_E,$. 
Thus, we obtain an expression
for $H$ that is known {\it non-perturbatively}. However, note that this 
perfectly allowed reformulation of $H$ in terms of $\tilde{V}_0$ 
(relying on the assumed non-degeneracy of $\det(m)$) which enables us 
to carry out all angular integrals, does not prevent the {\it exact} 
reduced or physical Hamiltonian from being non-polynomial in $Q,P,X,Y$.
This is because in its computation \cite{pa129} we must solve for the 
momenta $p,y$ which appear non-linearly (namely quadratically) 
in $\tilde{V}_0$ and thus their solution leads to square roots. 
The algorithm of \cite{pa129} computes that square root perturbatively
in $X,Y$ which thus involves again an infinite series. Nevertheless 
the computational effort when working with $\tilde{S}^0,\tilde{V}_0$
is significantly smaller than when working with 
$S^0,V_0$ because 1. the number of necessary Clebsch-Gordan
decompositions required  is finite and can 
be performed in closed form, hence there are no perturbative 
errors at this stage and 2.
since $\tilde{V}_0$ does not involve an infinite 
series while $V_0$ does, the 
perturbative solution of of $\tilde{V}_0$ 
in terms of $x,y,X,Y$ is tremendously simplified. Of course,  
whether one works with 
$D_0$ or $\tilde{V}_0$, the perturbative solution to both constraints 
including the angular integrals is the same, it is just that $\tilde{V}_0$
is significantly more convenient.    

With these preparations and dropping the tilde in $\tilde{S}^0,\tilde{D}_0$
again we can thus write the Hamiltonian in the 
form 
\be \label{2.16}
H= f^a \;C_a+g^j \;Z_j        
:=\int_{\mathbb{R}^+_0}\; dr\;\{
[f^v \; C_v+f^h\; C_h+f^M\; C_M]+\sum_{l>0,|m|\le l}\;
[\sum_{\alpha\in \{v,h,e,o,M\}}\; g^{\alpha,l,m}\; Z_{\alpha,l,m}]
\;\}
\ee   
where ($\alpha\in \{e,o\},\; l>0$)
\ba \label{2.17}
&& C_v:=<1,V_0/\omega^6>_{L_2},\;
C_h:=<1,V_3/\omega>_{L_2},\;
C_M:=<1,G/\omega>_{L_2},\;
\\
&& Z_{v,l,m}=<L_{l,m}, V_0/\omega^6>_{L_2},\;
Z_{h,l,m}=<L_{l,m}, V_3/\omega>_{L_2},\;
\nonumber\\
&& Z_{\alpha,l,m}=<L_{\alpha,l,m}, V_{.}/\omega>_{L_2^2};\;\alpha\in\{e,o\},\;
Z_{M,l,m}=<L_{l,m}, G/\omega>_{L_2}
\nonumber
\ea
where we refer to $C_a,Z_j$ respectively as the symmetric and non-symmetric 
constraints respectively. Note however that each of them depends on all
degrees of freedom and thus Poisson brackets of $C_a,Z_j$ respectively 
also affect $x,y,X,Y$ and $q,p,Q,P$ respectively.            

To compute the inner producs (\ref{2.17}) we expand the polynomials 
$C_a, Z_j$ into its homogeneous pieces $C_{a(n)}, Z_{j(n)}$ respectively 
where the notation means that e.g. $C_{a(n)}$ is a homogeneous polynomial 
of degree $n\ge 0$ in $x,y,X,Y$. Then the integrals over $S^2$ in 
(\ref{2.17}) involve $n$ or $n+1$ tensor harmonics respectively for 
the contribution $C_{a(n)}, Z_{j(n)}$ respectively. Since 
$<1,L_{l,m}>_{L_2}=0,\; l\not= 0$ 
it follows immediately the simple but very powerful
observation that 
\be \label{2.18}
C_{a(1)}=Z_{j(0)}=0
\ee
which turns out to be crucial in order for the 
perturbative construction algorithm for 
the physical Hamiltonian to work. On the other hand, the 
contributions from $C_{a(0)},C_{a(n)};n\ge2,\;,Z_{j(n)};n\ge 1$ 
are in general 
not vanishing. One computes them explicitly using Riemann tensor calculus 
and harmonic analysis on the sphere as well as Clebsch-Gordan decomposition.
Explicitly one finds for say vanishing Klein Gordan potential $U$ that 
\ba \label{2.19}
&&
C_{v(n)},\;1\not=n\le 12;\; C_{h(n)},\; 1\not=n\le 2;\; C_{M(n)}; n=0,2,\;
Z_{v,l,m(n)},\; 0<n\le 12;\;         
\nonumber\\
&& Z_{\alpha,l,m(n)},\;\alpha\in \{h,e,o\}, 0<n\le 2;\;         
Z_{M,l,m(n)},\; n=1,2
\ea
are non-vanishing and can be computed in closed form.

\section{Concepts of quantum black hole perturbation theory}
\label{s0}

The purpose of this section is to review several concepts of (quantum) black 
hole perturbation theory in a non-technical fashion before we 
go into technical details in later sections. The aim is  
to explain these notions and their interrelations in order to erect a 
consistent 
conceptual picture.

\subsection{Observables, backreaction and black hole evaporation}
\label{s0.1}

By backreaction we understand the interaction between the 
spherically symmetric and spherically asymmetric true degrees of freedom. 
The symmetric and asymmetric true degrees of freedom respectively 
are basically the $l=0$ and $l\ge 1$ multipole
moments respectively of the various observable (or true) fields 
(geometry and matter) as described in the previous section. With this 
understanding of backreaction, the following issue arises: Consider 
first the Einstein-Maxwell sector, i.e. there is no additional 
e.g. scalar matter ``hair''.
Then by Birkhoff's theorem \cite{HE}, 
the symmetric sector (no radiation, i.e. no multipoles)
is uniquely described by two 
variables, namely the black hole mass $M$ and charge $Q$. These arise 
as integration
constants when solving the perturbative Einstein equations also when the 
multipoles are non-vanishing. The physical Hamiltonian then will be 
a functional of the asymmetric true degrees of freedom denoted by $X,Y$ 
for the Einstein-Maxwell sector and a function of $M,Q$, say $H[M,Q;X,Y]$.
By construction, $M,Q$ have vanishing Poisson brackets with $X,Y$
and among themselves. Therefore,
$M,Q$ would be constants of motion with respect to $H$ and while there is 
gravitational and electromagnetic radiation described by $X,Y$, 
certainly $M,Q$ would be unchanged
by the dynamics described by $H$, in particular, $M,Q$ could not evaporate.
This is in contrast to the situation in cosmology where the physical 
Hamiltonian, e.g. deparametrised with respect to the homogeneous 
mode of a scalar field $\phi$, does depend 
on the scale factor $a$ {\it and} its conjugate momentum $p_a$ 
so that there {\it is} backreaction in the above sense between $a,p_a$ and 
the inhomogeneous modes of both matter and geometry \cite{ST}. In spherical
symmetry, the situation for $M$ changes e.g. when introducing neutral scalar 
matter and for $Q$ it changes when e.g. introducing charged scalar matter.
In that case, $M,Q$ are simply absorbed into true symmetric (in this 
case) scalar matter degrees of freedom with which there is non-trivial 
backreaction.  

Thus it would seem that in Einstein-Maxwell theory $M,Q$ could not evaporate,
{\it not even in the quantised theory} which would include a quantisation
of $H[M,Q;X,Y]$.
While for $Q$ this is expected as photons do not carry charge, for $M$ 
this is non-trivial: In the classical theory it is a manifestation 
of the black hole area theorem since Maxwell matter obeys the weak 
energy condition. However in the quantum theory where the weak energy 
condition is typically violated locally, one would not expect that $M$ 
does not evaporate. In particular, in Einstein-Maxwell theory the above 
argument suggests that even including backreaction {\it there is no 
dynamical mechanism for black hole evaporation} which is in contradiction 
to the semiclassical argument that the existence of Hawking radiation 
predicts black hole evaporation irrespective of the matter species.
\\
Two ways out of that conclusion suggest themselves:\\ 
1.\\ 
The first way out is based
on the observation that Birkhoff's theorem treats {\it all} spacetime 
diffeomorphisms as gauge transformations. However, in the Hamiltonian 
framework one makes a finer distinction between diffeomorphisms that 
generate non observable gauge transformations and those that are 
observable symmetry transformations. If one adopts that Hamiltonian 
point of view which is consequential within this purely Hamiltonian 
treatment of black holes, then additional observables, namely 
momenta $P_M,P_Q$ conjugate to $P,Q$ are unlocked. If the spacetime 
metric depends at least on both $(M,P_M)$ then $M$ 
is no longer conserved even in pure Einstein-Maxwell theory and can 
possibly evaporate in the quantum theory. We will see that 
$P_Q$ does not enter the spacetime metric while $P_M$ does or does 
not, depending on the way that the expression that defines $P_M$ is 
made compatible with the chosen gauge fixing condition.\\
2.\\
The second way out is to accept the absence of $P_M$ from 
the reduced Hamiltonian and consists in interpreting $M$ not as the dynamical 
mass but just as an integration constant, namely the initial mass 
$M(0)$ in (\ref{1.1}). The role of the dynamical mass must then be 
played by another object. It cannot be the ADM mass which is 
basically the reduced Hamiltonian and which is therefore also conserved.
The natural notion of dynamical mass is the square root of the area of the 
apparent horizon with respect to the foliation selected by the gauge fixing 
conditions (equivalent to the selection of an observer congruence), which 
coincides with the notion of irreducible mass for the case that apparent and 
event horizon coincide.\\ 
\\
In the following two subsections we will spell out some of the details 
of these two possibilities. In the main part of the paper we 
adopt the second point of view as it appears to be less 
sensitive to the choice of gauge fixing condition but keep the first 
point of view in mind for potential future applications. Some of the 
possible technical implementations of the first viewpoint can be found 
in appendices \ref{sa}, \ref{sb}, \ref{sd}.

\subsubsection{Dirac observable conjugate to the mass}
\label{s0.1.1}

In \cite{Kastrup,KucharSS} it was observed 
that the reduced phase space of a 
spherically symmetric vacuum black hole 
is not described by just the a
mass $M$ but also its conjugate momentum $P_M$ and a scaling 
parameter $\kappa$. Both $M,P_M$ are Dirac observables,
that is,
functionals of the canonical variables of a vacuum black hole (or the 
symmetric degrees of freedom) that  
have vanishing Poisson brackets with the constraints. Moreover, 
$M,P_M$ are canonically conjugate. The number $\kappa$ enters 
the physical Hamiltonian $H=\kappa\; M$. In appendices \ref{sa}, \ref{sb} 
we explain in non-technical terms why this happens: Essentially, one 
can perform a canonical transformation to conjugate variables $m,p_m$ 
and the constraints impose that $m'=0$. This leaves an integration 
constant $m=M$ as solution. One then shows that $P_M:=\int\;dr\; p_m$
is gauge invariant, thus cannot be gauged away and is conjugate to $M$.
Finally, the transformations that stabilise any choice of gauge consistent 
with the value $P_M$ have the generator $\kappa M$ where $\kappa$ is 
arbitrary. The existence of $\kappa$ can be understood from the 
fact that $C(r)=m'(r)=0$ is equivalent to 
$\tilde{C}(r)=m(r)-m(0)=0$ which identically satisfied 
at $r=0$ thus the constraints $C(r)$ are redundant. Equivalently, when 
solving the stability condition for transformations preserving a gauge 
compatible with $P_M$ one must solve a differential equation for 
the smearing function of the constraint which has a free integration 
constant which is $\kappa$. All of this 
has to be done with due care paying attention to decay conditions, boundary
terms, finiteness of both symplectic structure and constraint integrals 
and  functional differentiability of the constraints,  see appendix 
\ref{sa}.    

In \cite{Kastrup} it 
is explained why the existence of $P_M$ is not in conflict with Birkhoff's
theorem: Indeed, Birkhoff's theorem says that in asymptotically flat regions 
of spherically symmetric spacetimes one can pass to coordinates in which 
$P_M$ vanishes. By carefully investigating the required temporal 
diffeomorphism, one observes that this diffeomorphism is asymptotically 
non-trivial and thus is to be considered as a {\it symmetry} transformation 
rather than a {\it gauge} transformation \cite{HRT} in the Hamiltonian 
setting. Thus the existence of $P_M$ comes about due to the different notions 
of gauge in the Lagrangian (as used in Birkhoff's theorem) and Hamiltonian 
setting respectively. To make our exposition self-contained, we will 
review this subtle difference of gauge in appendices \ref{sa}, \ref{sb}.
A similar observation was made in \cite{AshtekarSamuel} in the context 
of cosmological models.

Note that the existence of a two dimensional rather than one 
dimensional reduced phase space of the 
sperically symmetric vacuum sector is also natural from the point of view 
of symplectic reduction of phase spaces with respect to first class
constraints which always returns an even dimensional reduced phase space 
at least in the case of finite dimensional unreduced phase spaces. 
The idea would then be that the reduced Hamiltonian depends on both 
$M,P_M$ and 
$X,Y$ so that $H[M,P_M;X,Y]$ mediates an interaction between both 
types of degrees freedom. In the quantum theory, that interaction 
can then be treated e.g. using the methods of 
space adiabatic perturbation theory \cite{SAPT} (SAPT), a generalisation 
of the Born Oppenheimer approximation scheme that can deal with the situation 
that the interaction depends on both slow degrees of freedom $M,P_M$ rather 
than just one, similar to the analysis performed for 
cosmology \cite{ST}. The SAPT scheme then 
would produce an effective Hamiltonian for $M,P_M$ that would take this notion 
of backreaction into account.           

To make this work, one has to supply a missing ingredient to the works 
\cite{Kastrup,KucharSS}, namely to explain how the 
functionals $M[q,p],P_M[q,p]$ of the spherically symmetric 
intrinsic metric $q$ and its conjugate
momentum $p$ can give rise to prescribed values $M,P_M$ when one solves
the constraints for $p$ with some choice of gauge for $q$ installed.
Since the constraints can be solved for $p(r)=p(r;M,q)$ for general 
$q$ where $M$ is an integration constant, apparently   
the only solution to this problem appears to be that the gauge for $q$ 
must depend on both $M,P_M$ or at least on $P_M$. 
But then the following puzzle arises:
Since the physical Hamiltonian is essentially the black hole mass $M$,
while $M$ is a constant of physical motion, $P_M$ is not. This would 
mean that the metric $q$, even in absence 
of the perturbations $X,Y$ and outside the horizon 
is potentially not static and the effect is not necessarily 
small as $\dot{P}_M=O(1)$ and thus $|P_M|$ is unbounded in time. 
Below we offer three possible resolutions 
to this puzzle which all rest on the fact that the formal expression 
for $P_M[q,p]$ is actually an ill-defined integral, which requires a 
more careful definition,
for the standard choices of gauge.
The three resolutions differ in the way that this integral is regularised 
given a choice of gauge. The choice of gauge we employ respectively 
will be closely related to the (generalised) Gullstrand Painlev\'e gauge 
(GPG) \cite{26} which we review in appendix \ref{sd}. 
This gauge choice is motivated by the fact that it is both 
adapted to the spherical symmetry of the problem and natural from the 
point of view of QFT in curved spacetimes (Hawking radiation) and black hole 
-- white hole transitions as we explain further below.  
\begin{itemize}
\item[A.] 
Since $\tilde{P}_M[q,p]:=P_M[q,p]-f(M)$ is still conjugate to $M$ 
for any function $f$ of $M$, it is possible to obtain a finite expression 
by choosing the exact Gullstrand-Painlev\'e gauge (GPG) 
\cite{26} which is independent
of both $M,P_M$, except for an 
arbitrarily small neighbourhood of the origin where the metric is singular
anyway. The coordinate size $L$ of that neighbourhood does not grow with time 
and can be chosen to be of at most 
Planck size so that this deviation is hidden 
behind the horizon even for Planck size black holes. Yet, the deviation
can bechosen to depend on $L,M,P_M$ in such a way that the prescribed value 
for $P_M$ is obtained from $\tilde{P}_M[q,p]$ on the reduced phase space 
(i.e. both constraints and this gauge are installed). This will be described 
in appendix \ref{sd}. A variant of this is to glue two asymptotic ends 
along the cylinder $r=l<2M$. Then it turns out that the corresponding 
vacuum solution reaches expnentially fast the Einstein-Rosen bridge 
solution $l=2M$, see appendix \ref{sb}.
\item[B.] 
Another way to regularise the integral is to take principal values of 
the integral which has singularities at $r=M,r=\infty$. We consider 
the generalised GPG \cite{26} which depends on an additional parameter
$e$ corresponding the energy of a timelike radial 
geodesic observer on which more
will be said below. It is then possible to regularise the integral such that 
$c\;P_M={\rm arth}(e/e_0)$ for some fixed numerical value $c,e_0$. Then as 
$|P_M|$ grows, $e$ approaches a constant value $e_0$, the more rapidly the 
larger $|c|$, e.g. $e_0=1$ which 
is the exact GPG. Thus while $e$ is not a constant of physical motion, it 
quickly reaches a quasi constant value $e_0$.
\item[C.] We can pick the exact GPG and still regularise the integral such 
that the given value of $P_M$ results. 
\end{itemize}
Option C is the simplest and while $P_M$ exists it does not show up 
in the gauge fixed $q$, its value is simply a regularisation ambiguity.
For the electric charge this is automatically the case, i.e. the momentum 
$P_Q$ conjugate to $Q$ does not enter the gauge fixed 
metric and therefore the reduced Hamiltonian. Option B is quasi-equivalent 
to option C rapidly in time. Option A in the frirst variant 
is almost (locally in space) 
the exact GPG, it has the disadvantage 
to make perturbative calculations more complicated. Common to these options 
is that $P_M$ either does not show up in $q$ at all (option C) or 
is quasi absent either with time (option B) or spatially locally 
(option A, first variant) or the interior of the black hole is removed
so that $P_M$ becomes a function of $M$ 
(option A, second variant). 
In that sense, backreaction on $M$ via the interaction between $M,P_M,X,Y$ 
is either exactly or quasi absent.

Note that the generalised
GPG with parameter $e_0$ are spacetime diffeomorphic to the standard SS 
coordinate solution for $r>2M$ but the required temporal diffeomorphism
involves in all options a rescaling of the time coordinate (i.e. the 
lapse is asymptotically different from unity). Hence it 
is asymptotically non-trivial and should therefore not be considered a 
gauge transformation. 

To see which of these three options is preferred one may invoke the 
mathematical argument to have a match with the Kantowski-Sachs (KS)
picture which underlies most 
of the treatments of the quantum black hole with LQG methods (LQBH) 
\cite{LQG-BH,SF-BH} building on the huge amount of experience with 
the treatment of quantum cosmology with LQG methods (LQC) \cite{LQC}.
We will review this in appendix \ref{se}. Note 
however two caveats with that argument: First, the equivalence
of both pictures is due to the fact that the GPG vacuum solution is 
stationary while the KS vacuum solution is homogenous so that one can 
just switch the roles of time and space. This is no longer true 
with matter coupling unless the 
spherically symmetric 
matter sector also leads to only stationary solutions 
(such as TOV spacetimes with 
perfect fluid energy momentum tensors and suitable equations of state between 
pressure and energy density \cite{12a}). Second, the mathematical 
equivalence between the KS picture and the generalised GPG picture 
rests on considering all spactime diffeomorphisms as gauge transformations 
and thus is insensitive precisely on the issue about the status of 
existence of $P_M$ which relies on a finer classification 
of diffeomorphisms. 

Nevertheless, it is of interest to understand the correspondence between 
the two pictures in the Hamiltonian setting:\\
In the KS picture one considers the Hamiltonian analysis of 
homogeneous, spherically 
symmetric but anisotropic 
KS cosmologies which depend on two scale factors $A,B$ and their conjugate 
momenta $p_A,p_B$. The spatial diffeomorphism constraint vanishes 
identically and 
there is only one remaining Hamiltonian constraint. Hence this is a mechanical 
system with a 4d phase space and one (necessarily first class) constraint 
so that the counting of degrees of freedom is very simple: The reduced phase 
is 2d and there are exactly two algebraically independent and canonically
conjugate Dirac observables. One can develop both the relational observables 
and true degrees of freedom picture and compute the physical respectively 
reduced Hamiltonian. The solutions of the physical equations of motion 
show that on shell the scale factors and lapse are exactly those of the  
Schwarzschild interior solution with the switch between radial and temporal 
coordinate 
understood, up to one exception: While the first observable is associated 
with the mass, the second corresponds to a rescaling of the time coordinate
and can be interpreted as a ``clock ticking rate''. It is exactly 
the same scaling parameter $\kappa$ that occurs also in the above GPG 
picture (which covers both interior and exterior of the black hole).
Accordingly, in the KS picture the possibility for $P_M$ to show up 
in the physical metric never occurs. It follows that the two pictures agree 
if we pick for the GPG picture 
option C i.e. we install the exact GPG and ascribe $P_M$ to a 
regularisation freedom.

It is quite interesting to see how this happens: In the KS picture we 
start from a finite dimensional phase space with a very transparent 
counting of degrees of freedom. On the other hand in the generalised GPG 
picture we start with a field theory i.e. an infinite dimensional phase 
space before the constraint treatment and there a complicated set of 
issues such as spatial fall-off conditions, boundary terms, solving 
spatial differential equations etc. arise. Because of this the 
construction of the reduced phase spaces and physical Hamiltonians is 
quite different. In the cosmological KS picture the physical Hamiltonian 
simply results from a gauge fixing conditions and the effective 
equations of motion for the corresponding true degrees of freedom while 
in the GPG picture it is the boundary term that must be added to make 
the constraints functionally differentiable which drives the dynamics of 
the true degrees of freedom. While in the KS picture we have 2 true 
degrees of freedom or Dirac observables $M,\kappa$, in the GPG picture 
we have 2 observables $M,P_M$ and an additional integration constant 
$\kappa$ which arises due to constraint redundancy while in the KS 
picture there can be no such redundancy. In option C above we can 
discard $P_M$ as it does not show up anywhere in the metric and then 
both pictures match as far as the surviving parameters, namely 
$M,\kappa$, is concerned. However, while $M,\kappa$ have non vanishing 
brackets in the KS picture, in the GPG picture $\kappa$ is considered a 
phase space independent constant. The resolution of this apparent 
contradiction is as follows: In the KS picture, the physical Hamiltonian 
and $M,\kappa$ are {\it explicitly time dependent} with respect to KS 
time $T$. However, on solutions of the equations of motion they become 
constants of motion. By contrast, in the exact GPG picture the physical 
Hamiltonian is conservative, i.e. like $M,P_M$ not explicitly time 
dependent with respect to GPG time $\tau$. On solutions, also $M$ is a 
constant of motion and $\kappa$ was a time independent constant from the 
outset. Note that since essentially the KS time $T$ is the radial 
variable, $T$ independence in the KS picture translates into radial 
independence which again brings both pictures into congruence.\\ \\ To 
summarise: For the rest of this paper we will follow option C when 
considering non vacuum spacetimes. This means that the mass parameter 
$M$ is a constant of physical motion and can therefore be called the 
``remnant mass'' because the physical Hamiltonian reduces to it when 
perturbations and matter are absent. There is no dynamical mechanism 
that can change $M$ because while the second Dirac observable $P_M$ 
exists it does not enter the physical Hamiltonian. We work in the exact 
GPG. Still we may use the above introduced cut-off $l=r$ as a 
regularisation method when we compute the black hole white hole 
transition mode functions, see section \ref{s7}.

\subsubsection{Apparent horizon area}
\label{s0.1.2}

Consider then $M$ as an integration constant. It 
happens to coincide in the exactly 
spherically symmetric case with many different definitions of mass in 
general spacetimes that have been discussed in the literature such as 
the ADM mass \cite{HRT} or the irreducible mass 
\cite{12a,HE}. The ADM mass is expected to be 
the leading term of the physical Hamiltonian $H$ for small 
perturbations and thus is in particular 
preserved (togther with $M$ if $P_M$ is absent in the metric), 
hence ``evaporation of $H$'' 
is again not possible. However, the irreducible mass with respect to 
the apparent horizon serves as a 
suitable more direct measure of backreaction. Recall that given a 
foliation of $(M,g)$ by spacelike hypersurfaces $\tau\mapsto\Sigma_\tau$ a 
compact 2-surface $S_\tau\subset \Sigma_\tau$ without boundary is called outer 
marginally trapped if $\theta_+=0,\;\theta_-<0$ where $\theta_\pm$ are 
the expansions of the null normals $l_\pm=n\pm s$ with $g(n,n)=-1, 
g(s,s)=+1,\; g(n,s)=0$ with future oriented timelike unit normal $n$  
to $\Sigma_\tau$ and 
outward oriented spacelike unit normal $s$ to $S_\tau$ 
and tangential to $\Sigma_\tau$. 
The outermost trapped surface 
(2-dimensional) is called the apparent horizon $\mathfrak{H}_\tau$ at $\tau$ 
and the union of apparent horizons as $\tau$ varies is called 
trapped horizon (3-dimensional) $\mathfrak{H}$. 
Finally, the irreducible mass $M_\tau$ at $\tau$ is the square root of the 
area $A_\tau$ of $\mathfrak{H}_\tau$ (up to a constant factor; abusing 
terminology, the actual definition refers to the event horizon rather 
than the apparanet horizon). 

Following \cite{PertApparentHorizon} it 
is not difficult to show, see section \ref{s6},
that indeed one can uniquely determine $A_\tau$ to 
arbitrary order in perturbation theory directly in the Hamiltonian 
setting and thus obtain a notion of time dependent mass that can 
possibly evaporate. This notion of mass is also operationally preferred 
as an astrophysicist would recognise a black hole as a marginally outer 
trapped region (which is the condition that the light rays leaving 
$S_\tau$ orthogonally are marginally converging). Now the following issue 
arises: Under a combination of the usual assumptions, 
namely that the classical Einstein
equations hold, that the energy conditions for the energy momentum tensor 
are satisfied, the validity of cosmic censorship 
and global hyperbolicity, classical GR makes two 
predictions: First, that the existence of trapped surfaces implies the 
existence of a singularity and second, if that singularity is a black 
hole (rather than a naked one forbidden by cosmic censorship)
then the trapped surface is within the black hole region. Therefore,
if the astronomer is outside the black hole region, she will indeed 
measure the event horizon as the sphere of no escape. From that point 
of view the apparent horizon appears to be of no use except that one 
expects it to be a good approximation (or at least a lower bound) of the 
event horizon. The apparent horizon has the advantage of being 
less teleological than the event horizon but it has the disadvantage of 
being foliation dependent. We do not share that criticism but rather 
accept that the notion of irreducible 
mass {\it is} observer dependent and here the
foliation is selected by the (generalised) Gull-strand-Painlev\'e gauge
which we motivate in the next subsection.
We refer the reader to the rich literature on apparent horizons 
and its specialisations (e.g. dynamical and isolated horizons) \cite{14a}. 
 
Now again by classical 
GR the area of the event horizon cannot decrease, therefore in classical 
GR there is no evaporation possible. We need quantum theory to have 
evaporation by violating at least one of the assumptions
of classical GR. The most obvious one is the violation of the 
energy conditions, recalling 
similar quantum violations of classical (energy) inequalities in QFT in flat 
spacetime \cite{Fewster}. 
In that case one can have trapped regions which evolve 
dynamically while there is no event horizon at all if for example 
the singularity is resolved as the existence of trapped regions no 
longer implies the existence of a singularity (assuming that 
the classical reasoning can be applied at all, at least in a semiclassical
sense). Even if there still is an event horizon, the violation 
of the energy conditions now no longer implies the apparent horizon 
to lie in the black hole region, it can even lie outside. This picture 
is confirmed in exactly solvable 2d models such as the CGHS black hole 
solutions including matter \cite{15}.

Now in perturbation theory the zeroth order 
is a spherically symmetric vacuum black hole if there is no spherically 
symmetric matter hair and that spacetime does have a singularity and 
an event horizon. The second order describes the perturbations as 
propagating on that singular spacetime and can e.g. serve to start 
a Fock quantisation of the system. Therefore it appears strange to perturb 
a singular spacetime with an event horizon if one expects that in 
quantum theory the non-perturbative spacetime is in fact free of 
singularities and therefore does not have an event horizon, that is, 
it seems that black hole perturbation methods are unable to capture the 
actual non-singular nature of the quantum theory. Our point of view here 
is that the causal structure of the spacetime including its singularities and 
horizons is itself subject to perturbation theory. Hence even at second 
order one would compute e.g. the expectation value with respect to an 
initial state (using the Heisenberg picture) of the (perturbed)
metric tensor, curvature tensor, reduced Hamiltonian, apparent horizon
and its area etc. using their classical expressions and subsitute it by 
the corresponding operator valued distributions. Altogether this 
describes a new effective spacetime metric to which we may apply the 
usual classical GR 
definitions of singularities and horizons to calculate a quantum
corrected Penrose diagramme (at least in regions where that effective 
metric has small quantum fluctuations). This process can 
be iterated at higher orders of perturbation theory and presents 
a drastic form of backreaction. It goes beyond the semiclassical 
Einstein equations in which one defines the Einstein tensor as the 
expectation value of the matter energy momentum tensor 
$G(g):=<T(g)>$ and tries to find a self-consistent metric $g$ solving 
this equation \cite{RunAway}
because here the (perturbations of) the metric are also 
quantised in the Heisenberg picture (the 
Heisenberg equations follow from the reduced Hamiltonian, to any 
order in perturbation theory, using the Gell-Mann Low formula if 
we use Fock quantisation). A possible mechanism for violation of energy 
conditions is due to normal ordering prescriptions of $<T(g)>$. 

If the singularity disappears in this 
process, what forms and evaporates is then not the event horizon 
but the apparent horizon, see also \cite{14b}. Therefore perhaps one    
should in fact set $M=0$ at zeroth order i.e. start with Minkowski space 
and follow the above process from formation to evaporation although 
it is unclear whether starting from $M=0$ and Minkwoski space Fock 
spaces for the perturbations and an initial coherent state 
peaked at gravitational collapse initial data one really obtains
a collapse - evaporation process in the quantum theory. 
For the sake of generality in this paper we handle the general 
$M>0$ case but motivated by the quantum theoretical considerations
feel free to regularise the singular $M>0$ background spacetime when needed.

It is remarkable that the innocent looking integration constants 
$M,Q$ have such a tremendous impact on the whole quantisation process.
Namely they decide whether at second order perturbation theory we
consider Fock representations on singular or non-singular spacetimes, 
with or without horizons.\\  
\\ 
We close the discussion by mentioning the following 
observations: \\
A.\\
The components of the spacetime metric $g$ are specific functions 
of $M,X,Y$. These arise
as follows: One imposes the GPG, fixing the components of 
the spatial metric $m$ different from 
$X$, solves the constraints for the components of $p$ different from $Y$ 
and solves for lapse $S^0$ and shift $S^a$ 
using the stability condition of the imposed 
gauge under gauge transformations. The irreducible mass then is also a 
specific function $m=m[M,X,Y]$ of these true degrees of freedom. As we will
show in section \ref{s6} we have $m=M+m_2[M,X,Y]+m_3[M,X,Y]+..$ where 
$m_n$ is of $n-$th order in $X,Y$. Now even if radiation described 
by $X,Y$ is {\it produced} only in a compact spacetime region $R$, since
$X,Y$ have to obey wave equations that radiation is generically non vanishing 
in the entire causal future $J_+(R)$ of that region (the causal 
future is of course also influenced by the amount of radiation 
present as it perturbs the metric). Given a 
foliation of $(M,g)$ by Cauchy surfaces $\Sigma_\tau$ let $\tau_0$ be the 
latest foliation parameter such that $\Sigma_\tau\cap R\not=\emptyset$. 
Then still $\Sigma_\tau\cap J_+(R)\not=\emptyset$ for all 
$\tau\ge \tau_0$. Thus a timelike observer with eigentime $\tau$
will eventually enter 
$J_+(R)$, however for sufficiently large timelike distances from $R$
the signal described by $X,Y$ will be weak. Thus the spatial metric 
$m$ returns to almost strict GPG for sufficiently large $\tau$ because 
it is a spatially {\it local} function of $X$, namely 
$q-g^{{\sf GPG}}=X$. On the other hand, the solution 
of $p,N,N^a$ at given $\tau$ is a spatially 
{\it non-local} function of $X,Y$ involving integrals over the entire 
hypersurface $\Sigma_\tau$. This is because the constraint and stability 
equations are PDE's and not algebraic equations. Since 
$\Sigma_\tau\cap J_+(R)$ becomes larger in volume the later 
$\tau$, these integrals can counter balance the decay of $X,Y$ and lead
to strong deviatiations of lapse and shift from their pre-radiation 
values which are $S^0=1$ and $S^a=\delta^a_3 \sqrt{2M/r}$ for all 
$\tau\ge \tau_0$ in the causally allowed region of spacetime.
As a measure of this deviation we may introduce the 
effective mass by $\sqrt{2m_{{\sf eff}}(r,\tau)/r}:=\int_{S^2}\;d\Omega(y)
S^3[M,X,Y](r,y,\tau)$ which therefore can deviate from $M$ for all 
$\tau\ge \tau_0$ and can potentially vanish therefore describing 
the evaporation effect. Since the spatial 
integral over a fixed $\tau$ hypersurface captures non-linear contributions 
from the ``gravition'' fields
$X,Y$ related to their data in $R$ by corresponding retardation,
this may be considered as an instance of a non-linear memory effect
\cite{Memory}.\\ 
B.\\
The apparent horizon at $\tau$ is defined by a radial profile function 
$\rho:\;S^2\to\mathbb{R}_+;\;$ which depends on $X,Y$ which become quantum 
fields. In that sense the coordinate location of the apparent horizon becomes 
quantised, subject to quantum fluctuations. This on the one hand is very 
similar to the construction of quantum reference frames \cite{QRF} and 
on the other hand intutively explains why the black hole area theorem can 
be violated in the quantum theory: Even if event and apparent 
horizon coincide, in suitable states the fluctuations can 
be very large so that the location of the apparent horizon becomes fuzzy.\\
C.\\
In the classical theory, if the metric does not 
depend on the observable $P_M$ conjugate to $M$, and possibly also the 
quantum theory it is 
conceivable that significant evaporation, apart from miniscule quantum 
fluctuations, only arise if we take interactions into account. 
These arise only beyond second order perturbation theory due 
to either self-interactions of $X,Y$ or interactions between $X,Y$ and 
the matter degrees of freedom. The reason is that at second order 
geometry and matter fluctuations decouple and all field species effectively 
propagate on a GPG background with fixed $M$. As that background is
GP time independent, each mode function that solves the corresponding 
classical equations of motion is $\propto e^{i\omega\tau}$ 
for some $\omega\in \mathbb{R}$ and thus 
periodic in GP time with periodicity determined by $\omega$. 
If the classical or quantum field is only excited for a finite number 
of such $\omega$ then 
all notions of mass dpeending on the fluctuations 
will be (quasi-)periodic rather than decaying functions of time which 
would rather 
require a superposition (integral) of an infinite number of modes. 
At second 
order there can be interaction between matter and geometry fluctuations 
if the electric charge does not vanish (or if the Klein Gordon potential 
has linear term) but such a quadratic interaction can be decoupled 
by a canonical transformation and the time dependence would still be 
quasi-periodic.
This indicates that having a manifestly gauge invariant formalism at one's 
disposal that allows to unambiguously compute the effects of higher order 
perturbations of the true degrees of freedom is probably 
very crucial in order that significant evaporation effects are turned on
even if only a finite number of modes are excited.     

\subsection{Foliations and Hawking radiation}
\label{s0.2}

The physical Hamiltonian can be expanded to any order in $X,Y$ and in 
quadratic order suggests a Fock representation of $X,Y$ corresponding 
to a free field on a spacetime whose metric depends  
parametrically on $M$.
This requires that the $\tau=$const. surfaces 
of the foliation defined by the gauge fixing conditions are actually complete 
Cauchy surfaces, i.e. they have a timelike normal and every inextendible 
causal curve must cross each $\Sigma_\tau$ precisely once. Moreover, as we wish 
to explore the fate of the singularity and the possibility of black to white
hole transitions, the spacetime covered cannot only be an asymptotic region 
but must also deal with the interior of the black hole parametrised by 
$M$ and possibly another exterior region.    

All of this rules out to use standard Schwarzschild time as the Killing field
$\partial_t$ orthogonal to the $t=const.$ slices is spacelike for $r<2M$ and 
there is a coordinate singularity at $r=2M$. 
As a physical selection criterion for the gauge fixing and the 
corresponding foliation, 
we use the principle of general relativity, that is, the equivalence 
principle: The Fock vacuum (zero particle vector state) 
selected by the Hamiltonian should be the one 
of an observer in geodesic motion since this observer comes as close as 
possible to an inertial observer in flat spacetime. Since the spacetime 
parametrised by $M$ 
is spherically symmetric we employ that symmetry property  and 
consider radial unit 
timelike geodesics adapted to spherical symmetry.
These carry two parameters $e,\rho$ in addition to the directional 
angles. Here $e:=-g(\xi,u)\in \mathbb{R}_+$ 
is the asymptotic energy per mass where 
$\xi=\partial_t$ is the Killing vector field corresponding to 
Schwarzschild time $t$ and $u=\partial_\tau$ is the unit timelike tangent along
the affinely parametrised geodesics with eigentime parameter $\tau$, i.e.
$g(u,u)=-1,\; \nabla_u u=0$. The parameter $\rho\in\mathbb{R}$ 
labels the geodesic and has  
the meaning that at eigentime $\tau=\rho$ the geodesic hits the singularity 
$r=0$ where $r$ is the Schwarzschild radial coordinate. The parameter $e$
which is the same for the whole geodesic congruence 
is a function of $M,P_M$ as detailed in appendix \ref{sd} which
would be of interest if we would use option B of the previous 
subsection and is therefore a Dirac 
observable as one would expect from its geometrical meaning. For $e\not=1$
the corresponding coordiantes are called generalised GP coordinates while 
$e=1$ corresponds to the standard or exact GPG. As motivated in the previous
subsection we will use $e=1$ in what follows but briefly comment on the 
case $e\not=1$ for possible future use. Since the geodesics are complete 
($\tau$ has infinite range) only for $|e|>1$ it is sufficient to consider 
only the case $e\ge 1$. 

At each fixed $e$, the family of ingoing 
geodesics labelled by $\rho,\theta,\phi$
is a geodesic congruence which covers the Schwarzschild (SS) and black hole
(BH)  
portion of the Kruskal extension of the spacetime and 
the family of outgoing 
geodesics labelled by $\rho,\theta,\phi$
is a geodesic congruence which covers the mirror Schwarzschild (BSS) 
and white hole (WH) 
portion of the Kruskal extension of the spacetime. It turns out that these 
geodesics intersect the spacelike surface $r=0$ orthogonally when 
$e=1$ and that 
the $\tau=const.$ slices 
$\Sigma_\tau$ which 
carry coordinates $\rho,\theta,\varphi$ are spacelike surfaces that intersect 
the singularity tangentially when $e=1$. 
If we restrict these surfaces to SS and BH or 
MSS and WH only, then they are not Cauchy surfaces for these portions because 
they end at the singularity.  
We can turn them into Cauchy surfaces in two ways: The first 
possibility is to restrict to say 
SS and BH portion and use part of the singularity to complete $\Sigma_\tau$ in 
a $C^1$ manner to a Cauchy surface. This spacetime has topology 
$\mathbb{R}\times\mathbb{R}_+\times S^2$ covered by $\tau,r,\theta,\varphi$ 
coordinates. However, to use part of the singularity surface $r=0$ as Cauchy
surface is problematic as the metric is singular on an
entire  
3d submanifold of that Cauchy surface and because the so extended 
Cauchy surfaces actually overlap at the singularity surface. Thus one would 
rather extend the Cauchy surface into the BH part 
slightly off the $r=0$ surface after hitting the 
singularity and such that they do not overlap (there is of course 
considerable freedom in doing so).   
The second possibility is to glue a SS and BH portion 
belonging to a ``past universe'' to an MSS and WH portion of a 
``future universe''. We do this by gluing an ingoing geodesic labelled by 
$\rho$ in SS+BH to an outgoing geodesic labelled by $\rho$ in MSS+WH. 
In this way we can foliate the entire set SS+BH+WH+MSS by ``free falling
Cauchy surfaces'' describing a black hole white hole transition between 
two universes. This spacetime has topology $\mathbb{R}^2\times S^2$ 
covered by $\tau,\rho,\theta,\varphi$ 
coordinates. All $\rho=const.$ geodesics start at past timelike infinity 
of the past 
universe as $\tau\to-\infty$ and end in future timelike infinity of the 
future universe as $\tau\to +\infty$. All $\tau=const.$ 
Cauchy surfaces end in the two asymptotic ends, i.e. the spatial  
infinity of SS in the past universe and of MSS in 
the future universe. They intersect the singularity in a single point only.
Accordingly, it is mathematically preferred to 
use the entire SS+BH+WH+MSS portion. It is the common domain of dependence
of all free falling Cauchy surfaces and thus this spacetime region 
is globally hyperbolic if we do not exclude the singularity $r=0$. 
In appendix \ref{sd} we have 
collected the background material associated with this construction.

Note that due to the singular behaviour of the metric at $r=0$ there 
is a priori no reason to assume that the mass $M$ is the same in the future and 
past universe respectively and for the same reason the clock ticking rate 
$\kappa$ is not necessarily the same. For reasons of continuity of the geodesic 
$\rho=$const. we consider only the equal value case for both 
$M,\kappa$.\\
\\ 
In SS+BH+WH+MSS the metric can be described using global coordinates 
$\tau,\rho,\theta,\varphi$ except for the 
singularity at $\rho=\tau$ and it remains to 
be investigated whether certain observables are nevertheless singularity 
free across $\rho=\tau$ in the quantum theory. For instance, the mode 
functions that enter the construction of the Fock representation 
have to solve a stationary equation of Schr\"odinger type involving a 
potential that is singular at $r=0$. However, such a situation is 
common in quantum mechanics (e.g. the hydrogen atom)
and not necessarily an obstacle to solve 
the corresponding stationary Schr\"odinger equation.
 
Note that the construction can be repeated 
indefinitely to the future and the past by gluing these transition spacetimes 
labelled by $I\in \mathbb{Z}$ along the various horizons 
$r_I=2M,\bar{r}_I=2M$ where $r_I$ is the radial coordinate in the I-th 
SS+BH and $\bar{r}_I$ in the I-th MSS+WH part. However, that extended 
spacetime is no longer globally hyperbolic even when ignoring 
the singularities because while each transition block is the common domain 
of dependence of all its free falling leaves, for instance the timelike 
geodesics start and end in the past and future timelike infinity of that 
block and do not enter other blocks. Thus boundaries between the 
blocks are Cauchy horizons. Alternatively, one can complete 
such a transition block along the $r=2M$ and $\bar{r}=2M$ boundaries by 
two Minkowski space Penrose diagramme triangles in the past and the future
where the vertical long side of the triangles represent $r=0$ and $\bar{r}=0$
respectively before/after formation/evaporation of the black/white hole,
see appendix \ref{sd} for the details. That completed spacetime continues
to be globally hyperbolic.
    
Returning to one block, given such a foliation, 
one has to determine the solutions of the classical 
equations of motion for $X,Y$ dictated by the quadratic part of the 
Hamiltonian (mode functions). 
For the case $e=1$ and $r>M$ these are known as Heun functions 
\cite{Heun} and there are techniques available to extend them to $0<r\le M$ 
\cite{Perlick}. Thus in principle we can construct the mode functions of 
the Fock representation selected by the radially free falling observers. 
This works for either the SS+BH or the MSS+WH portions seprately. If 
these two descriptions can be meaningfully joined we can discuss black hole 
-- white hole transitions as discussed in the next subsection, otherwise 
we have to restrict to only one of these portions keeping in mind 
the necessity to extend the free falling
equal proper time surfaces to Cauchy surfaces in this case. In both cases we 
call the corresponding vacuum the {\it geodesic} vacuum. As a congruence 
of geodesic observers defines a Riemann normal coordinate system wrt which
the metric is locally Minkowski, we expect the two-point function of the 
geodesic vacuum of Hadamard form \cite{Fulling} when we 
pick the asymptotic form of the mode functions at the 
spatial infinities to correspond to a flat space Hadamard state.\\
\\
One can then describe two kinds of Hawking effects in the usual way:
The first type is to use the t=const. foliation of the SS portion to define the 
Fock structure with respect to an asymptotically static observer, express 
the quantum fields $X,Y$ restricted to SS with respect to the mode functions 
of the $\tau=const.$ and $t=const.$ foliations respectively and derive 
the Bogolubov coefficients from the equality of the two expansions. 
This is similar 
to the Unruh effect with the role of the inertial respectively accelerated 
obserer of Minkowski spacetime played by the $\tau$ respectively
$t$ foliation observer in the curved spacetime parametrised by $M$,
i.e.
one uses two {\it different} foliations in a portion of spacetime. 
The second type is to use the observation that in $\rho,\tau$ coordinates 
the vector field $\partial_\tau$  is not Killing but everywhere timelike
and orthogonal to the foliation 
(the metric depends only on the combination $\rho-\tau$ so that the Killing
vector field is $\partial_\rho+\partial_\tau$, however it is not everywhere 
timelike). Accordingly one can use the notion of adiabatic vacua 
\cite{Parker,Fulling} familiar 
from cosmology to describe particle production between different 
$\tau=const.$ slices within the {\it same} foliation of spacetime.

\subsection{Black Hole -- White Hole Transition and singularity resolution}
\label{s0.3}

The metric expressed in $\tau,\rho$ coordinates covers a spherically symmetric 
vacuum 
spacetime which is singular at $\rho=\tau$ where the leaves of the geodesic 
foliation intersect the singularity. Accordingly, the corresponding 
wave equations for $X,Y$ are also singular at $\rho=\tau$. If they
can be meaningfully continued across $\rho=\tau$ we can discuss black hole
white hole transitions. In the literature on quantum black holes
inspired by LQG 
\cite{LQG-BH, SF-BH} one argues that the singularity is removed as follows:
The BH and WH portions of the spacetime are described by a Kantowski-Sachs
cosmology 
in suitable coordinates i.e. the metric is spatially homogeneous and 
described by two scale factor functions $A,B$ of BH resp. WH 
``time'' $r,\bar{r}$ joined at $r=\bar{r}=0$  
(recall that the radius is timelike in the interior; one can consider 
the time coordinate $T:=-r$ in BH and $T:=\bar{r}$ in WH to work with 
a single ``time'' coordinate) subject to the condition
$A^2=1-2M/B,\; B^2=r^2,\bar{r}^2$ respectively. Instead of imposing these 
conditions we can consider a phase space with canonical pairs 
$(A,p_A),\;(B,p_B)$ and a Hamiltonian constraint $C$ such that the symplectic 
reduction of that constrained system recovers the above form of $A,B$ where 
$2M$ plays the role of an integration constant. See appendix 
\ref{se} for some of the details of this construction.

Then one quantises the unconstrained phase space using a 
Narnhofer-Thirring type of representation \cite{Narnhofer-Thirring} 
of the corresponding Weyl algebra inspired by LQG
by the same logic applied in LQC \cite{LQC}. Then one must impose 
$C$ as a quantum constraint which in this representation is 
only possible if one modifies $C$ by replacing $A,B$ by suitable 
linear combinations of Weyl elements 
which are not all strongly continuous in this representation e.g. 
$A$ becomes $\sin(\lambda A)/\lambda$ for small $\lambda$ in the simplest 
proposal. One finds that the singularity $B^2=0$ is resolved and replaced by 
a minimal positive value of Planck area order. That is, the quantum
metric becomes regular. If we would simply take over 
those results and replace the singular metric by that modified regular 
one, we could in principle straightwardly extend $X,Y$ between 
the BH, WH portions and discuss BH -- WH transitions in a singularity free 
manner although solving the corresponding mode functions 
would become quite involved.

Besides those options, here we explore a less radical 
possibility: As we work in a reduced phase 
space context and GPG coordinates, there is no room for such modifications 
in the symmetric sector and the unperturbed metric 
stays singular because the reduction is done prior to 
quantisation. However, we may use the orthonormal basis
of the one particle Hilbert space developed in \cite{TTNewBasis}
that allow to deal with potentials that contain 
arbitrarily negative powers of $r,\bar{r}$ in order 
to define a dense domain of the corresponding Schr\"odinger type 
operator and to meaningfully analyse the possibility of a BH to WH transition
and a singularity resolution.

\section{Choice of gauge condition and associated reduced Hamiltonian}
\label{s4}

In the first subsection we motivate the choice of gauge condition adapted 
to spherical symmetry. In the second we specify the decay behaviour 
of the fields with respect to chosen system of coordinates. This is 
somewhat different from the usual decay behaviour in terms of 
the coordinates of an asymptotic observer at rest as the shift function
approaches zero more slowly at spatial infinity.
In the third we show that the chosen gauge can 
be locally installed modulo the usual global issues. In the fourth 
we solve the constraints {\it non-perturbatively} 
in a neighbourhood of the chosen gauge cut in implicit form, that is, 
modulo explicitly solving a system of ordinary differential equations,
however, we provide an iteration method for solving it. In the fifth 
we compute the corresponding 
reduced Hamiltonian implicitly but non-perturbatively by solving 
the stability conditions for lapse and shift functions thereby 
obtaining the dependence of the full spacetime metric on the 
true degrees of freddom and exploiting the explicit decay conditions.
We will follow the general programme outlined in \cite{pa129}.

While all formulae of this section are implicit only, they provide the 
fundamental starting point for the explicit pertubative scheme that 
is developed in the subsequent sections.

\subsection{Exact and generalised Gullstrand-Painlev\'e Gauge}
\label{s4.1}

As outlined in \cite{pa129} it is important to impose gauge conditions 
which involve only configuration coordinates on the phase space. This is 
because otherwise the choice of gauge would not be disentangled from the 
(perturbative) solution of the constraints for the momentum variables 
which in turn determines the reduced Hamiltonian. Therefore, the gauge 
condition must not depend on the mass of the black hole. Next,
as we 
wish to be independent of the matter content, this forces us to impose 
conditions on the three metric $m_{\mu\nu}$. 
Furhermore, as we wish to explore 
the black hole interior, we should impose gauge conditions on $m_{\mu\nu}$ 
which ensure that the metric is regular across any possible horizons. 
Therefore the gauge condition must be regular on the entire three manifold
$\sigma$ which we choose to be $\sigma=\mathbb{R}^3$ for each 
asymptotic component. At the same time the 
gauge condition should of course not be in conflict with the 
possible presence of 
a black hole (non vanishing mass) and must be consistent with the available 
gauge freedom, i.e. the gauge must not eliminate physical degrees of 
freedom that cannot be removed by a true gauge transformation. This 
is a subtle point which for completeness is reviewed in appendices 
\ref{sa}, \ref{sb} and \ref{sd}. Finally, there are practical 
considerations, which prefer 
gauge conditions which simplify the computations of covariant derivatives
and curvature associated with $m_{\mu\nu}$ as much as possible because 
$m_{\mu\nu}$ features prominently into all couplings between matter and 
geometry in the Hamiltonian constraint.

These guidelines motivate the {\bf (Generalised) 
Gullstrand-Painlev\'e Gauge} (GGPG)
$G_\mu=0,\; \mu=0,1,2,3$ where \cite{26}
\be \label{4.1}
G_3:=q_3-e^{-2},\;G_A:=q_A,\;G_0:=q_0-r^2
\ee    
which have proved very powerful for exactly spherically symmetric
classical and quantum LTB spacetimes \cite{GTT,WE}. Here 
$e^2\ge 1$ is a parameter that cannot be removed by a Hamiltonian 
gauge transformation (it can be by a Lagrangian one, see appendix 
\ref{sd}). These 
coordinates and their relations to timelike geodesic congruences 
and simultanity foliations are reviewed in appendix \ref{sd}.
We refer to the exact Gullstrand Painlev\'e gauge (GPG) as the 
one corresponding to $e^2=1$.  

The notation is as follows:
We have chosen Cartesian coordinates $x^\mu,\; \mu=1,2,3$ on 
$\sigma=\mathbb{R}^3$ with corresponding 
radial coordinates $z^3:=r$ and angular 
coordinates $z^A,\; A=1,2;\;z^1=\theta,\; z^2=\varphi$ and the components 
$q_{33}, q_{3A}, q_{AB}$ are with respect to $z^3,z^A$. Then
\be \label{4.2}
q_3:=m_{33},\; q_A:=m_{3A}, \; q_0:=\frac{1}{2}\Omega^{AB}\; m_{AB}
\ee
The latter is the trace with respect to the background metric $\Omega$ on 
$S^2$.
The GGPG implies that 
the dynamical configuration degrees of freedom are 
\be \label{4.3}
q_{AB}^\perp=X_{AB}=q_{AB}-\Omega_{AB}\; q_0\
\ee
The conjugate momenta are 
\be \label{4.4}
P^3:=W^{33},\;P^A:=2\;W^{3A},\;
P^0:=\Omega_{AB}\;W^{AB},\;
P^{AB}_\perp=Y^{AB}:=W^{AB}-\frac{1}{2}\Omega^{AB}\; P^0
\ee
Note that the non-vanishing Poisson brackets are 
\ba \label{4.5}
&& \{P^3(z),q_3(z')\}=
\{P^0(z),q_0(z')\}=\delta(z,z'),\;\;
\{P^A(z),q_B(z')\}=\delta^A_B\;\delta(z,z'),\;\;
\nonumber\\
&& \{P^{AB}_\perp(z),q_{CD}^\perp(z')\}=
[\delta^A_{(C} \; \delta^B_{D)}-\frac{1}{2}\Omega^{AB}\Omega_{CD}]
\;\delta(z,z')
\ea
The GGPG suggests to solve the constraints for the momenta 
$P^3,P^A,P^0$ to which we turn in the next subsection.

For completeness we repeat here the argument why the GGPG can always 
be installed by a spacetime diffeomorphism 
when the metric is spherically symmetric (but not 
necessarily stationary), that is, of the 
form 
\be \label{4.6}
ds^2=-f\; d\hat{T}^2+g\;dX^2+2\;h\;d\hat{T}\;dX+k^2\;\;d\Omega^2,\;
d\Omega^2:=\Omega_{AB}\; dz^A\;dz^B
\ee
where $f,g,h,k$ are functions of $\hat{T},X$ and $g,k>0,f+h^2\;g^{-1}>0$
in order that the metric be regular and of signature $-1,+1,+1,+1$. 
We first set $T:=\hat{T},\;r:=k(\hat{T},X)$ and invert $\hat{T}=T,\;
X=K(T,r)$. Then upon substitution (\ref{4.6}) acquires the form 
\be \label{4.7}
ds^2=-F\; dT^2+G\;dr^2+2\;H\;dT\;dr+r^2\;\;d\Omega^2,\;
\ee
where now $F,G,H$ are functions of $T,r$. Next we set 
$T=T(\tau,r)$ and 
find 
\be \label{4.7a}
ds^2=-F\;\dot{T}^2\; d\tau^2
+[G-F\;(T')^2+2\;H\;T']\; \;dr^2
+2\;[H\; T'-F\; \dot{T}\; T']\; d\tau\; dr
+r^2\;\;d\Omega^2,\;
\ee
where $\dot{T}=dT/d\tau,\; T'=dT/dr$.
The GGPG can now be completed by solving the non-linear ODE (at fixed 
$\tau$)
\be \label{4.8}
G-F\;(T')^2+2\;H\;T'=e^{-2}
\ee
where $G,F,H$ are functions of $T(t,r)$ and $r$. The solution of 
(\ref{4.8}) is unique up to a sign and 
to addition of a function $T_0(\tau)$. Plugging in that solution we find 
\ba \label{4.9a}
ds^2 &=& -A\; d\tau^2+2\;B\; \; d\tau\;dr+e^{-2}\; d\vec{x}^2
+r^2(1-e^{-2}) d\Omega^2
\nonumber\\
&=&
-[A+B^2 e^2]\; d\tau^2\;+e^{-2}\;
\;\delta_{\mu\nu}\;[dx^\mu+B\;e^2\;n^\mu d\tau]\;
[dx^\nu+B\; e^2\;n^\nu d\tau]
+r^2(e^{-2}-1) d\Omega^2
\ea
where $n^\mu=x^\mu/r$
which shows that the metric is regular and of Lorentzian signature if 
$A+B^2/e^2>0$ which can be ensured using the freedom of choosing $T_0$. 

It is illustrative \cite{26} to transform the Schwarzschild form of the 
line element for $r>2M$ ($2M$ is the Schwarzschild radius) to GGP form 
which can be done using $T_0(t)=t$. One finds (see appendix \ref{sd})
that 
$A=e^{-2}[1-\frac{2M}{r}],\;B=\mp
e^{-2}\;\sqrt{e^2-1+\frac{2M}{r}}$ i.e. $A+B^2\;e^2=1$. It follows that 
lapse and shift are $N=1,\; N^3=e^2\; B,\; N^A=0$. The sign 
corresponds to a congruence of outgoing/ingoing timelike geodesics for 
which $\tau$ is the eigentime. 
    
The GGP form can be analytically extended 
beyond $r>2M$ and covers the advanced ($B>0$) and retarded ($B<0$) Finkelstein 
portions of the Kruskal extension and thus makes it a suitable 
coordinate system for our purposes to explore the interior of 
the black hole. In fact we will explore the possibility to glue 
two such spacetimes to a Black Hole White Hile Transition (BHWHT) spacetime
constructed more explicitly in appendix \ref{sd} and for which we 
require two asymptotic ends covered by two different radial coordinates 
$r,\bar{r}$ for the part of the spacetime containing the black and white
hole respectively.
 
An important feature of the exact GPG (i.e. $e^2=1$) is that in the foliation 
defined by the 
GP time $T$ the spatial sections are {\it flat}. This tremendously 
simplifies all subsequent calculations in pertubation theory and the 
canonical quantisation of the system. The complete 
information about the non-vanishing 
4-curvature therefore does not reside in the intrinsic 3-curvature but 
the extrinsic curvature, e.g. 
$K_{33}=-\frac{1}{2N}\;[{\cal L}_{\vec{N}} q]_{33}=-B'$.
The price to pay is that the line element is 
stationary but not static in these coordinates. Of course the GP or 
GGP coordinates 
do not remove the singularity at $r=0$. This cannot be achieved by the 
classical theory but potentially by the quantum theory, which is what we 
wish to explore.

That the GGPG can always be installed when the 3-metric is spherically symmetric
by a coordinate transformation that preserves spherical symmetry (i.e. 
does not depend on angular coordinates) does not 
show that the GGPG can also be always installed when the 3-metric is not 
spherically symmetric. In what follows we will show that this is nevertheless
the case where coordinate transformations that violate spherical symmetry 
can be exploited to achieve just that. We will do this directly in the 
Hamiltonian formalism. As a prerequisite this requires to carefully state 
the decay behaviour of the fields at spatial infinity because these 
decide which kind of transformations are to be considered as gauge and 
which as symmetry. See appendix \ref{sa} for an illustrative toy model
that exhibits this phenomenon in a mathematically transparent setting and 
its implications already for the exactly spherically symmetric vacuum sector 
in appendix \ref{sb}.

\subsection{Decay behaviour of the fields at spatial infinity} 
\label{s4.2}

We consider two asymptotic ends with radial coordinates $r,\bar{r}$ glued 
at $r=\bar{r}=0$. Thus $r=0$ is not a boundary of the spatial slices 
$\tau=$const. and no boundary conditions need to be stated there.\\
\\
In the gravitational sector 
we have to state the decay behaviour of the canonical pairs 
$(m_{33}, W^{33})$, $(m_{3A}, W^{3A})$, $(m_{AB}, W^{AB})$. 
As before 
we split (dropping the subscript $_E$ for notational simplicity)
\ba \label{4.100}
&& m_{33} = q^v+x^v,\; m_{3A}=q_A+x_A,\; m_{AB}=q^h\; \Omega_{AB}
+X_{AB}
\nonumber\\
&& \frac{W^{33}}{\omega}=p_v+y_v,\; 
\frac{W^{3A}}{\omega}=\frac{1}{2}\;[p^A+y^A],\; 
\frac{W^{AB}}{\omega}=\frac{1}{2}\Omega^{AB}\;[p_h+y_h]+Y^{AB}
\ea
with $q_A=p^A\equiv 0$. The fields $q^v, p_v, q^h, p_h$ coordinatise 
the purely spherically symmetric sector and do not carry any angular 
dependence. By contrast, the fields 
$x^v, y_v, x_A, y^A, h^h, y_h, X_{AB}, Y^{AB}$ capture the total angular 
dependence and have no spherically symmetric $l=0$ modes when expanded into 
scalar, vector and tensor harmonics. The symplectic potential 
reads (for two asymptotic ends, $z=\theta(z)r-\theta(-z)\bar{r}$)
\be \label{4.101}
\Theta=\int_{-\infty}^\infty\; dz\; [
\sum_{\alpha\in \{v,h\}}\; p_\alpha\;dq^\alpha
+\sum_{\alpha\in\{v,h,e,o\},1\le l\ge |m|}\; 
y_{\alpha,l,m}\;dx^{\alpha, l,m} 
+\sum_{\alpha\in\{e,o\},2\le l\ge |m|}\; 
Y_{\alpha,l,m}\;dX^{\alpha,l,m}]
\ee
It is customary to state the fall-off conditions in tandem with 
parity conditions \cite{BS} but in (\ref{4.101}) there are no 
parity conditions to state any longer as all functions 
displayed depend only on $r$ i.e. they are parity invariant. Since all
functions displayed are independent of each other, every single term must 
be convergent. Therefore, since we wish not to rely on possible cancellation 
effects from the two infinities as the background solution does not display 
such an effect, the weakest condition that we can impose is that
the individual terms of the form  
$p\;dq,\; y\; dx,\; Y\; dX$ decay stronger than $r^{-1}$, say as 
$r^{-1+\epsilon},\; \epsilon>0$. 

In the purely spherically symmetric sector reviewed in appendix \ref{sb} 
we find that the general solution of the constraints is given 
in terms of $\gamma^2:=q^v,\; \delta^2:=q^h,\; p_\gamma:=2\gamma p_v,
p_\delta=2\; \delta\; p_h$ by
\be \label{4.102}
\{\frac{p_\gamma^2}{16\delta}+\delta[1-(\frac{\delta'}{\gamma})^2]\}'=0,\;\;
p_\delta=\gamma\; p'_\gamma/\delta'
\ee
where $\gamma, \delta$ are still arbitrary functions of $r$ 
provided that $\gamma\not=0, \delta'=d\delta/dr>0$. Denoting the 
integration constant by $m=2M$ the solution of (\ref{4.102}) in the 
GPGG const.$=\gamma^2=e^{-2}\le 1,\; \delta=r$ is given by 
\be \label{4.103}
p_\gamma=\pm 4\sqrt{m\;r+[\gamma^{-2}-1]\;r^2},\;
p_\delta=\pm 2\gamma 
\frac{m+2[\gamma^{-2}-1] r}{\sqrt{m\;r+[\gamma^{-2}-1]\;r^2}}
\ee
which for the extact GPG $\gamma\equiv 1$ simplifies to 
\be \label{4.104}
p_\gamma=\pm 4\sqrt{m\;r},\; p_\delta=2\sqrt{\frac{m}{r}} 
\ee
Thus the choice between GGPG and exact GPG makes a drastic difference for the 
decay behaviour of the momenta in the vacuum case: For generic 
$0<\gamma^2<1$ we have $p_\gamma=O(r),\; p_\delta=O(1)$ while for 
$\gamma^2\equiv 1$ we have  
$p_\gamma=O(r^{1/2}),\; p_\delta=O(r^{-1/2})$. Likewise, as explained 
in appendix \ref{sd.6}, the choice between GGPG and GPG has a drastic 
consequence for the dynamics of $M$ in the presence of metric perturbations 
and or matter: While in the GGPG $M$ can change dynamically, in the 
exact GPG it cannot.\\ 
\\
Since in the GGPG the fields $p_v, p_h$ decay much slower and because 
$d\;q_v=\propto d\;\gamma^2=O(1)$ even does not decay at all to leading 
order (as $\gamma^2$ is dynamical in this case), the 
definition of the reduced phase space consistent with the GGPG 
is much more delicate than in the case of the strict GPG. In the present 
paper we will therefore consider only the decay behaviour in the strict 
GPG. We will return to the GGPG case in a forthcoming manuscript.\\
\\
As in the strict GPG the sperically symmetric spatial background metric 
is the flat Euclidan metric $dr^2+r^2\;d\Omega^2$ we can motivate the 
choice of decay behaviour from the usual decay behaviour stated in standard 
Cartesian coordinates at spatial infinity
\be \label{4.105} 
m_{ab}^{{\sf Cart}}=\delta_{ab}+\frac{f^e_{ab}(\Omega)}{r}+
\frac{f^o_{ab}(\Omega)}{r^2},\;
W^{ab}_{{\sf Cart}}=\frac{F^{ab}_o(\Omega)}{r^2}+
\frac{F^{ab}_e(\Omega)}{r^3},\;
\ee
where $f^{e/o}_{ab}(\Omega),\;
F^{ab}_{e/o}(\Omega)$ are even/odd parity tensors depending only on 
the angular coordinates. Then with $\vec{x}=r \vec{n}(\Omega),
z^1=\theta, z^2=\phi, z^3=r$        
\ba \label{4.106} 
m_{\mu\nu}^{{\sf Sph}} &=&
\frac{\partial x^a}{\partial z^\mu}
\;\frac{\partial x^a}{\partial z^\mu}
m_{ab}^{{\sf Cart}}
\nonumber\\
W^{\mu\nu}_{{\sf Sph}} &=& |\det([\frac{\partial x}{\partial z}])|\;
\frac{\partial z^\mu}{\partial x^a}
\;\frac{\partial z^\nu}{\partial x^b}
W^{ab}_{{\sf Cart}}
\nonumber\\
m_{33} &=& 1
+\frac{f^e_{33}(\Omega)}{r}+\frac{f^o_{33}(\Omega)}{r^2}
\nonumber\\
m_{3A} &=& 0
+f^e_{3A}(\Omega)+\frac{f^o_{3A}(\Omega)}{r}
\nonumber\\
m_{AB} &=& r^2\;\Omega_{AB}
+r\;f^e_{AB}(\Omega)+f^o_{AB}(\Omega)
\nonumber\\
\frac{W^{33}}{\omega} &=&
F^{33}_o(\Omega)+\frac{F^{33}_e(\Omega)}{r}
\nonumber\\
\frac{W^{3A}}{\omega} &=&
\frac{F^{3A}_o(\Omega)}{r}+\frac{F^{AB}_e(\Omega)}{r^2}
\nonumber\\
\frac{W^{AB}}{\omega} &=&
\frac{F^{AB}_o(\Omega)}{r^2}+\frac{F^{AB}_e(\Omega)}{r^3}
\ea
Note that this transformation between Cartesian coordinates $x^a$ at fixed 
Schwarzschild time $t$ is to be supplemented by the transition to GP time 
$\tau$ which is given by $\tau=t\mp f(r),\; f'=\sqrt{R/r}\;
[1-\frac{R}{r}]^{-1},\;R=2M$ which means that $g_{tt}\to 
g_{\tau\tau}=g_{tt},\; g_{tr}\to g_{\tau r}=g_{tr}-g_{tt}\; f',\;
g_{rr}\to g_{rr}+g_{tt} [f']^2$. Then if 
$g_{tt}=-1+O(r^{-1}), g_{tr}=O(r^{-1}),g_{rr}=1+O(r^{-1})$ we have 
$g_{\tau\tau}=-1+O(r^{-1}), g_{\tau r}=O(r^{-1/2}),g_{rr}=1+O(r^{-1})$ 
which means the lapse and shift decay $N=1+O(r^{-1}),N^3=O(r^{-1})$ has 
changed to $N=1+O(r^{-1}),N^3=O(r^{-1/2})$. Thus the decay behaviour of 
$m_{\mu\nu}$ in (\ref{4.106}) is not affected by the switch between 
the two time coordinates, however, the changed shift behaviour affects 
the decay behaviour of the momenta which will be accounted for below.
The fact that only the spherically symmetric part of the shift is 
affected motivates to change the decay behaviour of only the 
spherically symmetric part of the momenta.  

Thus we first translate the decay behavior of the 
$x^{\alpha, l, m},\;y_{\alpha,l,m},\; X^{\alpha,l,m},\;
Y_{\alpha,l,m}$ unchanged. 
We could in principle choose this individually mode 
by mode $(l,m)$. But to simplify this, we will write e.g.
for $\alpha\in \{v,h,o\}$
\be \label{4.107}
x^\alpha=
[\sum_{|m|\le l;l/2\in \mathbb{N}} x^{\alpha,l,m} L_{\alpha,l,m}]
+[\sum_{|m|\le l;(l+1)/2\in \mathbb{N}} x^{\alpha,l,m} L_{\alpha,l,m}]
=:x^\alpha_e+x^\alpha_o
\ee
and we impose the decay behaviour only on 
the linear combination of all even/odd parity terms rather than individually
(for $\alpha=e$ the restrictions on $l$ in the two sums in 
(\ref{4.107} are switched).
Then
\ba \label{4.108}
&&
x^v=\frac{x^v_e}{r}+\frac{x^v_o}{r^2},\;
x_A=x^e_A+\frac{x^e_A}{r},\;
x^h=x^h\;r+x^h_o,\;
X_{AB}=X_{AB}^e\; r+X_{AB}^o
\nonumber\\
&&
\frac{y_v}{\omega}=y_v^o+\frac{y_v^e}{r},\;
\frac{y^A}{\omega}=\frac{y^A_o}{r}+\frac{y^A_e}{r^2},\;
\frac{y_h}{\omega}=\frac{y_h^o}{r^2}+\frac{y_h^e}{r^3},\;
\frac{Y^{AB}}{\omega}=\frac{Y^{AB}_o}{r^2}+\frac{Y^{AB}_e}{r^3},\;
\ea
In this way in the integral definining $\Theta$, the leading order 
terms are of the form 
\be \label{4.109}
\int\; dr\;(\int_{S^2} d\Omega [y^o\; d x_e])\;\frac{1}{r}=0 
\ee
as $[y^o \delta x_e]$ is an odd parity scalar on the sphere. The subleading 
terms do contribute but decay as $r^{-2}$ and thus converge. Of course 
in (\ref{4.108}) we could also allow for a slower decay of the momenta,
we could multiply the right hand sides of above equations for 
$y_v, y^A, y_h, Y^{AB}$ by $r^{1-\epsilon}, \; \epsilon>0$, a freedom 
to keep in mind when discussing the solutions of the physical equations 
of motion.  

As just motivated, for the sperically symmetric sector 
we use the vacuum solution as an input and require that 
\be \label{4.110}
q^v= 1+\frac{k^v}{r^{5/2}},\; q^h=r^2+\frac{k^h}{r^{1/2}},\;
p_v=i_v r^{1/2},\; p_h=i_h\; r^{-3/2}
\ee
where $k^v,\; k^h,\;i_v,\; i_h$ are functions of $r$ only which 
approach constants or decay at spatial infinity. 
Then $p_v dq^v, p_h dq^h$ decay at 
least as $r^{-2}$. 

Altogether (\ref{4.108}) and (\ref{4.110}) provide a consistent set of 
decay conditions compatible with the vacuum solution in GP coordinates. 
We take this as the {\it definition} of the phase space which then may be  
translated into any other frame.

Next we consider the constraints. The variation of the spatial 
diffeomorphism constraint
\ba \label{4.111}    
d[V_\parallel[S_\parallel]] 
&=& \int\; d^3x\; S^\rho\; d[W^{\mu\nu}\; m_{\mu\nu,\rho}
-2 (m_{\rho\mu}\; W^{\mu\nu})_{,\nu}]
\\
&=&
\int\; d^3x\; 
\{[dW^{\mu\nu}]\; [{\cal L}_S m]_{\mu\nu}
-[{\cal L} W]^{\mu\nu}\; [d m_{\mu\nu}]\}
+\int\; d\Sigma_\rho \{S^\rho\; W^{\mu\nu}\; dm_{\mu\nu}
-2 S^\nu \;d[ W^{\rho\mu} m_{\mu\nu}\}
\nonumber\\
&=&
\int\; d^3x\; 
\{[dW^{\mu\nu}]\; [{\cal L}_S m]_{\mu\nu,\rho}
-[{\cal L} W]^{\mu\nu}\; [d m_{\mu\nu}]\}
+\int\; \frac{d\Omega}{\omega} \;S^3 W^{\mu\nu}\; dm_{\mu\nu}
-2\;d[\int\;d\Omega S^\nu \;W^{3\mu} m_{\mu\nu}]
\nonumber
\ea
where $d\Sigma_\rho =\frac{1}{2}\epsilon_{\mu\nu\rho} 
d z^\mu \wedge dz^\nu=\delta_\rho^3 \frac{d\Omega}{\omega}$
and the surface 
integral is taken at $r=\infty$ for each asymptotic 
region. Since by construction 
$W^{\mu\nu} dm_{\mu\nu}$ to leading order is $r^{-2}$ for the symmetric 
sector and $r^{-1}$ odd plus $r^{-2}$ even for the non symmetric sector, 
the first boundary term in (\ref{4.111}) 
vanishes if the even part of $S^3$ grows at most 
linearly while the odd part approaches a constant. In the decomposition of 
the previous section this means 
\be \label{4.112}
S^3=f^h+g^h,\; f^h=O(r),\; g^h=g^h_e\; r+g^h_o
\ee
where $g^h_{e/o}$ are even/odd functions on the sphere. It follows that 
\be \label{4.113}
H_\parallel[S_\parallel]=V_\parallel[S_\parallel]+
B_\parallel(S_\parallel),\; B_\parallel[S_\parallel]=
2\int\;d\Omega\; S^\nu \;W^{3\mu} m_{\mu\nu}
\ee
has well defined variational derivatives corresponding to the bulk term 
in (\ref{4.111}).

Next we consider the variation of Hamiltonian constraint 
$d[V_\perp[S_\perp]]$.
This picks up a boundary term coming purely from the spatial curvature term. 
Since the decay behaviour of the spatial metric is unaffected by 
the use of GP coordinates, we may take over the standard result 
that (see e.g. \cite{BS} the second reference in \cite{20}
and references therein)
\ba \label{4.114}
H_\perp[S_\perp] &=& V_\perp[S_\perp]+B_\perp[S_\perp]
\nonumber\\
B_\perp[S_\perp] &=& 
-\int\; \sqrt{\det(m)}\; S^0\; m^{\mu\nu}\;
[d\Sigma_\mu\; 
(\Gamma^\rho_{\rho\nu}-(\Gamma^{{\sf ND}})^\rho_{\rho\nu})
-d\Sigma_\rho\; 
(\Gamma^\rho_{\mu\nu}-(\Gamma^{{\sf ND}})^\rho_{\mu\nu})
]
\nonumber\\
&& +\int\; \sqrt{\det(m)}\; [\nabla_\mu S^0]\; m^{\mu\nu}\;m^{\rho\sigma}
(d\Sigma_\nu [m_{\rho\sigma}-m^{{\sf ND}}_{\rho\sigma}]
-d\Sigma_\rho [m_{\nu\sigma}-m^{{\sf ND}}_{\nu\sigma}])
\ea
where $m^{{\sf ND}}$ is the non-dynamical part of $m_{\mu\nu}$ which in the 
present case is just $\delta_{\mu\nu}$ in the Cartesian frame and $\nabla$ 
is the covariant differential compatible with $m$. Also $\Gamma^{{\sf ND}}$ 
is the Christoffel symbol of the non-dynamical part. This term is missing 
in the usual treatment in which one implicitly assumes a flat 
Cartesian frame at infinity.
However the GPG is not a Cartesian frame and the Christoffel symbol is not 
a tensor, hence subtraction of that term is necessary in general 
(in the derivation of the boundary term \cite{BS} or second reference 
of \cite{20}
only the variation $d \Gamma$ 
enters which {\it is} a tensor). As in (\ref{4.111}), 
in order that $d V_\perp[S_\perp]$ can be written as 
$d\;H_\perp[S_\perp]-dB_\perp[S_\perp]$ we must assume that 
\be \label{4.115}
S^0=f^v+g^v,\; f^v=O(1),\; g^v=g^v_e+g^v_o\; r
\ee 

\subsection{Installation of the GPG}
\label{s4.3}

As usual, we consider those $S_\parallel, S_\perp$ respectively for which
the boundary terms 
$B_\parallel[S_\parallel],B_\perp[S_\perp]$ vanish, a gauge transformation,
those for which these do not vanish, a symmetry transformation.
We must show that it is possible to install the exact GPG by picking 
suitable $S_\parallel, S_\perp$ corresponding to gauge transformations.

Having made sure that the functionals $H_\parallel, H_\perp$ have 
well defined variational derivatives they 
generate the following transformations on the 3-metric
\ba \label{4.8a}
\delta m_{\mu\nu}(z) &=& 
\{H_\parallel[S_\parallel]+H_\perp[S_\perp], m_{\mu\nu}(z)\}
\\
&=& [{\cal L}_{\vec{S}} m]_{\mu\nu}
+S^0 \;(2 W_{\mu\nu}-m_{\mu\nu}\; W)](x);\;\; 
P_{\mu\nu}:=(m_{\mu\rho}\;m_{\nu\sigma})\;W^{\rho\sigma},\;\;
W:=m_{\mu\nu}\;W^{\mu\nu}
\nonumber
\ea
where $\cal L$ denotes the Lie derivative. Given arbitrary values of 
$q_3=q^v+y^v,q_A=y_A,q_0=q^h+y^h$ 
consistent with the imposed decay behaviour of the previous 
subsection we want to show that we can find $S^3,S^A,S^0$ 
corresponding to a gauge transformation such 
that $G_\mu+\delta G_\mu=G_\mu+\delta q_\mu=0$ for $\mu=3,A,0$
where $G_\mu=q_\mu-q_\mu^{{\sf GPG}}$.
Decomposing with respect to tensorial structure on $S^2$ we find explicitly
\ba \label{4.8b}
\delta q_3 &=& 
2\; [q_3 \; (S^3)'+q_A (S^A)']
+[S^3\; q_3'+S^A\; D_A q_3]
+S^0\;[2 W_{33}-q_3\; W]
\nonumber\\
\delta q_A &=& [q_{AB} (S^B)'+q_A (S^3)']
+[S^3\; q_A'+q_3 \; D_A S^3+S^B\; D_B q_A+q_B \; S_A S^B] 
+S^0\;[2 W_{3A}-q_A\; W]
\nonumber\\
\delta q_0 &=& \Omega^{AB}\;\{
[S^3 q'_{AB}+2q_{(A} \; D_{B)} S^3+
S^C\; D_C q_{AB}+2\; q_{C(A} \; D_{B)}\; S^C]
+S^0\;[2 P_{AB}-q_{AB}\; P]
\}
\ea
where $q_{AB}=q_0\Omega_{AB}+X_{AB}$.
Assuming that $\Omega^{AB}[2\;W_{AB}-q_{AB}\;W]\not=0$ (otherwise conduct 
another gauge transformation on the system first so that this quantity is 
non vanishing to begin with which is always possible as it is not gauge 
invariant) we can solve the equation $\delta q_0+G_0=0$ algebraically for 
$S^0$ and introduce that solution, which depends linearly 
on $S^3, D_A S^3, S^A, D_B S^A$, into the equations for 
$\delta q_3+G_3=0, \delta q_A+G_A=0$. Due to 
the Euclidian signature of $m_{\mu\nu}$ we have 
$q_3>0$ and that $q_{AB}$ is non-degenerate and thus the resulting set of three 
equations can be cast into the form
\be \label{4.8c}
[S^\mu]'+F^\mu_\nu\; S^\nu+K^{\mu A}_\nu D_A S^\nu=-
q^{\mu\nu}\;G_\nu
\ee
for certain coefficient functions $F^\mu_\nu,\;K^{\mu A}_\nu$ and 
$(.)'=\frac{d}{dr}(.)$.
Thus (\ref{4.8c}) is an inhomogeneous linear, infinite
system of ODE's in the variable 
$r$ 
and the unknowns $S^\alpha_y$ with $S^\alpha_y(r)=S^\alpha(r,y),\; 
y=(y^1,y^2)$ and thus 
can be solved by the method of variation of constants. It remains to show 
that the decay behaviour of the solution so obtained indeed corresponds
to a gauge transformation. We assume this to be the case by making 
use of the choice of integration constants when solving the system 
(\ref{4.8c}).

\subsection{Solution of the constraints in the GPG}
\label{s4.4}

We will only treat the gravitational constraints 
because other constraints corresponding to Yang-Mills type of gauge 
transformations can be treated independently, see next section. Accordingly
we will write the constraints of GR as 
\ba \label{4.9}
V_\mu &:=& -2\;\nabla_\nu\; W^\nu\;_\mu+V^m_\mu,\;
\nonumber\\
V_0 &:=& [m_{\mu\rho}\;m_{\nu\sigma}-\frac{1}{2}
m_{\mu\nu}\;m_{\rho\sigma}]\;W^{\mu\nu}\;W^{\rho\sigma}
+V_0^{c,m}
\ea
Here $V^m_\mu$ denotes the matter contribution to the spatial diffeomorphism 
constraint and $V_0^{c,m}$ the spatial curvature and matter and contribution 
to the Hamiltonian constraint multiplied by $\sqrt{\det(q)}$. The precise 
form of these matter and curvature contributions are displayed in the next 
section but will not be important for the purposes of the present section.
The Levi Civita differential compatible with $m_{\mu\nu}$ is denoted
by $\nabla$. 

The first step is the decomposition of (\ref{4.9}) with respect to the 
canonical chart (\ref{4.2})-(\ref{4.4})
\ba \label{4.10}
V_\mu^m &=& 
2(m_{\mu\nu}\; W^{\nu\rho})_{,\rho}
-m_{\nu\rho,\mu}\;W^{\nu\rho}
\nonumber\\
&=&      
2(m_{\mu 3}\; P^3)'
+(m_{\mu 3}\; P^B)_{,B}
+(m_{\mu B}\; P^B)'
\nonumber\\
&& +2(m_{\alpha B}\; W^{BC})_{,C}
-m_{33,\mu}\;P^3
-m_{B,\mu}\;P^B
-m_{BC,\mu}\;P^{BC}
\nonumber\\
V_3^m &=&
2(q_3\; P^3)'
+(q_3\; P^B)_{,B}
+(q_B\; P^B)'
+2(q_B\; P^{BC})_{,C}
\nonumber\\
&& -q_3'\;P^3
-q_B'\;P^B
-q_{BC}'\;P^{BC}
\nonumber\\
V_A^m &=& 2(q_A\; P^3)'
+(q_A\; P^B)_{,B}
+(q_{AB}\; P^B)'
+2(q_{AB}\; P^{BC})_{,C}
\nonumber\\
&& -q_{3,A}\;P^3
-q_{B,A}\;P^B
-q_{BC,A}\;P^{BC}
\nonumber\\
-V_0^{c,m} &=& 
[q_3 P^3]^2+2\;[q_3 P^3]\;[q_A\; P^A]
+\frac{1}{2}\; [q_A P^A]^2
+\frac{1}{2}\; q_3\; q_{AB}\; P^A\; P^B
\nonumber\\
&& +2\; q_A\; q_B\; P^{AB}\; P^3
+2 q_A\; q_{BC}\; P^C\; P^{AB}
+q_{AC}\; q_{BD}\; P^{AB}\; P^{CD}+
\nonumber\\
&& -\frac{1}{2}
[q_3\; P^3+q_A\; P^A+q_{AB}\; P^{AB}]^2
\ea
where we have written $(.)':=\frac{\partial}{\partial r}(.)$. 

Remarkably, the constraints (\ref{4.10}) display the following features:\\
1. All momenta $P^3, P^A, P^0$ appear only polynomially in $V_0$.\\
2. The momentum $P^0$ does not enter $V_3, V_A$ with radial derivatives.\\
3. The momenta $P^3,P^A$ do enter $V_3, V_A$ with radial derivatives.\\
4. All momenta $P^3, P^A, P^0$ appear in $V_3,V_A$ with angular derivatives.\\
This suggests the following solution strategy:\\
1.\\
We solve $V_0$ algebraically for $P^0$. In fact, since $V_0$ is a quadratic 
polynomial in $P^0$ we may write 
\be \label{4.11}
V_0=g_0(q)\;[P^0+f^0_+(P^3,P^A;P_\perp;q,V_0^{c,m})]\;
[P^T+f^0_-(P^3,P^A;P_\perp;q,V_0^{c,m})]
\ee
where $g_0(q)$ depends polynomially on $q=(q_3,q_A,q_{AB})$ 
and $f^0_\pm$ are 
the two possible real roots which depend algebraically on 
$P^3,P^A, P^{AB}_\perp$
(i.e. no derivatives enter)
and algebraically on $q$ and $V_0^{c,m}$ where 
the latter depends on the gravitational degrees of freedom 
algebraically only through $q$ and all its first and 
second spatial derivatives. 
The explicit calculation reveals 
\ba \label{4.11a}
f^0_\pm &=& -\frac{A}{B}\;\{1\pm\sqrt{1-\frac{B}{A^2}\; V_0^{c,m,\perp}}\}
\nonumber\\
A &=& r^2\; P^3 -q_{AC}^\perp\; q_{BD}^\perp\; \Omega^{AB}\; P^{CD}_\perp
-r^2\; q_{AB}^\perp\; P^{AB}_\perp \; 
\nonumber\\
B &=& q_{AC}^\perp\; q_{BD}^\perp\; \Omega^{AB}\; \Omega^{CD}
\nonumber\\
V_0^{c,m,\perp} &=&
V_0^{c,m}+\frac{1}{2}\;[P^3-q_{AB}^\perp P^{AB}_\perp]^2
+\frac{1}{2}\; q_{AB}\; P^A\; P^B
-\; [q_{AB}^\perp\; P^{AB}_\perp]^2
\nonumber\\
&& +r^4\; \Omega_{AC}\; \Omega_{BD}\; P^{AB}_\perp\;P^{CD}_\perp
+2 r^2 q_{AC}^\perp\; \Omega_{BD}\; P^{AB}_\perp\;P^{CD}_\perp
+q_{AC}^\perp\; q_{BD}^\perp\; P^{AB}_\perp\;P^{CD}_\perp
\ea
The choice of the sign in (\ref{4.11a}) is in fact unique if 
we impose that (\ref{4.11a}) has a regular limit $B\to 0$ as we 
approach an exactly spherically symmetric solution which 
selects the solution $P^0=-f^0_-$.\\
2.\\
We write $V_3,V_A$ as 
\be \label{4.12}
2\; q_3 \; [P^3]'+ q_A\; [P^A]'+\tilde{f}_3(P^3,P^A,P^0,q)=V^m_3,\;
2\; q_A \; [P^3]'+ q_{AB}\;\; [P^B]'+\tilde{f}_A(P^3,P^C,P^0,q)=V^m_A,\;
\ee
where $\tilde{f}_3, \tilde{f}_A$ depend on $P^3,P^A, P^0$ only linearly 
and either with no or at most first angular derivatives while 
$q$ enters linearly and with at most first radial and angular derivatives. 
Taking linear combinations these can be decoupled and written as
\be \label{4.13}
[P^3]'+f^3(P^3,P^A,P^0,q,V^m)=0,\;\;
[P^A]'+f^A(P^3,P^B,P^0,q,V^m)=0
\ee
where $f^3,f^A$ are still linear in $P^3,P^B,P^0$, depend at most 
on angular derivatives and are non polynomial with respect to 
$q$ but linear in $V_3^m, V_B^m$. 
That (\ref{4.12}) can be written as (\ref{4.13}) close 
to the GPG $q_3=1, q_A=0,q_{AB}=r^2\Omega_{AB}+q_{AB}^\perp$ and 
for the perturbation $q_{AB}^\perp$ sufficiently small is by inspection.\\
3.\\
We substitute the root $P^0=-f^0_-$ of (\ref{4.11}) into
(\ref{4.13}) thereby obtaining the constraints
\ba \label{4.14}
\tilde{V}^3 &:=&[P^3]'+\hat{f}^3(P^3,P^A,P_\perp,q,V^m)=0,\;
\tilde{V}^A:=[P^A]'+\hat{f}^A(P^3,P^B,P_\perp,q,V^m)=0,\;
\nonumber\\
\tilde{V}^0 &:=&=P^0+f^0_-(P^3,P^A,P_\perp,q,V_0^{c,m})
\ea
where $\hat{f}^3,\hat{f}^A$ 
still depend only linearly on angular derivatives of $P^3,P^A$ 
but no longer linearly on $P^3,P^A$, in fact they depend non-polynomially 
on $P^3,P^A$ due to the substitution of the square root. \\
4.\\
The system (\ref{4.14}) can be considered as a coupled (inifinite) system of 
ODE's in the variable $z=r$ for the unknowns $P^\mu_y(r)$ 
where ($\mu=1,2,3;\; y:=(y^1,y^2)$) is considered as a compound 
label for these 
unkonwns. Therefore formal solutions exist and are 
unique given initial values. 
They can be found using the Picard-Lindel\"of iteration for any 
$-\infty\le z <\infty$ (recall that we work with two asymptotic ends 
and $z=r\theta(z)-\theta(-z)\bar{r})$ 
\be \label{4.15}
P^\mu_y(z)=P^\mu_y(-\infty)+\int_{-\infty}^z\; ds\; 
\hat{f}^\mu(P_y(s),q_y(s),V^m(s))
\ee
where $P^\alpha_y(-\infty)$ are integration constants.
Note that the integrand not only depends on $P^\beta_y(s)$ but also 
on $\partial_y P^\beta_y(s)$ and thus the iteration does not decouple 
with respect to $y$. 
Note also that equation (\ref{4.15}) is identically 
satisfied for $z=-\infty$. Thus there is an up to 
$3 \times S^2$ worth of degeneracy 
in the constraints and accordingly as many conjugate variables 
such as $Q_\alpha^y=\int_{-\infty}^\infty dz q_\alpha(z,y)$ 
cannot be gauge fixed but 
must be counted as belonging to the set of true degrees of freedom 
in company with the integration constants $P^\alpha_y(-\infty)$. 
In appendix \ref{sa} we demonstrate this phenomenon for a lower dimensional
field theory. The degree of degeneracy is reduced by the boundary conditions
at infinity.

Let $-h^\alpha_y(r)$ be the solution so obtained. Then the constraints can 
be written in the equivalent 
form 
\be \label{4.16}
\hat{V}^\mu_y(r)=P^\mu(y,r)+h^\mu_y(r),\;
\hat{V}^0_y(r)=P^0(y,r)+h^0_y(r)
\ee
where $h^0=[f^0_-]_{P^\mu=-h^\mu}$. The functionals $h^3,h^A,h^0$
depend on the degrees of freedom $q_3,q_A,q_0$, the integration 
constants $P^\alpha_y(-\infty)$, as well  
as the union $R$ of the collection of true matter degrees of 
freedom with the collection of the true gravitational
degrees of freedom $q_{AB}^\perp, P^{AB}_\perp$. By construction
(\ref{4.16}) is an identity for $r=-\infty$ and thus the set of constraints 
(\ref{4.16}) for $r>-\infty$ is strictly equivalent to the set of constraints 
(\ref{4.9}) for $r\ge -\infty$. Since the degrees of freedom 
$P^\mu_y(-\infty)$ are thus left unconstrained and since to gauge fix 
(\ref{4.16}) for $r\not=-\infty$ does not require us to impose the GPG for 
all $-\infty\le r\le \infty$,
following \cite{pa129}   
we consider the set of canonical pairs 
$(Q_\mu(y)=\int\; dz\; q-\mu(z,y0,\;,P^\mu(r=-\infty,y))$ 
as part of the true degrees of freedom and adjoin them to $R$ (again
the boundary conditions reduces this set). 
The so extended set of true degrees of freedom is denoted as $\hat{R}$.   
Note that by virtue of the Picard-Lindel\"of integration involved, the 
functionals 
$h^3,h^A,h^0$ are non-local with respect to $r$ consisting of nested 
radial integrals.

\subsection{Reduced Hamiltonian}
\label{s4.5}

We first review the general theory of how 
to construct a reduced Hamiltonian in the presence 
of boundaries as  introduced in \cite{HRT} and show that this requires a 
non-trivial generalisation. Indeed the same authors mention in 
\cite{HRT1} that beyond the linearised gravity setting a non-trivial 
generalisation of \cite{HRT} is necessary. In proposition 
\ref{prop4.1} below we state a sufficient condition which when satisfied 
allows to construct the reduced Hamiltonian. In the second part of 
this subsection we then analyse the details of the gauge stability 
conditions for the the present system. In the third part of this subsection
we confirm that the sufficient condition stated in the proposition 
holds for the solution of the stability condition constructed 
and then provide the reduced Hamiltonian.

\subsubsection{Reduced dynamics in the presence of boundaries}
\label{s4.5.1}

As we have seen in section \ref{s4.5},
in the presence of boundaries the constraints $V(S)$ are not automatically 
functionally differentiable which poses a problem when computing Poisson
brackets. The problem is displayed by writing the variation of $V(S)$ 
with respect to the canonical variables $q_{ab}, p^{ab}, ..$ as a sum of a 
bulk and boundary contribution
\be \label{4.a.1}
d\;V(S)= 
[d\; V(S)]_\sigma+[d\; V(S)]_{\partial\sigma}
\ee
where the first term is a volume integral over the scalar density 
$D_S^{\mu\nu}\; [d m_{\mu\nu}]+D^S_{\mu\nu} [d W^{\mu\nu}]+..$ while the second 
term is a boundary integral over the vector density  
$J_S^{\mu\nu\rho} [d \; m_{\mu\nu}+D^{S\rho}_{\mu\nu} [d\;W^{ab}]+...$ 
for certain coefficient tensor densities depending on $S$. The bulk term 
yields well defined functional derivatives (the three dimensional delta
distribution is integrated out), the second does not (a one dimensional 
delta distribution is left over). As explicitly 
shown in the previous subsection, the idea to remove that contribution 
is to impose fall-off conditions on $S,m,W,..$ such that one can write 
the boundary contribution as an exact differential 
$[\delta V(S)]_{\partial\sigma}=-\delta B(S)$ for a suitable boundary 
functional $B(S)$ and then to define 
$H(S):=V(S)+B(S)$. Then by construction
\be \label{4.a.2}
\delta H(S)= [\delta V(S)]_\sigma
\ee
is functionally differentiable. 

The constraints are still defined by 
$V(S)=0$ for all $S$ such that $H(S)=B(S)$ on the constraint surface which 
can be non-vanishing if $S$ does not decay sufficiently fast at the boundary.
Accordingly, one interprets transformations generated by $H(S)$ with $S$ 
for which 
$B(S)=0$ as gauge transformations while those with $B(S)\not=0$ are 
considered as symmetry transformations. We subdivide the degrees of freedom 
$m_{\mu\nu}, W^{\mu\nu}$ into two subsets of canonical pairs 
$q^\alpha, p_\alpha$
and $Q^A, P_A$ and impose gauge fixing conditions 
$G^\alpha=q^\alpha-k^\alpha=0$ on the $q^\alpha$ where $k^\alpha$ are 
certain fixed functions on $\sigma$ independent of the foliation time $\tau$
and without dependence on the phase space coordinates.
We set $q_\ast^\alpha:=k^\alpha$. 
We also solve the constraints $V_\alpha=0$ for $p_\alpha$ when 
$q^\alpha=q^\alpha_\ast$ 
which yields solutions $p_\alpha=p_\alpha^\ast$. 

The gauge $G^\alpha=0$ is supposed to be reachable from any point within 
the constraint surface $V_\alpha=0$ and once it is reached the residual 
transformations allowed are those that preserve them
\be \label{4.a.3}
\{H(S),G^\alpha\}_{q=q_\ast,p=p^\ast}=0
\ee
These stability conditions can be solved for $S^\alpha=S^\alpha_\ast$
in terms of $q_\ast,p^\ast$ and in general are symmetry transformations 
rather than gauge transformations.
Consider now a functional $F$ depending on $Q,P$ only. Then the 
reduced Hamiltonian $E$ on the reduced phase space coordinatised by $Q,P$,
if it exists, is 
supposed to give the same equations of motion as $H(S)$ when we restrict 
to the fixed quantities $q_\ast,p^\ast, S_\ast$, that is,
\be \label{4.a.4}
\{E,F\}=\{H(S),F\}_{q=q_\ast,p=p^\ast,S=S_\ast}
\ee
Now, being the boundary value of a volume integral variation that arises due 
to one or several integrations by parts, the boundary term has the form
\be \label{4.a.5}
B(S)=\int_{\partial\sigma}\; 
d\Sigma_\mu\;[
S^\alpha\; j^\mu_\alpha +S^\alpha_{,\nu}\; j^{\mu\nu}_\alpha+...]
\ee
for some ``currents'' $j^\mu_\alpha,\; j^{\mu\nu}_\alpha,\;$. 
Using integrations by 
parts on $\partial \sigma$ and exploiting $\partial^2 \sigma=\emptyset$ 
we can assume w.l.g. that $j^{\mu\nu}_\alpha,..=0$ by redefining 
$j^\mu_\alpha$. 
Similarly, the bulk term has 
the form 
\be \label{4.a.6}
V(S)=\int_{\sigma}\; 
d^3x\;
S^\alpha\; V_\alpha
\ee
for some ``densities'' $V_\alpha$. 
\begin{Proposition} \label{prop4.1} ~\\
Let $d\Sigma_\mu=d^2z\; N_\mu(z)$ where 
$N_\mu(y)=\frac{1}{2}\epsilon^{AB}\epsilon_{\mu\nu\rho} x^\nu_{,A} x^\rho_{,B}$ 
is the corresponding co-normal of the embedding 
$S^2\; \to \; \partial \sigma;\; y\; \mapsto x$. Suppose that
there exists a real valued functional $\chi$ of currents $j_\alpha$ on 
$\partial\sigma$ such that on $\partial\sigma$ (i.e. the functional 
derivative is with respect to the coordinate dependence on 
$\partial\sigma$)  
\be \label{4.a.7}
S^\alpha_\ast=[\frac{\delta\chi}{\delta j_\alpha}]_{j=j^\ast},\;
j_\alpha^\ast := N_\mu\; j^{\ast\mu}_\alpha,\;
j^{\ast\mu}_\alpha := [j^\mu_\alpha]_{q=q_\ast,p=p^\ast}
\ee
Then $E=\chi[j_\ast]$.
\end{Proposition}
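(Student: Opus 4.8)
The plan is to reduce the defining relation (\ref{4.a.4}) to a statement about first variations and then to recognise the resulting boundary one-form as the differential of $\chi$. First I would observe that since the canonical symplectic structure is block diagonal in the split $(q^\alpha,p_\alpha,Q^A,P_A)$ and $F$ depends only on the true pair $(Q,P)$, the bracket $\{H(S),F\}$ reduces to $\frac{\delta H(S)}{\delta Q^A}\frac{\delta F}{\delta P_A}-\frac{\delta H(S)}{\delta P_A}\frac{\delta F}{\delta Q^A}$. Hence (\ref{4.a.4}) holds for all true observables $F$ if and only if the functional derivatives of $E$ with respect to $Q^A,P_A$ coincide with those of $H(S)$, evaluated at $S=S_\ast$ and restricted to the cut $q=q_\ast,\,p=p^\ast$. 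Equivalently, it suffices to show that the restricted first variation obeys $\delta E=[\delta H(S)]_{S=S_\ast,\,q=q_\ast,\,p=p^\ast}$ for arbitrary variations of the true degrees of freedom.

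Next I would compute that restricted variation. Because $H(S)=V(S)+B(S)$ was arranged to be functionally differentiable, $\delta H(S)$ is a pure bulk integral with no leftover boundary contribution. Inserting the variations induced by $\delta Q,\delta P$ --- with $\delta q^\alpha=0$ on the cut and $\delta p_\alpha=\delta p^\ast_\alpha$ fixed by the solved constraints --- the bulk Hamilton densities are to be evaluated on the constraint surface $V_\alpha=0$. The mechanism of \cite{HRT} then applies: on $V_\alpha=0$ the volume part of $H(S)$ vanishes, and its variation along directions tangent to the reduced phase space is carried entirely by the boundary. Concretely, the genuine bulk terms produced by differentiating the densities $V_\alpha$ in (\ref{4.a.6}) cancel against the chain-rule terms coming from the dependence $p^\ast=p^\ast(Q,P)$ together with the stability equations (\ref{4.a.3}) that define $S_\ast$; what remains is the single boundary integral $\int_{\partial\sigma}d^2z\,S^\alpha_\ast\,\delta j^\ast_\alpha$, with $j^\ast_\alpha=N_\mu[j^\mu_\alpha]_{q=q_\ast,p=p^\ast}$ as in (\ref{4.a.7}). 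Crucially, because (\ref{4.a.4}) freezes $S$ before setting it to $S_\ast$, only the current $j^\ast$ is varied here; the term $\delta S^\alpha_\ast\,j^\ast_\alpha$ --- which would appear in the naive variation of the on-shell value $B(S_\ast)$ --- does not occur.

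Finally I would identify this boundary one-form with $\delta\chi$. The expression $\int_{\partial\sigma}d^2z\,S^\alpha_\ast\,\delta j^\ast_\alpha$ is exactly the pairing of the variation $\delta j^\ast_\alpha$ with $S^\alpha_\ast=[\delta\chi/\delta j_\alpha]_{j=j^\ast}$ supplied by the hypothesis (\ref{4.a.7}), hence it equals $\delta(\chi[j_\ast])$ by the chain rule. Therefore $\delta E=\delta\chi[j_\ast]$ for all variations of the true degrees of freedom, and $E=\chi[j_\ast]$ up to an additive constant fixed by the normalisation that $E$ vanish with the boundary currents. I expect the main obstacle to be the cancellation claimed in the second paragraph: establishing it rigorously requires using simultaneously the explicit solution $p^\ast$ of the constraints, the stability conditions that fix $S_\ast$, and the differentiability of $H(S)$, and it is precisely here that the decay conditions of Section \ref{s4.2} must be invoked so that every integration by parts and every boundary manipulation is legitimate. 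This is the nontrivial step beyond the linearised analysis that \cite{HRT1} flagged as necessary, and the assumed existence of the potential $\chi$ --- an integrability (Maxwell-type) condition $\delta S^\alpha_\ast/\delta j_\beta=\delta S^\beta_\ast/\delta j_\alpha$ on the solved lapse and shift --- is exactly what makes the resulting one-form exact.
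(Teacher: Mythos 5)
Your proof is correct and follows essentially the same route as the paper: you exploit the on-shell identity $H(S)_{z=z_\ast}=B(S)_{z=z_\ast}$, split the (frozen-$S$) variation into the explicit $Q,P$ part (which is $\delta E$) and the implicit part through $p^\ast$ (which is killed by the stability condition (\ref{4.a.3}) since $[\delta H(S)/\delta p_\alpha]_{z=z_\ast,S=S_\ast}=\{H(S),G^\alpha\}_{z=z_\ast,S=S_\ast}=0$, with $\{q^\alpha_\ast,\cdot\}=0$ disposing of the $q$-part), and then recognise the remaining boundary one-form $\int S^\alpha_\ast\,\delta j^\ast_\alpha$ as $\delta\chi[j_\ast]$ via (\ref{4.a.7}); the paper phrases the identical argument in terms of Poisson brackets with a test observable $F$ rather than first variations. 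Your emphasis on freezing $S$ before restricting to $S_\ast$ is precisely the point the paper makes in explaining why $E\neq B(S_\ast)_{z=z_\ast}$ in general.
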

\begin{proof}
We simplify the notation and denote $z=(q,p), z_\ast=(q_\ast,p^\ast)$
Then on the one hand
\be \label{4.a.8}
\{B(S)_{z=z_\ast},F\}_{S=S_\ast}
=\int\; d^2y\; S^\alpha_\ast\; \{j^\ast_\alpha,F\}
\ee
Note that $S$ is set to $S_\ast$ only {\it after} the Poisson bracket has been
taken, the Poisson bracket is computed with $S$ treated as being
independent of the phase space coordinates. On the other hand 
using the identity $B(S)_{z=z_\ast}
=H(S)_{z=z_\ast}$ for all $S$ we have
\ba \label{4.a.9}
&& \{B(S)_{z=z_\ast},F\}_{S=S_\ast} 
=\{H(S)_{z=z_\ast},F\}_{S=S_\ast}
\nonumber\\
&=& \{H(S),F\}_{z=z_\ast,S=S_\ast}
+\int\;d^3x\; (
[\frac{\delta H(S)}{\delta q^\alpha(x)}]_{z=z_\ast,S=S_\ast}\; 
\{q^\alpha_\ast(x),F\}
+[\frac{\delta H(S)}{\delta p_\alpha(x)}]_{z=z_\ast,S=S_\ast}\; 
\{p_\alpha^\ast(x),F\}
)
\nonumber\\
&=& \{E,F\} +
\int\;d^3x\;
\{H(S),G^\alpha(x)\}_{z=z_\ast,S=S_\ast} \; 
\{p_\alpha^\ast(x),F\}
\nonumber\\
&=& \{E,F\} 
\ea
where in the second step we split the Poisson bracket into a contribution
which acts on the explicit dependence of $H(S)$ on $P,Q$ and a contribution 
on the implicit dependence of $H_\ast(S)$ on $P,Q$ through 
$q^\alpha_\ast,p_\alpha^\ast$ and
used that $H(S)$ is functionally differentiable, in the third we used 
that $q^\alpha_\ast$ does not depend on the phase space coordinates 
and rewrote the second contribution as a Poisson bracket with 
the gauge fixing condition and in the last we used that by construction 
of $S_\ast$ that Poisson bracket vanishes, that is, (\ref{4.a.3}).

Thus 
\be \label{4.a.10}
\{E,F\}=\{B(S)_{z=z_\ast},F\}_{S=S_\ast}
\ee
This is different from the result quoted in \cite{HRT}. This is because
$E\not=B(S_\ast)_{z=z_\ast}$ unless $S_\ast$ evaluated on 
$\partial\sigma$ does not depend on the coordinates $P,Q$ as it 
is implicitly assumed in \cite{HRT}. This will in 
general not be the case because the solution $S_\ast$ of the stability 
condition involves solving differential equations and thus depends 
non-locally on $P,Q$ via integrals on all of its bulk values, therefore 
$\{S^\alpha_\ast(x),F\}\not=0$ even if $x\in \partial\sigma$ and $F$ is 
localised with respect to $Q,P$ in the bulk. Comparing with (\ref{4.a.8})
we see that the only chance to write the right hand side of (\ref{4.a.10}) 
as a Poisson bracket is that $S^\ast_\alpha$ is the functional defrivative 
on $\partial\sigma$ with respect to a functional $\chi$ which depends on 
the current $j_\alpha$ that appears in (\ref{4.a.8}).
\end{proof}
The case that $S^\alpha_\ast$ is a constant on the phase space such that 
$E=B(S_\ast)_{z=z_\ast}$ as it is considerd in \cite{HRT}
is included and corresponds to $\chi[j]$ being a linear functional of $j$.
In order that $\{E,F\}$ is well defined, the dependence of $j_\ast$ on 
$P,Q$ should be non local i.e. involving radial integrals in order that 
the Poisson brackets $\{E,F\}$ which use functional derivatives on $\sigma$
rather than $\partial\sigma$ are well defined. This is in fact conceivable for 
the gravitational system because the stability conditions involve the 
$p_\alpha$ explicitly which in turn are to be solved for the $P,Q$ 
using the constraints and 
for this we need to perform radial integrations as the constraints depend on 
radial derivatives. Now $j_\alpha$ corresponds to the ADM energy and momentum 
currents and these again involve $p_\alpha$ explicitly. Then to see whether 
$j_\alpha^\ast, S^\alpha_\ast$ satisfy all the requirements critically depends 
on the fall-off behaviour of the fields as these decide which terms 
in $S^\alpha_\ast \{j^\ast_\alpha,F\}$ survive as we take the limit 
$r\to\infty$. It is only with respect to these limiting surviving terms that
the assumptions of proposition \ref{prop4.1} have to hold. Fortunately there 
is a substantial amount of flexibility in the choice of those fall-off 
conditions and one may also take advantage of the fact that the solution of 
the constraints for $p=p^\ast$ and of the stability conditions for 
$S=S_\ast$ both at $q=q_\ast$ involve ``integration constants'' in the 
form of functions on $S^2$, as we have seen 
in the previous section, because we have to solve differential equations 
with respect to the radial coordinates and one can try to use 
the freedom to choose those free functions on the 
sphere in order to meet the conditions of 
proposition \ref{prop4.1}.

\subsubsection{Solution of stability conditions}
\label{s4.5.2}

We now proceed to construct the gauge fixed values $S^3_\ast, S^A_\ast,
S^0_\ast$. This has two purposes: First, the leading order behaviour 
at spatial infinity dictates the analytic form of the reduced Hamiltonian,
i.e. the global mass or energy.
Second, the explicit bulk behaviour determines the physical lapse and 
shift and therefore contains information about the local mass.\\
\\ 
\\
{\bf Asymptotics:}\\
\\
For a general transformation of $m_{\mu\nu}$ induced by the 
functionally differentiable version (\ref{4.113}), (\ref{4.114}) 
of the constraints we have 
\be \label{4.a.11} 
[\delta m]_{\mu\nu}=\{H_\parallel[S_\parallel]+H_\perp[S_\perp],m_{\mu\nu}\}
=[{\cal L}_{S_\parallel} m]_{\mu\nu}
+2\tilde{S}_0\;[W_{\mu\nu}-\frac{1}{2}\;W\; m_{\mu\nu}]
\ee
where $\tilde{S}^0=\frac{S^0}{\sqrt{\det(m)}}, 
W_{\mu\nu}=m_{\mu\rho} m_{\nu\sigma} W^{\rho\sigma},\;W=m_{\mu\nu}
W^{\mu\nu}$. The gauge fixed values $S^\cdot_\ast$ are determined by the 
stability conditions $\delta G^\cdot=0$ where 
$G^3=m_{33}-1,\; G^A=m_{3A},\;G^0=\Omega^{AB} m_{AB}-2 r^2$. This 
needs to hold only at $G^3=V_3=G^A=V_A=G^0=V_0=0$
i.e. at 
\be \label{4.a.12}
m_{33}=q_3^\ast=1,\;
m_{3A}=q_A^\ast=0,\;
\Omega^{AB} m_{AB}=q_0=q_0^\ast =2 r^2,\;
P^3=P^3_\ast,\; P^A=P^A_\ast, P^0=P^0_\ast
\ee
where we used the notation  
(\ref{4.2}), (\ref{4.3}), (\ref{4.4}) and the values 
$P^\cdot_\ast=-h^\cdot_\ast$ are 
given implicitly by (\ref{4.16}) where $h^\cdot_\ast$ is $h^\cdot$ 
evaluated at $G^\cdot=0$.

We find 
\ba \label{4.a.13}
0 &=& \delta G^3 = [{\cal L}_{S_\parallel} q]_{33}
+2\tilde{S}^0\;[P_{33}-\frac{1}{2}\; P\; q_{33}] 
\nonumber\\
0 &=& \delta G^A = [{\cal L}_{S_\parallel} q]_{3A}
+2\tilde{S}^0\;[P_{3A}-\frac{1}{2}\; P\; q_{3A}] 
\nonumber\\
0 &=& \delta G^0 = \Omega^{AB}\;([{\cal L}_{S_\parallel} q]_{AB}
+2\tilde{S}^0\;[P_{AB}-\frac{1}{2}\; P\; q_{AB}] 
\ea
Using the GPG we have (we suppress the super/subscript $\ast$ in 
$q_\cdot^\ast,P^\cdot_\ast$ 
for notational simplicity) using the decomposition 
$P^3=W^{33}=p_v+y_v, P^A=y^A=2W^{3A}, W^{AB}=
P^{AB}=[p_h+y_h]\Omega^{AB}/2+Y^{AB}$   
and $\Omega^{AB} X_{AB}=\Omega_{AB} Y^{AB}=0$
\ba \label{4.a.14}
P_{33} &=& m_{3\mu}\; m_{3\nu} W^{\mu\nu}=P^3=p_v+y_v
\nonumber\\
P_{3A} &=& m_{3\mu}\; m_{A\nu} W^{\mu\nu}
=[r^2 \Omega_{AB}+X_{AB}]\; P^B/2=[r^2 \Omega_{AB}+X_{AB}]\; y^B/2
\nonumber\\
P_{AB} &=& 
m_{A\mu}\; m_{B\nu} W^{\mu\nu}
[r^2 \Omega_{AC}+X_{AC}]\;
[r^2 \Omega_{BD}+X_{BD}]\;[\frac{1}{2}(p_h + y_h)\Omega^{CD}+Y^{CD}]
\nonumber\\
P&=& m_{\mu\nu} W^{\mu\nu}
=P^3+ [r^2\Omega_{AB}+X_{AB}] P^{AB}
=[p_v+y_v]+r^2\;[p_h+y_h]+X_{AB} Y^{AB}
\ea
Next with $S^3=f_h+g_h, S^A=g^A, S^0=f_v+g_v$
\ba \label{4.a.15}
&& [{\cal L}_{S_\parallel} m]_{33}=
S^\mu m_{33,\mu}+2\; m_{\mu 3} S^\mu_{,3}
=2 S^3_{,3}=2(f_h+g_h)'
\\
&& [{\cal L}_{S_\parallel} m]_{3A} 
=S^\mu m_{3A,\mu}
+m_{\mu 3} S^\mu_{,A}
+m_{\mu A} S^\mu_{,3}
=S^3_{,A}+q_{AB} S^{B\prime}=g_{h,A}+[r^2 \Omega_{AB}+X_{AB}]\;g^{B\prime}
\nonumber\\
&& [{\cal L}_{S_\parallel} m]_{AB} 
=S^\mu m_{AB,\mu}+2\; m_{\mu (A} S^\mu_{,B)}
=S^3 q_{AB}'+S^D q_{AB,C}+2 m_{C (A} S^C_{,B)}
=S^3 q_{AB}'+[{\cal L}_{S_\parallel\perp} q]_{AB} 
\nonumber
\ea
where $S_{\parallel\perp}^A:=S^A$ has only angular non-vanishing components. 
Thus we have explicitly
\ba \label{4.a.16}
0 &=& S^{3\prime}+\tilde{S}^0\;[P^3-\frac{1}{2} P]
\nonumber\\
0 &=& S^3_{,A}+
q_{AB}\; [S^{B\prime}+\tilde{S}^0 P^B]
\nonumber\\
0 &=& \Omega^{AB}\;\{
S^3 q_{AB}'+[{\cal L}_{S_\parallel\perp} q]_{AB} 
+2 \tilde{S}^0[q_{AC} q_{BD} P^{CD}-\frac{1}{2} q_{AB} P] \}
\ea
We now recall the decay conditions derived in section \ref{s4.3}
\ba \label{4.a.17}
&& p_v=O(r^{1/2}),\; p_h=O(r^{-3/2})
\nonumber\\
&& 
y_v=y_v^o+\frac{y_v^e}{r},\;
y^A=\frac{y^A_o}{r}+\frac{y^A_e}{r^2},\;
y_h=\frac{y_h^o}{r^2}+\frac{y_h^e}{r^3},\;
\nonumber\\
&&
X_{AB}=X_{AB}^e \;r+X_{AB}^0,\;
Y^{AB}=\frac{Y^{AB}_o}{r^2}+\frac{Y^{AB}_e}{r^3}
\nonumber\\
&&
S^3=f_h+g_h,\; f_h=O(r),\; g_h=g_h^e \; r+ g^h_o;\;\;
S^0=f_v+g_v,\; f_v=O(1),\; g_v=g_v^e + g_h^o\; r;\;\;
\ea
while the decay behaviour of $S^A=f^A+g^A,\; f^A\equiv 0$ has not 
been fixed yet. The various functions displayed have no $l=0$ modes 
except for $p_v,p_h, f_v, f_h$ which are 
pure $l=0$ modes and $e,o$ refers to their parity behaviour, i.e. 
not the polar/axial character. 
The power of $r$ with respect to which these functions decay is 
at most $O(r^0)$ but can be lower. The same applies to $f_h, f_v$,
i.e. $O(r^n)$ means decay with at most power $r^n$, it can be faster
e.g. $O(r^{-1})$ allows a decay with power $r^{-(1+m)},\; m\ge 0$. \\
\\
In what follows we content ourselves with solving the stability conditions 
with respect to the highest non vanishing order in $r$ and only in as much 
detail as necessary to construct the reduced Hamiltonian. We also 
content ourselves with constructing one particular solution 
$S^3, S^A, S^0$ and leave 
it for further investigation whether that solution is unique.\\
\\
We consider first the second relation in (\ref{4.a.16}) 
\be \label{4.a.18}   
g_{h,A}+q_{AB}\;[g^{B\prime}+\tilde{S}^0 y^B]=0
\;\;
\Leftrightarrow
\;\;
g^{A\prime}=-[q^{AB} g_{h,B}+\tilde{S}^0 y^A]
\ee
We assume that 
\be \label{4.a.19}
g_v=O(1)
\ee
which can be achieved by assuming that 
$g_v^o r$ is $O(1)$. Then $S^0=O(1)$ and thus $\tilde{S}^0=O(r^{-2})$ because 
\be \label{4.a.20}
\det(q)=\det(r^2\Omega+X)=\frac{1}{2} \epsilon^{AC}\epsilon^{BD}   
[r^2\Omega+X]_{AB}\; [r^2\Omega+X]_{CD}
=r^4 \omega^2+r^2\;\omega^2 \Omega^{AB} X_{AB} +\det(X)
=r^4 \omega^2+\det(X)
\ee
where we used 
\be \label{4.a.21}
\epsilon^{AC} \epsilon^{BD} \Omega_{CD}=\omega^2 \Omega^{AB},\;
\omega^2=\det(\Omega)
\ee
and $\Omega^{AB} X_{AB}\equiv 0$ by definition. Since $\det(X)=O(r^2)$ 
it follows that 
$\det(q)=r^4\omega^2[1+O(r^{-2})],\;
[\det(q)]^{-1/2}=r^{-2}\omega^{-1}[1+O(r^{-2})]$. Thus 
$\tilde{S}^0 y^A=O(r^{-3})$.   

Furthermore we assume that 
\be \label{4.a.22}
g_h=O(r^{-1})
\ee
which can be achieved by assuming that $g_h^e\; r, g_h^o$ decay as $r^{-1}$.
Then it follows from (\ref{4.a.18}) that 
\be \label{4.a.23}
S^A=g^A=O(r^{-2})
\ee
It follows that 
\be \label{4.a.24}
[{\cal L}_{S_\parallel\perp} q]_{AB}
=r^2\;[{\cal L}_{S_\parallel\perp} (\Omega+r^{-2} X)]_{AB}
=O(1)
\ee
in leading order. 

The first relation in (\ref{4.a.16}) gives 
\ba \label{4.a.25}
0 &=&
2\; S^{3\prime}+2\;\tilde{S}^0[P^3-\frac{1}{2}[P^3+r^2 P^0+X_{AB} Y^{AB}]]
=2\; S^{3\prime}+\tilde{S}^0[P^3-r^2 P^0-X_{AB} Y^{AB}]
\nonumber\\
&=& 
2(f_h+g_h)'+\tilde{S}^0\;[p_v-r^2 p_h]-\tilde{S}^0\;[y_v-r^2 y_h-X_{AB} Y^{AB}]
\ea
Since $y_v, r^2 y_h=O(1), X_{AB} Y^{AB}=O(r^{-1})$ and 
$\tilde{S}^0=O(r^{-2})$ the third term in (\ref{4.a.25})
is $O(r^{-2})$. It can thus be 
cancelled by $2\;g_h'$ which by (\ref{4.a.22}) is also $O(r^{-2})$. 
The second term in (\ref{4.a.25}) 
on the other hand is $O(r^{-2})\;O(r^{1/2})=O(r^{-3/2})$ and then must be 
cancelled by $2 f_h'$. Hence 
\be \label{4.a.26}
f_h=O(r^{-1/2})
\ee
and we must have 
\be \label{4.a.27}
f_h'+\frac{\tilde{S}^0}{2}[p_v-r^2 p_h]=0
\ee  
to order $r^{-3/2}$ i.e. (\ref{4.a.27}) decays at least as $r^{-2}$. 

The third relation in (\ref{4.a.16}) gives 
\ba \label{4.a.28}
0 
&=& 
4\;r\; S^3 + \Omega^{AB}\;[{\cal L}_{S_\parallel\perp} q]_{AB} 
+2 \tilde{S}^0[ \Omega^{AB}\; q_{AC} q_{BD} P^{CD}-r^2\; P]
\\
&=&
4\;r\; S^3 + \Omega^{AB}\;[{\cal L}_{S_\parallel\perp} q]_{AB} 
\nonumber\\
&& +2 \tilde{S}^0[ 
(r^4 \Omega_{CD}+2 r^2\; X_{CD}+\Omega^{AB} X_{AC} X_{BD})\;
(\frac{P^0}{2}\Omega^{CD}+Y^{CD})
-r^2(P^3+r^2\; P^0+X_{AB} Y^{AB})]
\nonumber\\
&=&
4\;r\; S^3 + \Omega^{AB}\;[{\cal L}_{S_\parallel\perp} q]_{AB} 
+2 \tilde{S}^0\;\{r^4\; P^0 +\frac{1}{2} P^0
\Omega^{AB}\Omega^{CD} X_{AC} X_{BD}\;
\nonumber\\
&& +2 r^2\; X_{AB} Y^{AB}+\Omega^{AB} X_{AC} X_{BD}) \;Y^{CD}
-r^2(P^3+r^2\; P^0+X_{AB} Y^{AB})\}
\nonumber\\
&=&
[4\;r\; S^3 -2 r^2\; \tilde{S}^0 \; P^3] 
+ [\Omega^{AB}\;[{\cal L}_{S_\parallel\perp} q]_{AB}] 
\nonumber\\
&& +2 \tilde{S}^0[\frac{1}{2} P^0
\Omega^{AB}\Omega^{CD} X_{AC} X_{BD}\;
+r^2\; X_{AB} Y^{AB}+\Omega^{AB} X_{AC} X_{BD}) \;Y^{CD}]
\nonumber
\ea
The second square bracket in (\ref{4.a.28}) is $O(1)$ as follows from 
(\ref{4.a.24}). In the third square bracket of (\ref{4.a.28}) the first 
term is $O(r^{-3/2})$, the second is $O(r^{-1})$ and the third is $O(r^{-2})$.
In the first square bracket of (\ref{4.a.24})
the second term is $O(r^{1/2})$ in leading order
therefore the order of $r\; S^3 $ must not exceed $O(r^{1/2})$ 
as there is no term 
to compensate this. Thus we conclude again (\ref{4.a.26}) and 
\be \label{4.a.29}
f_h -\frac{1}{2} r\; \tilde{S}^0 \; p_v=0
\ee
to leading order $r^{-1/2}$ i.e. (\ref{4.a.29}) decays at least as 
$O(r^{-1})$. Together with  
\be \label{4.a.30}   
f_v=O(1)
\ee
we see that the leading order decay behaviour of our solution 
of the stability condition is now fixed and falls into the allowed 
class (\ref{4.112}), (\ref{4.115}).

Finally we consider the constraint 
\ba \label{4.a.31}
V_3-V_3^m &=& W^{\mu\nu} m_{\mu\nu}'-2[W^{\mu\nu} m_{\mu 3}]_{,\nu}
=P^{AP} q_{AB}'
-2\; P^A_{,A}-2\; P^{3\prime}
\nonumber\\
&=& 
(\frac{P^0}{2} \Omega^{AB}+Y^{AB})\;[r^2 \Omega_{AB}+X_{AB}]'
-2\; P^A_{,A}-2\;P^{3\prime}
\nonumber\\
&=& 
2\;r\;P^0+Y^{AB}\; X'_{AB}
-2\; P^A_{,A}-2\;P^{3\prime}
\nonumber\\
&=& 
2\;[r\;(p_h+y_h)-(p_v+y_v)'-y^A_{,A}]+Y^{AB}\; X'_{AB}
\ea
We have $r y_h=O(r^{-1}), y_v'=O(r^{-2}), y^A_{,A}=O(r^{-1}), 
X'_{AB} Y^{AB}=O(r^{-2})$ which means that these terms can cancel among each 
other in (\ref{4.a.31}). On the other hand $r p_h, p_v'=O(r^{-1/2})$ and 
these are the only terms of this type if we assume 
that the matter term $V_3^m$ decays faster than this. Therefore
\be \label{4.a.32}
p_h-\frac{1}{r} p_v'=0
\ee
to leading order $r^{-3/2}$ i.e. (\ref{4.a.32}) decays al least as 
$r^{-2}$. 

Combining (\ref{4.a.27}), (\ref{4.a.29}) and (\ref{4.a.32}) we find 
\be \label{4.a.33}
\frac{f_h'}{f_h}
=-\frac{p_v-r^2 p_h}{r\; p_v}
=-\frac{p_v-r p_v'}{r\; p_v}
=-\frac{1}{r}+\frac{p_v'}{p_v} 
\ee
to leading order which leads to the solution
\be \label{4.a.34}
f_h=\kappa\; \frac{p_v}{r}
\ee
where $\kappa$ is an integration constant. This 
correctly reproduces $f_h=O(r^{-1/2})$ if $p_v=O(r^{1/2})$.  
Moreover from (\ref{4.a.29}) to leading order
\be \label{4.a.35}
\tilde{S}^0=\frac{f_v+g_v}{\sqrt{\det(q)}}
=2\frac{f_h}{r\;p_v}=2\kappa \;\;\Rightarrow\;\;
f_v=2\kappa
\ee
as $g_v$ has no $l=0$ mode. If one wants the lapse to equal unity at infinity 
then $\kappa=\frac{1}{2}$.\\
\\
We summarise: The solution $S^\cdot_\ast$ 
of the stability conditions constructed 
displays the decay behaviour (\ref{4.a.19}), (\ref{4.a.22}), 
(\ref{4.a.23}), (\ref{4.a.26}), (\ref{4.a.30}) and the 
relations (\ref{4.a.27}), (\ref{4.a.29}) and (\ref{4.a.32}) with 
$p_h, p_v$, namely to the order displayed
\ba \label{4.a.36}
S^3_\ast &=& f^\ast_h+g^\ast_h,\; 
f^\ast_h=\kappa \frac{p_v}{r}=O(r^{-1/2}),\; g^\ast_h=O(r^{-1})
\nonumber\\
S^A_\ast &=& g^A_\ast =O(r^{-2})
\nonumber\\
S^0_\ast &=& f^\ast_v+g^\ast_v,\; f^\ast_v=2\kappa=O(1), g^\ast_v=O(1)
\nonumber\\
p_h &=& \frac{p_v'}{r}
\ea
where the last relation holds when the constraints are used i.e. 
they should also carry a label $\ast$ which we dropped for convenience.\\
\\
\\
{\bf Bulk solution}\\
\\
We note that (\ref{4.a.16}) is a system 
of four PDE's in four variables $S^3, S^A, S^0$. One may therefore obtain 
an exact and non-perturbative solution in principle as follows:\\
We solve the fourth equation algebraically for $S^0$ and insert its 
solution into the first three eqations. This remaining system of 
three PDE's which is linear in $S^3, S^A$
can then be solved for the radial derivatives i.e. it can be 
written in the form ($\mu,\nu=1,2,3$)
\be \label{4.a.36a}
S^{\mu\prime}(r,y)=F^\mu_\nu(r,y)\; S^\nu+G^{\mu A}_\nu\; S^\nu_{,A}=:
\int\;d^2y'\; K^\mu_\nu(r;y,y')\; S^\nu(r,y')
\ee
where the right hand side is linear $S^\mu$ and its angular derivatives
which we have written in terms of an integral kernel. 
Hence the solution can be written
\be \label{4.a.36b}
S^\mu_y(r)=[{\cal P}(\exp(\int_{-\infty}^r\;ds K(s))\cdot 
\hat{S}(-\infty)]^\mu_y   
\ee
where the path ordering symbol $\cal P$ orders the highest radial value 
to the outmost left and $\hat{S}^\mu_y(-\infty)$ are initial values on 
the sphere $\bar{r}=-z=\infty$ where $(\mu,y)$ is considered a 
compound label in order to write (\ref{4.a.36a}) as a matrix 
equation. This yields in 
paticular lapse $S^0$ and shift $S^\mu$ for the GPG and therefore 
an entire spacetime metric. The initial values have to be adjusted
to the boundary conditions stated. 

Instead of a non-perturbative solution, a perturbative solution of 
(\ref{4.a.36a}) maybe obtained as follows: The coefficient 
functions $F^\mu_\nu,\; G^{\mu A}_\nu$ are under perturbative 
control i.e. they have known expansions in terms of say $X,Y$ in the 
GPG thanks to the possibility to solve for $P^3=p_v+y_v,\; P^A=y^A,\;
P^0=p_h+y_h$ as we will indicate in the next section. By the notation 
$A(n)$ we mean the n-th order contribution in $X,Y$ in the expansion 
of a quantity $A$. The perturbative scheme implies that 
the $l=0$ modes $p_v,p_h$ have no first order contribution i.e.
$p_v(1)=p_h(1)=0$ while the $l>0$ modes $y_v,y^A,y_h$ have no zeroth 
order contribution, i.e. $y_v(0)=y^A(0)=y_h(0)=0$. We will solve 
(\ref{4.a.36a}) under the same premise for the dcomposition
$S^3=f_h+g_h,\; S^A=g^A$ i.e. that the $l=0$ mode obeys 
$f_h(1)=0$ while the $l>0$ modes obey $g_h(0),g^A(0)=0$. We now 
expand all quantities in (\ref{4.a.36a}) in powers of $X,Y$ and obtain 
a hierarchy of equations for the functions $f_h(n);\; 1\not=n\ge 0,\;
g_h(n),\; g^A(n); \;n\ge 1$. 

The structure of these equations is given in more detail by 
\ba \label{4.a.36c}
0 &=& [S^3]'+[\alpha\; S^3+\beta_A\; S^A+\gamma_A^B\; S^A_{,B}]\; F^3
\nonumber\\
0 &=& [S^A]'+q^{AB} S^3_{,B}+
[\alpha\; S^3+\beta_A\; S^A+\gamma_A^B\; S^A_{,B}]\; F^A
\ea
where all $\alpha,\beta_A,\gamma_A^B,F^3$ have $n=0$ contributions 
while $F^A=y^A$ starts at order $n=1$. 

Since $f_{h,A}=0$ it is easy to see that the second equation 
of (\ref{4.a.36c}) is identically satisfied at order $n=0$ while 
the first equation at $n=0$ gives a first order ODE in $r$ 
for $f_h(0)$ whose 
solution is exactly the asymptically leading behaviour of $f_h$ displayed 
in (\ref{4.a.36}). 

For $n=1$, because $F^A$ is already of first order and because we have 
$f_h(1)=0$, the second equation can be solved explicitly for $g^A(1)$
by decomposing  $g_h(1), g^A(1), F^A(1)$ into scalar and vector harmonics 
respectively 
which yields a first order linear inhomogeneous 
ODE in $r$ for $g_{\alpha,l,m}(1),\; \alpha=e,o$ 
that relates it to $g_{h,l,m}(1),F_{\alpha,l,m}(1)$ 
and can be solved by quadrarture.
When inserted into the first equation of (\ref{4.a.36c}) 
at $n=1$ we obtain an
integro differential equation for $g_{h,l,m}(1)$ or equivalently a 
inhomogeneous linear second order ODE which can be solved by holonomy 
and variation of constant 
methods by transforming it to a system of two homogeneous linear first 
order equations. 

Proceeding inductively,
at order $n\ge 2$ we see that the second equation in (\ref{4.a.36c})
takes the form 
\be \label{4.a.36d}
0=[g^A(n)]'+r^{-2}\Omega^{AB} [g_h(n)]_{,B}+J^A(n)
\ee
where $J^A(n)\propto F^A$ contains at most (n-1)th order contributions 
of $f_h,g_h,g^A$ and thus (\ref{4.a.36d})
can be solved for $g^A(n)$ by quadrature.
For the first equation in (\ref{4.a.36c}) 
we note that the $g^A(n)$ contribution 
to $\beta_A S^A+\gamma_A^B\; S^A_{,B}$ is proportional to $2r^3 D_A g^A(n)$
and hence can be written
\be \label{4.a.36e}
0=[f_h(n)+g_h(n)]'+\alpha(0)[f_h(n)+g_h(n)]+K(0)\; D_A g^A(n) +J^3(n)
\ee
where $J^3(n)$ contains at most (n-1)th order contributions of 
$f_h,g_h,g^A$ and $K(0)$ is a computable zeroth order 
function. Thus dividing by $K(0)$ and 
taking the radial derivative of (\ref{4.a.36d})   
we find 
\be \label{4.a.36f}
\{\frac{1}{K(0)}\;
([f_h(n)+g_h(n)]'+\alpha(0)[f_h(n)+g_h(n)]+J^3(n))\}'
=D_A \{r^{-2}\Omega^{AB} [g_h(n)]_{,B}+J^A(n)\}
\ee
We decompose into scalar harmonics and obtain a second order, 
linear, inhomogeneous system among the $f_{h,l,m}(n), g_{h,l,m}(n)$ which can 
be transformed to a first order system and solved by variation of constant 
and holonomy methods.\\
\\
To construct for instance $S^3$ perturbatively has the following significance:
For spherically symmetric vacuum GR we have $2M=[S^3]^2\; r$ for any 
value of $r$. With matter and perturbations we may {\it define a local 
effective mass function}
\be \label{4.a.36g}
2M_{{\rm eff}}(x):=[S^3(x)]^2\; r       
\ee

\subsubsection{Evaluation of the boundary terms at the solution
of the stability conditions}
\label{s4.5.3}

The reason why it was sufficient to determine the asymptotic 
leading decay order of $S^\cdot_\ast$ is because in the boundary 
integrals over the asymptotic spheres we take the limit $r\to \infty$.
Thus, the only terms that survive this limit are those that display the 
leading decay behaviour, the subleading terms drop out. We now 
determine the corresponding boundary values 
$B_\parallel[S^\ast_\parallel],\;
B_\perp[S^\ast_\perp]$.\\
\\
We have (we consider only one asymptotic end for simplicity)
\ba \label{4.a.37}
B_\parallel[S^\ast_{\parallel}]
&=& 2\lim_{r\to\infty}\;\int\; d\Sigma_\mu\; S_\ast^\rho\; 
W^{\mu\nu}\; m_{\rho\nu} 
\nonumber\\
&=& 2\lim_{r\to\infty}\;\int\; d\Omega\;\omega^{-1} \;
[S_\ast^3\; W^{3\nu}\; m_{3\nu} 
+S_\ast^A\; W^{3\nu}\; m_{A\nu}] 
\nonumber\\
&=& 2\lim_{r\to\infty}\;\int\; d\Omega\;
[S_\ast^3\; P^3  
+S_\ast^A\;\frac{1}{2} P^B\; (r^2 \Omega_{AB} +X_{AB})] 
\nonumber\\
&=& 2\lim_{r\to\infty}\;\int\; d\Omega\;
[(f^\ast_h+g^\ast_h)\; (p_v+y_v)  
+g^A_\ast\;\frac{1}{2} y^B\; (r^2 \Omega_{AB} +X_{AB})] 
\ea
The term $g^A_\ast y^B=O(r^{-3})$ while $r^2 \Omega_{AB}+X_{AB}=O(r^2)$ 
hence the second term in (\ref{4.a.37}) drops out. 
Next as $g_h^\ast=O(r^{-1}))$ while $p_v=O(r^{1/2}), y_v=O(1)$ it follows 
that the term proportional $g_h^\ast$ vanishes. Finally as 
$f_h^\ast=O(r^{-1/2})$ it follows that the $f_h^\ast y_v$ term vanishes.
Accordingly 
\be \label{4.a.38}
B_\parallel[S^\ast_{\parallel}]
=2\lim_{r\to\infty}\;\int\; d\Omega\; f_h^\ast \; p_v
\ee
Since $f_h^\ast=\kappa \frac{p_v}{r}$ we see that 
\be \label{4.a.39}
B_\parallel[S^\ast_{\parallel}]
=2\;\kappa\;\lim_{r\to\infty}\;\int\; d\Omega\; \frac{p_v}{r}\; \; p_v
=:\int_{\partial \sigma}\; d^2y\; s_\ast\; j_\ast
\ee
with $j_\ast=[\frac{p_v}{r^{1/2}}]_{r=\infty}$ and $s_\ast=2\kappa\;j_\ast$. 
Hence we can apply proposition \ref{prop4.1} with 
\be \label{4.a.40}
\chi[j]:=\kappa\; \int_{\partial\sigma} \; d\Omega\; j^2
\ee
The contribution to the reduced Hamilton from $B_\parallel$ 
is therefore given per asymptotic end by 
\be \label{4.a.41}
H_\parallel=\kappa\; \lim_{r\to\infty}\;\int_{S^2}\; d\Omega\;
\frac{p_v(r)^2}{r}
\ee
Note that the naive prescription to use $B_\parallel[S^\ast_\parallel]$
would have resulted in {\it twice} $H_\parallel$.\\
\\
Now we consider $B_\perp[S^\ast_\perp]$ which only depends on 
$F:=S^0_\ast=f_v^\ast+g_v^\ast=O(1)$. We have explicitly
\ba \label{4.a.42}
B_\perp[S^\ast_\perp] 
&=& 
-\int\; \sqrt{\det(m)}\; F\; m^{\mu\nu}\;
(d\Sigma_\mu\; [\Gamma^\rho_{\rho\nu}-\Gamma^{{\sf ND}\rho}_{\rho\nu}]
-d\Sigma_\rho\; [\Gamma^\rho_{\mu\nu}-\Gamma^{{\sf ND}\rho}_{\mu\nu}])
\\ &&
+\int\; \sqrt{\det(m)}\; [\nabla_\mu F]\; m^{\mu\nu}\;m^{\rho\sigma}
(d\Sigma_\nu [m_{\rho\sigma}-m^{{\sf ND}}_{\rho\sigma}]
-d\Sigma_\rho [m_{\nu\sigma}-m^{{\sf ND}}_{\nu\sigma}])
\nonumber\\
&=& 
-\int\; \omega^{-1}\;\sqrt{\det(m)}\;d\Omega\;
F\; 
(m^{3\nu}\; [\Gamma^\rho_{\rho\nu}-\Gamma^{{\sf ND}\rho}_{\rho\nu}]
-m^{\mu\nu}\; [\Gamma^3_{\mu\nu}-\Gamma^{{\sf ND}3}_{\mu\nu}]
\nonumber\\ &&
+\int\;\omega^{-1}\; d\Omega\; \sqrt{\det(m)}\; 
[\nabla_\mu F]\; 
(m^{\mu 3}\;m^{\rho\sigma}\;[m_{\rho\sigma}-m^{{\sf ND}}_{\rho\sigma}]
-
m^{\mu\nu}\;m^{3 \sigma}\;
[m_{\nu\sigma}-m^{{\sf ND}}_{\nu\sigma}])
\nonumber\\
&=& 
-\int\; \sqrt{\det(q)/\omega^2}\;d\Omega\;
F\; 
([\Gamma^\rho_{\rho 3}-\Gamma^{{\sf ND}\rho}_{\rho 3}]
-[\Gamma^3_{33}-\Gamma^{{\sf ND}3}_{33}]
-q^{AB}\; [\Gamma^3_{AB}-\Gamma^{{\sf ND}3}_{AB}]
)
\nonumber\\ &&
+\int\;d\Omega\; \sqrt{\det(q)/\omega^2}\; 
([\nabla_3 F]\; m^{\rho\sigma}\;[m_{\rho\sigma}-m^{{\sf ND}}_{\rho\sigma}]
-[\nabla_\mu F]\; m^{\mu\nu}\;
[m_{\nu 3}-m^{{\sf ND}}_{\nu 3}])
\nonumber\\
&=& 
-\int\; \sqrt{\det(q)/\omega^2}\;d\Omega\;
F\; 
([\Gamma^A_{A 3}-\Gamma^{{\sf ND}A}_{A3}]
-q^{AB}\; [\Gamma^3_{AB}-\Gamma^{{\sf ND}3}_{AB}])
\nonumber\\ &&
+\int\;d\Omega\; \sqrt{\det(q)/\omega^2}\; 
([\nabla_3 F]\; q^{AB}\;[q_{AB}-q^{{\sf ND}}_{AB}]
\nonumber\\
&=& 
-\int\; \sqrt{\det(q)/\omega^2}\;d\Omega\;
F\; 
([\Gamma^A_{A 3}-\Gamma^{{\sf ND}A}_{A3}]
-q^{AB}\; [\Gamma^3_{AB}-\Gamma^{{\sf ND} 3}_{AB}])
\nonumber\\ &&
+\int\;d\Omega\; \sqrt{\det(q)/\omega^2}\; 
([\nabla_3 F]\; q^{AB}\;[q_{AB}-q^{{\sf ND}}_{AB}]
\nonumber\\
&=& 
-\int\; \sqrt{\det(q)/\omega^2}\;d\Omega\;
F\; 
([\Gamma^A_{A 3}-\Gamma^{{\sf ND}A}_{A3}]
-q^{AB}\; [\Gamma^3_{AB}-\Gamma^{{\sf ND}3}_{AB}])
+\int\;d\Omega\; \sqrt{\det(q)/\omega^2}\; 
F'\; q^{AB}\;X_{AB}
\nonumber
\ea
where we used $d\Sigma_\mu =\omega^{-1}\;d\Omega \delta^3_\mu$ and that 
$m_{\mu\nu}$ is block diagonal i.e. $m_{33}=1, m_{3A}=0, 
m_{AB}=q_{AB}=r^2 \Omega_{AB}+X_{AB}$ in the GPG so that $m^{33}=1,
m^{3A}=0, m^{AB}=q^{AB},\; q^{AC} q_{CB}=\delta^A_B$
with 
$m^{{\sf ND}}_{33}=1,\;m^{{\sf ND}}_{3A}=0,\; 
m^{{\sf ND}}_{AB}=r^2\Omega_{AB}$. 

Consider first the second term in the last line of (\ref{4.a.42}) which 
has integrand 
\ba \label{4.a.43} 
&& [\frac{\det(q)} \omega^2]^{-1/2}\; 
F'\; \det(q)\; q^{AB} X_{AB} =[\frac{\det(q)} \omega^2]^{-1/2}\; F'\; 
\epsilon^{AC} \epsilon^{BD} [r^2 \Omega_{CD}+X_{CD}]\;X_{AB} 
\nonumber\\ 
&=& [\frac{\det(q)} \omega^2]^{-1/2}\; F'\; 
[r^2\det(\Omega) \Omega^{AB} X_{AB}+\det(X)] =[\frac{\det(q)} 
\omega^2]^{-1/2}\; F'\; \det(X) 
\ea 
Since $F=O(1)$ we have $F'=0$ at 
$r=\infty$ since $F=f_v+g_v,\; f_v=c+d\; r^{-n},\; 
g_v=C(\Omega)+D(\Omega) r^{-N},\;n,N>0$. Moreover $\det(X)=O(r^2)$ while 
$\det(q)^{-1/2}=O(r^{-2})$. Hence the second term has vanishing limit 
$r\to\infty$.

To evaluate the first term in 
(\ref{4.a.42}) we require the Christoffel symbols
$2\Gamma_{\mu\nu\rho}=2m_{\mu(\nu,\rho)}-m_{\nu\rho,\mu},\;
\Gamma^\mu_{\nu\rho}=m^{\mu\sigma} \Gamma_{\sigma\nu\rho}$
\ba \label{4.a.44}
&& \Gamma_{333}=\Gamma_{A33}=\Gamma_{3A3}=0
\nonumber\\
&& \Gamma_{3AB}=-\frac{1}{2} q_{AB}'=-\Gamma_{A3B}
\nonumber\\
&& \Gamma_{ABC}=r^2\Gamma^\Omega_{ABC}+\Gamma^X_{ABC}
\nonumber\\
&& \Gamma^3_{33}=\Gamma^3_{A3}=\Gamma^A_{33}=0
\nonumber\\
&& \Gamma^3_{AB}=\Gamma_{3AB}, \Gamma^A_{3B}=q^{AC}\Gamma_{C3B}
\nonumber\\
&& \Gamma^A_{BC}=q^{AD}\; \Gamma_{DBC}
\ea
since $q_{3\mu}=\delta^3_\mu$ is constant and we have used 
again block diagonality.  

It follows that the non-vanishing symbols have at most one index $\mu=3$ 
and $\Gamma_{3AB}, \Gamma_{A3B}=O(r),\; \Gamma_{ABC}=O(r^2),\;
\Gamma^3_{AB}=O(r),\; \Gamma^A_{3B}=O(r^{-1}), \Gamma^A_{BC}=O(1)$.
In particular the combination required in (\ref{4.a.42}) 
is 
\ba \label{4.a.45}
&& [\det(q)\omega^{-2}]^{1/2}\;
([\Gamma^A_{A 3}-\Gamma^{{\sf ND}A}_{A3}]
-q^{AB}\; [\Gamma^3_{AB}-\Gamma^{{\sf ND}3}_{AB}])
\nonumber\\
&=&
[\det(q)\omega^2]^{-1/2}\;\det(q)\;
(q^{AB}\; [\Gamma_{BA3}-\Gamma_{3AB}]
-q^{AB}_{\sf ND}\;[\Gamma^{{\sf ND}}_{BA3}-\Gamma^{{\sf ND}}_{3AB}]
)
\nonumber\\
&=&
[\det(q)\omega^2]^{-1/2}\;
(\epsilon^{AC}\epsilon^{BD} \;
[r^2 \Omega_{CD}+X_{CD}]\;[2r \Omega_{AB}+X'_{AB}]
-\det(q)\;r^{-2}\Omega^{AB}\;[2r \Omega_{AB}])
\nonumber\\
&=&
[\det(q)\omega^2]^{-1/2}\;
([r^2\det(\Omega)\;\Omega^{AB}+\epsilon^{AC}\epsilon{BD}X_{CD}]\;
[2r \Omega_{AB}+X'_{AB}]
-4[r^4\omega^2+\omega^2\Omega^{AB}X_{AB}+\det(X)]\; r^{-1})
\nonumber\\
&=&
[\det(q)\omega^2]^{-1/2}\;
(4\;r^3\omega^2+[\det(X)]'-4[r^4\omega^2+\det(X)]\; r^{-1})
\nonumber\\
&=&
[\det(q)\omega^2]^{-1/2}\;([\det(X)]'-4\det(X)\; r^{-1})
\ea
which is $O(r^{-1})$. Since $F=O(1)$ it follows 
\be \label{4.a.46}
B_\perp[S^\ast_\perp]=0
\ee
Altogether, the reduced Hamiltonian is therefore given by (taking both 
asymptotic ends into account)
\be \label{4.a.47}
H=\kappa\;
[\lim_{r\to\infty}\int_{S_2}\; d\Omega\;\frac{[P^3_\ast(r,\Omega)]^2}{r}
+\lim_{\bar{r}\to\infty}\int_{S_2}\; d\Omega\;
\frac{[P^3_\ast(\bar{r},\Omega)]^2}{\bar{r}}]
\ee
where $P^3_\ast$ is the value of $P^3$ obtained by solving all the constraints 
and by imposing the GPG. Due to the decay $y_v=O(r^{-1})$ we may replace 
$P^3=p_v+y_v$ by $p_v^\ast$ in (\ref{4.a.47}).\\ 
\\
\\
Several remarks are in order:\\
0.\\
In the presentation so far we have omitted the prefactor $1/k,\;
k=16\pi G$ of the action where $G$ is Newton's constant and we use
units with $c=1$. That prefactor propagates into the constraints and
thus the boundary term and therefore into $H$. We check that $H=M$ for
spherically symmetric vacuum GR when (\ref{4.a.47}) is multiplied by
$1/k$. To see this we use with $S^0=N$, the relations
$W^{\mu\nu}=\sqrt{\det(m)}[m^{\mu\nu} m^{\rho\sigma}
-m^{\mu\rho} m^{\nu\sigma}] K_{\rho\sigma}$ and
$K_{\rho\sigma}=\frac{1}{2N}[\dot{m}_{\rho\sigma}
-[{\cal L}_{\vec{S}} m]_{\rho\sigma}]$
in GPG. This gives with $\omega=\sqrt{\det(\Omega)}$ the
identity $p^v=P^{33}/\omega=r^2[K_{33}-m^{ab} K_{ab}]=-\Omega^{AB} K_{AB}$
and $K_{AB}=-\frac{1}{2N}[{\cal L}_{\vec{S}} m]_{AB}=
-\frac{1}{2N}[2r S^3]\Omega_{AB}=-\frac{r S^3}{N}\Omega_{AB}$ and therefore
$p_v=2r \frac{S^3}{N}$. If we compare with (\ref{4.a.34}) and use
$f_h=S^3, f_v=2\kappa=N$ we obtain exact match. It follows, performing
the angular integral in (\ref{4.a.47})
$H=\frac{4\pi}{k}\;\kappa\;4 r\; [S^3]^2/N^2$. Since $S^3=\sqrt{2GM/r}$
in GPG it follows $H=2\kappa M/N^2$ for one asymptotic end 
which equals $M$ for $N=1$ i.e.
$\kappa=1/2$. Again, had we wrongly identified $H$ with the value of
the boundary term we would have obtained $H=2M$ i.e. twice the ADM
mass. The difference arises because in Schwarzschild gauge the ADM
momentum vanishes exactly and the stabilising lapse is a constant on the
phase space, therefore in this case $H$ is simply the boundary term and yields
correctly $H=M$ with the same prefactor $1/k$ (see e.g. the second
reference in \cite{20}).\\
1.\\
By contrast to the usual computation, the Hamiltonian results from the 
boundary term of the ADM momentum rather than the ADM Hamiltonian. This 
can be traced back to the fact that in the GPG the information about the 
mass is not encoded in the three metric but rather in the extrinsic 
curvature while in the Schwrzschild gauge the roles are switched 
(the extrinsic curvature vanishes in the static presentation of the 
metric).\\ 
2.\\
In Cartesian coordinates, the Christoffel symbols vanish exactly 
for the GP background metric (which is flat) so that the ADM energy
term vanishes exactly for that background, not only to leading order. 
The relation between the Schwarzschild and GP coordinates 
involves a boost (see appendix \ref{sd}) so that the resulting Hamiltonian 
becomes now a component of the ADM momentum.\\
3.\\
Remarkably, the expression (\ref{4.a.47}) is {\it positive definite}
no matter what the concrete expression for $P^3_\ast$ is in terms of 
the truee degrees of freedom. For the present model these 
are gravitational mass $M$, electric charge $Q$, gravitational
tracefree (wrt $\Omega$) angular perturbations $X_{AB}, Y^{AB}$ (only
$l\ge 2$ modes), 
electromagnetic angular perturbations 
$A_C,\;E^A$ (only $l\ge 1$ modes) and Klein-Gordon field 
$\Phi,\Pi$ (all $l\ge 0$ modes are observables but only $l>0$ modes
are perturbations) when we solve the Gauss constraint for $E^3$.\\    
4.\\
At exact spherical symmetry in vacuum we have $P^3_\ast=\sqrt{2Mr}$ hence 
$H=2\kappa M$ (with prefactor $k$ included). 
The freedom $\kappa$ is therefore the same that arises 
in the purely spherically symmetric sector (see appendix \ref{sb}) 
or in the Kantowski-Sachs reformulation (see appendix \ref{se}).\\ 
5.\\
When expanding $P^3_\ast$ perturbatively in terms of the perturbations 
just mentioned we obtain schematically 
$P^3_\ast=P^3_\ast(0)+P^3_\ast(1)+P^3_\ast(2)+..$ where
the three terms are respectively independent, linear and quadratic in 
the perturbations. Now $P^3=p_v+y_v$ and by the general theory \cite{pa129}
$p_v(1)\equiv 0$ while $y_v(n),\; n\ge 1$ decays too fast to be visible in 
(\ref{4.a.47}). 
Therefore to second order in the perturbations for one asymptotic end
\be \label{4.a.48}
H=\frac{\kappa}{k}\lim_r \frac{1}{r}\; \int\; d\Omega\;
\{
[p_v^\ast(0)]^2+2\; p_v^\ast(0)\; p_v^\ast(2)\}
\ee
(the angular integral gives just $4\pi$ because $p_v$ has only zero modes). 
As we will confirm in our companion papers, (\ref{4.a.48})
reproduces the Regge-Wheeler and Zerilli Hamiltonian \cite{24,25} in GP 
coordinates. \\
6.\\
The real virtue of (\ref{4.a.47}), however, is that it is a non-perturbative 
result. It provides a formula for the physical Hamiltonian entirely 
expressed in terms of the true degrees of freedom and in that sense 
is gauge invariant to all orders that one may want to expand it into.
For instance we now have access to a non-ambiguous Hamiltonian that 
includes cubic (``Non-Gaussian'') (self-)interactions of the perturbations.
In particular in absence of scalar matter the system can be interpreted 
as a black hole formed due to collapse of gravitational and/or 
electromagnetic waves due to self-interactions mediated by gravity.

\section{Perturbative structure of the constraints}
\label{s3}

While in the previous section we have derived the non-perturbative definition
of the reduced Hamiltonian, it is given only implicitly. To be practically
useful in particular for quantisation, we need an explicit formula. This 
can be provided at least perturbatively, which will be the task of the 
present and next section.

In this section we discuss subsequently the perturbative structure 
of Gauss, spatial diffeomorphism and Hamiltonian constraint. 
Since the Gauss and spatial diffeomorphism constraint are first and second 
order homogeneous polynomials in all fields, we will be able to 
exhibit all perturbative orders explicitly where we leave 
the evaluation of integrals of contractions of three spherical 
tensor harmonics in terms of Clesch-Gordan coefficients for 
our companion papers. For the the Hamiltonian constraint,
which we treat in its polynomial form, we collect the full non-perturbative 
structure of all sub-polynomials from which it is assembled but 
then just keep the orders up to two. The explicit computation of the 
finite number of higher orders (up to ten in vacuum GR) will be subject 
of our companion papers.

\subsection{Reduction of the Gauss constraint}
\label{s3.1}

Since the Gauss constraint Poisson commutes with all other constraints 
(we replace $V_\mu^M+V_\mu^{KG}$ by $F_{\mu\nu} E^\nu+\Pi^T [D_\mu \Phi]$ 
by subtracting $A_\mu G=A_\mu (G^M+G^{KG})$ which can be done by 
redefining the $S^\mu,S_0$) we may reduce the theory with respect to 
the corresponding gauge degrees of freedom before entering the reduction 
with respect to Hamiltonian and spatial diffeomorphism constraint. 
In the case of charge (i.e. a scalar dublett) it is most convenient
to use a ``unitary gauge'' otherwise (i.e. a scalar singlett) a radial 
``axial'' gauge.

\subsubsection{Unitary gauge}
\label{s3.1.1}   

As the Gauss constraint generates rotations of $\Phi=(\phi_1,\phi_2)$, 
a perfect gauge is $\phi_2=0$ and we can solve $G=0$ algebraically for 
$\pi_2=J:=\phi_1^{-1}\partial_a E^a$ so that no decay properties of 
the fields come into play. Then $(\phi=\phi_1,\;\pi=\pi_1),\;
(A_\mu,E^\mu),\;\mu=1,2,3$ are the true degrees of freedom as far 
as reduction of the Gauss constraint is concerned. Then the only 
task to do is to perform the following relacements in the 
Klein Gordon contributions to the SDC and HC respectively
\be \label{3.3a0} 
V_\mu^{KG}\;\to\; \pi\phi_{,\mu},\;\;
2\;V_0^{KG} \;\to\; \frac{\pi^2+J^2}{\sqrt{\det(m)}}
+\sqrt{\det(m)}m^{\mu\nu}[\phi_{,\mu}\phi_{,\nu}+A_\mu A_\nu \phi^2
+2V(\phi^2)]
\ee

\subsubsection{Axial gauge}
\label{s3.1.2}   

We now assume that there is just one KG field $\phi$ with 
conjugate momentum $\phi$, hence the term 
$\Pi^T\epsilon\Phi$ in $G$ is missing and $D_\mu \Phi\to 
\partial_\mu\phi,\; \Pi\to \pi$ in both $V_\mu^{KG}, V_0^{KG}$. 

In view of the subtle difference between gauge and symmetries 
that arises for constraints that depend on spatial derivatives of 
the fields exemplified in appendix \ref{sa}, we 
need to specify the decay behaviour of the fields at the spatial 
infinities which we take to be $r^{-2}$ in an asymptotic Cartesian 
frame for both electric and magnetic fields so that 
the electromagnetic energy density decays as $r^{-4}$.
This allows the vector potential 
to decay as $r^{-1}$ or faster. 
In terms of the spherical frame 
$A_\mu(z)=(\partial x^a/\partial y^\mu) A_a(x),\; 
E^\mu(z)=|\det(\partial x/\partial z)|\;(\partial z^\mu/\partial x^a) 
E^a(x)$ this translates into $E^3=O(1),\; E^C=O(r^{-1}),\; 
A_3=O(r^{-2}),\;A_C=O(r^{0})$. However, we must require that the 
dynamical part of the Maxwell connection decays as 
$A_3=O(r^{-2}),\;A_C=O(r^{-1})$. 
This makes sure that $E^3\; [dA_3],\;E^B\; [dA_B]$ decay 
as $r^{-2}$ which then makes the symplectic potential converge.
The functionally differential form of the Maxwell contribution 
to the Gauss constraint is (the Klein Gordon cointribution is 
functionally differentiable as it is) 
\be \label{3.3a1}
H[S_0]=G[S_0]+B[S_0]=-\int\;d^3z\; E^\mu S_{0,\mu},\;
B[S_0]=-\int\;d^2y\; [E^3\; S^0]_{r=-\infty}^{r=\infty}
\ee
The solution of the constraint $(E^3)'+E^A_{,A}=0$ is 
\be \label{3.3a2}
E^3_\ast=P^M-\int_{-\infty}^r\;ds\; E^A_{,A}
\ee
where $P^M$ is any function on $S^2$. The boundary term becomes on such a 
solution 
\be \label{3.3a2a}
B[S_0]=
\int\; d^2y;
\{-P^M\;[S_0(\infty)-S_0(-\infty)]+[\int_{-\infty}^\infty\;dr\;E^A_{,A}
]\; 
S_0(\infty)\}
\ee
which vanishes for all $E^A,p^M$ iff $S_0(\infty)=S_0(-\infty)=0$. 
Such $S_0$ correspond to a gauge transformation. The general transformation 
of $A_\mu$ is $\delta A_\mu=-S_{0,\mu}$. We cannot gauge $A_3$ completely to 
zero because for general $A_3$ this would require that 
$S_0=\int_{-\infty}^r\;ds A_3(s)$ which does not necessarily 
vanish at $r=\infty$. Let $w$ be a function of $r$ only with
$\int_{-\infty}^\infty dr w=1,\; w=O(r^{-2})$, say 
$w=\frac{1}{\pi\;r_0 (1+r^2/r_0^2)}$. Then we can gauge $A_3$ to 
$A_3^\ast=w\; Q_M,\;Q_M:=\int_{-\infty}^\infty\; dr\; A_3$ using 
$S_0=\int_{-\infty}^r \; dr\; [A_3-Q_M\; w]$ which now is a gauge 
transformation. Under a gauge transformation $Q_M$ is an invariant.
The transformations that stabilise the gauge $A_3-Q_M w$ satisfy
$-S_0'+w\;[S_0]_{-\infty}^\infty=0$ i.e. 
$S_0^\ast=S_0^\ast(-\infty)+[\int_{-\infty}^r w]\; 
K,\; K=[S_0^\ast]_{-\infty}^\infty$. 
which does not vanish at $r=\infty$ and thus corresponds to a symmetry 
transformation. The symplectic structure pulled back to $E^3_\ast, A_3^\ast$
is given by 
\ba \label{3.3a3}
\Theta_M &=& 
\int\;dr\; d^2y\;\{[P^M-\int_{-\infty}^r\;ds\; E^B_{,B}(s)]\;
w(r)\; [d Q_M]+E^B \;[dA_B]\;]
\nonumber\\
&=&\int\; d^2y\; P^M\; [dQ_M]+
\int\;dr\; d^2y\; E^B\;[d(A_B+[\int_r^\infty\; ds \; w(s)]\;Q_{M,B}]
\ea
and displays the canonical pair $(Q_M,P^M)$ on the sphere 
and the bulk canonical pair $(E^B,\hat{A}_B=A_B+
[\int_r^\infty ds w(s)] Q_{M,B})$. Under a symmetry transformation
$E^B,\hat{A}_B$ are both invariant. For the magnetic field we have
$B^3=\epsilon^{BC} A_{C,B}=\epsilon^{BC} \hat{A}_{C,B}$ and 
$B^A=-\epsilon^{AB}[A_B'-Q_{M,B} w)=-\epsilon^{AB}\;\hat{A}_B'$ 
i.e. it just depends on the invariant $\hat{A}_B$. Thus neither 
the magnetic nor the electric field depend explicitly on the $Q_M$ which 
thus acquire the same invisibility as the momenta $Q$ conjugate to the 
gravitational mass if we follow the startegy of section 
\ref{sd.6}. The electric field does depend on $P^M$ of which
the $l=0$ mode is just the electric charge. \\
\\
We now perform the same analysis more explicitly in terms of spherical 
harmonics. The  
coorsponding symmetric and non-symmetric Gauss constraint is easily found 
to be using ($(.)':=\frac{d}{dr}(.)$)
\ba \label{3.1}
G &=&\nabla_\mu E^\mu
=\omega[p^{M\prime}+\sum_{l>0}\;[y_{l,m}^{M\prime} L_{l,m}
+\sum_{\alpha\in \{o,e\}} Y_{\alpha,l,m}^M\; D_A\; L^A_{\alpha,l,m}]
\nonumber\\
&=&
=\omega[p^{M\prime}+\sum_{l>0}\;[y_{l,m}^{M\prime}
-\sqrt{l(l+1)} Y_{e,l,m}^M]\; L_{l,m}
\ea
Thus
\be \label{3.2}
C_M(r)=C_{M(0)}(r)=p^{M\prime}(r),\; 
Z_{M,l,m}(r)=Z_{M,l,m(1)}(r)=y_{l,m}^{M\prime}-\sqrt{l(l+1)} Y^M_{e,l,m}(r)
\ee
with general solution (we relabel $z=\theta(z)r-\theta(-z)\bar{r}$ by $r$
which has no range on the real axis)
\be \label{3.3}
p^M(r)=p^M(0),\;
y^M_{l,m}(r)=p^M_{l,m}(0)+\sqrt{l(l+1)}\int_{-\infty}^r\; ds \; Y^M_{e,l,m}(s)
\ee
We plug these into the symplectic structure for the Maxwell field and 
obtain up to a total phase space differential
\ba \label{3.3a}
\Theta_M &=&\int_{-\infty}^\infty\; dr\; 
\{
p^M(r)\; dq_M(r)+\sum_{l>0,|m|\le l}\;[y_{l,m}^M\; dx^{l,m}_M(r)
+Y_{e,l,m}^M(r)\; dX^{e,l,m}_M(r) 
+Y_{o,l,m}^M(r)\; dX^{o,l,m}_M(r) 
\}
\nonumber\\
&=&P^M_h\;d[\int_{-\infty}^\infty\; dr\; q_M(r)]
+\sum_{l>0,|m|\le l} Y^M_{h,l,m}\; d[\int_{-\infty}^\infty \; 
dr\; x^{l,m}_M(r)]
+\sum_{l>0,|m|\le l}\;
\int_{-\infty}^\infty\;dr\;
\{Y_{e,l,m}^M(r)\; d[X^{e,l,m}_M(r)
\nonumber\\
&& +\sqrt{l(l+1)}\int_r^\infty\; ds\;
x^{l,m}_M(s)] 
+Y_{o,l,m}^M(r)\; dX^{o,l,m}_M(r) 
\}
\nonumber\\
&=:& P_h^M \; dQ^h_M+
\sum_{l>0,|m|\le l} \; Y_{h,l,m}^M \; dX^{h,l,m}_M
+
\int_{-\infty}^\infty\;dr\;[Y^M_{e,l,m}\; d\tilde{X}^{e,l,m}_M
+Y^M_{o,l,m}\; d\tilde{X}^{o,l,m}_M](r) 
\ea
where we have set $P_h^M:=p^M(0), Y_{h,l,m}^M:=p^M_{l,m}(0)$. We 
see that the reduced symplectic structure only depends
on the Dirac observables (with respect to the 
Gauss constraint) $(P^M_h,Q_M^h),(Y^M_{h,l,m},X^{h,l,m}_M)$ which 
which are 
independent of $r$ 
(they are the harmonic modes of $P^M,Q_M$ above) 
and the Dirac observables 
$(Y_{e,l,m}^M,\tilde{X}^{e,l,m}_M) 
(Y_{o,l,m}^M,\tilde{X}^{o,l,m}_M)$ which do depend on $r$
(they are the harmonic modes of $E^B,\hat{A}_B$ above). 

For the magnetic fields we find 
\ba \label{3.3b}
&& B^3=\epsilon^{AB}\; D_A A_B
=\omega\sum_{l>0,|m|\le l,\alpha\in\{e,o\}}
X^{\alpha,l,m}_M\; \; \eta^{AB}\; D_A L_{B;\alpha,l,m}
=\omega\sum_{l>0,|m|\le l}\;\sqrt{l(l+1)} X^{o,l,m}_M \; L_{l,m}
\nonumber\\
&& B^A=-\omega\; \eta^{AB}\;(A_B'-D_B A_3)
=-\omega\;\sum_{l>0,|m|<l}\;
[(X^{e,l,m\prime}_M-\sqrt{l(l+1)} x^{l,m}_M)\;L^A_{o,l,m}
-X^{o,l,m\prime}_M \;L^A_{e,l,m}]
\nonumber\\
&& =-\omega\;\sum_{l>0,|m|<l}\;
[\tilde{X}^{e,l,m\prime}\;L^A_{o,l,m}
-X^{o,l,m\prime}_M \;L^A_{e,l,m}]
\ea
which of course also only depends on these Dirac observables.    

Thus we can drop $A_3,E^3$ from the list of independent 
variables in $\Theta$ 
and denote $\tilde{X}^{e,l,m}_M$ by $X^{e,l,m}_M$ again. 
We drop the term proportional to $G$ from the Hamiltonian, note that 
$V_0, V_\mu$ then only depend on electric and magnetic fields and  
write these in terms of the degrees of 
freedom $P_h^M,P_{h,l,m}^M,X^{\alpha,l,m}_M,Y_{\alpha,l,m}^M$ 
explicitly
\ba \label{3.3c}
E^3 &=& \omega \; \{
P_h^M+\sum_{l>0,m}\;[Y_{h,l,m}^M+\sqrt{l(l+1)}\;
\int_{-\infty}^r\; ds\; Y_{e,l,m}^M(s)]\; L_{l,m}\}
=:
P_h^M+\sum_{l>0,m}\; \tilde{Y}_{h,l,m}^M\; L_{l,m}
\nonumber\\
E^A &=& 
\omega\;\sum_{l>0,m,\alpha=e,o}\; Y_{\alpha,l,m}\; L^A_{\alpha,l,m}
\nonumber\\
B^3 &=& \omega\;\sum_{l>0,m} \; \sqrt{l(l+1)}\; 
X^{o,l,m}_M\;L_{l,m}
=:\sum_{l>0,m} \tilde{X}^{h,l,m}_M \; L_{l,m}
\nonumber\\
B^A &=&-\omega\;\sum_{l>0,m,\alpha,\beta=e,o} \;
\epsilon_{\alpha\beta}\;
X^{\alpha,l,m\prime}_M\; L^A_{\beta,l,m}  
\ea
where $\tilde{Y}_{h,l,m}^M,\; \tilde{X}^{h,l,m}_M$ are 
just abbreviations for the above linear functions of $Y_{e,l,m}^M,\;
X^{o,l,m}_M$ respectively and $\epsilon_{\alpha\beta},\;
\alpha,\beta=e,o$ is the skew symbol with $\epsilon_{eo}=1$.  Note 
that 
\be \label{3.3d}
\eta_{AB} L^B_{\alpha,l,m}=-\sum_\beta\;  \epsilon_{\alpha\beta}
L_{A;\beta,l,m}
\ee
In this way, the Hamiltonian $H$ remains polynomial of degree two 
in these variables. The field components $E^3,B^3$ have merely the status 
of an abbreviation for the r.h.s. of the first line and third 
line of (\ref{3.3c}). \\

\subsection{General perturbative structure of the spatial diffeomorphism
constraint}
\label{s3.2}

It is simplest to start with the form of the constraint in which its 
geometric meaning becomes most transparent (dropping the boundary term
for the moment)
\be \label{3.4}
V_\mu[S^\mu]=\int\; d^3x\; \{
W^{\mu\nu}\; [{\cal L}_{\vec{S}} m]_{\mu\nu}
+E^\mu\; [{\cal L}_{\vec{S}} A]_\mu
+\pi^T\; [{\cal L}_{\vec{S}} \phi]
\}
\ee
The boundary term picked up by a variation has been discussed at 
length in the previous section as far as the gravitational degrees of freedom 
are concerned. The boundary term picked up with respect to the matter 
fields is 
$\int\; d\Sigma_\mu S^\mu [E^\nu\delta A_\nu+\pi \delta\phi] 
=\int\; d^2y\;S^3 [E^\nu\delta A_\nu+\pi \delta\phi]$ 
which vanishes identically since by construction the term in the square 
bracket which enters the symplectic potential is $O(r^{-2})$. For $\pi,\phi$
this is achieved if $\pi,\phi$ decay as $O(r^{-2})$ in Cartesian coordinates 
or $\pi$ as $O(1)$ and $\phi$ as $O(r{-2})$ and $\phi$ as $O(r^{-2}$ in 
spherical coordinates. The decay of $A,E$ was specified in the previous 
subsection.  

We have explicitly ($(.)'=\frac{d}{dr}(.)$)
\ba \label{3.5}
&& {\cal L}_{\vec{S}} m]_{33}
=S^3\; m_{33}'+S^A\;D_A m_{33}+2 m_{33}\;S^{3\prime}+2\; m_{3A} \;S^{A\prime}
\nonumber\\
&& [{\cal L}_{\vec{S}} m]_{3A}
=S^3\; m_{3A}'+S^B\;D_B m_{3A}+ 
m_{33}\;D_A S^3+
m_{B3}\;D_A S^B+
m_{3A}\;S^{3\prime}+
m_{BA}\;S^{B\prime}
\nonumber\\
&& [{\cal L}_{\vec{S}} m]_{AB}
=S^3\; m_{AB}'+S^C\;D_C m_{AB}+2 m_{C(A} D_{B)} S^C+ 2\;m_{3(A} D_{B)}S^3 
\nonumber\\
&& [{\cal L}_{\vec{S}} A]_3= S^3 \; A_3'+S^B\; D_B A_3+ A_3 \; S^{3\prime}
+A_B S^{B\prime}
\nonumber\\
&& [{\cal L}_{\vec{S}} A]_B= S^3 \; A_B'+S^C\; D_C A_B+ A_3 \; D_B S^3
+A_C\; D_B S^C
\nonumber\\
&& [{\cal L}_{\vec{S}} \phi]=S^3\; \phi'+S^A\; D_A\phi
\ea
Plugging the expansions (\ref{2.12}), (\ref{2.13}) into
(\ref{3.5}) we can carry out the covariant differentials and write 
(\ref{3.5}) in terms of contracted quadratic monomials of tensor 
harmonics. Then contracting with the momenta in (\ref{3.4}) we obtain 
a sum of contracted cubic monomials of tensor harmonics that are 
being integrated with $d\Omega$. These integrals can be performed in closed 
form.
Of interest to us in this work is not their explicit form but the 
qualitative structure. We note that we are interested in the 
coefficients of the smearing functions $f^a, g^j$ which are immediately 
available from (\ref{3.5}) modulo an integration by parts with respect
to the radial variable. These integrations by parts generate radial 
derivatives of momenta. We see from (\ref{3.5}) that both $S^{3\prime}$ and 
$S^{A\prime}$ occur only in  
$[{\cal L}_{\vec{S}} m]_{33},
[{\cal L}_{\vec{S}} m]_{3A},
[{\cal L}_{\vec{S}} A]_{3}$ therefore the only radial derivatives of 
momenta that appear are those of $W^{33},W^{3A},E^3$ (also the Gauss 
constraint contains only radial derivatives of $E^3$). 
Since the Hamiltonian constraint does not contain momentum derivatives,
it follows that the only radial derivatives acting on the 
radial coefficient functions of the tensor harmonics for the 
momenta that occur are those
of $p_v^E,y_{v,l,m}^E,y_{\alpha,l,m}^E,p^M,y_{l,m}^M$ with 
$l>0,\alpha\in\{e,o\}$. Since in the construction algorithm 
we want to solve jointly the constraints $C_a$ 
for $p_a$ and the constraints $Z_j$ for $y_j$ it follows that we can 
solve algebraically for $p_h^E, y_{h,l,m}^E$ but have to solve radial 
differential equations for the other $p,y$ type momenta just listed.
Note that in particular that the true momenta $P_A, Y_J$ occur without 
radial derivatives.

Performing integrations by parts explicitly to recover the actual
constraint without derivatives on the smearing function we find 
\ba \label{3.6}
&& V_\mu[S^\mu]=\int\; d^3x\; \{
\nonumber\\
&& S^3\;[
W^{33}\; m_{33}'-2(W^{33}\; m_{33})'
+2(W^{3A}\; m_{3A}'-2(W^{3A}\; m_{3A})')
+W^{AB}\; m'_{AB}-2\;D_A(W^{3A}\;m_{33}+W^{AB}\; m_{3B})
\nonumber\\
&&+\pi\;\phi'+ E^3\; A_3'-(E^3 A_3)'+E^B\;A_B'-D_B(E^B A_3)]
\nonumber\\
&& + S^A\;[
W^{33} \; D_A\; m_{33}-2(W^{33} \; m_{3A})'+2\; W^{3B}\; D_A m_{3B}
-2\;D_A(W^{3B} m_{3B})-2(W^{3B} m_{AB})'+W^{BC}\; D_A m_{BC}
\nonumber\\
&& -2\;D_B(W^{BC} m_{CA})]
\nonumber\\
&& +\pi\; D_A\phi + E^3\; D_A A_3-(E^3 A_A)'+E^B \; D_A A_B-D_B(E^B A_A)]
\}
\ea
Using the result of the previous subsection we may in fact exploit that
$G=0$ is identically satisfied so that the contribution from the 
Maxwell field can be simplified to
\ba \label{3.7}
&& S^\mu [E^\nu \partial_\mu A_\nu-\partial_\nu(E^\nu A_\mu)]
=2\;S^\mu \partial_{[\mu} A_{\nu]}\;E^\nu
=2\;S^\mu \epsilon_{\mu\nu\rho}\; E^\nu B^\rho
\nonumber\\
&=& 2\;[S^3 \; \epsilon_{BC}\; E^B\; B^C+S^A \; \epsilon_{AC}\;
(E^C\; B^3-E^3\; B^C)]
\ea
We extract the coefficients of $S^3, S^A$ with respect to the decomposition
(\ref{2.12}), see (\ref{2.16}) and (\ref{2.17}). The symmetric constraint is
\ba \label{3.7a}
&& f^h C_h=f^h\; <1,V_3/\omega>_{L_2}
= f^h\{
[p_v^E\; (q^v_E)'-2\;(p_v^E\; q^v_E)'+p_h^E\; (q^h_E)'+P^{KG}\; Q_{KG}']
\nonumber\\
&& +
\sum_{l>0,m} [(\sum_{\alpha=v,e,o}
y_{\alpha,l,m}^E\; (x^{\alpha,l,m}_E)'-
2\; (y_{\alpha,l,m}^E\; x^{\alpha,l,m}_E)')
+y_{h,l,m}^E\;(x^{h,l,m}_E)'
+Y_{l,m}^{KG}\; (X^{l,m}_{KG})'
\nonumber\\
&& + 
\sum_\alpha\; \epsilon_{\alpha\beta} 
Y_{\alpha,l,m}^M\; X^{\beta,l,m\prime}]_M
]
\}
\nonumber\\
&=:& f^h
\{C_{h(0)}((p^E,q_E),(P^{KG},Q_{KG}))+
C_{h(2)}((y^E,x_E),(Y^{KG},X_{KG}),(Y^M,X_M))
\}
\ea
The non-symmetric constraints are 
\ba \label{3.7b}
&& g^{h,l,m} \; Z_{h,l,m}= g^{h,l,m} <L_{l,m},V_3/\omega>_{L_2}
\nonumber\\
&=& g^{h,l,m}\;\{
[y_{v,l,m}^E \;(q^v_E)'+p_v^E\; (x^{v,l,m}_E)'-
2(y_{v,l,m}^E \;q^v_E+p_v^E\; x^{v,l,m}_E)'
\nonumber\\
&& +p_h^E\; (x^{h,l,m}_E)'
+y_{h,l,m}^E\; (q^h_E)'
+\sqrt{l(l+1)}\;y_{e,l,m}^E\; q^v_E
+P^{KG}\; (X^{l,m}_{KG})' + Y_{l,m}^{KG} \; (Q_{KG})'
]
\nonumber\\
&& +
\sum_{l',m',\tilde{l},\tilde{m}} \; 
[
<L_{l,m},L_{l',m'}\; L_{\tilde{l},\tilde{m}}>_{L_2}
\;\{
y_{v,l',m'}^E\; (x^{v,\tilde{l},\tilde{m}}_E)'
-2\;(y_{v,l',m'}^E\; x^{v,\tilde{l},\tilde{m}}_E)'
\}
\nonumber \\ &&
+\sum_{\alpha,\beta=o,e}\;
<L_{l,m},L^A_{\alpha,l',m'}\; L_{A;\beta,\tilde{l},\tilde{m}}>_{L_2}\;
\{
y_{\alpha,l',m'}^E\; (x^{\beta,\tilde{l},\tilde{m}}_E)'
-2\;(y_{\alpha,l',m'}^E\; x^{\beta,\tilde{l},\tilde{m}}_E)'
\}
\nonumber \\ &&
+
\sum_{\alpha,\beta=o,e}\;
<L_{l,m},L^{AB}_{\alpha,l',m'}\; L_{AB;\beta,\tilde{l},\tilde{m}}>_{L_2}\;
Y_{\alpha,l',m'}^E\; (X^{\beta,\tilde{l},\tilde{m}}_E)'
\nonumber \\ &&
+
\sum_{\alpha=o,e}\;
<L_{l,m},L^{AB}_{\alpha,l',m'}\; L_{AB;h,\tilde{l},\tilde{m}}>_{L_2}\;
Y_{\alpha,l',m'}^E\; (x^{h,\tilde{l},\tilde{m}}_E)'
\nonumber \\ &&
+
\sum_{\beta=o,e}\;
<L_{l,m},L^{AB}_{h,l',m'}\; L_{AB;\beta,\tilde{l},\tilde{m}}>_{L_2}\;
y_{h,l',m'}^E\; (X^{\beta,\tilde{l},\tilde{m}}_E)'
\nonumber \\ &&
+
<L_{l,m},L^{AB}_{h,l',m'}\; L_{AB;h,\tilde{l},\tilde{m}}>_{L_2}\;
y_{h,l',m'}^E\; (x^{h,\tilde{l},\tilde{m}}_E)'
\nonumber \\ &&
+
2\;\sqrt{l(l+1)}\;\sum_{\alpha=o,e}\;
<L_{A;e,l,m},L^A_{\alpha,l',m'}\; L_{\tilde{l},\tilde{m}}>_{L_2^2}\;
y_{\alpha,l',m'}^E\; x^{v,\tilde{l},\tilde{m}}_E
\nonumber \\ &&
+
2\;\sqrt{l(l+1)}\;\sum_{\alpha,\beta=e,o}\;
<L_{A;e,l,m},L^{AB}_{\alpha,l',m'}\; L_{B;\beta\tilde{l},\tilde{m}}>_{L_2^2}\;
Y_{\alpha,l',m'}^E\; x^{\beta,\tilde{l},\tilde{m}}_E
\nonumber \\ &&
+
2\;\sqrt{l(l+1)}\;\sum_{\beta=e,o}\;
<L_{A;e,l,m},L^{AB}_{h,l',m'}\; L_{B;\beta\tilde{l},\tilde{m}}>_{L_2^2}\;
y_{h,l',m'}^E\; x^{\beta,\tilde{l},\tilde{m}}_E
+
<L_{l,m},L_{l',m'}\; L_{\tilde{l},\tilde{m}}>_{L_2}\;
Y_{l',m'}^{KG}\; (X^{\tilde{l}\tilde{m}})'
\nonumber \\ &&
+
\sum_{\alpha,\beta=e,o}\;
<L_{l,m},L^A_{\alpha,l',m'}\; L_{A;\beta\tilde{l},\tilde{m}}>_{L_2}\;
Y_{\alpha,l',m'}^M\; (X^{\beta,\tilde{l},\tilde{m}}_M)'\; 
]
\nonumber\\
&=:& g^{h,l,m}\;\{
Z_{h,l,m(1)}((y_v^E,q^v_E),(p_v^E,x^v_E),(y_h^E,q^h_E),(p_h^E,x^h_E),
(P^{KG},X_{KG}),(Y^{KG},Q_{KG}))
\nonumber\\
&& +Z_{h,l,m(2)}((y^E,x_E);
(Y^E,X_E),(Y^E,x_E),(y^E,X_E),(Y^{KG},X_{KG}),(Y^M,X_M))
\}
\ea 
and 
\ba \label{3.7c}
&& g^{\alpha,l,m}\; Z_{\alpha,l,m}=
g^{\alpha,l,m}\; \sum_{A=1,2}\; 
(<L^A_{\alpha,l,m},V_A/\omega>_{L_2})_{\alpha=o,e}\;
\nonumber\\
&=&
g^{\alpha,l,m}\; \{
[
\sqrt{l(l+1)}\; p_v^E \; \delta^\alpha_e\; x^{v,l,m}_E
-2(p_v^E\;x^{\alpha,l,m}_E)'
-2(y_{\alpha,l,m}^E\; q^h_E)' 
\nonumber\\ &&
+ 
[-2\sqrt{l(l+1)/2}\; y_{h,l,m}^E\; \delta_\alpha^e
+\sqrt{(l-1)(l+2)/2} Y_{\alpha,l,m} ]q^h_E
\nonumber\\ &&
+
\sqrt{l(l+1)}\; P^{KG}\; \delta_\alpha^e X^{l,m}_{KG}
-
P^M_h \;(X^{\alpha,l,m}_M)'
]
\nonumber\\ &&
+
\sum_{l',m',\tilde{m},\tilde{m}}\;
[
\sqrt{l(l+1)}\;
<L^A_{\alpha,l,m},L_{l',m'}\;L_{A;e,\tilde{m},\tilde{m}}>_{L_2}\;
y_{v,l',m'}^E\; x^{v,\tilde{l},\tilde{m}}_E
\nonumber \\ &&
-
2\;\sum_{\beta=e,o}\; 
<L^A_{\alpha,l,m},L_{l',m'}\;L_{A;\beta,\tilde{m},\tilde{m}}>_{L_2}\;
(y_{v,l',m'}^E\; x^{\beta,\tilde{l},\tilde{m}}_E)'
\nonumber \\ &&
+
2 \sum_{\beta,\gamma=e,o}\; 
<L^A_{\alpha,l,m},
L^B_{\beta,l',m'}\;D_A L_{B;\gamma,\tilde{m},\tilde{m}}>_{L_2}\;
y_{\beta,l',m'}^E\; x^{\gamma,\tilde{l},\tilde{m}}_E
\nonumber \\ &&
+
2\sqrt{l(l+1)}\;\sum_{\beta,\gamma=e,o}\; \delta_\alpha^e\;
<L_{l,m},L^A_{\beta,l',m'}\;L_{A;\gamma,\tilde{m},\tilde{m}}>_{L_2}\;
y_{\beta,l',m'}^E\; x^{\gamma,\tilde{l},\tilde{m}}_E
\nonumber 
\ea

\newpage

\ba 
&&
-
2\sum_{\beta=e,o}\;
<L^A_{\alpha,l,m},L^B_{\beta,l',m'}\;L_{AB;h,\tilde{m},\tilde{m}}>_{L_2}\;
(y_{\beta,l',m'}^E\; x^{h,\tilde{l},\tilde{m}}_E)'
\nonumber \\ &&
+
\sum_{\beta,\gamma=e,o}\;
<L^A_{\alpha,l,m},L^{BC}_{\beta,l',m'}\;
D_A\;L_{BC;\gamma,\tilde{m},\tilde{m}}>_{L_2}\;
Y_{\beta,l',m'}^E\; X^{\gamma,\tilde{l},\tilde{m}}_E
\nonumber \\ &&
+
\sum_{\beta=e,o}\;
<L^A_{\alpha,l,m},L^{BC}_{\beta,l',m'}\;
D_A\;L_{BC;h,\tilde{m},\tilde{m}}>_{L_2}\;
Y_{\beta,l',m'}^E\; x^{h,\tilde{l},\tilde{m}}_E
\nonumber \\ &&
+
\sum_{\gamma=e,o}\;
<L^A_{\alpha,l,m},L^{BC}_{h,l',m'}\;
D_A\;L_{BC;\gamma,\tilde{m},\tilde{m}}>_{L_2}\;
y_{h,l',m'}^E\; X^{\gamma,\tilde{l},\tilde{m}}_E
\nonumber \\ &&
+
<L^A_{\alpha,l,m},L^{BC}_{h,l',m'}\;
D_A\;L_{BC;h,\tilde{m},\tilde{m}}>_{L_2}\;
y_{h,l',m'}^E\; x^{h,\tilde{l}\tilde{m}}_E
\nonumber \\ &&
+
2\;\sum_{\beta,\gamma=e,o}\;
<D_B \;L^A_{\alpha,l,m},L^{BC}_{\beta,l',m'}\;
L_{CA;\gamma,\tilde{m},\tilde{m}}>_{L_2}\;
Y_{\beta,l',m'}^E\; X^{\gamma,\tilde{l},\tilde{m}}_E
\nonumber \\ &&
+
2\;\sum_{\beta=e,o}\;
<D_B \;L^A_{\alpha,l,m},L^{BC}_{\beta,l',m'}\;
L_{CA;h,\tilde{m},\tilde{m}}>_{L_2}\;
Y_{\beta,l',m'}^E\; x^{h,\tilde{l},\tilde{m}}_E
\nonumber \\ &&
+
2\;\sum_{\gamma=e,o}\;
<D_B \;L^A_{\alpha,l,m},L^{BC}_{h,l',m'}\;
L_{CA;\gamma,\tilde{m},\tilde{m}}>_{L_2}\;
y_{h,l',m'}^E\; X^{\gamma,\tilde{l},\tilde{m}}_E
\nonumber \\ &&
+
2\;<D_B \;L^A_{\alpha,l,m},L^{BC}_{h,l',m'}\;
L_{CA;h,\tilde{m},\tilde{m}}>_{L_2}\;
y_{h,l',m'}^E\; x^{h,\tilde{l},\tilde{m}}_E
\nonumber \\ &&
+
\sqrt{l(l+1)}\;
<L^A_{\alpha,l,m},L_{l',m'}\;L_{A;e,\tilde{m},\tilde{m}}>_{L_2}\;
Y_{l',m'}^{KG}\; X^{\tilde{l},\tilde{m}}_{KG}
\nonumber \\ &&
+
\sum_{\beta,\gamma=e,o}\;
\epsilon_{\beta\gamma}
<L^A_{\alpha,l,m},L_{A;\gamma,l',m'}\;L_{\tilde{m},\tilde{m}}>_{L_2}\;
Y_{\gamma,l',m'}^M\; \tilde{X}^{h,\tilde{l},\tilde{m}}_M
\nonumber \\ &&
-
\sum_{\beta=e,o}\;
\epsilon_{\beta\gamma}
<L^A_{\alpha,l,m},L_{l',m'}\;L_{A,\beta,\tilde{m},\tilde{m}}>_{L_2}\;
\tilde{Y}_{h,l',m'}^M\; (\tilde{X}^{\beta,\tilde{l},\tilde{m}}_M)'
]\;
\}
\nonumber\\
&=:& g^{\alpha,l,m} \; \{
Z_{\alpha,l,m(1)}((p_v^E,x^\alpha_E),(y_\alpha^E,q^v_E),
(y_h^E\delta_\alpha^e,q^h_E),(Y_\alpha^E,q^h_E),(P^{KG},\delta_\alpha^e
X_{KG}),(P_h^M,X^\alpha_M))
\nonumber\\
&& +
Z_{\alpha,l,m(2)}((y^E,x_E),(Y^E,X_E),(Y^E,x_E),
(y^E,X_E),(Y^{KG},X_{KG}),(Y^M,X_M))
\}
\nonumber
\ea
where we used the identities (\ref{2.8a}). The notation is that 
$C_{h(n)},\;Z_{\alpha,l,m(n)};\;\alpha\in \{h,e,o\}$ are the collection 
of all terms of order $n=0,1,2$ in the 
perturbations $x,y,X,Y$ and we displayed the pairs of variables 
on which the constraints depend (either a pair of two symmetric, two 
non-symmetric or mixed dgrees of freedom). We see that
$C_{h(1)}=Z_{\alpha,l,m(0)}=0;\;\alpha\in \{h,e,o\}$.
We also note that $C_{h(2)}=\sum_{l,m}\; C_{h,l,m(2)}$ and just like 
$Z_{\alpha,l,m(1)},\; \alpha=h,e,o$ 
the contribution $C_{h,l,m(2)}$ just depends on 
perturbation variables labelled by $l,m$. By contrast, the constraints
$Z_{\alpha,l,m(2)},\; \alpha=h,e,o$ are ``non-local'', i.e. depend 
not only on variables labelled by $l,m$ but in general on 
an infinite number of them because the triangle inequality 
$|\tilde{l}-l'|\le l\le \tilde{l}'+l$ admits an infinite 
number of solutions $\tilde{l},l'$ for any given $l$ e.g.
$\tilde{l}=l'+l,\;l'\in \mathbb{N}$.   
We display the ``colour'' label 
$v,h,e,o$ for the pairs of variables that occur for 
$C_{h(0)},\;C_{h,l,m(2)},Z_{\alpha,l,m(1)}$ but drop it 
in $Z_{\alpha,l,m(2)}$ although not all possible pairs occur in order
to make the notation not too heavy. 

Note that (\ref{3.7}) is the {\it exact} expression for the 
spatial diffeomorphism constraint, no terms have been dropped. 
We have just written it in terms of the split variables. The explicit 
computation of the coefficients will be carried out in our companion 
papers \cite{NT-SS,NT-RN, N-SA}.

\subsection{General perturbative structure of the Hamiltonian 
constraint}
\label{s3.3}

As emphasised in section \ref{s2.3} it is of considerable computational 
advantage to decompose the polynomial constraint $[\det(m)]^{5/2}\;V_0$ 
displayed in (\ref{2.9}) with respect to the 
perturbations $x,y,X,Y$ which for vanishing potential $U$ is just 
a polynomial of order ten. This is because it remains a 
polynomial rather than an infinite series as long as the 
potential is a polynomial in $\phi$. 
Still, working out all orders
explicitly is a tedious task both algebraically and because one 
needs to perform iterated Clebsch-Gordan decompositionsm, i.e. we 
need the general coefficient ${\sf Tr}(<\prod_{k=0}^N L_k>)$ where $N$ is the 
top polynomial degree that occur, each $L_k$ is a spherical 
harmonic (scalar, vector, tensor) and the trace and expectation value 
indicate contraction of all spherical tensor indices and integration 
on the sphere respectively. We thus 
just display the terms of order zero, one and two. In particular
we need 
\be \label{3.8}
f^v\; C_v= f^v \;<1,[\det(m)]^{5/2}\; V_0 \; \omega^{-6}>_{L_2},\;\;
g^{v,l,m}\; Z_{v,l,m} = 
g^{v,l,m}\;<L_{l,m},[\det(m)]^{5/2}\; V_0 \; \omega^{-6}>_{L_2}
\ee
to those orders and it is clear from section \ref{s2} that 
$C_{v(1)}=Z_{v,l,m(0)}=0$. 

All terms in $[\det(m)]^{5/2} \; V_0$ except for the curvature term
contain two or three factors of 
\ba \label{3.9a}
M &:=& \det(m)
= \frac{1}{3!}\;\epsilon^{\mu\nu\rho}\epsilon^{\mu'\nu'\rho'}\;
m_{\mu\mu'}\; m_{\nu\nu'}\;  m_{\rho\rho'}\;  
\\
&=& \frac{1}{2}\epsilon^{AC}\epsilon^{BD}\;
[m_{33}\; m_{AB}\;  m_{CD}\;  
-2\;m_{3A}\; m_{3B}\;  m_{CD}]  
=\omega^2\frac{1}{2}\eta^{AC}\eta^{BD}\;
[m_{33}\; m_{AB}\;  m_{CD}\;  
-2\;m_{3A}\; m_{3B}\;  m_{CD}]  
\nonumber
\ea
We also need 
\ba \label{3.9}
M^{\mu\mu'} &:=& \det(m) \; m^{\mu\mu'}
=\frac{1}{2} \;
\epsilon^{\mu\nu\rho}\epsilon^{\mu'\nu'\rho'}\;
m_{\nu\nu'}\;  m_{\rho\rho'}\;  
\nonumber\\
M^{33} &=& 
\frac{1}{2}\epsilon^{AC}\epsilon^{BD}\;
m_{AB}\;  m_{CD}
=\omega^2\frac{1}{2}\eta^{AC}\eta^{BD}\;
m_{AB}\;  m_{CD}
\nonumber\\
M^{3A} 
&=& - \epsilon^{AC}\;\epsilon^{BD}\;m_{3B}\;  m_{CD}  
= - \omega^2\;\eta^{AC}\;\eta^{BD}\;m_{3B}\;  m_{CD}  
\nonumber\\
M^{AB} 
&=&
\epsilon^{AC}\epsilon^{BD}\;[m_{33}\; m_{CD}-m_{3C}\; m_{3D}]
=\omega^2\eta^{AC}\eta^{BD}\;[m_{33}\; m_{CD}-m_{3C}\; m_{3D}]
\ea

Next we compute $[\det(m)]^3$ times the Ricci scalar 
\be \label{3.10}
R=2\;m^{\mu\rho}\;
[-\partial_{[\mu}\;\Gamma^\nu_{\nu]\rho}+\Gamma^\lambda_{\rho[\mu} \;
\Gamma^\nu_{\nu]\lambda}]
\ee
which after some algebra yields
\ba \label{3.11}
U^E &:=& M^3\; R
\\
&=&
M^{\mu\rho}\; m_{\lambda\mu,\rho}\;
[M\; M^{\nu\lambda}_{,\nu}-\frac{1}{2}\;M_{,\nu}\; M^{\nu\lambda}]
-\frac{1}{2}\; M\; M^{\mu\nu}\; M_{,\mu\nu}
+\frac{3}{4}\; M^{\mu\nu}\; M_{,\mu}\; M_{,\nu}
-\frac{1}{2}\; M\; M^{\mu\nu}_{,\nu}\; M_{,\mu}
\nonumber\\
&&
-M^{\mu\rho}\; M^{\nu\lambda}\; M^{\sigma\tau}\;
\Gamma_{\nu\mu\sigma} \; \Gamma_{\tau\rho\lambda}
\nonumber
\ea
which is manifestly a homogeneous polynomial of order eight. We refrain from
computing its low order expression explicitly as they do not involve 
momenta and just denote them as 
\be \label{3.11a}
\omega^6\;[U_{(0)}^E(q^v_E,q^h_E)
+U_{(1)}^E(q^v_E,q^h_E,x_E,X_E)
+U_{(2)}^E(q^v_E,q^h_E,x_E,X_E)]
\ee
where all three terms are scalars of density weight zero with respect to 
$S^2$.

We discuss the expansion to order two of the various other terms separately
denoting by $[.]_{(n)}$ the homogeneous n-th order contribution of 
$[.]$ and dropping terms of order three or higher
\ba \label{3.12}
&& 
[M^2]\; [m_{\mu\rho}\; m_{\nu\sigma}-\frac{1}{2}\; m_{\mu\nu}
m_{\rho\sigma}]\;[W^{\mu\nu}\; W^{\rho\sigma}]
\nonumber\\
&=&
\sum_{r,s,n\ge 0,r+s+n\le 2}\;
[M^2]_{(r)}\; [m_{\mu\rho}\; m_{\nu\sigma}-\frac{1}{2}\; m_{\mu\nu}
m_{\rho\sigma}]_{(s)}\;[W^{\mu\nu}\; W^{\rho\sigma}]_{(n)}
\ea
We have 
\ba \label{3.13}
M_{(0)} &=& q^v_E\; (q^h_E)^2\; \omega^2
\\
M_{(1)} &=& [m_{33(1)}\;(q^h_E)^2+q^v_E\; q^h_E\;\Omega^{AB}\;m_{AB(1)}]\;  
\omega^2 
\nonumber\\
M_{(2)} &=& [m_{33(1)}\;q^h_E\;\Omega^{AB}\;m_{AB(1)}+
+\frac{1}{2}\;m_{33(0)}\; \eta^{AC}\;\eta^{BD}\;m_{AB(1)}\;  m_{CD(1)}\;  
-q^h_E\;\Omega^{AB}\; m_{3A(1)}\; m_{3B(1)}]\;\omega^2 
\nonumber
\ea
and thus 
\be \label{3.14}
[M^2]_{(0)}=[M_{(0)}]^2,\;
[M^2]_{(1)}=2\; [M_{(0)}]\; [M_{(1)}],\;
[M^2]_{(2)}=[M_{(1)}]^2+2\; [M_{(0)}]\; [M_{(2)}],\;
\ee
Likewise exactly
\ba \label{3.14a}
M^{33} &=&
\omega^2\frac{1}{2}[2\;(q^h_E)^2+2\; (q^h_E)\;\Omega^{AB}\; m_{AB(1)}
+\eta^{AC}\eta^{BD}\;
m_{AB(1)}\;  m_{CD(1)}]
\nonumber\\
M^{3A} &=&- \omega^2\;
[q^h_E\;\Omega^{AB}\;m_{3B(1)}  
+\eta^{AC}\;\eta^{BD}\;m_{3B(1)}\;  m_{CD(1)}]
\nonumber\\
M^{AB} &=&
\omega^2\;\{
\Omega^{AB} \; (q^v_E)\;(q^h_E)
+[\Omega^{AB} \; m_{33(1)}\;(q^h_E)+
+\eta^{AC}\eta^{BD}\;(q^v_E)\; m_{CD(1)}]
\nonumber\\
&& +\eta^{AC}\eta^{BD}\;[m_{33(1)}\; m_{CD(1)}-m_{3C(1)}\; m_{3D(1)}]
\}
\ea
Next we have the exact result 
\ba \label{3.15}
W &:=& m_{\mu\nu}\; W^{\mu\nu}=
m_{33} W^{33}+2\; m_{3A}\; W^{3A}+m_{AB}\; W^{AB}
\nonumber\\
&=& [q^v_E \; p^v_E+q^h_E\; p_h^E]\;\omega
+[m_{33(1)}\; p^v_E\;\omega+q^v_E\; W^{33}_{(1)}+
+q^h_E\; \Omega_{AB}\; W^{AB}_{(1)}+
+\frac{1}{2}\;m_{AB(1)}\; \Omega^{AB}\; p_h^E\;\omega]
\nonumber\\
&&
+[m_{33(1)}\; W^{33}_{(1)}+2\; m_{3A(1)}\; W^{3A}_{(1)}+m_{AB(1)}\; 
W^{AB}_{(1)}]
=: W_{(0)}+W_{(1)}+W_{(2)} 
\ea
and thus up to second order 
\be \label{3.16}
W^2=[W_{(0)}]^2+[2\;W_{(0)}\;W_{(1)}]+[W_{(1)}^2+2\;W_{(0)}\;W_{(2)}]
\ee
Next we have the exact expression
\ba \label{3.17}
T^E &:=& m_{\mu\rho}\;m_{\nu\sigma}\;
W^{\mu\nu}\;W^{\rho\sigma}\;
\nonumber\\
&=&
[m_{33}\; W^{33}]^2
+
4\;[m_{33}\; W^{33}]\;[m_{3A}\; W^{3A}]
+
2\;[m_{3A}\; W^{3A}]^2
\nonumber\\
&&+
2 \; m_{33}\;[m_{AB}\; W^{3A}\; W^{3B}]
+
2 \; W^{33}\;[W^{AB}\; m_{3A}\; m_{3B}]
\nonumber\\
&&+
4\; m_{3A}\; P^{AB}\;m_{BC}\; W^{3C}
+
m_{AC}\; m_{BD}\; W^{AB}\; W^{CD}
\ea
It follows to second order
\ba \label{3.18}
T^E_{(0)}
&=& \omega^2([q^v_E\; p_v^E]^2+\frac{1}{2} [q^h_E\; p_h^E]^2)
\\
T^E_{(1)} &=&
\omega\;[2\;(q^v_E\; p_v^E)\;(q^v_e\; W^{33}_{(1)}+m_{33(1)}\;p_v^E)
+
2\;\omega\;\Omega^{AB}\; m_{AB(1)}\;(p_h^E)^2
+
2\;\Omega_{AB}\; W^{AB}_{(1)}\;(q^h_EE)^2
]
\nonumber\\
T^E_{(2)} &=&
2\;\omega (q^v_E\; p_v^E)\; m_{33(1)}\; W^{33}_{(1)}+
(q^v_e\; W^{33}_{(1)}+\omega\;m_{33(1)}\;p_v^E)^2
\nonumber\\
&& +
4\;\omega\;[q^v_E\; p_v^E]\;[m_{3A(1)}\; W^{3A}_{(1)}]
+
2 \; [q^v_E\; q^h_E]\;\Omega_{AB}\; W^{3A}_{(1)}\; W^{3B}_{(1)}
+
2 \;\omega^2\; [p_v^E\;p_h^E]\; \Omega^{AB}\; m_{3A(1)}\; m_{3B(1)}
\nonumber\\
&& +
4\;\omega\; [q^h_E\; p_h^E]\; m_{3A(1)}\; W^{3C}_{(1)}
\nonumber\\
&& +
\omega^2\; (p_h^E)^2\;
m_{AC(1)}\; m_{BD(1)}\; \Omega^{AB}\; \Omega^{CD}
+(q^h_E)^2\;
\Omega_{AC}\; \Omega_{BD}\; W^{AB}_{(1)}\; W^{CD}_{(1)}
+
4\omega\; q^h_E\; p_h^E\;
m_{AB(1)}\; W^{AB}_{(1)}
\nonumber
\ea
As far as the matter contributions are concerned we note that these 
are up to second order (we consider for concreteness only the uncharged case,
the charged case can be treated analogously)
\ba \label{3.19}
&& 2\; M^2\; T^{KG}
=M^2\; \pi^2
\nonumber\\
&=& [\omega^2\; [M^2]_{(0)}\; (P^{KG})^2]
+2\;\omega\; [M^2]_{(0)}\; P^{KG}\; \pi_{(1)}
+\omega^2\; [M^2]_{(1)}\; (P^{KG})^2]
\nonumber\\
&&
+[+\omega^2\; [M^2]_{(2)}\; (P^{KG})^2+
+\omega\; [M^2]_{(1)}\; P^{KG}\;\pi_{(1)}
+[M^2]_{(0)}\; (\pi_{(1)})^2]
\nonumber\\
&& 2 \;U^{KG} := M^{\mu\nu}\;\phi_{,\mu}\;\phi_{,\nu}
=M^{33}\;(\phi')^2+2\;M^{3A}\; \phi'\; (D_A\phi)+
M^{AB}\; (D_A\phi)\;(D_B\phi)
\nonumber\\
&=&
[M^{33}_{(0)}\;(Q_{KG}')^2]+
[M^{33}_{(1)}\;(Q_{KG}')^2+2M^{33}_{(0)}\; (Q_{KG}')\;(\phi_{(1)}')]
\nonumber\\
&&+
[M_{33(2)}\;(Q_{KG}')^2
+M_{33(0)}\;(\phi_{(1)}')^2+
+2\;M_{33(1)}\;(Q_{KG}')\;(\phi_{(1)}')
+2\;M^{3A}_{(1)}\; (Q_{KG}'\; (D_A\phi_{(1)})
\nonumber\\
&& +
M^{AB}_{(0)}\; (D_A\phi_{(1)})\;(D_B\phi_{(1)})]
\nonumber\\
&& M^2\; U^{KG} = \sum_{r+s\le 2}\; [M^2]_{(r)}\; A_{(s)}
\nonumber\\
&& V(\phi) = V(Q_{KG})+V'(Q_{KG})\; \phi_{(1)}+\frac{1}{2}
V^{\prime\prime}(Q_{KG})\;(\phi_{(1)})^2
\nonumber\\
&& 4\; T^M = m_{\mu\nu}\; E^\mu\; E^\nu
\nonumber\\
&=& 
m_{33}\; (E^3)^2+2\; m_{3A}\; E^3\;E^A+m_{AB}\; E^A\; E^B
\nonumber\\
&=& 
[q^v_E\; (P_h^M)^2]+[m_{33(1)}\;(P_h^M)^2+2\;(q^v_E)\; (P^h_M)\;(E^3_{(1)})]
\nonumber\\
&&+[m_{33(0)} \; (E^3_{(1)})^2+2\; m_{33(1)}\; (P_h^M)\;E^3_{(1)}
+2 m_{3A(1)} \; (P_h^M)\; E^A_{(1)}+(q^h_E)\;\Omega_{AB}\; E^A_{(0)}\;
E^B_{(1)}]
\nonumber\\
&& 4\; U^M = m_{\mu\nu}\; B^\mu\; B^\nu
\nonumber\\
&=& q^v_E\; (B^3_{(1)})^2+q^h_E\;\Omega_{AB}\; B^A_{(1)}\; B^B_{(1)}
\ea
~\\
It remains to compute zeroth, first and second order of 
$\tilde{V}_0:=[\det{m}]^{5/2}\;
V_0$ where for the purposes of this paper it will be sufficient 
to consider 
\be \label{3.20}
C_v:=<1,\tilde{V}_0/\omega^6>_{L_2}=C_{v(0)}+C_{v(2)}+...,\;\;
Z_{v,l,m}:=<L_{l,m},\tilde{v}_0/\omega^6>_{L_2}=Z_{v,l,m(1)}+...
\ee
because we wish to compute the reduced Hamiltonian only up to second 
order for which the solution of $C_a$ to second order and of $Z_j$ 
to first order is required. Accordingly
\ba \label{3.21}
C_{v(0)} &=& C_{v(0)}^E+C_{v(0)}^{KG}+C_{v(0)}^M
\nonumber\\
C_{v(2)} &=& C_{v(2)}^E+C_{v(2)}^{KG}+C_{v(2)}^M
\nonumber\\
Z_{v,l,m(1)} &=& Z_{v,l,m(1)}^E+C_{v,l,m(1)}^{KG}+C_{v,l,m(1)}^M
\ea
and we find to zeroth order 
\ba \label{3.22}
C_{v(0)}^E \; \omega^6 &=&
[M^2]_{(0)}\;[T^E_{(0)}-\frac{1}{2}\; [W^2]_{(0)}]-U^E_{(0)}
\nonumber\\
&=& \omega^6\;[q_v^E\; (q^h_E)^2]^2\;
\{[q^v_E\; p_v^E]^2+\frac{1}{2} [q^h_E\; p_h^E]^2)
-\frac{1}{2}
[q^v_E \; p^v_E+q^h_E\; p_h^E]^2\}
-\omega^6\; U^E_{(0)}
\nonumber\\
2\;C_{v(0)}^{KG} \; \omega^6 &=&
[M^2]_{(0)}\; [\pi^2]_{(0)}+
[M^2]_{(0)}\; U^{KG}_{(0)}
+2\;[M^3]_{(0)}\; V_{(0)}
\nonumber\\
&=&
\omega^6\; [q_v^E\; (q^h_E)^2]^2\; \{(P^{KG})^2+
(q^h_E)^2\;(Q_{KG}')^2+2[q_v^E \; (q^h_E)^2]\; V(Q_{KG})\}
\nonumber\\
4 C_{v(0)}^M \; \omega^6 &=& 
[M^2]_{(0)}\; T^M_{(0)}
=\omega^6\; [q_v^E\; (q^h_E)^2]^2\;q^v_E\;(P_h^M)^2
\ea
The higher orders will be worked out in our companion papers.

\section{Perturbative construction of the reduced Hamiltonian}
\label{s5}

We first provide the general strategy and then display the details 
for the zeroth, first and second order.

\subsection{Overview}
\label{s5.1}

We follow the general procedure of \cite{pa129}: Adapted to the 
present situation, it consists the following steps:\\
\\
1.\\ 
We denote by $(.)_{(n)}$ the homogenous   
n-th order contribution of $(.)$ with respect to an 
expansion into the perturbations (which are considered of first order):\\  
a. $x_E^{\alpha,l,m},\; y^E_{\alpha,l,m};\; \alpha=v,h,e,o;\; l\ge 1$ \\
b. $X_E^{\alpha,l,m},\; Y^E_{\alpha,l,m};\; \alpha=e,o;\; l\ge 2$\\
c. $X_{KG}^{l,m},\; Y^{KG}_{l,m};\;  l\ge 1$\\
d. $X_M^{\alpha,l,m},\; Y^M_{\alpha,l,m};\; \alpha=e,o;\; l\ge 1$.\\
2.\\
Suppose that one solves the constraints $C_v,C_h,Z_{\alpha,l,m};
\alpha=v,h,e,o$ exactly
for $p^E_v,p^E_h,y^E_{\alpha,lm}$, then that solution 
$\hat{p}^E_v,\hat{p}^E_h,\hat{y}^E_{\alpha,lm}$ can itself be expanded 
into the contributions b.-d. above. 
We write those expansions as
$\hat{p}^E_\alpha=p^E_\alpha(0)+p^E_\alpha(2)+p^E_\alpha(3)+..;\; \alpha=v,h$ 
and
$\hat{y}^E_{\alpha,l,m}=y^E_{\alpha,l,m}(1)+y^E_{\alpha,l,m}(2)+..;\;
\alpha=v,h,e,o$ 
respectively,
where $(.)_{(n)}$ means the homogeneous n-th order contribution of $(.)$ with 
respect to $X,Y$ in b.-d..\\
3.\\
Expand the constraints $C_v,C_h,Z_{\alpha,l,m}$ first with respect to 
all variables a.-d. for general $p^E_\alpha, y^E_{\alpha,l,m}$  
and then in addition with respect to the decomposition of the solution
$p^E_\alpha=\hat{p}^E_\alpha,\; y^E_{\alpha,l,m}=\hat{y}^E_{\alpha,l,m}$.
Denote the n-th order homogeneous contribution with respect to that combined 
expansion by $C_{\alpha(n)},\; Z_{\alpha,l,m(n)}$ where by construction
$C_{\alpha(1)}=Z_{\alpha,l,m(0)}=0$ due to spherical symmetry.\\
4.\\
Solve the symmetric, zeroth order order constraints 
$C_{v(0)}=0, C_{h(0)}=0$ exactly for $p^E_v(0),p^E_h(0)$.
The symmetric, first order order constraints 
$C_{v(1)}\equiv 0, C_{h(1)}\equiv 0$ are equivalent to the statement that 
$p^E_v(1)=p^E_h(1)=0$.\\
5.\\
Solve the unsymmetric first order 
constraints $Z_{\alpha,l,m(1)}=0,\; \alpha=v,h,e,o; l\ge 1,|m|\le l$ for 
$y^E_{\alpha,l,m}(1)$ at $p^E_v=p_v(0),\; p^E_h=p^E_h(0)$.\\
6.\\
Proceeding iteratively, by construction \cite{pa129}, for $n\ge 2$ 
the constraint contribution $C_{\alpha(n)}$ depends linearly on 
the $p^E_\beta(n)$ and polynomially 
on the $p^E_\beta(k),\; y^E_{\beta,l,m}(k)\; k\le n-1$
while 
the constraint contribution $Z_{\alpha,l,m(n)}$ depends linearly on 
the $y^E_{\beta,l',m'}(n)$ and polynomially on the $p^E_\beta(k),\; 
y^E_{\beta,l',m'}(k)\; k\le n-1$. Therefore one can successively 
solve $C_{\alpha(n)}$ for $p^E_\beta(n)$
and $Z_{\alpha,l,m(n)}$ for $y^E_{\beta,l',m'(n)}$.\\
7.\\
In this way, one perturbatively determines the Abelianised form of the 
constraints
\ba \label{5.1}
\hat{C}_\alpha &=& p^E_\alpha+h_\alpha,\;\;
h_\alpha=-\sum_{1\not=n=0}^\infty \; p^E_\alpha(n),\;\; \alpha=v,h
\nonumber\\
\hat{Z}_{\alpha,l,m} &=& y^E_{\alpha,l,m}+h_{\alpha,l,m},\;\;
h_{\alpha,l,m}=-\sum_{n=1}^\infty \; y^E_{\alpha,l,m}(n),\;\;\alpha=v,h,e,o
\ea
8.\\
For the reduced Hamiltonian we are supposed evaluate (\ref{5.1}) in the GPG
$q^v=q_{33}=1,\;q^h=\Omega^{AB} q_{AB}/2=r^2,q^A=q_{3A}=0$. Therefore we 
may solve (\ref{5.1}) already with GPG installed.\\
9.\\
The reduced Hamiltonian is then given for each asymptotic 
end by (\ref{4.a.47}) (we drop constant pre-factors)
\be \label{5.2}
H_{{\rm red}}
=\lim_{r\to\infty}\;\frac{1}{r}\;
[h_v(r)^2+\sum_{l>0,\;|m|\le l}\;
h_{v,l,m}(r)^2]
\ee
which follows from $P^3=p_v+\sum_{l,m}\; y_{v,l,m}\; L_{l,m}$.
Using the expansion of $h_v,h_{v,l,m}$ into the $p_v(n),\;y_{v,l,m}(n)$ 
one can compute $H_{{\rm red}}$ to any desired order of accuracy. The
decay condition on the $y^E_\alpha$ stated in section \ref{s4} in fact 
imply that the $h_{v,l,m}$ contributions in (\ref{5.2}) vanish as 
$r\to\infty$.

\subsection{Zeroth order}
\label{s5.2}

At zeroth order we just need to solve the zeroth order of the 
symmetric parts of the constraints for the zeroth orders 
$p^E_h(0),\;p^E_v(0)$ which are (we drop the label ``0'' for $p^E_h,p^E_v$
and evaluate at GPG) 
\ba \label{5.3}
C_{h(0)} &=& 
-2 (q_{33(0)} P^{33})'-P^{33} \; q_{33(0)}'+q_{AB(0)}'\; P^{AB}
+\pi_{0)}\;\phi_{(0)}'
\nonumber\\
&=& -2(p_v^E\; q^v_E)'-p_v^E\; (q^v_E)'+p^h)E \;(q^h_E)'+P^{KG}\;Q_{KG}'
\nonumber\\
&=& -2(p_v^E)'+2r\; p^h_E+P^{KG}\;Q_{KG}'
\nonumber\\
C_{v(0)} 
&=& \frac{1}{\sqrt{\det(q)_{(0)}}}[
(q_{33(0)}\; P^{33})^2+q_{AC(0)} q_{BD(0)} \;P^{AB}\;P^{CD}
-\frac{1}{2}(q_{33(0)} P^{33}+q_{AB(0)} P^{AB})^2]
\sqrt{\det(q)_{(0)}}\; R[q]_{(0)}
\nonumber\\
&& +\frac{1}{2}[\frac{\pi_{(0)}^2}{\sqrt{\det(q)}_{(0)}}+
\sqrt{\det(q)_{(0)}}\; (q^{33}_{(0)}\; (\phi_{(0)}')^2+2\;V(\phi_{(0)})]
+\frac{1}{2}\sqrt{\det(q)_{(0)}}\; q_{33(0)}\; (E^3_{(0)})^2
\nonumber\\
&=& \frac{1}{\sqrt{q^v_E(q^h_E)^2}}\;[
(q^v_E\; p^v_E)^2+\frac{1}{2}\;(q^h_E)^2 \;(p_h^E)^2
-\frac{1}{2}(q^v_E\; p_v^E+q^h_E\; p_h^E)^2]
\sqrt{\det(q)_{(0)}}\; R[q]_{(0)}
\nonumber\\
&& +\frac{1}{2}[\frac{(P^{KG})^2}{\sqrt{q^v_E\; (q^h_E)^2}}+
\sqrt{q^v_E\; (q^h_E)^2}\; (q^E_v\; (Q_{KG}')^2+2\;V(Q_{KG})]
+\frac{1}{2\sqrt{q^v_E\; (q^h_E)^2}}\; q^v_E\; (P^M)^2
\nonumber\\
&=& \frac{1}{r^2}\;
[(p^v_E)^2+\frac{1}{2}\;r^4 \;(p_h^E)^2
-\frac{1}{2}(p_v^E+r^2\; p_h^E)^2]
\nonumber\\
&& +\frac{1}{2}[\frac{(P^{KG})^2}{r^2}+
r^2\; (q^E_v\; (Q_{KG}')^2+2\;V(Q_{KG})]
+\frac{1}{2\;r^2}\; (P^M)^2
\nonumber\\
&=& 
\frac{1}{2\;r^2}[(p^v_E)^2-2\;r^2 \;p_h^E\; p^E_v]
\nonumber\\
&& +\frac{1}{2}[\frac{(P^{KG})^2}{r^2}+
r^2\; (Q_{KG}')^2+2\;V(Q_{KG})]
+\frac{1}{2\;r^2}\; (P^M)^2
\ea
We solve $C_{h(0)}$ for $p_h^E$ 
\be \label{5.4}
p_h^E=\frac{1}{r^2}\;[2\;(p_v^E)'-I_{KG}];\;I^{KG}_{(0)}:=P^{KG}\;Q_{KG}'
\ee
and insert this into $C_{v(0)}$
\ba \label{5.5}
C_{v(0)} &=&
\frac{1}{2\;r^2}[(p^v_E)^2-2\;r^2 \;p^E_v\; [2\;(p_v^E)'-I^{KG}_{(0)}]
)
\nonumber\\
&& +\frac{1}{2\; r^2}[(P^{KG})^2+
r^4\; (Q_{KG}')^2+2\;V(Q_{KG})]
+\frac{1}{2\;r^2}\; (P^M)^2
\nonumber\\
&=&
\frac{1}{2\;r^2}[\{ (p^v_E)^2-2\;r^2 \;[(p^E_v)^2]'\}+ 
p_v^E\;I^{KG}_{(0)}]
\nonumber\\
&& +\frac{1}{2\; r^2}\{(P^{KG})^2+
r^4\; (Q_{KG}')^2+2\;V(Q_{KG})+(P^M)^2\}
\nonumber\\
&=:&
-\frac{1}{2}\;[\frac{(p^E_v)^2}{r}]'
+ p_v^E\;I^{KG}_{(0)}+E^{KG}_{(0)} + E^M_{(0)}
\ea
where $I^{KG}_{(0)}$ is the symmetric part of the Klein Gordon momentum 
density and $E^{KG}_{(0)},\; E^M_{(0)}$ the symmetric part of the 
Klein Gordon and Maxwell energy density respectively. 

The equation $C_{v(0)}=0$ is solvable in closed form if there is no scalar 
``hair'' (exploiting that $P^M$ is a spatial constant)
\be \label{5.6}
\frac{(p_v^E)^2}{2r}=\hat{M}-\frac{(P^M)^2}{r}
\ee
where $\hat{M}$ is the mass of the black hole and $\sqrt{2}\;P^M$ its 
electric charge.
Indeed in GPG one cane easily check that the information about mass 
and charge resides in the extrinsic curvature part of the initial data.  

In the presence of scalar hair we solve (\ref{5.5}) by the Picard
Lindel\"of method. Let us introduce the abbreviations 
\be \label{5.7}
z:=\frac{p_v^E}{\sqrt{r}},\; a:=I^{KG}_{(0)}\;\sqrt{r},\;
b:=E^{KG}_{(0)} + E^M_{(0)}
\ee
to cast (\ref{5.5}) into the ODE
\be \label{5.8}
z'=a+\frac{b}{z}
\ee
or equivalently into the integral equation
\be \label{5.9}
z(r)=z_0+\int_{r_0}^r\; ds\; [a(s)+\frac{b(s)}{z(s)}]
\ee
(with $z(r_0)=z_0$ an integration constant)
which can be iterated. To solve that iteration we expand (essentially 
an inverse square root of core mass $\hat{M}$ expansion)  
\be \label{5.10}
z(r)=z_0+\sum_{N=0}^\infty\; C_N(r)\;z_0^{-N}
\ee
and compare coefficients. Introducing the abbreviations 
\be \label{5.11}
A(r):=\int_{r_0}^r\; ds \; a(s),\;
B(r):=\int_{r_0}^r\; ds \; b(s)
\ee
one finds by expanding the geometric sums  
\be \label{5.12}
z(r)=A(r)+\frac{B(r)}{z_0}
+\frac{1}{z_0}\;\sum_{M=1}^\infty\;(-1)^M\; 
\sum_{n_1,..,n_M=1}^\infty\;z_0^{-[n_1+..+n_M]}\; 
\int_{r_0}^r\; ds\;b(s)\; \prod_{k=1}^M\; C_{n_k-1}(s)
\ee
i.e. for $N\ge 2$
\ba \label{5.13}
&& C_0=A,\; C_1=B,\;
C_N=\sum_{M=1}^\infty\; (-1)^M\; \sum_{n_1,..,n_M=1}^\infty\;
\delta_{N-1,n_1+..+n_M}\; \zeta_{n_1,..,n_M}
\nonumber\\
\zeta_{n_1,..,n_M} &=&
\int_{r_0}^r\; ds\;b(s)\; \prod_{k=1}^M\; C_{n_k-1}(s)
\ea
where $\zeta_{n_1,..,n_M}$ is completely symmetric.

The hierarchy can be solved iteratively: For each $N$ we have $M\le N-1$ since 
$n_k\ge 1$, hence at most finitely many terms survive in (\ref{5.14}). 
At the same time even for $M=1$ we have $n_1=N-1$ and thus at most $C_{N-2}$ 
appears on the right hand side of (\ref{5.14}). The first few terms are 
\ba \label{5.14}
C_2 &=& -\; \zeta_1=-\int_{r_0}^r\;ds\; b(s)\; A(s)
\nonumber\\
C_3 &=& -\; \zeta_2+\zeta_{1,1}=\int_{r_0}^r\;ds\; b(s)\; [A(s)^2-B(s)]
\nonumber\\
C_4 &=& -\zeta_3+2\;\zeta_{1,2}-\zeta_{1,1,1}
=\int_{r_0}^r\; ds\; b(s)\;[-C_2(s)+2\; A(s)\; B(s)-A(s)^3]
\ea
We see that the coefficients $C_n$ are polynomials in $P^{KG}, Q_{KG}, P^M$
which appear in nested integrals with respect to the radial 
coordinate. Note that for a discharged black hole $z_0$ is simply
$\sqrt{\hat{M}}$ which is a constant of motion when the metric 
does not depend on the momentum conjugate to $\hat{M}$. This means 
that for large $\hat{M}$ the inverse core mass expansion remains a good 
approximation also during time evolution as one expects the perturbation 
contributions to the nested energy integrals to be much smaller than 
$\hat{M}$.

\subsection{First order}
\label{s5.3}

We now consider $p_v^E(0), p_h^E(0)$ to be explicitly known via 
(\ref{5.7}), (\ref{5.9}) and (\ref{5.12}) and insert these as well 
as $p_v^E(1):=0,\; p_h^E(1):=0$ into 
the first order expansions  $Z_{\alpha,l,m(1)}; \alpha=v,h,e,o$. 
Dropping the labels $_{(0)}$ and $_{(1)}$ for $p_\alpha^E(0), 
y^E_{\alpha,lm,}(1)$ respectively for simplicity we find in the 
GPG with $\alpha=e,o$
\ba \label{5.15}
Z_{h,l,m(1)} 
& = & -(2 (q^v_{E}\; y^E_{v,l,m})'+(q^v_{E})'\; y^E_{v,l,m})
+\sqrt{2}\; (q^h_{E})'\; y^E_{h,l,m}+\sqrt{l(l+1)} \; q^v_{E}\; 
y^E_{e,l,m}
+Z^R_{h,l,m(1)}
\nonumber\\
& = & - 2 (y^E_{v,l,m})'
+2\; r\; \sqrt{2}\; y^E_{h,l,m}+\sqrt{l(l+1)} \;y^E_{e,l,m}
+Z^R_{h,l,m(1)}
\nonumber\\
Z_{\alpha,l,m(1)} 
&=&
-(q^h_{E}\; y^E_{\alpha,l,m})'+\sqrt{2l(l+1)}\; q^h_{E}\; 
\delta_\alpha^e\;
y^E_{h,l,m}+Z^R_{\alpha,l,m(1)}
\nonumber\\
&=&
-(r^2\; y^E_{\alpha,l,m})'+\sqrt{2l(l+1)}\; r^2\; \delta_\alpha^e\;
y^E_{h,l,m}+Z^R_{\alpha,l,m(1)}
\nonumber\\
Z_{v,l,m(1)} &=& [q^v_{E} q^h_{E}]^2\;
\{2 (q^v_{E})^2\; p_v^E\; y^E_{v,l,m}
+\frac{1}{\sqrt{2}}
(q^h_{E})^2\; p_h^E\; y^E_{h,l,m}
\nonumber\\
&& +(q^v_{E}\; p_v^E\;+q^h_{E}\; p_h^E)\;
+(q^v_{E}\; y_{v,l,m}^E\;+\frac{1}{\sqrt{2}}\;q^h_{E}\; y_{h,l,m}^E)
\}
+Z^R_{v,l,m(1)}
\nonumber\\
&=&
r^8\;\{
\{2 p_v^E\; y^E_{v,l,m}
+\frac{1}{\sqrt{2}}
r^4\;\; p_h^E\; y^E_{h,l,m}
\nonumber\\
&& -(p_v^E+r^2 p_h^E)\;
+(y_{v,l,m}^E\;+\frac{r^2}{\sqrt{2}}\; y_{h,l,m}^E)
\}
+Z^R_{v,l,m(1)}
\ea
Here the remainder $Z^R_{\alpha,l,m(1)}$ depends 
on $q^v_{E}, q^h_{E}$ polynomially,  
on $p_{v}^E(0), p_{h}^E(0)$ quadratically,  
on $P^{KG}, Q_{KG}, P^M$ quadratically,  
on $x^{\alpha,l,m}_E;\; \alpha=v,h,e,o$ linearly (but is set to zero in GPG),
on $Y_{\alpha,l,m}^E,\;X^{\alpha,l,m}_E;\;\alpha=e,o;\; l\ge 2$ linearly  
and on $Y_{\alpha,l,m}^M,\;X^{\alpha,l,m}_m;\;\alpha=e,o;\; l\ge 1$ linearly.

The system (\ref{5.15}) does not contain derivatives of 
$y^E_{h,l,m}$ and $y^E_{o,l,m}$ decouples from the system. We can therefore 
directly integrate 
\be \label{5.16}
r^2\; y^E_{o,l,m}=Y^E_{o,l,m}+\int_0^r \; ds\;Z^R_{\alpha,l,m(1)}(s)
\ee
and solve $Z_{v,l,m(1)}=0$ algebraically for $y^E_{h,l,m}$ 
\be \label{5.17}
\frac{r^2}{\sqrt{2}} \; p_h^E\; 
y^E_{h,l,m}
=(p_v^E-r^2 \; p_h^E)\; y^E_{v,l,m}+\frac{Z^R_{v,l,m(1)}}{r^8}
\ee
When inserted into the equations 
$Z_{h,l,m(1)}=Z_{e,l,m(1)}=0$ we can cast the remaining system of 
ODE's into the form 
\be \label{5.18}
\left[
\begin{array}{c}
y_{v,l,m}^E \\
y_{e,l,m}^E 
\end{array}
\right]'
=
\left[
\begin{array}{cc}
a & b\\
c & d
\end{array}
\right]
\;\cdot\;
\left[
\begin{array}{c}
y_{v,l,m}^E \\
y_{e,l,m}^E 
\end{array}
\right]
-
\left[
\begin{array}{c}
Z^R_{v,l,m(1)} \\
Z^R_{e,l,m(1)} 
\end{array}
\right]
\ee
for certain known functions $a,b,c,d$ that one can find from 
(\ref{5.15}). The inhomogeneous linear system  
(\ref{5.18}) which we write as $z'=A\cdot z+B$ is easily integrated 
in terms of the holonomy of the matrix valued function $A$
\be \label{5.19}
{\rm Hol}(r)={\cal P}(e^{\int_0^r\; ds\; A(s)})
\ee
where the path ordering symbol $\cal P$ orders the radial dependence 
of polynomials of $A(r)$ with the highst radius to the left. Then 
\be \label{5.20}
z(r)={\rm Hol}(r)\;[\tilde{z}_0+\int_0^r\; ds\; {\rm Hol}^{-1}(s)\; B(s)]\;\;
\tilde{z}_0=
\left[
\begin{array}{c}
Y_{v,l,m}^E \\
Y_{e,l,m}^E 
\end{array}
\right]
\ee
Here $Y^E_{\alpha,l,m},\;\alpha=v,e,o$ are integration constants. 
One may be worried that solving (\ref{5.17}) introduces inverse powers of 
$p^h_E$ and thus $p^v_E$. However, these inverse powers can again be
expanded in terms of powers of the inverse core mass times polynomials 
in $Q_{KG}, P^{KG}, P^M$. We will show in our companion papers that 
these differential equations can be solved explicitly.

\subsection{Second order}
\label{s5.4}

We now consider $y_{\alpha,l,m}(1)$ to be explicitly known 
via (\ref{5.16}), (\ref{5.17}), (\ref{5.20}) and insert 
the expansion 
$p_\alpha^E=p_\alpha(0)+p_\alpha^E(2);\;\alpha=e,o$ and 
$y_{\alpha,l,m}^E=y^E_{\alpha,l,m}(1)+y^E_{\alpha,l,m}(2)$
into $C_{\alpha(2)}=0$ and $Z_{\alpha,l,m(2)}=0$. In 
fact since we just need to second order
$p_v^E=<1,P^{33}>=p_v^E(0)+p_v^E(2)$ and thus 
$[p_v^E]^2=[p_v^E(0)]^2+2\; p-v^E(0)\;p_v^E(2)$
it is sufficient to compute the linear order $y_{\alpha,l,m}(1)$ and 
insert it into 
$C_{v(2)}, C_{h(2)}$
which already allows to extract $p_v^E(2)$. 
We find in the GPG 
\ba \label{5.21}
C_{h(2)} 
&=&-2\; (p_v^E(2)\; q^v_E)'-p_v^E(2) (q^v_E)'+p_h^E(2)\; (q^h_E)'
+\tilde{C}_{h(2)}
\nonumber\\
&=&-2\;(p_v^E(2)\; q^v_E)'+2\;r\; p_h^E(2)\; 
+\tilde{C}_{h(2)}
\nonumber\\
C_{v(2)} 
&=& [q^v_E (q^h_E)^2]^2\;
[
(
q^v_E\; p_v^E(0))\;
(\frac{1}{2}\; q^v_E\; p_v^E(2)-q^h_E\; p^h_E(2))
+
q^v_E\; p_v^E(2))\;
(\frac{1}{2}\; q^v_E\; p_v^E(0)-q^h_E\; p^h_E(0))
] 
+\tilde{C}_{v(2)}
\nonumber\\
&=& r^8\; 
[p_v^E(2)(p_v^E(0)-r^2\; p_h^E(0))
-r^2\; p_h^E(2)\; p_v^E(0)]
+\tilde{C}_{v(2)}
\ea
Here $\tilde{C}_{v(2)}, \tilde{C}_{h(2)}$ depend quadratically 
on $y_{\alpha,l,m(1)}^E,\; x^{\alpha,l,m}_E;\; \alpha=v,h,e,o$,
on $Y_{\alpha,l,m}^E,\; X^{\alpha,l,m}_E;\; \alpha=e,o$,
on $Y^{KG},\; X_{KG}$, 
on $Y_{\alpha,l,m}^M,\; X^{\alpha,l,m}_M;\; \alpha=e,o$,
on $Q_{KG}, P^{KG}, P^M$ and 
polynomially on $q^\alpha_E,\; \alpha=v,h,e,o$ but we assume their gauge fixed 
values and set $x^{\alpha,l,m}_E=0$. Similar as for the zeroth order 
we solve the first equation 
in (\ref{5.21}) algebraically for $p_{h(2)}$ and insert into 
the second. The result is a single linear inhomogeneous ODE for 
$p_v^E(2)$ which can be solved by standard methods. One has to devide 
by $p_v^E(0)$ in an intermediate step which upon an inverse 
mass expansion can be written again in terms of just polynomial fields
to arbitary precision. The details are contained in our companion 
papers \cite{NT-SS, NT-RN, N-SA}.

\section{Perturbative structure of the irreducible mass}
\label{s6}

In this section we show that the same perturbative scheme that one 
applies to the reduced Hamiltonian can be employed in order to compute 
the irreducible mass pertubatively whose dynamics we consider as a measure 
for mass loss due to Hawking radiation as we argued in section \ref{s0}.
In the first subsection we recall some of the notions associated with 
horizons and the irreducible mass, see \cite{14a} and references therein 
for the rather extensive literature on the subject. 
In the second we compute the 
irreducible mass perturbatively. This is based partly on 
\cite{PertApparentHorizon} but here we do this directly in the 
Hamiltonian framework.

\subsection{Horizons, expansions and irreducible mass}
\label{s6.1}

We begin with some elementary definitions.\\
\\
Consider a globally hyperbolic spacetime $(M,g)$ and a Cauchy surface 
$\Sigma$ in it. Let $n$ be the future oriented timelike unit normal of 
$\Sigma$. Let $S\subset \Sigma$ be a closed, oriented 
2-surface in $\Sigma$ without
boundary $\partial S=\emptyset$ and $s$ be the spacelike unit normal 
of $S$ pointing outward from $S$ and tangential to $\Sigma$. Hence 
$g(n,s)=0$ at $S$. We note that if we are just given $\Sigma, S$, 
then $n,s$ are known only at $\Sigma,S$ respectively and thus 
the covariant derivatives of $n,s$ with respect to $\nabla$, the 
torsion-free covariant differential compatible with $g$, are 
only computable in directions tangential to $\Sigma,S$ respectively.
The tensor $q=g+n\otimes n$ on $\Sigma$ 
has the property $q(n,u)=0$ for every 
tangent vector $u$ of $\Sigma$ and the tensor 
$h=q-s\otimes s$ on $S$ 
has the property $h(n,v)=h(s,v)=0$ for every tangent vector $v$ of $S$.   
 
On $S$ we can define the future (from $\Sigma$) and outward respectively 
inward (from $S$) oriented null vectors $l_+=n+s,\; l_-=n-s$. We can now 
construct the affinely parametrised 
null geodesics starting from $S$ with initial tangent 
$l_\pm$ respectively. This defines two null geodesic congruences 
$C^\pm_S$ and thus 3-manifolds in $M$. Such a geodesic carries 
parameters $s, y^A,\; A=1,2$ where $s$ is the affine parameter and $y$ are 
coordinates on $S$. Thus $C^\pm_S$ is an embedded 3 manifold with 
local coordinates $(s,y)\mapsto c^\pm_y(s)$ where $c^\pm_y(s)$ is the geodesic 
with initial data $c^\pm_y(0)=Y(y),\; 
\dot{c}^\pm_y(0)=l_\pm(Y(y))$ and 
$Y:U\subset \mathbb{R}^2 \to S$ is an embedding of $S$.  
The tangential vectors to $C^\pm_S$ are 
$\partial^\pm_A=\frac{\partial c^\mu_y(s)}{\partial y^A}\; \partial_\mu
=:e_{A\pm}^\mu \partial_\mu$ and 
$\partial^\pm_s=\frac{\partial c^\mu_y(s)}{\partial y^A}\; \partial_\mu
=:l_\pm$. Note that by definition $\nabla_{l_\pm} l_{\pm}=0$ by 
definition of an affinely parametrised geodesic, i.e. $l_\pm$ 
at $C^\pm_S$ is just the parallel transport of the initial 
$l_\pm$ at $S$.
 
Since these vector fields are known on all
of $C^\pm_S$ we can take covariant derivatives of those in directions 
of $C^\pm_S$. Thus we have access to $\nabla_u v$ on $C^\pm_S$ 
where $u,v$ are in the span of $e_{A\pm},l_\pm$. We note that 
$\nabla_{l_\pm}\;g(l_\pm,l_\pm)=2g(l_\pm,\nabla_{l_\pm} l_\pm)=0$ 
i.e. the quantity $g(l_\pm,l_\pm)$ is constant along every geodesic 
and since it is zero initially it follows $g(l_\pm,l_\pm)=0$ on all
of $C^\pm_S$. Then
\be \label{6.1}
\nabla_{l_\pm}\; g(e_{A,\pm},l_\pm)=
g(\nabla_{l_\pm} e_{A\pm},l_\pm) 
=g(\nabla_{e_{A\pm}} l_\pm,l_\pm) 
=\frac{1}{2} \nabla_{e_{A\pm}}\;g(l_\pm,l_\pm)=0
\ee 
as $[\partial^\pm_A,\partial^\pm_s]=0$. Thus also 
$g(e_{A,\pm},l_{\pm})$ is constant along every geodesic and since 
it vanishes initially ($e_{A,\pm}$ is tangential to $S$ initially and 
$l_\pm$ is normal to $S$ initially) whe have $g(e_{A\pm},l_\pm)=0$ 
everywhere on $C^\pm_S$.

The vectors $e_{A\pm}$ are geodesic deviation vectors i.e. they carry
information about the deviation of nearby geodesics as we move 
infinitesimally in direction of $y^A$ within the congruence. The vector 
$\nabla_{l_\pm} e_{A\pm}=\nabla_{e_{A\pm}} l_\pm$ thus contains the 
information of how the deviation vectors expand, shear and rotate along
the ``fluid'' with fluid tangent $l_\pm$.
It has no components in direction of $l_\pm$ by (\ref{6.1}) 
hence the full information about the geodesic deviation is contained in the 
quantity
\be \label{6.2}
\kappa^\pm_{AB}:=g(e_{A\pm},\nabla_{e_{B\pm}} l_\pm)
\ee
We also define on $C^\pm_S$ the objects
\be \label{6.3}
h^\pm_{AB}:=g(e_{A\pm},e_{B,\pm}),\; h_\pm^{AC}\; h^\pm_{CB}=\delta^A_B
\ee
Then 
\be \label{6.4a}
\theta_\pm:=h_\pm^{AB} \;\kappa^\pm_{AB},\;
\sigma^\pm_{AB}:=\kappa^\pm_{(AB)},\;
\omega^\pm_{AB}:=\kappa^\pm_{[AB]},\;
\ee
are respectively called expansion, shear and rotation of the congruence
(rotation vanishes because $l_\pm$ is also normal to $C^\pm_S$ and 
$l_\pm$ is explicitly hypersurface orthogonal with $C^\pm_S$ as integral 
manifold so $\nabla_{[\mu} l_{\pm\nu]}=\alpha_{[\mu}\; l_{\pm\nu]}$ by Frobenius 
theorem). 
Using the defintion of the Riemann tensor one can compute 
$\nabla_{l_\pm} \theta_\pm$ which leads to Raychaudhuri's eqaution
\cite{12a}.\\ 
\\
We note that $\theta^\pm$ evaluated at $S$ only requires information 
available at $S$, i.e. we need not know anything about the actual 
geodesic congruence $C^\pm_S$ away from $S$. Nevertheless the above 
theory is useful as it equips us with a geometric interpretation 
of $\theta_\pm$ familiar from hydrodynamics: If $\theta^\pm>0/<0$ then
a volume element that flows with the fluid (here: a light ray) starting from
$S$ expands/contracts along the flow lines. 
In flat space always $\theta^+>0,\;\theta^-<0$ for a sphere $S$ (light leaves 
the sphere outwards/inwards). In a general spacetime one can have 
both $\theta_-,\; \theta_+\le 0$.  
\begin{Definition} \label{def6.1} ~\\
Consider a globally hyperbolic spacetime $(M,g)$ and a Cauchy surface 
$\Sigma$ in it.\\
i.\\
A closed, orientable 2-surface in $\Sigma$
$S\subset \Sigma$ without boundary $\partial_\Sigma S=\emptyset$ 
is called trapped if $\theta_+=0$.\\
ii.\\
A trapped region in $\Sigma$ is a closed subset $T\subset \Sigma$ such that 
$S:=\partial_\Sigma T$ is trapped.\\    
iii.\\
The trapped surface in $\Sigma$ defined by the total trapped 
region (closure of union of all trapped regions) 
is called the apparent horizon $A_\Sigma$ of $\Sigma$.
\end{Definition}
\begin{Definition} \label{def6.2} ~\\
Consider a globally hyperbolic spacetime $(M,g)$ and a foliation 
${\cal F}=\cup_{\tau\in \mathbb{R}}\; \Sigma_\tau$ of $M$ by Cauchy surfaces 
$\Sigma_\tau$.\\
i.\\
If $\tau\mapsto S_\tau\subset \Sigma_\tau$ is a one parameter family 
of trapped surfaces then ${\cal S}:=\cup_\tau S_\tau$ is 
called a trapping horizon.\\
ii.\\
Let $A_\tau:=A_{\Sigma_\tau}$ be the apparent horizon of 
$\Sigma_\tau$. Then ${\cal A}_{{\cal F}}:=\cup_\tau \; A_\tau$ 
is called the apparent horizon of $\cal F$. 
\end{Definition}
If $M$ is asymptotically flat then by definition 
it has a conformal completion $\hat{M}$ which in particular is equipped 
with future null infinity ${\cal S}_+$. The manifold $M$ is embedded into 
$\hat{M}$ via some $\;\varphi: M\to \hat{M}$ and 
$B:=\varphi^{-1}(\varphi(M)\cap[\hat{M}-J_-({\cal S}_+)])$ is called the 
black hole region. Its boundary ${\cal H}:=\partial_M B$ is called the event 
horizon of $M$. If $\Sigma$ is a Cauchy surface then 
$H_\Sigma:={\cal H}\cap \Sigma$  
is called the event horizon of $\Sigma$. Given a foliation
$\cal F$ with corresponding Cauchy leaves $\Sigma_\tau$, the classical 
black hole 
area theorem states that for all $\tau_1\le \tau_2$ we have 
${\sf Ar}[H_{\Sigma_{\tau_1}}]  
\le {\sf Ar}[H_{\Sigma_{\tau_2}}]$ when the Einstein equations and
suitable matter conditions (inequalities for the energy momentum tensor)
hold. One also shows that every trapped region lies in $B$ and therefore 
every trapped surface in any $\Sigma$, in particular the apparent
horizon in $\Sigma$, lies in $B$. This means that with respect to a foliation,
while the area of the event horizon can only grow within classical GR, 
the area of the apparent horizon can both shrink (e.g. radiation emission)
and grow (e.g. radiation absorption).  
\begin{Definition} \label{def6.3} ~\\
A (part of a) 
trapping horizon ${\cal S}$ is called dynamical horizon, trapped tube or 
isolated horizon respectively if $\cal S$ is a spacelike, timelike and 
null 3-manifold respectively.
\end{Definition}
The advantage of the various notions of trapping and apparent horizons
over the event horizon is that they are by construction local in nature
both spatially and temporally while the event horizon is a so called
``teleological'' construct requiring us to know the entire future development 
of a spacetime which is neither practical nor operational: after all 
an astronomer detects a black hole as the limited region of space from 
which no light can escape
and within her limited life time. Furthermore an astronomer will
measure the time development of that region with respect to a certain 
notion of time, i.e. a certain foliation. This makes the apparent horizon 
the ideal and physically motivated notion of a black hole. It is often 
objected that an apparent horizon is foliation dependent while 
the event horizon is an absolute notion, however, 
when viewed as a necessary part of the definition of an observer the foliation
dependence is actually physically well motivated. 
\begin{Definition} \label{def6.4} ~\\
Given a foliation $\cal F$ of a globally hyperbolic $(M,g)$ by Cauchy 
surfaces $\Sigma_\tau$ the irreducible mass at time $\tau$ is defined as 
\be \label{6.4}  
[M_{{\sf irr}}(\tau)]^2:={\sf Ar}[A_\tau]
\ee
i.e. the square root of the apparent horizon.
\end{Definition}
We slightly abuse here the terminology as the irreducible mass squared is
usually defined as the area of the event horizon rather than the apparent
horizon.\\
\\
We now have to provide a concrete formula for $\theta_+(\tau)$ and 
$M_{{\sf irr}}(\tau)$. Given a foliation $\cal F$ we introduce ADM 
coordinates $\tau,x^a;\; a=1,2,3$. We restrict attention to trapped 
surfaces of spherical topology 
and thus have embeddings $Y_\tau:\; S^2\to \sigma_\tau\subset\sigma,\; 
y\mapsto Y_\tau(y)$ 
and $E_\tau:\; S^2\to S_\tau\subset M=\mathbb{R}\times \sigma;\;
y\mapsto (\tau, Y_\tau(y))$. The future oriented timelike unit normal
to $\Sigma_\tau$ has components $n^\tau=\frac{1}{N},\; n^a=-\frac{N^a}{N}$
where the usual metric components are $g_{\tau\tau}=-N^2+q_{ab} N^a N^b,\;
g_{\tau a}=q_{ab} N^b, \; g_{ab}=q_{ab}$ with $a,b,c=1,2,3$. The vectors 
$T_A:=\frac{\partial E}{\partial y^A}$ are tangential to $S_\tau$, therefore
(in this section $\mu,\nu,..=0,1,2,3$ while $a,b,c,..=1,2,3$)
\be \label{6.5}
\tilde{s}_\mu:=-\frac{1}{2}
\epsilon_{\mu\nu\rho\sigma} n^\nu \; T_A ^\nu\; T^\nu_B\;
\epsilon^{AB}
\ee
is co-normal to and outgoing fom $S_\tau$ and normal to $n$. Therefore 
up to normalisation $\tilde{s}^\mu=g^{\mu\nu} \tilde{s}_\nu$ is the
spacelike unit normal to $S_\tau$. Explicitly with $\epsilon_{\tau abc}
=\epsilon_{abc}$ and $Y(\tau,y):=Y_\tau(y)$ and not displaying the 
$\tau$ dependence 
\be \label{6.6}
\tilde{s}_a=\frac{1}{N}\;\hat{s}_a,
\tilde{s}_\tau=\frac{1}{N} \hat{s}_\tau,
\hat{s}_a:= 
\frac{1}{2} \epsilon_{abc}\;\epsilon^{AB} Y^b_{,A} Y^c_{,B},\;
\hat{s}_\tau:=N^a \hat{s}_a
\ee
Thus 
\be \label{6.7}
\hat{s}^\tau=\frac{1}{N^2}[-\hat{s}_\tau+N^a \hat{s}_a]=0,\;
\hat{s}^a=\frac{1}{N^2}[N^a \hat{s}_\tau -N^a N^b \hat{s}_b]+q^{ab} \hat{s}_b
=q^{ab} \hat{s}_b
\ee
It follows $g(\hat{s},\hat{s})=q_{ab} \hat{s}^a \hat{s}^b=
q^{ab}\hat{s}_a \hat{s}_b$ thus 
\be \label{6.8}
s_a=\frac{\hat{s}_a}{\sqrt{q^{cd} \hat{s}_c \hat{s}_d}},\; 
s_\tau=N^a \hat{s}_a
\ee
is the properly normalised ouward oriented outgoing spacelike unit co-normal 
from $S_\tau$. It follows for the corresponding outgoing future oriented 
null normal $l:=l_+$
\be \label{6.9}
l^\tau=n^\tau=\frac{1}{N},\;
l^a=n^a+s^a=-\frac{N^a}{N}+q^{ab}\; s_b
\ee
Next, the pull-back metric on $S^2$ is given by (recall 
$h=g+n\otimes n-s\otimes s$)  
\be \label{6.10}
h_{AB}(\tau;y):=
[Y_\tau^\ast g]_{AB}(y)=
Y^\mu_{\tau,A}\; Y^\nu_{\tau,B} \;
g_{\mu\nu}(Y_\tau(y))
\ee
and thus 
\be \label{6.11}
{\sf Ar}[A_\tau]=\int_{S^2}\;d^2y\; \sqrt{\det(h(\tau;y))}
\ee
Note the identity 
\be \label{6.12}
h^{\mu\nu}=g^{\mu\nu}+n^\mu n^\nu-s^\mu s^\nu=h^{AB} Y^\mu_{,A} Y^\nu_{,B}
\ee
as one can quickly show by contacting with the co-basis 
$n_\mu, s_\mu, g_{\mu\nu} Y^\nu_{,A}$. Thus by definition
\ba \label{6.13}
\theta_\pm &=& h^{\mu\nu} \; \nabla_\mu l_{\pm \nu}
=
h^{AB}\; Y^\mu_{,A}\; Y^\nu_{,B} [\nabla_\mu l_{\pm \nu}]  
\nonumber\\
&=& 
h^{AB}\; Y^\mu_{,A}\; Y^\nu_{,B} [(\nabla_\mu n_\nu) \pm (\nabla_\mu s_\nu)]
\nonumber\\
&=& 
h^{AB}\;  [Y^\mu_{,A}\; Y^\nu_{,B} K_{\mu\nu} \pm 
Y^\mu_{,A}\; Y^\nu_{,B}\; (D_\mu s_\nu)]
\ea
where we have used that the spatial-spatial projection of $\nabla_\mu n_\nu$ 
is the extrinsic curvature $K_{\mu\nu}$ of $\Sigma_\tau$ and 
$D$ is the torsion free covariant derivative compatible with $q=g+n\otimes n$
acting on spatial tensors (i.e. whose contraction with $n$ of any index 
vanishes). Now 
$n_\tau=g_{\tau \mu} n^\mu=-N,\; n_a=g_{a\mu} n^\mu=0$ and thus 
$n_\mu Y^\mu_{,a}=-N Y^\tau_{,A}=0$ compatible with $Y^\tau=\tau$. 
It follows 
\be \label{6.13a}
\theta_\pm=
h^{AB}\;  [Y^a_{,A}\; Y^b_{,B} K_{ab} \pm 
Y^a_{,A}\; Y^b_{,B}\; (\nabla s)_{ab}]
\ee
By working out these expressions explicitly in ADM coordinates and using 
(\ref{6.7}), (\ref{6.8}) one 
finds
\be \label{6.14}
K_{ab}=\frac{1}{2N}\;[\partial_\tau q_{ab}-[{\cal L}_{\vec{N}} q]_{ab}],\;
[\nabla s]_{ab}=[D s]_{ab}=\partial_a s_b-\Gamma^c_{ab}(q) s_c
\ee
where $\Gamma(q)$ is constructed from $q_{ab},\; q^{ab},\; q^{ac} q_{cb}=
\delta^a_b$. Thus
\be \label{6.15}
\theta_\pm=
h^{AB}\;  [Y^a_{,A}\; Y^b_{,B} K_{ab} \mp 
Y^a_{,A}\; s_b \; D_a\; Y^b_{,B}]
\ee
where we exploited $Y^a_{,A} s_a=0$. Next one verifies 
that 
\be \label{6.16}
h^{AB} Y^a_{,A} Y^b_{,B} 
=q^{ab}-s^a s^b
\ee
by checking with the co-basis $s_a, q_{ab} Y^b_{,A}$. Therefore 
\be \label{6.17}
h^{AB} Y^a_{,A} Y^b_{,B} K_{ab} 
=[q^{ab}-s^a s^b]\; K_{ab}=-s_c\; s_d\; [q^{ac} q^{bd}-q^{ab} q^{cd}]\; K_{ab}
=-s_a s_b \frac{p^{ab}}{\sqrt{\det(q)}}
\ee
where we used normalisation $q^{ab} s_a s_b=1$ and the definition of 
the ADm momentum $p^{ab}$ conjugate to $q_{ab}$. Accordingly the final 
formula reads
\be \label{6.18}
\theta_\pm=-s_a s_b \frac{P^{ab}}{\sqrt{\det(q)}}
\mp Y^a_{,A}\; s_b \; D_a\; Y^b_{,B}
\ee
which expresses the expansion explicitly in terms of ADM data $(q,p)$ and
the embedding function $Y$ with $s_a=s_a(Y,q)$ via (\ref{6.7}), (\ref{6.8})
considered as also defined by these.

The time derivative of ${\sf Ar}[S_\tau]$ is given by 
\be \label{6.19}
\frac{d}{d\tau}\;{\sf Ar}[A_\tau]=\frac{1}{2}
\int_{S^2}\;d^2y\; \sqrt{\det(h(\tau;y))}\; h^{AB}\; 
[\frac{d}{d\tau} h_{AB}]
\ee
and with the vector fields 
$\xi(Y_\tau(y));=\frac{\partial Y(\tau,y)}{\partial \tau},\;
T_A(Y_\tau(y));=\frac{\partial Y(\tau,y)}{\partial y^A}$
tangential to the apparent horizon ${\cal A}_{{\cal F}}$
\be \label{6.20}
[\frac{d}{d\tau} h_{AB}]=
\xi^\mu\;\partial_\mu g(T_A,T_B)\circ Y
=2\xi^\mu\;g(T_{(A},\nabla_\mu T_{B)})
=2\;g(Y_{(A},\nabla_\xi T_{B)})
=2\;g(T_{(A},\nabla_{T_{B)}} \xi)
\ee
whence 
\be \label{6.21}
\frac{d}{d\tau}\;{\sf Ar}[A_\tau]=
\int_{S^2}\;d^2y\; \sqrt{\det(h(\tau;y))}\; \theta_\xi,\;
\theta_\xi=
h^{AB}\; 
Y^\mu_{,A} Y^\nu_{,B} [\nabla_\mu \xi_\nu]
\ee
This would vanish for $Y$ the embedding of the apparent horizon if 
we had $\xi\propto l_+$ but generically it is not because $\xi$ is 
generically not even null. 

The extreme cases are that $\xi\propto n, s, l_+$ which means that 
the apparent horizon is a timelike, spacelike or null surface 
(trapped tube, dynamical horizon, isolated horizon). 
Now $\theta_\pm=\theta_n\pm \theta_s$ and 
if $\theta_-<0$ as one usually assumes, from $\theta_+=0$ we get 
$\theta_s=-\theta_n>0$. Then either 
$\theta_\xi\propto \theta_n<0$,
$\theta_\xi\propto \theta_s>0$ or
$\theta_\xi\propto \theta_{l_+}=0$.

\subsection{Constructing the apparent horizon in GPG}
\label{s6.2}

To construct the apparent horizon in GPG we proceed as in 
\cite{PertApparentHorizon} and assume that it has sperical topology. 
Then the embedding function takes the explicit form 
\be \label{6.22}
Y^\tau(\tau,y)=\tau,\; Y^3(\tau,y)=\rho(\tau,y), Y^A(\tau,y)=y^A
\ee
The function $\rho$ is called the {\it radial profile}. Then 
\be \label{6.23}
\hat{s}_a=\frac{1}{2}\epsilon_{abc} \epsilon^{BC} Y^b_{,B} Y^c_{,C}
=\left\{ \begin{array}{cc}
-\rho_{,A} & a=A=1,2\\
1 & a=3
\end{array}
\right.
\ee
Hence $\hat{s}_3>0$ correctly implements outward orientation. The normalised 
components are 
\be \label{6.24}
s_a=f\;\hat{s}_a,\;
f=\frac{1}{\sqrt{q^{ab}\hat{s}_a \hat{s}_b}}=
\frac{1}{\sqrt{1+q^{AB} \rho_{,A} \rho_{,B}}},\;
\ee
where we used the GPG $q_{33}=q^{33}=1,\; q_{3A}=q^{3A}=0,\;
q_{AB}=r^2\Omega_{AB}+X_{AB},\; q^{AC} q_{CB}=\delta^A_B$.  
Thus $s_a$ is determined entirely by $\rho$. The profile function $\rho$ 
must then solve the {\it trapping equation}
\be \label{6.25}
-\theta_+=\frac{P^{ab}}{\sqrt{\det(q)}} s_a s_b+h^{AB} s_a\; 
D_{Y_{,A}} Y^a_{,B}=0
\ee
where 
\ba \label{6.26}
h_{AB} &=& q_{ab} Y^a_{,A} Y^b_{,B}=\rho_{,A}\;\rho_{,B}+q_{AB},\;
\det(h) h^{AB}=\epsilon^{AC} \epsilon^{BD} h_{CD},\;
=\rho^A \rho^B+\det(q) q^{AB},\; \rho^A=\epsilon^{AB}\; \rho_{,B}
\nonumber\\
\det(h) &=&
\frac{1}{2} \epsilon^{AC} \epsilon^{BD} h_{AB} h_{CD}
=\det(q) \; f^{-2}
\ea
and $P^{33}=P^3, P^{3A}=\frac{1}{2} P^A,\; P^{AB}=\frac{1}{2} P^0 
\Omega^{AB} +Y^{AB}$ where $P^3, P^A, P^0$ are themselves functions 
of $X_{AB}, Y^{AB}$ and the physical matter fields upon solving 
the constraints. Then the trapping condition reads
\be \label{6.28} 
-\theta_+=\frac{p^{ab}}{\sqrt{\det(q)}} s_a s_b+h^{AB} s_a\; 
[Y^a_{,AB}+\Gamma^a_{bc}(q) Y^b_{,A} Y^c_{,B}] =0
\ee
where $Y^a_{,AB}=\delta^a_3 \rho_{,AB}$. It is a non-linear second 
order PDE for the profile function on the sphere.\\
\\
To solve it exactly for general $X,Y$ is too complicated. However, since 
also $p^{ab}$ is only known perturbatively, it is well motivated to 
compute the profile function also only perturbatively. Thus
we consider 
\be \label{6.29}
\rho=\sum_{n=0}\; \rho_n
\ee
where $\rho_n$ is a monomial in $X,Y$ and matter perturbations of order 
$n$. Then one inserts (\ref{6.29}) into (\ref{6.28}) and expands 
all $p,q,s,Y_{,A}$ in terms of the perturbations $X_{AB}, Y^{AB}$ etc,
extracts the terms of order $n$ and aims for a hierarchy of equations 
that one can iteratively solve in closed form.     

We begin 
with the zeroth order and consider all $X,Y$ dependence vanishing. 
Thus ($R=2M,\;\kappa=1/2$)
\ba \label{6.30}
q_{ab} 
&=& \delta_a^3\delta_b^3+r^2\;\Omega_{AB}\; \delta^A_a \delta^B_b
\nonumber\\
\frac{P^{ab}}{\omega} &=& 2\sqrt{R\;r}
\delta^a_3\delta^b_3+\frac{1}{2}\sqrt{R/r^3}\;
\Omega^{AB}\; \delta_A^a \delta_B^b
\nonumber\\
s_a &=& f\; \hat{s}_a,\; \hat{s}_a=\delta^3_a-\rho_{,A} \delta^A_a,\;
f=[1+r^{-2} \Omega^{AB} \rho_{,A} \rho_{,B}]^{-1/2}
\nonumber\\
h_{AB} &=& \rho_{,A} \rho_{,B}+r^2\Omega_{AB},\;
\det(h)=r^4\;\omega^2\; f^2,\;
\det(h)\; h^{AB}=\rho^A\rho^B+r^2\omega^2\Omega^{AB},\; 
\rho^A=\epsilon^{AB}\;\rho_{,A}
\ea
The Christoffel symbols in the GPG at $X=0$  
were already computed in section \ref{s4}
\be \label{6.31} 
\Gamma^3_{33}=\Gamma^3_{3A}=\Gamma^A_{33}=0,\;
\Gamma^A_{B 3}=\frac{1}{r}\delta^A_B,\;
\Gamma^3_{AB}=-r\Omega_{AB},\;
\Gamma^A_{B C}=\Gamma^A_{B C}(\Omega)
\ee
Then
\ba \label{6.32}
&& 
h^{AB}\; s_c\; [Y^c_{,AB}+\Gamma^c_{ab} Y^a_{,A} Y^b_{,B}]
=f\; h^{AB}\; 
[\rho_{,AB}+\Gamma^3_{ab} Y^a_{,A} Y^b_{,B}
-\rho_{,C}\;\Gamma^C_{ab} Y^a_{,A} Y^b_{,B}]
\nonumber\\
&=& f\; h^{AB}\; 
[\rho_{,AB}+\Gamma^3_{CD} Y^C_{,A} Y^D_{,B}
-\rho_{,C}\;(2\Gamma^C_{3 D} Y^3_{,(A} Y^D_{,B)}
+\Gamma^C_{DE} Y^D_{,A} Y^E_{,B})]
\nonumber\\
&=& f\; h^{AB}\; 
[\rho_{,AB}+\Gamma^3_{AB} 
-\rho_{,C}\;(2\Gamma^C_{3 (A}\; \rho_{,B)} +\Gamma^C_{AB})]
\nonumber\\
&& P^{ab} s_a s_b=f^2\; [(Rr)^{1/2}+\frac{1}{4}\; \Omega^{AB} 
\rho_{,A} \; \rho_{,B}
\ea
We make the Ansatz $\rho=\rho_0=$const. then $f=1, h_{AB}=q_{AB}=
r^2\omega_{AB},\; s_a=\delta_a^3$ and 
\be \label{6.33}
\theta_+
=[\frac{P^{33}}{\sqrt{\det(q)}}+q^{AB} \Gamma^3_{AB}]_{r=\rho_0}
=\frac{2}{r^2}[\sqrt{Rr}-r]_{r=\rho_0}
=0
\ee
has the unique solution
\be \label{6.34}
\rho_0=R
\ee
as expected.\\
\\
We will assume inductively that $\rho_0,\; \rho_1,\; ..,\rho_{n-1},\; n\ge 1$
have been already computed. We write the unperturbed trapping equation 
in the form (recall $\det(q)=\omega^2 r^4+\det(X)$ and all quantities 
are evaluated at $r=\rho$)
\ba \label{6.35}
0 &=& 
[P^{33}-2 P^{3A} \rho_{,A}+P^{AB} \rho_{,A} \rho_{,B}]\;
[1+q^{AB}\;\rho_{,A}\;\rho_{,B}]^{1/2}\;[\rho^2\omega]\;
[1+\frac{\det(X)}{\rho^4 \omega^2}]^{1/2}  
\nonumber\\
&& 
+[\rho^A\rho^B+\det(q) q^{AB}]\;
[\rho_{,AB}+\Gamma^3_{AB}-\rho_{,C}(\Gamma^C_{AB}+2\Gamma^C_{3(A} \rho_{,B)})]
\ea
To capture the full n-th order dependence of this expression, all quantities 
that depend on $\rho$ need to be Taylor expanded up to n-th order in 
$\Delta=\rho-\rho_0$ around $\rho=\rho_0=R$, for example 
$P^{33}(\rho)=P^{33}(\rho_0)+
P^{33\prime}(\rho_0)\;\Delta+ 
\frac{1}{2}P^{33\prime\prime}(\rho_0)\;\Delta^2+...$ and 
the 2nd order contribution in $\Delta^2$ is given by $\rho_1^2+2\rho_0\rho_2$
etc.

Let us denote the 
four factors in the first term of (\ref{6.35}) by 
$A, B, C, D$ from left to right and the  
two factors in the second term of (\ref{6.35}) by 
$E,F$ from left to right. We isolate all terms of order $n$ by expanding each 
factor to order $n$. In the resulting sum of terms, which is now 
a monomial of order $n$, we want to isolate all terms that contain 
$\rho_n$. These are contained in the following contribution 
\be \label{6.36}
A_n B_0 C_0 D_0+    
A_0 B_n C_0 D_0+    
A_0 B_0 C_n D_0+    
A_0 B_0 C_0 D_n
+E_n F_0+E_0 F_n
\ee
where $A_n,A_0$ is the n-th and 0-th order contribution respectively of 
$A$ etc. We now consider the indivial terms $A_n,..,F_n$ and isolate 
the terms that contain $\rho_n$:
\be \label{6.37}
A=P^{33}-2 P^{3A} \rho_{,A}+P^{AB} \rho_{,A} \rho_{,B}
\ee
Since $\rho_{,A}=\rho_{1,A}+..$ is already of first order and $P^{rA}$
has no zeroth order perturbation, the only term 
in $A_n$ that contains $\rho_n$ is $P^{33\prime}_0(\rho_0)\; \rho_n$ and 
$A_0=P^{33}_0(\rho_0)$ where the subscript $0$ of $P^{33}_0$
means that we first expand
$P^{33}$ in terms of the pertubations $X,Y$ and then take the zeroth order 
term of that. The resulting function ist still to be expanded in terms 
of $\rho-\rho_0$ and we note the corresponding derivatives by a prime.   
\be \label{6.38} 
B=[1+q^{AB}\;\rho_{,A}\;\rho_{,B}]^{1/2}\;
\ee
For the same reason this term has no $\rho_n$ contribution in $B_n$
and $B_0=1$.
\be \label{6.39}
C=\rho^2\omega
\ee
The $\rho_n$ contribution to $C_n$ is $2\rho_0\rho_n\omega$ and 
$C_0=\rho_0^2 \omega$.
\be \label{6.40}
D=[1+\frac{\det(X)}{\rho^4 \omega^2}]^{1/2}  
\ee
There is no $\rho_n$ contribution to $D_n$ because $\det(X)$ is a second 
order perturbation and thus $D_0=1$.
\be \label{6.41}
E=\rho^A\rho^B+\epsilon^{AC}\epsilon^{BD}[\rho^2 \Omega_{CD}+X_{CD}(\rho)] 
\ee
As $X_{CD}$ is already of first order, the $\rho_n$ contribution to $E_n$
is $2\rho_0 \rho_n \omega^2 \Omega^{AB}$ and $E_0=\rho_0^2 \omega^2 
\Omega^{AB}$.
\be \label{6.42}
F=\rho_{,AB}+\Gamma^3_{AB}-\rho_{,C}(\Gamma^C_{AB}+2\Gamma^C_{3(A} \rho_{,B)})
\ee
The $\rho_n$ contribution to $F_n$ is $
\rho_{n,AB}+\Gamma^{3\prime}_{0;AB}(\rho_0)\; \rho_n
-\rho_{n,C}\Gamma^C_{0;AB}(\rho_0)$ and
$F_0=\Gamma^3_{AB}(\rho_0)$. 

It follows that the n-th order perturbation equation can be written in the 
form 
\ba \label{6.43}
G_n 
&=&
[P^{33\prime}_0(\rho_0)\rho_0^2 \omega + P^{33}_0(\rho_0) 2\rho_0\omega
+2\omega^2 \Omega^{AB} \rho_0 \Gamma^3_{0;AB}(\rho_0)+
\omega^2\rho_0^2 \Gamma^{3\prime}_{0;AB}(\rho_0)]\; \rho_n
\nonumber\\
&&
+\omega^2 \rho_0^2 \Omega^{AB}(\rho_{n,AB}-\Gamma^C_{0;AB}(\rho_0) \rho_{,C})
\nonumber\\
&=&
[P^{33}_0(r) r^2 \omega 
+\omega^2 \Omega^{AB} r^2 \Gamma^3_{0;AB}(r)]'_{r=\rho_0}\; \rho_n
\nonumber\\
&&
+\omega^2 \rho_0^2 \Omega^{AB}(\rho_{n,AB}-\Gamma^C_{0;AB}(\rho_0) \rho_{,C})
\nonumber\\
&=&
\omega^2 \; \rho_0^2 [\Omega^{AB} \; D_A\; D_B - 1]\rho_n
=\omega^2 \; \rho_0^2 [\Delta_{S_2}-1]\; \rho_n
\ea
where we used (\ref{6.33}) and that $\Gamma^C_{0,AB}=\Gamma^C_{AB}(\Omega)$
is the Christoffel symbol of the sphere metric independent of $r=\rho_0$
to write the last relation in terms of the sphere Laplacian. The
term $G_n$ is the complete n-th oder contribution to (\ref{6.35}) except for 
the terms that contain $\rho_n$. It thus contains the $\rho_m,\; m\le m-1$
and their derivatives polynomially which already have been solved for.
It remains to expand 
\be \label{6.44}
G_n=\sum_{l\ge |m|}\; G_n^{l,m} \; L_{l,m},\;
\rho_n=\sum_{l\ge |m|}\; \rho_n^{l,m} \; L_{l,m},\;\;
\Rightarrow\;\;
\rho_n^{l,m}=-\frac{G_n^{l,m}}{l(l+1)+1}
\ee
This proves that the radial profile $\rho$ of the apparent horizon can 
be solved for to arbitary order in the perturbations in closed form.\\

\subsection{Expansion of the irreducible  mass squared}
\label{s6.3}

Having computed the radial profile $\rho$ of the apparanet horizon we 
can compute the irreducible mass squared perturbatively as follows. The
non-perturbative expression is 
\ba \label{6.45}  
{\sf Ar} &=&
\int\; d^2y\; \sqrt{\det(h)}
=\int\; d^2y\; \sqrt{\det(q)}\;\sqrt{1+q^{AB}\; \rho_{,A}\; \rho_{,B}} 
\nonumber\\
&=& \int\; d\Omega\; \rho^2\; \sqrt{1+\frac{\det(X)}{\rho^4\omega^2}}\;
\sqrt{1+q^{AB}\; \rho_{,A}\; \rho_{,B}} 
\ea
This expression can be systematically expanded to any order in the 
perturbations. To second order
\ba \label{6.46}  
{\sf Ar}
&=&\int\; d\Omega\; [\rho_0^2+2\rho_0 \rho_1+\rho_1^2
+2\rho_0\rho_2]\;[1+\frac{1}{2}(\frac{\det(X)(\rho_0)}{\rho_0^4\omega^2}\;
+q^{AB}(\rho_0)\; \rho_{1,A}\; \rho_{1,B})] 
\nonumber\\
&=&\int\; d\Omega\; \{[\rho_0^2]+[\rho_1^2
+2\rho_0\rho_2+\frac{\rho_0^2}{2}(\frac{\det(X)(\rho_0)}{\rho_0^4\omega^2}\;
+q^{AB}(\rho_0)\; \rho_{1,A}\; \rho_{1,B})]\} 
\nonumber\\
&=:& {\sf Ar}_0+{\sf Ar}_2
\ea
where the linear term has dropped out because it contains no $l=0$ mode.
Thus the mass itself to second order is 
\be \label{6.47}
M_{{\sf irr}}={\sf Ar}^{1/2}_0\;[1+\frac{1}{2}\frac{{\sf Ar}_2}{{\sf Ar}_0}]
\ee
which to zeroth order just reproduces $M$ while to second oder is 
a functional quadratic in $X,Y$. In a Fock representation of $X,Y$ we expect 
quantum fluctuations of the irreducible mass, non-trivial dynamics and even
violations of positivity inequalities \cite{Fewster}.

\section{Quantum Fields in a BHWHT spacetime}
\label{s7}

We consider free quantum fields on the spherically symmetric 
BHWHT spacetime $(M,g)$ with mass parameter
$M$ or Schwarzschild radius $R=2M$. The line element is given by 
\be \label{7.1}
ds^2=-(1-\frac{R}{|z|})\; d\tau^2+2\sqrt{\frac{R}{|z|}}\;d\tau \; dz
+dz^2+z^2\; d\Omega^2
\ee
and $M=\mathbb{R}^4\cup \mathbb{R}^4$ with coordinates 
$\tau\in\mathbb{R},\; \Omega=(y^1,y^2)\in \mathbb{S}^2,\;
z\in \mathbb{R}$ and radial coordinates $r=z,\; z>0;\; \bar{r}=-z,\;
z<0$. Hence (\ref{7.1}) is the Schwarzschild solution in ingoing/outgoing
Gullstrand-Painlev\'e coordinates for $z>0/<0$. The singularity is 
at $r=\bar{r}=0$. However, causal geodesics can be continued across 
it and this spacetime is foliated by $\tau=$const.
spacelike hypersurfaces which are Cauchy surfaces and define the 
simultaneity proper time surfaces of free falling timelike observers 
that fall all the way from past timelike infinity in a past universe 
towards future timelike infinity in a future universe. Those Cauchy 
surfaces extend all the way from the spatial infinity of the past 
universe to spatial infinity of the future universe. Together 
both universes therefore define a globally hyperbolic spacetime 
if one allows singularity crossing. 
That spacetime is the common domain of dependence of all those 
hypersurfaces.  Global hyperbolicity is very important 
for constructing quantum field theories and sticking to only one universe the
free falling synchronous hypersurfaces form a foliation but none of its 
leaves is a Cauchy surface as they stop at the singularity. If needed, 
we can consider two regular spacetimes glued at the cylinder surface 
$r=\bar{r}=l\ll R$ and with the solid cylinder cut out as a regularisation 
step for what follows. More details are given in appendix \ref{sd}.\\
\\
The spacetime metric (\ref{7.1}) naturally appears in our perturbative 
scheme to compute the reduced Hamiltonian and the black hole apparent 
horizon at second order and enters the Regge-Wheeler and Zerilli equations.
It therefore motivates a natural class of Fock representations and 
therefore plays a fundamental role also for higher order contributions 
to the reduced Hamiltonian which we will treat by standard methods of 
perturbative QFT. In the first subsection, we give a brief introduction
to QFT in general CST. Then we specialise to CST equipped with a Killing 
vector field which is not necessarily everywhere timelike but 
such that the constant Killing time hypersurfaces are everywhere spacelike
so that the time dependence of the wave equation obeyed by the quantum 
field can be separated off. After that we specialise even further to CST 
with spherical symmetry so that even the angular dependence can be separated 
off and the wave equation reduces from a PDE to an ODE of second order.
In this case one can gain important information on the modes of the 
quantum field using the Wronskian identities and without explicitly 
solving the wave equation. Finally we discuss some of the details of 
the wave equation for the concrete CST given by (\ref{7.1}) and outline 
the applications that we have in mind with regard to particle production 
and Hawking radiation.
        
\subsection{Elements of QFT in CST}
\label{s7.1}

Consider a bosonic Quantum Field Theory (QFT) in 
globally hyperbolic Curved Spacetime
(CST) $(M,g)$. The classical, real valued,
free spacetime fields are subject to a linear wave equation 
of the form 
\be \label{7.2}
\Box\Phi=U\Phi,\; \Box=g^{\mu\nu} \nabla_\mu\;\nabla_\nu 
\ee
where $U$ is a real valued potential function (e.g. a 
position dependent mass term, it does not depend on 
$\Phi$). This equation 
is either an Euler-Lagrange equation derived from some 
Lagrangian or from the correponding Hamiltonian formulation. 

Let $V$ be the 
vector space of real valued solutions of (\ref{7.2}) that vanish
sufficiently fast at spatial infinity. Given a Cauchy 
surface $\Sigma$ in $M$ consider the anti-symmetric 
bilinear form on $V$ defined by
\be \label{7.3} 
B(u,v):=
\int_{\Sigma} \; d\Sigma_\mu\;\{
u\; [\nabla^\mu v]-
[\nabla^\mu u] \; v\}
\ee
where $d\Sigma_\mu=\frac{1}{3!}|\det(g)|^{1/2}\epsilon_{\mu\nu\rho\sigma} 
dX^\nu\wedge  dX^\rho\wedge  dX^\sigma$ is the volume element defined by
$g$. It is not difficult to see that the 3-form defined by the integrand 
of (\ref{7.3}) is closed which is why $B$ is independent of the choice 
of $\Sigma$.  

Next we consider the complexification $V_{\mathbb{C}}$ of linear 
combinations $w=u+iv,\; u,v\in V$ and consider the sesqui-liear form 
on $V_{\mathbb{C}}$ defined by 
\be \label{7.4}
<w,w'>:=-i \; B(\overline{w},w')
=-i\;
\int_{\Sigma} \; d\Sigma_\mu\;\{
\overline{w}\; [\nabla^\mu w']-
\overline{[\nabla^\mu w]} \; w'\}
\ee
Decomposing $w,w'$ into real and imaginary parts one sees that the 
sesqui-linear form is still independent of $\Sigma$, however,
it is not positive semi-definite and does not 
equip all of $V_{\mathbb{C}}$ with a Hilbert space structure. 

Suppose that we find a subspace $V_+\subset V_{\mathbb{C}}$ such that 
$<.,.>$ restricted to $V_+$ is positive semi-definite. Then automatically
$<.,>$ restricted to $\overline{V_+}$ is negative semi-definite because 
\be \label{7.5}
<\overline{w},\overline{w}'>=-i\;B(w,\overline{w}')=i\;B(\overline{w}',w)
=-<w',w>
\ee
If moreover $V_+,V_-:=\overline{V}_+$ are orthogonal with respect to 
$<.,.>$ and $V_+\oplus V_-=V_{\mathbb{C}}$ then 
$(V_{\mathbb{C}},<.,.>)$ carries a {\it Krein structure} i.e. an orthogonal 
decomposition 
\be \label{7.5a}
V_{\mathbb{C}}=V_+ \oplus V_- 
\ee
such that $<.,..>$ restricted to $V_+, V_-$ is respectively 
is postive semi-definite and negative semi-definite respectively
(we refrain from the usual separate treatment of zero norm vectors 
for convenience0. 
Then $(V_\pm,\; \pm\;<.,.>)$ is a pre Hilbert space whose completion 
(after moding by null vectors) is a 
Hilbert space ${\cal H}_\pm$. Equivalently, $(V_{\mathbb{C}},\;(.,.))$
is a pre Hilbert space with inner product
\be \label{7.6}
((w_+,w_-),(w'_+,w'_-))=<w_+,w'_+>-<w_-,w'_->  
\ee
with completion ${\cal H}={\cal H}_+ \oplus {\cal H}_-$ 
Then ${\cal H}_\pm$ are orthogonal subspaces of 
${\cal H}$ with corresponding self-adjoint projections $P_\pm$ i.e. 
$P_+ + P_-=1_{{\cal H}},\; P_\pm^2=P_\pm=P_\pm^\ast,\;P_+ P_-=0$ where 
the adjoint is with respect to $(.,.)$.

Consider the anti-self adjoint operator 
\be \label{7.7}
J:=-i(P_+ - P_-),\; J^\ast =-J,\; J^2=-1_{{\cal H}},\; 
P_\pm=\frac{1}{2}(1_{{\cal H}}\mp\; i\; J)
\ee
It preserves the real vector space $V$: By assumption we have 
$\overline{P_+ V_{\mathbb{C}}}=P_- V_{\mathbb{C}}$ whence 
$\overline{P_- V_{\mathbb{C}}}=P_+ V_{\mathbb{C}}$. Therefore  
for $w,w'\in V_{\mathbb{C}}$ we have using $<w,w'>
=-<\overline{w'},\overline{w}>$ 
\ba \label{7.8}
&& (w,\overline{P_+ w'})
=(P_- w,\overline{P_+ w'})
=-<P_- w,\overline{P_+ w'}>
=<P_+ w',\overline{P_- w}>
=(P_+ w',\overline{P_- w})
\nonumber\\
&=& (w', P_+ \overline{P_- w})
=<w', \overline{P_- w}>
=-<P_- w,\overline{w'}>
=(P_- w,\overline{w'})
=(w,P_-\overline{w'})
\ea
i.e. $\overline{P_\pm w}=P_\mp \overline{w}$. Thus for $u\in V$
\be \label{7.8a}
\overline{J u}=i(\overline{P_+ u}-\overline{P_- u})=-i(P_+-P_-)\;\bar{u}=
J\;u
\ee
Moreover $iJ P_\pm =\pm\;P_\pm$ i.e. ${\cal H}_\pm$ are eigenspaces of 
$iJ$ with 
eigenvalues $\pm 1$. Finally 
\be \label{7.9}
B(u, J\;v)=-i [B(u,P_+ v)-B(u,P_-u)]=
<u,P_+ v>-<u,P_- v> 
=(u,P_+ v)+(u,P_- v)=(u,v)
\ee
is positive semidefinite definite on $V$ and for $u,v\in V$   
\ba \label{7.9b}
&&B(J\;u, J\;v)
=(J\;u,P_+ v)+(J\;u,P_- v)
=-(u,J\; P_+ v)+(u,J\; P_- v)
\nonumber\\
&=& i(u,P_+ v)-i(u,P_- v)
=B(u,P_+ v)+B(u,P_- v)=B(u,v)
\ea
One calls $B$ a symplectic structure, $J: V\to V, J^2=-1_V$ a complex 
structure, a Kaehler structure if $B(J., J.)=B(.,.)$, a positive 
Kaehler structure if $B(.,J.)$ is positive semidefinite. One can 
reverse the argument and start from a complex structure on $V$ which is 
positive Kaehler with respect to $B$ and then arrives at a Krein structure 
on $V_\mathbb{C}$ such that $V_-=\overline{V_+}$ and 
$(w,w')=<w,(P_+-P_-)w'>=B(\overline{w},J w')$.

The classical, real valued field $\phi$ is an element of $V$ and thus 
\be \label{7.8b}
\phi=[P_+\phi]+[P_-\phi]=:A+A^\ast
\ee
is a decomposition into annihilation and creation parts. If $w\in V_+$
set 
\be \label{7.9c}
A(w):=<w,A>=-i B(\overline{w},\Phi)
\ee
Consider a foliation of $M$ with corresponding lapse and shift functions 
such that $\Sigma$ is one of its leaves i.e. a $\tau=$const. hyper surface. 
Then recalling that $g^{\tau\mu}=-\frac{1}{N^2}[\delta^\mu_\tau-N^a 
\delta^\mu_a]=-\frac{1}{N} n^\mu,\; \epsilon_{\tau abc}=-\epsilon_{abc}$
\ba \label{7.10}
A(w) &=&
=i\int_\sigma\; d^3x\; N\; \sqrt{\det(q)}\;g^{\tau\mu}[
\bar{w}\;\Phi_{,\mu}  
-\bar{w}_{,\mu}\;\Phi]  
\\
&=& -i\int_\sigma\; d^3x\;\sqrt{\det(q)}\; 
[\bar{w}\;\;[\nabla_n\Phi]  
-\overline{\nabla_n w};\Phi]  
=-i\int_\sigma\; d^3x\; 
[\bar{w}\;\pi 
-\sqrt{\det(q)}\overline{\nabla_n w}\;\phi]  
\ea
where $\pi=\sqrt{\det(q)}\;[\nabla_n;\Phi]_{|\Sigma}$ is the momentum 
conjugate to $\phi=\Phi_{|\Sigma}$. It follows with $Q=[\det(q)]^{1/2}$
\ba \label{7.11}
&& \{A(w),[A(w')]^\ast\}=
\int\;d^3x\int\;d^3y\;\{
[\bar{w}\pi-Q [\overline{\nabla_n w}] \phi](x),
[w'\pi-Q [\nabla_n w'] \phi](y)\}
\nonumber\\
&=& -\int\;d^3x\;Q\;
[\bar{w}\; [\nabla_n w']-\overline{\nabla_n w}\;w']
\nonumber\\
&=& =-i <w,w'>
\ea
so that the canonical commutation relations (CCR) are 
\be \label{7.11b}
[A(w),[A(w')]^\ast]=<w,w'>\; 1
\ee
confirming the roles of $A,A^\ast$ as annihilation and creation operator 
valued distributions in potential Fock representations.

\subsection{CST with spacelike Killing time hypersurfaces}
\label{s7.2}

The question is of course, given $B$, how to obtain either the Krein or 
complex structure 
with the additional properties mentioned, and how much freedom there 
is in choosing them. In the case that is of interest here, namely that there 
is a Killing vector field $\partial_\tau$ such that the $\tau=$const. 
surfaces are spacelike Cauchy surfaces, the following construction may be 
applied. Note that this is more general than the stationary case 
in which the Killing vector field is supposed to be everywhere 
timelike. In fact this does not hold for our $\partial_\tau$ 
where the GPG time $\tau$ defines the free falling foliation 
with synchronous $\tau=$const. surfaces. We consider the Hamiltonian 
\be \label{7.12}
H=\int_\sigma\; d^3x\; [\frac{N}{2}(\frac{\pi^2}{Q}+Q\;[q^{ab} 
\phi_{,a}\phi_{,b}+U\phi^2])+\pi N^a\phi_{,a}]
\ee
where $N,N^a,q_{ab}$ are not explicitly $\tau$ dependent. The Hamiltonian 
equations of motion
\be \label{7.13}
\dot{\phi}=\{H,\phi\}=N\; \frac{\pi}{Q}+N^a \phi_{,a},\;\;
\dot{\pi}=\{H,\pi\}=(N\; Q\; q^{ab}\;\phi_{,a})_{,b}-N Q U \phi+(N^a \pi)_{,a}
\ee
reproduce the Euler Lagrange equations (\ref{7.2}) since 
from the first relation in (\ref{7.13}) $\pi=Q\nabla_n \phi$ thus
\ba \label{7.14}
U\;\phi=\Box\phi &=&
|\det(g)|^{-1/2}\;[g^{\mu\nu}\;|\det(g)|^{1/2} \phi_{,\mu}]_{,\nu}  
\nonumber\\
&=&
[N\; Q]^{-1}\;\{
[g^{\mu\tau}\;N\;Q \phi_{,\mu}]_{,\tau}  
+[g^{\mu a}\;N\;Q \phi_{,\mu}]_{,a}]\}
\nonumber\\
&=&
[N\; Q]^{-1}\;\{
-[Q \;\nabla_n \phi]_{,\tau}  
+[Q N^a\;\nabla_n \phi]_{,a}]+[N Q q^{ab} \phi_{,a}]_{,b}\}
\nonumber\\
&=&
[N\; Q]^{-1}\;\{
-\dot{\pi}
+[\pi N^a]_{,a}]+[N Q q^{ab} \phi_{,a}]_{,b}\}
\ea  
~\\
The idea is now to construct a system of complex solutions of (\ref{7.2})
whose normalisable span defines the space $V_+$ of the Krein structure, i.e.
$V_+,<.,.>$ is a pre Hilbert space such that $V_-=\overline{V_+}$ and 
$V_+\perp V_-$. To do this we write out the d'Alembertian explicitly
\be \label{7.14b}
\Box W=\frac{1}{N\;Q} \;
\{-[Q \;\nabla_n W]_{,\tau}  
+[Q N^a\;\nabla_n W]_{,a}]+[N Q q^{ab} W_{,a}]_{,b}\};\;\;
\nabla_n=N^{-1}(\partial_\tau-N^a\partial_a)
\ee
which is still the general expression. Now we exploit that $N,N^a,q_{ab}$
do not depend on $\tau$ and thus can separate off the $\tau$ dependence 
in $w$
\be \label{7.15}
W_\omega(\tau,x):=e^{i\omega\tau}\; w_\omega(x),\; \omega\in \mathbb{C}
\ee
For $\omega\in \mathbb{R}$ the coprresponding solutions are called 
{\it modes}, for $\Im(\omega)\in \mathbb{R}_+-\{0\}$ ring down or  
{\it quasi-normal modes}. It follows
\be \label{7.16}
0=\frac{1}{NQ}\;[\omega^2\;\frac{Q}{N}\; w_\omega
+i\omega\; (\frac{Q}{N} N^a\partial_a+\partial_a \frac{Q}{N} N^a)w_\omega 
+\partial_a(N Q (q^{ab}-\frac{N^a N^b}{N^2})\partial_b w_\omega
-N\;Q\;U\; w_\omega]
\ee
This suggests to introduce the operators 
\be \label{7.17}
E:=\frac{Q}{N}\;1,\;
A:=\partial_a\; N^a\; E+E\; N^a\;\partial_a,\;
B:=-\partial_a(N Q (q^{ab}-\frac{N^a N^b}{N^2})\partial_b 
+N\;Q\;U\;1
\ee
where $E,B$ and $A$ are formally symmetric and anti-symmetric 
respectively on the auxiliary Hilbert space
$\mathfrak{H}=L_2(\sigma,d^3x)$. Formally each of them maps scalars into 
scalar densities of weight one on $\sigma$ because $Q$ carries density
weight one and $N,N^a$ are scalars and vectors respectively 
of density weight zero.
Accordingly 
\be \label{7.18}
[\omega^2\; E+i\omega\; A-B]\; w_\omega=0
\ee
We focus on real $\omega$. Taking the complex conjugate of (\ref{7.18}) we 
see that $w_\omega^\ast$ solves the same equation as a soultion 
$w'_{-\omega}$ would. Next 
we compute the $\mathfrak{H}$ inner product of (\ref{7.18}) with 
$w'_{\omega'}$  
\ba \label{7.19}
&& <w_\omega,\;B\;w'_{\omega'}>_{\mathfrak{H}}=
<w_\omega,[(\omega')^2\; E+i\omega'\;A]\; w'_{\omega'}>_{\mathfrak{H}}=
\nonumber\\
&=& <B\;w_\omega,\;w'_{\omega'}>_{\mathfrak{H}}=
<[(\omega)^2\; E+i(\omega)\;A] w_{\omega},\;w'_{\omega'}>_{\mathfrak{H}}
\ea
where we used the symmetry of $B$. It follows from the symmetry of 
$E,i\;A$ that
\be \label{7.20} 
(\omega-\omega')\;
[(\omega+\omega')\;<w_\omega,\;E\; w'_{\omega'}>_{\mathfrak{H}}
+i\;<w_\omega,\;A\; w'_{\omega'}>_{\mathfrak{H}}]=0
\ee
Now we compute on the other hand the inner product with respect to 
the sesqui-linear form (\ref{7.4})
\ba \label{7.21}
<W_\omega,W'_{\omega'}>
&=& -i\int_\sigma\;\; d^3x\; Q\; 
[\overline{W_\omega}\; (\nabla_n W'_{\omega'})-\overline{\nabla_n W_\omega}\;
W'_{\omega'} ]
\nonumber\\
&=&
-i\;e^{i(\omega'-\omega)\tau}\; 
\int_\sigma\;\; d^3x\; E\; 
[\overline{w_\omega}\; ((i\omega'-N^a \partial_a) w'_{\omega'})
-\overline{(i\omega-N^a \partial_a) w_\omega)}\; w'_{\omega'}
\nonumber\\
&=&
e^{i(\omega'-\omega)\tau}\; 
\int_\sigma\;\; d^3x\; \overline{w_\omega}\; 
[(\omega+\omega') E+i(E\; N^a\; \partial_a+\partial_a\; N^a\; E)]\;w'_{\omega'}
\nonumber\\
&=&
e^{i(\omega'-\omega)\tau}\; 
[(\omega+\omega')\;<w_\omega,\;E\; w'_{\omega'}>_{\mathfrak{H}}
+i\;<w_\omega,\;A\; w'_{\omega'}>_{\mathfrak{H}}]=0
\ea
which coincides up to the $\tau$ dependent phase with the square bracket 
in (\ref{7.20}). Thus we get from (\ref{7.20}) and (\ref{7.21}) that 
\be \label{7.22}
<W_{\omega},W'_{\omega'}>\propto\delta^{(1)}(\omega,\omega')
\ee
i.e. the solutions are orthonormal with respect to the 
inner product (\ref{7.4}) and the label $\omega$ in the sense of 
delta distributions. If the sign of the proportionality factor in 
(\ref{7.22}) is positive or negative respectively 
then the solution lies in $V_+$ or $V_-$ respectively. When $V_+$ 
coincides with the subspace of solutions corresponding to 
$\omega\in\mathbb{R}_+$ one calls $V_+$ the positive frequency subspace 
but in general $V_+$ can also contain negative frequency (e.g. for 
the potential barrier underlying the Klein paradox \cite{Klein}) 
and thus should better be called the 
positive inner product subspace.
In particular, if $W_\omega \in V_+$ then $W_\omega^\ast=W'_{-\omega},\;
w'_{-\omega}=w_\omega^\ast$, hence $V_+ \perp \overline{V_+}$ and since 
by the properties of the scalar product $<.,.>$ we have 
$<\bar{W},\bar{W}'>=-<W',W>$ it follows that $V_-=\overline{V_+}$.

\subsection{Further reduction of PDE to ODE in sperically symmetric CST}
\label{s7.3}

Accordingly 
we have found a possible Krein structure in the present situation 
once we know the space of solutions of (\ref{7.18}). In general
this is a complicated second order PDE. However, in the presence 
of further symmetries as is the case for our spherically symmetric 
background or an axisymmetric background (rotating black hole), 
a further separation of variables is possible which then 
turns (\ref{7.18}) into a second order ODE in just the radial 
variable $z$.
Namely in the case of spherical symmetry we have the general 
structure 
$q_{ab}=\gamma^2(z) \delta^3_a \delta^3_b+\rho(z)^2\Omega_{AB} 
\delta^A_a\delta^B_b,\; Q=\gamma\;\rho^2\sqrt{\det(\Omega)},\;
N=N(z), N^a=N^3(z)\;\delta^a_3$ 
\ba \label{7.23}
&& E=\sqrt{\det(\Omega)}\;e,\; e=\frac{\gamma\rho^2}{N},\
A=\sqrt{\det(\Omega)}\; a,\; 
a=(e\; N^3\; \partial_z+\partial_z\; N^3\; e),\;
\nonumber\\
&& B = \sqrt{\det(\Omega)}\;
[b+c\; \Delta+N\gamma\rho^2\; U],\;
b=-\partial_z\; N\;\gamma\rho^2[\gamma^{-2}-[\frac{N^3}{N}]^2]\;\partial_z,\;
c=-N\gamma
\ea
where $\Delta$ is the Laplacian on the sphere.
Then the separation Ansatz $w_\omega(z,y)=w_{\omega,l,m}(z)\; L_{l,m}(y)$ 
yields the ODE
\be \label{7.24}
[\omega^2\; e+i\omega\; a-b_l]\; w_{\omega,l,m}=0,\;
b_l=b-l(l+1) \; c+N\gamma\rho^2 U
\ee
and we have with $\mathfrak{h}=L_2(\mathbb{R},\;dz)$
\be \label{7.25}
<W_{\omega,l,m},\;W'_{\omega',l',m'}>
=e^{i(\omega'-\omega)\tau}\; \delta_{l,l'}\;\delta_{m,m'}\;
[(\omega+\omega')\;
<w_{\omega,l,m},\;e\;w'_{\omega',l,m}>_{\mathfrak{h}}
+i\;<w_{\omega,l,m},\;a\;w'_{\omega',l,m}>_{\mathfrak{h}}]
\ee
and 
\be \label{7.26}
(\omega-\omega')\;  
[(\omega+\omega')\;
<w_{\omega,l,m},\;e\;w'_{\omega',l,m}>_{\mathfrak{h}}
+i\;<w_{\omega,l,m},\;a\;w'_{\omega',l,m}>_{\mathfrak{h}}]
=0
\ee
One can find out more about the normalisation of those solutions without 
actually computing them explicitly. These follow from the {\it Wronskian
identities}. Let us abbreviate 
\ba \label{7.27}
&& g(z)=e\; N^3,\; f(z)=
N\;\gamma\rho^2[\gamma^{-2}-[\frac{N^3}{N}]^2],\;
u_l=-l(l+1) \; c+N\gamma\rho^2 U,\;
\nonumber\\
&&
[\omega^2\; e+i\omega\; (g\;\partial_z+\partial_z\; g)+
\partial_z \; f\; \partial_z-u_l]\;w_{\omega,l,m}=0 
\ea
In the previous section and in (\ref{7.25}) and (\ref{7.26}) 
we have implicitly assumed that $ia,\;a=
g \partial_z+\partial_z g,\;
b=\partial_z f\partial_z$ are symmetric operators on $\mathfrak{h}=
L_2(\mathbb{R},dz)$ i.e. that no boundary terms are picked up when 
integrating by parts. In what follows we will drop this assumption, 
i.e. no assumptions about boundary terms will be needed. The 
inner product between two solutions is defined by
(we drop the labels $l,m$ as the solutions are rigorously 
orthogonal for $l\not=\tilde{l}, m\not=\tilde{m}$) 
\be \label{7.27b}
<W_\omega,\tilde{W}_{\tilde{\omega}}>
=e^{i(\tilde{\omega}-\omega)\tau}\int\; dz\;\{
[\overline{w_\omega}\; 
[(e\;\tilde{\omega}+i\; g\partial_z)\;\tilde{w}_{\tilde{\omega}}]
+\overline{[(e\;\omega+i\; g\partial_z)\;w_\omega]}
\;\tilde{w}_{\tilde{\omega}}
\}
\ee
By construction, (\ref{7.27}) is time independent, therefore taking 
the time derivative of (\ref{7.27}) we find  
\be \label{7.27a}
0=(\tilde{\omega}-\omega)\;\int\; dz\;\{
[\overline{w_\omega}\; 
[(e\;\tilde{\omega}+i\; g\partial_z)\;\tilde{w}_{\tilde{\omega}}]
+\overline{[(e\;\omega+i\; g\partial_z)\;w_\omega]}
\;\tilde{w}_{\tilde{\omega}}
\}
\ee
which means that the integral in (\ref{7.27}) must be proportional 
to $\delta^{(1)}(\tilde{\omega},\omega)$.
On the other hand we have the {\it Green identity}, using that 
$e,f,g$ are real valued
\ba \label{7.28}  
&&\overline{w_\omega}\;[b\; \tilde{w}_{\tilde{\omega}}]
-\overline{b\; w_\omega}\;\tilde{w}_{\tilde{\omega}}
=\{
\overline{w_\omega}\;f\; \tilde{w}'_{\tilde{\omega}}
-\overline{w_\omega}'\;f\; \tilde{w}_{\tilde{\omega}}
\}'
\nonumber\\
&=&
-\overline{w_\omega}\;
[(\tilde{\omega}^2\; e\;+i\tilde{\omega}\;a-u_l)\; \tilde{w}_{\tilde{\omega}}]
+
\overline{[(\omega^2\; e\;+i\omega\;a-u_l)\; w_\omega]}
\;\tilde{w}_{\tilde{\omega}}]
\nonumber\\
&=&
(\omega-\tilde{\omega})\;(\omega+\tilde{\omega}) 
\;\overline{w_\omega}\;e\;\tilde{w}_{\tilde{\omega}}
-i\;(
\omega\; [a\;\overline{w_\omega}]\;\tilde{w}_{\tilde{\omega}}
+\tilde{\omega}\;\overline{w_\omega}]\;[a\;\tilde{w}_{\tilde{\omega}}])
\nonumber\\
&=&
(\omega-\tilde{\omega})\;[I-i\;g\;(
\overline{w_\omega}\;\tilde{w}'_{\tilde{\omega}}
-\overline{w_\omega}'\;\tilde{w}_{\tilde{\omega}})
-i\;(
\omega\; [a\;\overline{w_\omega}]\;\tilde{w}_{\tilde{\omega}}
+\tilde{\omega}\;\overline{w_\omega}]\;[a\;\tilde{w}_{\tilde{\omega}}])
\nonumber\\
&=&
(\omega-\tilde{\omega})\;I
-i\{
\omega\;[(2g\;\overline{w_\omega}'
+g'\;\overline{w_\omega})\;\tilde{w}_{\tilde{\omega}}
+g(\overline{w_\omega}\;\tilde{w}'_{\tilde{\omega}}
-\overline{w_\omega}'\;\tilde{w}_{\tilde{\omega}})]
+\tilde{\omega}\;[\overline{w_\omega}\;
(2g\;\tilde{w}'_{\tilde{\omega}}+g'\;\tilde{w}_{\tilde{\omega}})
-g(\overline{w_\omega}\;\tilde{w}'_{\tilde{\omega}}
-\overline{w_\omega}'\;\tilde{w}_{\tilde{\omega}})]
\}
\nonumber\\
&=&
(\omega-\tilde{\omega})\;I
-i\{
\omega\;[g\;(
\overline{w_\omega}'\;\tilde{w}_{\tilde{\omega}}
+\overline{w_\omega}\;\tilde{w}'_{\tilde{\omega}})
g'\;\overline{w_\omega})\;\tilde{w}_{\tilde{\omega}}]
+\tilde{\omega}\;[
g\;(\overline{w_\omega}\;\tilde{w}'_{\tilde{\omega}}
+\overline{w_\omega}'\;\tilde{w}_{\tilde{\omega}})
+g'\;\overline{w_\omega}\;\tilde{w}_{\tilde{\omega}}]
\}
\nonumber\\
&=&
(\omega-\tilde{\omega})\;I
-i\{(\omega+\tilde{\omega})\; g\; 
\overline{w_\omega}\;\tilde{w}_{\tilde{\omega}}\}'
\ea
where $I$ is the integarnd of the integral in (\ref{7.27}).
We obtain the {\it Wronskian identity}
\be \label{7.29}
W'(w_\omega,\tilde{w}_{\tilde{\omega}})
:=\frac{d}{dz}\;
[
\overline{w_\omega}\;f\; \tilde{w}'_{\tilde{\omega}}
-\overline{w_\omega}'\;f\; \tilde{w}_{\tilde{\omega}}
+i\;
(\omega+\tilde{\omega})\; g\; 
\overline{w_\omega}\;\tilde{w}_{\tilde{\omega}}
]
=(\omega-\tilde{\omega}) \; I
\ee
which holds for arbitrary solutions $w_\omega,\; \tilde{w}_{\tilde{\omega}}$
with the same $l,m$ labels and frequencies $\omega,\tilde{\omega}$ 
respectively. The term in the square bracket on the left hand side 
is the (generalised) 
Wronskian of the two solutions. \\
\\
Formula (\ref{7.29}) has two applications:\\ 
1.\\
If we integrate (\ref{7.29}) over $\mathbb{R}$ we find 
\be \label{7.30}
<W_\omega,\;\tilde{W}_{\tilde{\omega}}>
=e^{i(\tilde{\omega}-\omega)\;\tau}
\frac{1}{\omega-\tilde{\omega}}\;
\int\; dz\; W'(w_\omega,\tilde{w}_{\tilde{\omega}})
\ee
In the present application of this formula we anticipate singular behaviour
of the solution at $z=0,\pm R$ hence we interprete the r.h.s. of (\ref{7.30}) 
as the 
{\it principal value}
\ba \label{7.31}
&& \int\; dz\; W'(w_\omega,\tilde{w}_{\tilde{\omega}})
:=\lim_{l\to 0+}
[\int_{-\infty}^{-R-l}
+\int_{-R+l}^{-l}
+\int_l^{R-l}+
\int_{R+l}^\infty]
\; dz\; 
W'(w_\omega,\tilde{w}_{\tilde{\omega}})
\nonumber\\
&=& [W(w_\omega,\tilde{w}_{\tilde{\omega}})]_{-\infty}^\infty
+\lim_{l\to 0+}\; (
[W(w_\omega,\tilde{w}_{\tilde{\omega}})]_{-R-l}^{-R+l}
+[W(w_\omega,\tilde{w}_{\tilde{\omega}})]_{-l}^{+l}
+[W(w_\omega,\tilde{w}_{\tilde{\omega}})]_{R-l}^{R+l}
)
\ea
Thus the inner product between two solutions can be expressed in terms of 
their values and first derivatives at both spatial infinities plus a 
term that is exactly given by the {\it discontinuities} of the Wronskian at 
$z=0,\pm R$. Indeed, as the coefficients of the second order 
ODE have singularities, we expect singularities of the second derivatives 
compatible with discontinuities of the first derivative. 

Since the right hand side of (\ref{7.30}) does not vanish 
and becomes singular for $\omega=\tilde{\omega}$ we conclude that the 
solutions are not normalisable in the strict sense but in the generalised 
sense, i.e. the inner product will be proportional to 
$\delta^{(1)}(\omega,\tilde{\omega})$.

To read off the normalisation constant $\kappa_{\omega,l,m}$ 
suppose that the discontinuity vanishes.
Then note that far out at infinity the solutions obey the flat space 
wave equations and thus will display a radial dependence corresponding
to radial plane waves $e^{i\;\pm \omega z}/|z|$ while 
$e,f$ grow as $z^2$. Then we make use of the 
distributional identity
\be \label{7.31z}
\lim_{z\to \infty} 
\frac{\sin(z(\omega-\tilde{\omega})}{\pi(\omega-\tilde{\omega})}       
=\lim_{z\to \infty} \int_{-z}^z\; \frac{dz'}{2\pi}\; e^{iz'(\omega-
\tilde{\omega})}=\delta(\omega,\tilde{\omega})
\ee
The positive solution subspace is then selected to be the span of 
the $W_{\omega,l,m},\; \kappa_{\omega,l,m}>0$. To actually compute 
$\kappa_{\omega,l,m}$ we must of course gain sufficient knowledge 
on the solution and its first derivative near $z=-\infty$ once we specify 
those data near $z=\infty$ or vice versa paying attention to the 
singularities. This is a non-trivial task as one 
has to compute the influence of the curvature and its singularity 
at $z=0$ as we follow the solution from $z=\infty$ to $z=-\infty$.
We expect that 
methods from the theory of Heun functions \cite{Heun} and the rich literature 
on the solutions of singular second order linear ODE's \cite{Titchmarsh}
can be employed.\\
\\
2.\\
For $\omega=\tilde{\omega}$ we see that the Wronskian is a constant. 
This leads to {\it Wronskian relations} between the solutions and their 
first derivatives at the two spatial infinities. Moreover for 
$w_\omega=\tilde{w}_\omega$ we find the constant
\be \label{7.31a}
f(\bar{w}_\omega\; w'_\omega-\bar{w}_\omega\; w'_\omega)     
+2i\omega |w_\omega|^2=:2i c_\omega
\ee
Using the WKB decomposition $w_\omega=m_\omega\; e^{i\alpha_\omega}$ into 
modulus and phase we 
obtain 
\be \label{7.31b}
m_\omega^2\;[ f\; \alpha'+\omega\; g]=c_\omega
\ee
which allows to determine the phase exactly in terms of the modulus. 
The differential equation for $w_\omega$ can be split into real and 
imaginary part and vanishing of the imaginary part is equivalent
to (\ref{7.31b}). For the real part we find, using (\ref{7.31a}) 
and using the abbreviation $A:=m_\omega^2$
\be \label{7.31c}
[e\omega^2-u_l]\; A^2+\frac{1}{2}\; f' \; A\; A'
-\frac{1}{f}(c_\omega^2-\omega^2\; g^2\;A^2)+f(\frac{1}{2} A\; 
A^{\prime\prime}-\frac{1}{4} (A')^2)=0
\ee
Unsurprisingly, this equation has a similar structure as the one for the 
modulus of the wave function in cosmology (both are obtained 
from the WKB Ansatz) whose iterative solution leads 
to the adiabatic vacua \cite{Fulling}. For $c_\omega=0$ one 
can transform this into a first order Riccati equation for $B=A'/A$.

\subsection{Details for the GPG background}
\label{s7.4}

We discuss some of the details of the required steps where we 
keep a close analogy with the treatment of the potential barrier 
problem of QED and the resolution of the Klein paradox (Schwinger effect, 
superradiant scattering)  discussed in \cite{Klein}. The analogy is 
only very rough: In contrast to (\ref{7.1}) the external electric field
of the potential barrier problem is everywhere finite and has at most 
jump discontinuities. However 
both problems share the feature that the external potential is asymmetric 
under reflection of $z\to -z$ because in the past Kruskal 
universe the metric 
is given by ingoing GP coordinates while in the future universe it is given 
by outgoing ones.\\
\\
In what follows we define past (P) and future (F) by GP time $\tau$ and 
forget about 
the past and future labels attached to the Kruskal universes: 
Every synchronous 
$\tau=$const. surface crosses both Kruskal universes and has a ``left'' 
end (L) at spacelike infinity of the future 
Kruskal universe and a ``right'' end (R) in the past Kruskal universe. 
Thus the $P,L$ labels have very different meaning. Thus in what follows:\\
P means $\tau\to -\infty$,\\ 
F means $\tau\to +\infty$,\\
L means $z\to -\infty\;\;\Rightarrow\;\; \bar{r}\to +\infty$,  \\
R means $z\to +\infty\;\;\Rightarrow\;\; r\to +\infty$.

An observer in the infinite past P located a L can emit 
spherical waves into the spacetime which therefore must travel
towards smaller values of $\bar{r}$ or larger values of $z$. Such a mode
is described by $\frac{e^{-ikz}}{|z|}, k>0$ (we define the 
radial momentum operator by 
$p=i\frac{x^a}{r}\;\partial_a=i\frac{d}{dz}$ so that 
$[p,z]=i$ and $p\; e^{-ikz}=k\; e^{ikz}$).
Since at $(P,L)$ the spacetime metric is flat and the potential $U$ vanishes
we have for a temporal dependence $e^{i\omega\tau}$ the dispersion relation 
$k^2=\omega^2$, hence $k=|\omega|$.  
That wave will be transmitted and reflected by the curvature and will at 
$F$ reach both ends $L,R$. The transmitted part travels to $R$ 
at $z=\infty$ and is described by $^{-ikz}$, the reflected part travels to 
$L$ and is described by $e^{ikz}$. Thus we define the modes 
\be \label{7.32}
W^{P,L}_{\omega,l,m}:=N^{P,L}_{\omega,l,m}\;\frac{e^{i\omega\tau}}{|z|}\;
\left\{ \begin{array}{cc}
e^{-i|\omega|z} + R^{P,L}_{\omega,l,m} \; e^{i|\omega|z} & z\to -\infty\\
T^{P,L}_{\omega,l,m} \; e^{-i|\omega|z} & z\to \infty
\end{array}
\right.
\ee
where $N,R,T$ are called normalisation, reflection and transmission 
coefficients. 

In complete analogy, we can consider emission from (P,R) of waves travelling 
towards smaller $r$ i.e. smaller $z$ described by modes $e^{i|\omega| z}$. 
Now the transmitted mode travels towards L and the reflected towards R again.
Thus we define the modes
\be \label{7.33}
W^{P,R}_{\omega,l,m}:=N^{P,R}_{\omega,l,m}\;\frac{e^{i\omega\tau}}{|z|}\;
\left\{ \begin{array}{cc}
T^{P,R}_{\omega,l,m} \; e^{i|\omega|z} & z\to -\infty\\
e^{i|\omega|z} + R^{P,R}_{\omega,l,m} \; e^{-i|\omega|z} & z\to \infty
\end{array}
\right.
\ee
  
An observer in the infinite future F located at L can receive waves that 
travel to larger values of $\bar{r}$ i.e. smaller values of $z$. Those waves 
are the result of a superposition of two waves coming from (P,L) and (P,R)
respectively. The wave from (P,R) looks like a a wave transmitted from (F,L)
if we travel to the past while the wave from (P,L) looks like 
a wave reflected from (F,L) if we travel to the past. This gives the modes
\be \label{7.34}
W^{F,L}_{\omega,l,m}:=N^{F,L}_{\omega,l,m}\;\frac{e^{i\omega\tau}}{|z|}\;
\left\{ \begin{array}{cc}
e^{i|\omega|z} + R^{F,L}_{\omega,l,m} \; e^{-i|\omega|z} & z\to -\infty\\
T^{F,L}_{\omega,l,m} \; e^{i|\omega|z} & z\to \infty
\end{array}
\right.
\ee
    
In complete analogy, an observer in F located at R can receive waves 
travelling towards increasing $r$ i.e. larger values of $z$ and 
is described by $e^{-i|\omega|z}$
which is the result of a superposition of waves from P from both ends where 
the wave from L and R looks as transmitted and reflected respectively from 
(F,R) when followed towards the past. Thus we define the modes
\be \label{7.35}
W^{F,R}_{\omega,l,m}:=N^{F,R}_{\omega,l,m}\;\frac{e^{i\omega\tau}}{|z|}\;
\left\{ \begin{array}{cc}
T^{F,R}_{\omega,l,m} \; e^{-i|\omega|z} & z\to -\infty\\
e^{-i|\omega|z} + R^{F,R}_{\omega,l,m} \; e^{i|\omega|z} & z\to \infty
\end{array}
\right.
\ee
For both choices of $P,F$, the modes labelled by $\ast\in \{L,R\},\;
\omega\in \mathbb{R},\;0\le  |m|\le l\in \mathbb{N}_0$ are complete 
generalised orthonormal bases of the Krein Hilbert space 
but given $\ast,l,m$ only for $\omega\in 
\Delta^{P/F}_{+,\ast,l,m}$ correspond to positive norm solutions with respect 
to the 1-particle inner product (\ref{7.4}) which selects 
the annihilation operators $A^{P/F,\ast}_{\omega,l,m}=
<W^{P/F,\ast}_{\omega,l,m},\Phi>,\; \omega\in
\Delta^{P/F}_{+,\ast,l,m}$. Thus there will be particle production 
in the sense that the observers at P,F do not agree on what the zero 
particle (vacuum state) is when the sets 
$\Delta^{P/F}_{+,l,m}:=\Delta^{P/F}_{+,L,l,m}\cup 
\Delta^{P/F}_{+,R,l,m}$ do not coincide. 

These sets will be in bijection with $\mathbb{R}_+$, i.e. we have 
bijections $b^{P/F}_{l,m}:\;\mathbb{R}_+\to   
\Delta^{P/F}_{+,l,m}$. Given $\omega\in \mathbb{R}_+,l,m$ we define 
the Fock space ${\cal H}^{P/F}$ as the excitations by the modes 
$b^{P/F}_{l,m}(\omega),l,m$ of the Fock vacuum $\Omega^{P/L}$.
If these Fock spaces are the same, that is, when $\Omega^F$ constructed 
as the excitation by particle pairs of $\Omega^P$ is normalisable 
in ${\cal H}^P$, then the unitary S-matrix $S$ defined by the matrix 
elements 
\be \label{7.36}
<\psi^F_\alpha, \psi^P_\beta>=:
<\psi^F_\alpha, S \psi^F_\beta>  
=<\psi^P_\alpha, S \psi^P_\beta>  
\ee
will be non-trivial.

We note that the four sets of modes $W^{P/F,L/R}_{\omega,l,m}$ with 
asymptotics given in (\ref{7.32}) - (\ref{7.35}) allow us to compute 
their inner products via (\ref{7.30}) and (\ref{7.31}) in terms 
of the twelve complex numbers 
$N^{P/F,L/R}_{\omega,l,m},\;R^{P/F,L/R}_{\omega,l,m},\;     
T^{P/F,L/R}_{\omega,l,m}$. The modulus $|N^{P/F,L,R}_{\omega,l,m}|$ will 
be fixed by the condition that the corresponding 
$\kappa^{P/F,L/R}_{\omega,l,m}$ equals $\pm 1$. 

Further analysis has to take the concrete details of (\ref{7.1}) and 
the potential into account. One finds explicitly
\ba \label{7.37}
&&N=1,\; N^3=\sqrt{\frac{R}{|z|}},\;
e=z^2,\; g=e N^3,\; f=e(1-[N^3]^2), U=\frac{(1-s^2)\;R}{|z|\; z^2}, 
\nonumber\\
&& u_l=e(\frac{l(l+1)}{z^2}+U)=l(l+1)+(1-s^2)\;(N^3)^2
\ea
where $s=0,1,2$ is the spin of the field (scalar, vector, tensor). 

We see that $f$ grows as $z^2$ while $g$ only as $|z|^{3/2}$ and thus 
its contribution to the Wronskian vanishes at infinity since by 
(\ref{7.32}) - (\ref{7.35}) the solutions decay as $|z|^{-1}$. It 
follows (drop labels $l,m$) 
\be \label{7.38}
\lim_{z\to \pm\infty}\;
W(w_\omega,\tilde{w}_{\tilde{\omega}})(z)
=\lim_{z\to \pm\infty}\; 
[\overline{|z|\; w_\omega}\; (|z|\tilde{w}_{\tilde{\omega}})'   
-(\overline{|z|\; w_\omega})'\; (|z|\tilde{w}_{\tilde{\omega}})]
\ee
It is not difficult to see that equality of (\ref{7.38}), which
holds in the 
absence of discontinuities, at 
$z=\pm \infty$ for $\omega=\tilde{\omega}$, $w_\omega=\tilde{w}_\omega
=w^{P/L,L/}_{\omega,l,m}$ gives the continuity equations
\be \label{7.39}
1=|R^{P/L,L/R}_{\omega,l,m}|^2+|T^{P/L,L/R}_{\omega,l,m}|^2
\ee

\subsection{Particle production and Hawking radiation in BHWHT spacetime}
\label{s7.5}

Assuming that these interesting challenges can be met, one then would 
have access to QFT on a BHWHT spacetime i.e. the quantum field 
is given by  
\be \label{7.38a}
\Phi=\sum_{l,m}\int_{\Delta_{+,l,m}}\; d\omega\;
\{A_{\omega,l,m} \; W_{\omega,l,m}+c.c.\};\;\;
\Delta_{+,l,m}=\{\omega\in \mathbb{R};\;\kappa_{\omega,l,m}>0\}
\ee
This has several aspects. 
One may construct modes that look like those of Minkowski space for 
$z=r\to \infty, \tau\to -\infty$ in the past universe or for 
$z=-\bar{r}\to -\infty, \tau\to +\infty$ in the future universe.
This selects corresponding Fock vacua $\Omega_P, \Omega_F$ and
particle number operators $N_P=A_P^\ast A_P, A_P\Omega_P=0$ and  
$N_F=A_F^\ast A_F, A_f\Omega_F=0$ and  
one may study the particle content of $F$ particles in the 
$P$ state $<\Omega_P, N_F \Omega_P>$ 
or vice versa which maybe interpreted as particle creation effect due to 
curvature and singularity. Or one may consider only the SS portion of 
the past universe (or the MSS portion of the future universe) and 
construct a Fock representation with respect to the 
Schwarzschild time $t$ foliation in the SS (or MSS) portion.
One can expand the spacetime field $\Phi$ in SS with respect to 
both sets of mode systems and compute $A_{SS}$ in terms of $A_P, A_P^\ast$
(Bogol'ubov coefficients). This would be the analog of the 
Unruh effect with the free falling GP observer and stationary 
SS observer respectively 
playing the role of the inertial and accelerated observer respectively.
As usual, the selection of, say, the P vacuum is not unique and may wonder 
whether there exists a choice for which the 2-point function 
$<\Omega_P, \Phi(\tau,x)\;\Phi(\tau',x')\Omega_P>$ has the short distance 
singularity structure of Minkowski space (Hadamard state \cite{Fulling}) 
which has 
an elegant reformulation in terms of the wave front set of this 
two point function \cite{Radzikowski}.

\section{Quantisation and Backreaction}
\label{s8}

In this section we discuss further challenges of quantum black hole
perturbation theory without going into much detail.\\
\\
Let us summarise:\\
The concrete expression for the physical or reduced Hamiltonian
which depends only on the true $X,Y,P,Q$ degrees of freedom
cannot be provided in closed form but can be approximated in terms 
of two distinct perturbative hierarchies: First, the deviation from 
spherical symmetry as parametrised by the perturbations $X,Y$ which 
are considered as of first order. Second, an inverse core mass expansion.
The first expansion arises from the split of the degrees of freedom 
into spherically symmetric and spherically non-symmetric sets and is 
a simple canonical transformation on the phase space. This step 
is in principle non-perturbative and can be performed exactly in closed 
form if one uses an equivalent polynomial form of the constraints. 
The expansions come into play upon solution of the constraints for the 
momenta: First, as the constraints depend non-linearly on the momenta, 
we get an infinite set of non-linear constraints which are challenging 
to solve non-perturbatively as is known already 
from finite dimensional algebraic geometry \cite{AlgGeo}. However,
exploiting the first perturbative hierarchy this becomes feasible as 
shown in \cite{pa129} provided one can solve the zeroth order equations 
of that scheme. It is this assumption which triggers the second 
perturbative hierarchy: The zeroth order constraints (one for each value 
of the radius) are still 
non linear in the symmetric momenta but almost decoupled. The coupling 
is via the fact that the constraints depend on radial derivatives. 
There would be no second hierarchy necessary if one could solve that 
ODE system exactly. However, unless there is no scalar hair, this 
is not the case and thus we must release the inverse core mass expansion 
as a second approximation scheme of the Picard-Lindel\"of type. 
Here the core mass is identified as 
an integration constant in that system of zeroth order differential 
equations which reduces to the Schwarzschild mass in absence of scalar
hair (in which case the core mass equals the ADM mass). From here on 
at each order in the first perturbative scheme one just has to solve 
linear ODE systems for which uniqueness and existence results as well as 
efficient approximation schemes are available. The inverse mass 
expansion continues to play an important role also in those
higher orders.

At any order with respect to both expansions, the 
approximate expression for $H_{{\rm red}}$ depends non-polynomially
on the core mass $\hat{M}$ and polynomially on the fields $Q,P,X,Y$
in terms of nested radial integrals and multiple sums over the 
harmonic labels $l,m$. Its quantisation is therefore challenging
in two aspects: First, it is not clear that the operators
$\hat{M}^n,\; n\in \mathbb{Z}$ can be densely defined. This question 
has been answered affirmatively in \cite{TTNewBasis}. Second, consider
the energy density $e_{KG}(r)$ of the symmetric Klein Gordon field
which is a quadratic function of $P^{KG}$ and a polynomial function 
of $Q_{KG}$ (that polynomial has degree two unless the scalar field 
is self-interacting). Then one can envisage a Fock quantisation of the 
total Klein Gordon Energy $E_{KG}=\int_0^\infty\; ds \; e_{KG}(s)$. 
However, the physical Hamiltonian depends on nested radial 
integrals of the form 
\be \label{8.1}
\int_0^\infty\; ds\; e_{KG}(s)\;\int_0^s\; dt\; e_{KG}(t)
\ee
and not on polynomials of $E_{KG}$. The energy density integral with 
a sharp radial cut-off could be problematic in QFT 
and perhaps must be 
regularised by smoothening the cut-off. Upon removal of the regulator,
the result could retain a regulator dependence which must be controlled.\\
\\
Next we come to the question of backreaction beyond the backreaction  
on the spacetime causal structure discussed in section \ref{s0}.
To avoid misunderstanding, the notion of backreaction used in this work 
is {\it not} that of 
the semiclassical Einstein equations $G(g):=<T(g)>_g$ where 
$G(g)$ is the classical Einstein tensor, $T(g)$ the quadratic form 
corresponding to 
the matter energy momentum tensor and $<.>_g$ is a positive linear functional 
on the Weyl algebra $\mathfrak{A}_g$
of the (free) spacetime matter fields which are supposed   
to solve the (free) Heisenberg equations generated by (the quadratic part of) 
the matter Lagrangian in the background metric $g$. The reasons are manifold:
First of all, the semiclassical Einstein equations treat $g$ as a classical 
field rather than a quantum field. Second, they are supposed to be 
diffeomorphism covariant rather than singling out a preferred gauge condition
and thus do not distinguish between true and gauge degrees of freedom.
This has the following consequences: Not only is it a complicated task 
to find a self-consistent metric that solves these equations because the 
state $<.>_g$ depends on $g$ as well, it also is inconsistent with 
the Bianchi identity $\nabla_g \cdot G\equiv 0$ if one does not carefully 
supplements the chosen (normal) ordering of $T(g)$ by suitable counter 
terms which for example is responsible for the trace anomaly in cornformally 
invariant theories \cite{15}. By contrast, our notion of backreaction 
is in the context of quantum gravity in which we quantise the metric as 
well after reducing the phase space so that only the true degrees of freedom 
are quantised, the issue of gauge invariance has been solved from the 
outset and no Bianchi identies have to be obeyed. In fact it is not even 
a priori clear what one would mean by a ``quantum Bianchi identity''. 
A possible interpretation of this term would be to try to construct a
quadratic form corresponding the classical objects $\nabla_g, G(g)$
where $g$ is to be replaced by the quantum metric and then to consider 
$\nabla_g\cdot G(g)$. Due to non-commutativity of operator valued 
distributions involved, this is unlikely to vanish identically as a 
quadratic form, at best we expect it to vanish to zeroth order in $\hbar$
when computing the expectation value of the quadratic form with respect 
to semiclassical (low fluatuation) states.\\ 
\\ 
Having clarified this, consider first the classical 
theory. Given the Hamiltonian $H=H(P,Q,X,Y)$ we can in principle solve 
the classical Hamiltonian equations of motion given initial data. 
To simplify the notation let $R=(P,Q),\; Z=(Y,X)$ with initial data
$(R_0,Z_0)$. Consider first an arbitrary function $R(\tau)$ and solve 
the Hamiltonian equations of motion 
$\dot{Z}^R_{Z_0}(\tau)=\{H,Z\}_{R=R(\tau),Z=Z^R_{Z_0}(\tau)},\;
Z^R_{Z_0}(0)=Z_0$ where $H(\tau,Z)=H(Z(R(\tau),Z)$ is considered as an
explicitly time dependent Hamiltonian. 
The solution will be a function 
$Z^R_{Z_0}(\tau)$ which depends on the chosen $R(\tau)$. 
It can be obtained by Picard-Lindel\"of iteration
$Z^R_{Z_0}(\tau)=Z_0+\int_0^\tau\; ds;\;
\{H,Z\}_{R=R(\tau),Z=Z^R_{Z_0}(s)}$ and thus depends on nested integrals 
of $R(\tau)$. Now we 
set $R(\tau)=R_{R_0,Z_0}(\tau)$ and solve the Hamiltonian equations 
of motion $\dot{R}_{R_0,Z_0}(\tau)=\{H,Z\}_{R=R_{R_0,Z_0}(\tau),
Z=Z^{R_{R_0,Z_0}}_{Z_0}(\tau)},\;R_{R_0,Z_0}(0)=R_0$. This 
takes the form of an integro differential equation which contains 
memory and friction terms and whose parametric dependence on $Z_0$ is 
considered as ``noise'' or ``environment''. 
If one has only statistical knowledge about 
$Z_0$ one can consider it as a random variable leading to 
a Langevin type \cite{Langevin}
of equation or one can average over the $Z_0$ dependence 
which leads to corresponding ``master'' equations. If one has specific 
knowledge
about the initial state $Z_0$ and can motivate a certain specific $Z_0$ 
as given, one may try to obtain an effective Hamiltonian 
which governs the integro-differential equation.

In the quantum theory we see that in the statistical approach we enter the 
regime of open quantum systems, decoherence, entanglement, partial tracing
and non-equilibrium statistical mechanics \cite{Decoherence} which leads to 
master equations 
of Lindbladt type for the ``statistical operator'', i.e. the density
matrix that replaces $Z_0$. In the effective approach 
of the quantum theory one tries to solve the exact Schr\"odinger or Heisenberg 
equations following the above idea of the classical theory 
of solving first the equations for the $X,Y$ sector and after that for the 
$Q,P$ sector. To do this one assumes that the physical 
Hilbert space is a tensor 
product ${\cal H}={\cal H}_f\otimes {\cal H}_s$ where ${\cal H}_f,\; 
{\cal H}_s$ respectively describe the ``fast'' $X,Y$ and ``slow'' 
$Q,P$ degrees of freedom. The terminnology comes from the well known 
Born-Oppenheimer-Approximation (BOA) scheme \cite{BOA}
that was invented for molecular 
physics and for which the huge differences in the masses of electrons and 
nuclei leads to a natural hierarchy of time scales of these two types 
of degrees
of freedom so that the fast particles move essentially as if the slow 
particles were at rest with only a weak (adiabatic) correction governed
by an adiabatic parameter $\epsilon$ (the quotient of electron and nuclei
masses). Thus the BOA works very well for systems with 1. finitely many 
fast degrees of freedom, 2. finitely many slow degrees of freedom,
such that 3. the interaction between the two subsystems depends only on
a commuting set of slow degrees of freedom (typically $Q$), such that 
4. a natural adiabatic parameter $\epsilon$ exists and such that 
5. the Hilbert space representation is of tensor product type as above.

In the application to black hole perturbation theory the first 
4 assumptions all fail
when incorporating matter ``hair'' in addition to Maxwell fields
such as the Klein-Gordon or fermion fields. First, assumption 3. is violated
as the interaction involves both $P,Q$ variables which do not commute in 
quantum theory. A natural extension of BOA which relaxes 3. is space 
adiabatic perturbation theory (SAPT) which however still rests on the validity 
of assumptions 1., 2., 4. while 5. is trivially satisfied for finite 
dimensional systems. In \cite{ST} SAPT was applied to the simpler case 
of quantum cosmology for which the slow sector has only finitely many degrees 
of freedom, hence assumption 2. holds. The relaxation of assumption 1. 
is non-trivial because assumption 5. is now no longer granted as the 
Hilbert space representation of the fast sector depends on the slow sector 
variables and the corresponding are not necessarily unitarily equivalent. 
However, the obstacles can be overcome perturbatively. Already in quantum 
cosmology also assumption 4. is violated because there is no mass hierarchy 
between e.g. the homogeneous modes of a scalar inflaton field and its 
non-homogeneous perturbations. This obstacle can be overcome as follows:
The adiabatic effective decoupling of the slow and fast sectors rests 
on the Weyl quantisation of the slow system in terms of the variables 
$Q,P':=\epsilon P$ where $\epsilon$ is the adiabatic parameter.
Now even if the reduced Hamiltonian $H(X,Y,Q,P)$ does not have a natural 
such parameter, using that $H$ depends polynomially on $P$ we can multiply
$H$ by $\epsilon^N$ where $N$ is the smallest positive integer such that 
$H_\epsilon(X,Y,Q,P'):=\epsilon^N\;H(X,Y,Q,P=\frac{P'}{\epsilon})$ has no 
negative powers of $\epsilon$. Then SAPT can be formally applied 
to $H_\epsilon$ which 
allows to compute an effective Hamiltonian $H_{\epsilon,n,k}(Q,P')$ for each 
energy band $n$ of the fast sector and $k$ is the order of the adibatic 
expansion. At the end we set $\epsilon:=1$ which can be considered as a 
kind of analytic extension. In this way the adiabatic parameter just 
serves to organise the adiabatic perturbation and is removed at the end.
It is of course not granted that resulting effective Hamiltonians 
$H_{n,k}=H_{\epsilon=1,n,k}$ converge in any sense.

This still leaves condition 2. which in contrast to quantum cosmology 
is violated for quantum black holes: The symmetric degrees of freedom 
are just spherically symmetric and not homogeneous and 
thus define an effectively 1+1 dimensional
field theory on the radial z-axis rather than a 1+0 dimensional mechanical 
system. In order to apply SAPT this calls for the Weyl quantisation of 
a field theory which is problematic \cite{WQI}. As a first step one may 
apply the following regularisation of the problem: Let $b_I,\; I=0,1,2,..$ 
be an on ONB of $\mathfrak{h}=L_2(\mathfrak{R},dz)$ and consider the 
conjugate variables 
$Q_I=<b_I,Q>_{\mathfrak{h}},\; P_I=<b_I,Q>_{\mathfrak{h}}$. We now 
expand $H(X,Y,Q,P)$ into the $Q_I,P_I$ using completeness 
$Q=\sum_I\; Q_I\; b_I$ and for given ``mode cut-off'' $0<M<\infty$ 
set to zero all instances of $Q_I,P_I,\; I>M$ which maybe 
called a truncation. The resulting Hamiltonian $H_M(X,Y,\{Q_I,P_I\}_{I=0}^M)$
can then be treated by SAPT methods as above and one may study in which way 
one can remove the cut-off $M$.

\section{Observation, radiation energy and flux}
\label{s9} 

Since we have a concrete expression for a physical Hamiltonian 
$H[P,Q,X,Y]$ at our disposal which depends only on the true degrees of 
freedom $P,Q,X,Y$ we can write it in the form 
\be \label{9.1}
H[P,Q,X,Y]=\int_\sigma \; d^3x\; h[P,Q,X,Y;x)
\ee 
where $h$ can be interpreted as the energy volume density observed by 
the free falling observers. From the way it is constructed 
perturbatively in terms of nested radial integrals using the 
inverse core mass expansion, it is spatially 
non-local, explicitly spatially but not explicitly 
$\tau$ dependent as the system is conservative and 
at any finite order in perturbation theory a polynomial in the canonical 
fields and their spatial derivatives up to a finite order $N$. 
Since $h$ therefore depends generically spatially non-locally on the fields
we used a square bracket notation with respect to the dependence 
of $h$ on them and a round bracket notation with respect to the explicit 
$x$ dependence.  

To compactify the notation we write $R=(Q,X),\; Z=(P,Y)$. The 
variation of (\ref{9.1}) is then given by 
\be \label{9.2}
[\delta H][R,Z]=\sum_{k=0}^N\; \int_\sigma\;d^3y \int_\sigma\;\;d^3x\;
\{r^{a_1..a_k}(y,x)\;[\delta R]_{,a_1..a_k}(x)
+z^{a_1..a_k}(y,x)\;[\delta Z]_{,a_1..a_k}(x)\}
\ee
where the integral kernels displayed are themselves nested integrals over 
polynomials in $R,Z$ and their spatial derivatives up to order $N$. 
They can be found by collecting in the expression for 
$\delta H$ all instances of e.g. 
$[\delta R]_{,a_1..a_k}(z)$ where $z$ is  
one of the integration variables in the nested integrals and then relablelling 
integration variables $z\leftrightarrow x$. The Hamitonian equations of 
motion are obtained by integrating by parts, dropping boundary terms 
\be \label{9.3}
\dot{R}(x)=\sum_{k=0}^N\; (-1)^k\; \int\; d^3y\; 
[r^{a_1..a_k}(y,x)]_{,a_1..a_k},\;\;
-\dot{Z}(x)=\sum_{k=0}^N\; (-1)^k\; \int\; d^3y\; 
[z^{a_1..a_k}(y,x)]_{,a_1..a_k},\;\;
\ee
where the multiple partial derivatives are with respect to the argument 
$x$. 

Let now $S(\tau)\subset \sigma$ be a compact region in $\sigma$ 
then the object
\be \label{9.4}
E_{S(\tau)}=\int_{S(\tau)} \; d^3x\; h[R(\tau),Z(\tau);x)
\ee 
is the energy content of the region $S(\tau)$ where the $\tau$ dependent 
fields satisfy the above Hamiltonian equations of motion, e.g.
$\dot{X}=\{H,X\}$. Then $P_{S(\tau)}=\frac{d}{d\tau} E_{S(\tau)}$ 
is the total power emitted/absobed from $S(\tau)$. To obtain an 
energy flux area current one usually resorts to the Lagrangian framework 
and extracts the on shell conserved Noether current from the symmetry 
of the Lagrangian under foliation time translations, 
formally corresponding to the conserved time component of the canonical energy
momentum tensor. We could proceed 
like this here as well by formally computing the Legendre transform 
of the Hamiltonian but since we expect $H$ to depend higher than quadratically
on $Z$ and since the relation between momenta and velocities 
will involve integro differential equations one will have trouble to 
compute the Legendre transform sufficiently explicitly. 

Fortunately this 
is not necessary. All that is needed is the Hamiltonian. To see 
how this works in a familiar setting consider the Hamiltonian density 
$h=\frac{1}{2}[\vec{E}^2+\vec{B}^2]$ of free Maxwell theory 
on Minkowski space where 
$\vec{B}=\vec{\nabla}\times \vec{A}$ is the magnetic field of the 
vector potential $\vec{A}$ which has canonical brackets with the 
electric field $\vec{E}$. The equations of motion resulting 
from $H=\int\;d^3x\; h$ are the familiar vacuum Maxwell equations 
$\dot{B}=\vec{\nabla}\times \vec{E},\;
\dot{E}=-\vec{\nabla}\times \vec{B}$. Consider a compact 
region $S\subset \mathbb{R}^3$ and its energy content
$E(S)=\int_S\; d^3x\; h$ evaluated on a solution of Maxwell's equations.
Then 
\be \label{9.5}
\dot{E}(S)=
-\int_S\; d^3x\;\vec{\nabla}\cdot[\vec{E}\times\vec{B}] 
=-\int_{\partial S}\; d\Sigma_a\;[\vec{E}\times\vec{B}]^a
\ee
which correctly yields the Poynting vector 
$\vec{J}=\vec{E}\times \vec{B}$ as energy current area density. In other 
words with $J^\tau=h$ the 4-vector $J^\mu$ is conserved on shell.  
   
We proceed analogously in the present more complicated situation. From 
(\ref{9.2}) we have for a time independent set $S$
\be \label{9.6}
\dot{E}(S)
=\sum_{k=0}^N\; \int_S\;d^3y \int_\sigma\;\;d^3x\;
\{r^{a_1..a_k}(y,x)\;[\dot{R}]_{,a_1..a_k}(x)
+z^{a_1..a_k}(y,x)\;[\dot{Z}]_{,a_1..a_k}(x)\}
\ee
We integrate the spatial derivatives successively by parts. In the 
course of this process we may pick up boundary terms or not 
depending on the distributional properties of the integral kernels 
$r,z$. Typically, if e.g. $r^{a_1..a_k}(y,x)$ depends on $l\le k$ nested 
integrals, then the first $l$ integrations by parts do not generate 
any boundary terms but each integration by parts removes one nested integral.
Thus rearranging terms by performing integrations 
by parts that do not generate boundary terms one may rewrite (\ref{9.6}) as 
\be \label{9.7}
\dot{E}(S)
=\int_S\;d^3y \int_\sigma\;\;d^3x\;
\{\hat{r}(y,x)\;\dot{R}(x)
+\hat{z}(y,x)\;\dot{Z}(x)\}
+\sum_{k=1}^N\; \int_S\;d^3x\;
\{\hat{r}^{a_1..a_k}(x)\;[\dot{R}]_{,a_1..a_k}(x)
+\hat{z}^{a_1..a_k}(x)\;[\dot{Z}]_{,a_1..a_k}(x)\}
\ee
where the second term is now ultra-local. We now perform the remaining 
integrations by parts and obtain a bulk term and a surface term.
The bulk term is given by (\ref{9.6}) 
with all integrations by parts performed and all boundary terms dropped,
that is 
\be \label{9.8}
0=\sum_{k=0}^N\; (-1)^k\;\int_S\;d^3y \int_\sigma\;\;d^3x\;
\{r^{a_1..a_k}_{,a_1..a_k}(y,x)\;[\dot{R}](x)
+z^{a_1..a_k}_{,a_1..a_k}(y,x)\;[\dot{Z}](x)\}
\ee
where the equations of motion (\ref{9.3}) were used. The surface term is 
given by 
\be \label{9.9}
-J^a:=\sum_{k=1}^N\;\;\sum_{l=1}^{k}\;(-1)^{l+1}\;
\int_{\partial S}\;d\Sigma_{a_1}\;
\{\hat{r}^{a_1..a_k}_{,a_2..a_l}(x)\;[\dot{R}]_{,a_{l+1}..a_k}(x)
+\hat{z}^{a_1..a_k}_{,a_2..a_l}(x)\;[\dot{Z}]_{,a_{l+1}..a_k}(x)\}
\ee
where it is understood that the spatial derivatives in (\ref{9.9}) are 
simply absent when the indices are out of range. 

The on shell conserved energy Noether
current can now be determined: it is given by $J^\mu$ with 
$J^\tau=h$ and $J^a$ as in (\ref{9.9}) where $\dot{R},\dot{Z}$ are 
to be replaced by (\ref{9.3}). In particular, given a 
surface element $s$, $\int_s\; d\Sigma_a\; J^a$ is the power 
flowing through $s$. Using appropriate solutions of the field equations
one can now compute the classical radiation power through any surface 
similar as in usual second order perturbation theory \cite{Chandrasekhar}.

Similarly, in the quantum theory, one can define at second order, 
the grey body factors
$\sigma_{\omega,l,m}$  
of the radiation in the usual way for each bosonic 
degree of freedom \cite{Page}
\be
\label{9.10}
\lim_{r\to \infty} \omega_\beta[\int_{s(r)}\; d\Sigma_a J^a]=:
-\sum_{l,m}\int\; d\omega\; \frac{\sigma_{\omega,l,m}}{
\exp(\beta\;\omega)-1}
\ee
where $s(r)$ is a round sphere at radius $r$ 
and $\beta$ is the inverse temperature, expected to scale as $1/M$, of 
a corresponding KMS state $\omega_\beta$ which we consider as restriction 
of the GP vacuum state to the SS portion in a similar way as 
the Unruh KMS state arises by restriction of the Minkowski vacuum state
\cite{10} to the Rindler wedge.
The 
form of (\ref{9.10}) in fact has been confirmed for second order 
perturbation theory using mode functions of the type discussed in section 
\ref{s7} but restricted to an asymptotic SS region. More in detail,
there one works with the turtoise coordinate $z=r_\ast=r+R\ln(r/R-1),\;r>R=2M$
rather than the GPG coordinate $z$ 
and observes that the potential that enters the Schr\"odinger type equation
rewritten in terms of $z$ 
vanishes at both $z=\pm \infty$ being positive in between with a maximum 
around the photon sphere $r=3M$. Hence the mode function problem 
becomes a regular quantum mechanical scattering problem, see 
\cite{Chandrasekhar}
for all the details. One can work with either SS null coordinates 
$u=t-z,\;v=t+z$ or corresponding  
Kruskal null coordinates $V,U$ and the KMS state here arises by restricting 
the Kruskal vacuum state of one full Kruskal universe (SS, BH, WH, MSS;
Hartle-Hawking state) to the SS portion (the quantum field is written in 
terms of $u,v$ mode functions defining the Boulware state 
which are then expanded into $U,V$ mode functions 
defining the hartle-Hawking state which gives rise 
to non-vanishing Bogol'ubov coefficients in (\ref{9.10})).\\
\\
Remarks:\\
1.\\
It should be noted that there is some debate which Noether current 
to use for the gravitational radiation 
in order to derive the radiation formulae, often the 
Landau-Lifshitz pseudo tensor is argued for \cite{LL}, or one proceeds as 
sketched above, perturbs the Lagrangian to second 
order and then computes the canonical (Noether) energy momentum tensor. 
In the present situation 
we have a natural candidate as derived above and it would be interesting to 
compare the different formalisms at least at second order.\\
2.\\
Note that the Schwarzschild stationary and Gullstrand-Painleve\'e 
free falling clocks 
at distances far away from the black hole tick at the same rate but 
they have a radius dependent off-set. Far away from the black hole,
during a short amount of time the GP observer is barely picking up speed 
if previously at rest and thus the radius is approximately constant 
during observation. An astronomer on Earth can be argued to be rather 
free falling towards a black hole rather than being stationary as one 
cannot prevent Earth from being attracted to the black hole. However
for both observers the time passed during observation is the same
to high accuracy.\\
3.\\
When the set $S(\tau)$ itself is time dependent then the radiation 
formula for $\frac{d}{d\tau}\; E(S(\tau))$ 
must be corrected by the term that takes the 
time change of $S(\tau)$ into account. The application would be a trapped 
region $S(\tau)$ with $\tau$ dependent profile function 
$\rho_\tau:\; S^2\; \to \mathbb{R}_+$ and 
$S(\tau)=\{(r,\Omega),\; r\le \rho_\tau(\Omega),\; \Omega\in S^2\}$.
Then the correction term for the time dependence of 
$E(S(\tau))=\int\; d\Omega\; 
\int_0^{\rho_\tau(\Omega)}\;dr\; h$ is given 
by
\be \label{9.11}
\int_{S_2}\; d\Omega\; 
\dot{\rho}_\tau(\Omega)\; h(r=\rho_\tau(\Omega),\Omega) 
\ee
which due to the dependence of $\rho_\tau$ on the gravitational 
radiation variables $X,Y$ is by itself a rather complicated functional 
of the true degrees of freedom.

\section{Conclusion and Outlook}
\label{s10}

To define interacting, gauge invariant black hole perturbations of geometry 
and matter is a complicated topic for which many conceptual and 
technical questions have to be answered. In this paper we have attempted 
at a concrete proposal. The basic idea is to divide the problems into several 
steps. The first step consists in disentangling gauge invariance 
from perturbation theory already in the classical theory. Thus one 
first constructs the non-perturbative reduced phase space (true degrees 
of freedom) and physical 
Hamiltonian and then perturbs it directly in terms of the gauge invariant 
perturbations which are defined as those true degrees of freedom 
which are non (sherically or axi) symmetric. 
In order that one has access to both the black hole 
interior and the exterior at the same time puts restrictions on 
the choice of the true degrees of freedom (equivalent to the choice of 
a congruence of observers) and therefore we have opted for the 
Gullstarnd-Painlev\'e gauge. The second step consists 
in quantising the true degrees of freedom, both the symmetric  
and non symmetric ones, in suitable representations of the canonical 
commutation relations which are such that the physical Hamiltonian,
perturbed to the desired order of accuracy is at least a well-defined 
quadratic form. 

Since the dependence of the perturbed Hamiltonian 
on the symmetric true degrees of freedom typically 
is non-polynomial while it depends 
polynomially on the symmtric true degrees of freedom, 
one has to use different quantisation techniques for these two sets 
of degrees of freedom. This observation has already been made 
in quantum cosmology where one uses a so-called ``hybrid'' approach
\cite{Gomar, Hybrid} and uses a Narnhofer-Thirring type representation
\cite{LQC} 
for the symmetric (homogeneous) degrees of freedom while the non-symmetric 
ones are treated in a Fock representation. In \cite{FockQG} we have recently 
shown that one can in fact also use Fock representations for the 
non-polynomial dependence of the Hamiltonian if one carefully chooses 
the dense domain of the quadratic form, for instance as the span of the 
excitations of a coherent state concentrated on a 3-metric of Euclidian
signature. This stresses once more the importance of  
states that describe non-degenerate quantum metrics as emphasised 
in \cite{NonDeg,U3} in a different context.

In this paper we only have proposed a possible framework but of course the 
real task to describe black hole evaporation is still to be done. 
In our companion papers \cite{NT-SS,NT-RN,N-SA} 
we perform first steps.
These consist in showing that our approach reproduces the known 
classical 2nd order results due to Regge-Wheeler, Zerilli and Moncrief
\cite{24,25} outside the Horizon in the Einstein-Maxwell sector after 
one translates our reduced Hamiltonian into the Schwarzschild coordinates 
used in \cite{24,25}. However, this is just a consistency check. The real 
virtue of our method is that it enables to construct the reduced 
Hamiltonian also to higher than 2nd orders without the necessity to 
change the gauge invariant (true) degrees of freedom when increasing 
the order and thus 
to describe self-interactions among the symmetric and non-symmetric 
true degrees of freedom respectively as well as interaction (backreaction)
between them. This requires a better understanding of the Fock 
representation that is suggested by the 2nd order part of the 
reduced Hamiltonian. We have started this investigation in the present paper 
but the construction of the mode functions in a black hole white hole 
transition spacetime that we considered is an interesting mathematical 
challenge in itself and we certainly must know more of their properties
before we can proceed. This is even not under full control in an asymptotic 
end of a black hole. However, once this is done, one can study the quantum 
dynamics of interesting measures of evaporation such as the quantum 
area of the apparent horizon for which we have given a perturbative 
formula in the present paper and which can be quantised by 
the tools provided in \cite{FockQG}.
              
The methodology of the present manuscript can be readily applied also to 
cosmology or rotating black holes. But even for spherically symmetric 
black holes there is a huge amount of work to be done whose steps 
we described rather concretely in the present work.

\begin{appendix}

\section{Reduced phase space and gauge fixing of constraints with 
spatial derivatives}
\label{sa}

The reduced phase space of constrained system is conveniently obtained 
by solving the constraints for suitable momenta, imposing gauge fixing
conditions on the conjugate configuration degrees of freedom and 
determining the values of the smearing functions of the constraints 
from the corresponding stability conditions of the gauge imposed. 
The physical Hamiltonian is then determined by those three solutions 
of the constraints, gauge fixing conditions and stability conditions.

In this section 
we review that there is an important difference between A. constraints 
depending 
purely algebraically on the canonical fields and B. constraints involving 
spatial derivatives of these. Namely in case A. the reduced phase space is 
smaller by at least one canonical pair than in case B. This section serves 
as a preparation to understand in later sections 
why even in spherically symmetric 
vacuum GR there are two rather than one Dirac observable and why 
this is not in conflict with Birkhoff's theorem.\\
\\
To understand this in non-technical terms, consider a 1+1 dimensional 
field theory on $\mathbb{R}^2$ with canonically conjugate 
fields $(q(x),p(x)),\;
x\in\mathbb{R}$ and the following constraints
\be \label{a.1}
A.\;\;\;C(x)=p(x),\; 
B.\;\;\;C(x)=p'(x)
\ee
where $(.)'=\frac{d}{dx}$. 

For a field theory it is not sufficient just 
to state that $\{p(x),q(y)\}=\delta(x,y)$, a complete characterisation 
of the phase space must also specify the space of functions to which 
$q,p$ belong which among other things involves their decay behaviour at 
spatial infinity. One of the conditions is that the symplectic 
structure 
\be \label{1.1a}
\Omega=\int_{-\infty}^\infty\; [\delta p](x)\wedge [\delta q](x)
\ee
converges where $\delta$ is the functional exterior derivative. For instance 
we could impose that both $\delta q, \delta p$ decay as $1/x$ which allows 
both $q,p$ to asymptote to fixed values $q_\pm,p_\pm$ at $x=\pm \infty$
which are not variable on the phase space. If on the other hand $p_\pm$ 
is considered a variable on the phase space and $p=p_\pm+O(1/x)$ then 
$q_\pm$ must not be a variable on the phase space and we need the 
integral of $\delta q$ to converge which either requires a stronger fall
off condition on $\delta q$, say as $1/x^2$ or an asymptotic 
parity condition, e.g. that $\delta q= 
\delta c/x$ in leading order where $c$ is another variable on the phase 
space. The leading order then vanishes when defined as a principal value 
integral. This specification of the decay behaviour has also consequences 
for the treatment of the constraints.       

We smear the constraints with test functions $f$ which are treated as
constants on the phase space ($\delta f(x)\equiv 0$), that is, we consider 
$C(f):=\int_\mathbb{R}\; dx \; f(x)\; C(x)$. The exterior derivative 
of $C(f)$ enters the computation of the Poisson brackets
\be \label{a.1b}
A.\;\;\;[\delta C(f)]=\int\; dx\; f(x)\; [\delta p](x),\;\;\;
B.\;\;\;[\delta C(f)]=-\int\; dx\; f'(x)\; [\delta p](x)-[\delta B(f)],\;\;\;
B[f]:=-[f(x) p(x)]_{x=-\infty}^\infty
\ee
For model A. the functional $C(f)$ is functionally differentiable without 
any condition on $f$ while for model B. it is functionally differentiable 
if and only if $f_+ \delta p_+ - f_- \delta p_-=0$ where 
$f_\pm=f(\pm \infty)$. This is automatically the case if $p_\pm$ do not 
vary on the phase space (in which case we can drop $B(f)$ altogether) 
or if e.g. $f$ decays at both infinities. We can however
define for both models 
\be \label{a.1c}
A.\;\;\; H(f):=C(f),\;\;\;
B.\;\;\; H(f):=C(f)+B(f)
\ee
which are functionally differentiable with no condition on $f$ in both 
cases no matter what the decay behaviour of $p$ is. 
In contrast to model A. in model B. the functional $H(f)$ is different 
from the functional $C(f)$ unless $B(f)=0$. We call canonical 
transformations generated by $H(f)$, with $f$ such that 
$B(f)=0$, {\it gauge transformations} because $C$ is the constraint
and not $H$. We call canonical 
transformations generated by $H(f)$ with $f$ such that $B(f)\not=0$ 
{\it symmetry transformations}. For model A. there is no difference between 
the two because $B(f)\equiv 0$. 

For both models the unconstrained phase space is infinite dimensional. The 
constraint surface is the kernel of the constraint $C(x)=0$ for all $x$.
The reduced phase space is the constraint surface with points identified 
which lie on the same gauge orbit. In the present case, the reduced phase 
space is very simple to compute. In both models $p$ is gauge invariant 
because $C(f)$ depends only on $p$. 

In model A. any $f$ corresponds to a 
gauge transformation and the gauge transformation of q is 
\be \label{a.1d}
[\Delta_f q](x)=\{C(f),q(x)\}=f(x)
\ee
We are allowed to pick $f$ from the same function space that $q$ belongs 
to and see that $q(x)$ can be gauged to any value, say $q_\ast(x)=0$ for 
all $x$.
As the constraint requires $p(x)=0$ for all $x$ we see that the reduced 
phase space is the single point $p\equiv q\equiv 0$. The gauge 
$q_\ast=0$ is also complete, i.e. there are no non-trivial 
stability transformations that preserve $q=0$ as $\{H(f),q(x)\}=f(x)=0$
imposes $f(x)=f_\ast(x)=0$ for all $x$. 

In model B. we have to be more careful. The constraint $C(x)=0$ now only 
imposes that $p(x)=M$ is a spatial integration 
constant but it is allowed to be considered a 
variable on the phase space. Thus we have in particular $p_\pm =M$. Thus 
$B(f)=-M(f_+- f_-)$ on the constraint surface. Thus the weakest condition 
we can impose for $f$ to define a gauge transformation is that $f_+=f_-$. 
For $f_+\not=f_-$ we obtain a symmetry transformation. Then a possible
condition on the decay behaviour of $q$ is that $\delta q$ decays as 
$[\delta c]/x$ in leading order so that we have 
odd parity conditions at infinity. The 
gauge transformation of $q$ is  
\be \label{a.1e}
[\Delta_f q](x)=\{H(f),q(x)\}=-f'(x)
\ee
To gauge a given $q(x)$ to zero we must solve $\Delta_f q=-q$ i.e.
$f'=q$ which is solved by 
\be \label{a.1f}
f_q(x)=f_- + \int_{-\infty}^x\; dy\; q(y)
\ee
However, unless 
\be \label{a.1g1}
Q:=\int_{-\infty}^\infty\; dy\; q(x)
\ee
equals zero, the function (\ref{a.1f}) does not correspond to a gauge 
transformation because $f_+\not= f_-$. 
It follows that $q(x)$ cannot 
be gauged to zero for all $x$. Indeed for a gauge transformation we 
have 
\be \label{a.1g}
[\Delta_f Q]=\{H(f),Q\}=-\; \int_{-\infty}^\infty\; dy f'(y)=-[f_+ - f_-]=0
\ee
whence $Q$ is gauge invariant, i.e. a Dirac observable. We require 
the function space space of $q$ to be such that $Q$ is well defined. 
Pick $\infty>L$. We can 
then set for $x\in (-\infty,L]$ 
\be \label{a.1h}
f_q(x)=f_- + Q - \int_x^\infty\; dy\; q(y)
\ee
and interpolate smoothly between $f_q(L)$ and $f_-$ in $x\in (L,\infty)$. 
As this can be done for any $L<\infty$ we see that $q(x)$ for $x<\infty$ 
is pure gauge. Note that $q(x)=0$ for $x<\infty$ does not fix the gauge 
completely because we have $f_+=f_-$ possibly non vanishing. Equivalently,
let $w(x)$ be a fixed weight function belonging to the function space of $q$ 
with $\int_\mathbb{R}\; dx \;w(x)=1$. Then we can gauge $q(x)$ to 
$q_\ast(x)=Q\; w(x)$ for 
all $x$ because the required gauge transformation is now given by 
\be \label{a.1i}
f_q(x)=f_- + \int_{-\infty}^x\; dy\; [q(y)-Q \; w(y)]
\ee
which satisfies $f_q(\infty)=f_-=f_q(-\infty)$. Again the constant 
$f_-$ is unspecified corresponding to a residual gauge freedom which 
is also clear from the stability condition $\delta_f (q-Q w)=-f'=0$ i.e.
the solution (\ref{a.1i}) is only unique up to adding a constant 
$f_q\to f_q+c$ which corresponds to shifting $f_-$.

On the other hand for a symmetry transformation we 
have 
\be \label{a.1j}
[\Delta_f Q]=\{H(f),Q\}=-\; \int_{-\infty}^\infty\;dy f'(y)=-[f_+ - f_-]\not=0
\ee
Since $\{p(x),Q\}=1$ for any $x$ and as we have $p(x)=M$ on the 
constraint surface
it follows that (\ref{a.1j}) has the canonical generator 
\be \label{a.1k}
H=-(f_+ - f_-)M=:\kappa\; M=B(f)
\ee
called the physical Hamiltonian. The more systematic way to obtain $H$ 
is the condition that for any function $F=F(M,Q)$ we require 
\be \label{a.1l}
\{H,F\}=\{H(f),F\}_{f=f_\ast,C=0,q=q_\ast}=-(f^\ast_+ - f^\ast_-)\;
\frac{\partial F}{\partial Q}
\ee
where $f=f^\ast$ is the general solution of the stability condition 
\be \label{a.1m}
\{H(f),[q-q_\ast](x)\}=\{H(f),q(x)\}-\{H(f),Q\}\;w(x)
=-f'(x)+[f_+ - f_-]\;w(x)=0
\ee
which has the solution 
\be \label{a.1n}
f^\ast(x)=f_- +(f_+ - f_-)\int_{-\infty}^x\; dy\; w(y)
\ee
depending on two free parameters $f_+,f_-$. Thus the general solution 
of the stability condition is a symmetry transformation when 
$f_+\not= f_-$ and it is only sensitive on the single parameter 
$\kappa=-(f_+ - f_-)$. Thus the systematic analysis reproduces 
(\ref{a.1k}).\\
\\
To summarise, the innocent looking spatial derivative of canonical 
variables in constraints has drastic 
consequences on the reduction of the system: the reduced phase space 
is augmented by canonical pairs and there is a residual 
transformation freedom 
even after all possible gauge freedom has been exploited and the 
gauge has been maximally fixed.
This residual transformation freedom parameter finds its way into the  
physical Hamiltonian on which it depends linearly. That parameter therefore
can be interpreted as the clicking rate of the clock that measures time.

\section{Consequences for Black Hole physics}
\label{sb}

For spherically symmetric spacetimes, the existence of a {\it pair}
of Dirac observables rather than just the black hole mass has been 
discovered, to the best knowledge of the author, for the first time in 
\cite{Kastrup} which provides a complete quantum theory of  
spherically symmetric black holes in terms of complex connection variables.
In \cite{KucharSS} a similar analysis is performed in terms of ADM variables.
In view of the previous section and as we will review below, the 
origin of the second Dirac observable 
is due to the fact that the constraints of spherically symmetric gravity 
involve momentum derivatives. This means that the constraints cannot 
be used to completely gauge away all configuration degrees of freedom and 
that there is an integration constant when solving the constraints for 
the momenta. These degrees of freedom are the analogs of $M,Q$ above.

How can this be reconciled with Birkhoff's theorem which states that every
spherically symmetric vacuum solution is static and completaely characterised 
by a {\it single} degree of freedom, namely the mass of the black hole      
\cite{HE}? As we will show, the freedom corresponding to 
$Q$ can be considered as associated with a 1-parameter family of
temporal diffeomorphism corresponding to the choice of the time coordinate 
which is supposed to coincide with the Cartesian time coordinate at 
spatial infinity. If one considers this freedom as gauge degree of 
freedom as it is customary in the Lagrangian framework, then indeed 
one can discard of $Q$. On the other hand, in the Hamiltonian 
framework one is instructed not to consider that freedom as gauge.

We repeat here from the main text why this is significant:
Suppose there would be no variable conjugate to the mass at all. 
Then, because the mass has 
vanishing Poisson brackets with the matter 
and the 
gravitational multiplole degrees of freedom, it would have vanishing Poisson 
brackets with the reduced Hamiltonian, which is a constant of motion, 
and thus would be a constant of motion 
itself. It would be present also after all Hawking processes have ceased and 
thus could also be called {\it remnant} mass. If on the other hand we respect 
the existence of that Dirac observable conjugate to the mass, then 
the gauge fixing condition must be consistent with its existence. This
can be granted, for instance, when the gauge fixing condition keeps a one 
parameter freedom 
which is able to capture the existence of the conjugate variable, let us 
call it $Q$. 
can also find its way into the reduced Hamiltonian and now both $Q,M$
can change in time and in particular and we can have 
{\it backreaction} from the multipole and matter degrees of freedom to $Q,M$.

We will now give a self-contained exposition of how the time variable comes 
into existence. The mechanism at work is a 1-parameter family of purely 
temporal 
diffeomorphisms that we use to pull-back the Schwarzschild metric away from
the
GPG. This leaves the radial coordinate intact but changes lapse, shift and 
the radial-radial component of the spatial metric. For instance,
we can choose that 
1-parameter family such that the pulled back metric deviates from the 
exact GPG only in an arbitrarily small neighbourhood of the core $r=0$, 
say in a Planck length neighbourhood which is the region of spacetime in 
which we do not trust classical GR anyway, in fact from a strictly classical 
point of view the point of view, $r=0$ or a neighbourhood of it should be 
cut out 
from the physical spacetime. The parameter $Q$
is directly 
determined by that temporal diffeomoprphism. If one considers that 
diffeomorphism as a {\it gauge} transformation then $Q$
would be 
considered as a gauge degree of freedom. If one considers that diffeomorphism 
as a {\it symmetry} transformation then $Q$ would 
be considered as a Dirac observable. It is the first point of view that is 
taken in Birkhoff's theorem coming from a Lagrangean point of view,
thereby explaining why one only has the mass 
parameter as an observable. It is the second point of view which is taken 
coming from the Hamiltonian point of view. 

We thus follow the Hamiltonian
path in order to keep the possibility open that also the remnant mass can 
change dynamically. In the main text 
and appendix \ref{se} we show that one can also have $Q$ 
existent without that it leaves a trace in the physical Hamiltonian 
by exploiting that the expression that defines $Q$ requires regularisation
which introduces the required one parameter freedom without implementing 
it into the gauge fixing condition. Still evaporation is not excluded because 
$2M$ is not the event horizon or apparent horizon when there is radiation
present.

\subsection{The reduced phase space of spherically symmetric 
vacuum GR}
\label{sb.1}
     
Proceeding to the details, following the notation of section \ref{s4}
in the purely spherically symmetric sector we have the following 
two constraints prior to any gauge fixing of $2\;q_0=\Omega^{AB} q_{AB},
q_3=q_{33}$ with conjugate momenta 
$p^0=P^{AB}\Omega_{AB}/\omega,\;p^3=P^{33}/\omega$ 
\be \label{b.1}
v_3:=V_3/\omega=p^0\; q_0'+p^3\; q_3'-2(q_3 p^3)',\;
v_0:=V_0\sqrt{\det(q)}/\omega^2=
\frac{1}{2}(q_3 p^3)^2-(q_0 p^0)(q_3 p^3)-\det(q) R/\omega^2
\ee
By introducing $p^0=(p^3 q_3'+2[p^3]'q_3)/q_0'$ into $V_0$ relying 
on $q_0'>0$ as $\sqrt{q_0}$ should be the radial coordinate up to a 
radial diffeomorphism, we have 
\be \label{b.2}
v_0=-\frac{q_3 \;q_0^{3/2}}{q_0'}\;[\frac{q_3 [p^3]^2}{2\;q_0^{1/2}}]'-
\det(q) R/\omega^2
\ee
By working out the Ricci scalar explicitly for the non-vanishing components 
$q_{33}, q_{AB}=q_0/2\;\Omega_{AB}$ one finds after a longer calculation
\be \label{b.3}
v_0= 
-4\;\frac{q_3 \;q_0^{3/2}}{q_0'}\;\{\frac{q_3 [p^3]^2}{4\; q_0^{1/2}}
+q_0^{1/2}[1-(\frac{\sqrt{q_0}'}{\sqrt{q_3}})^2]\}'
\ee
In the exact GPG $q_0=r^2, q_3=1, p^3=2\sqrt{2Mr}$ we see that the curly 
bracket 
is just twice the black hole mass $M$. 
Without imposing any gauge, let us call this 
quantity $m$. Then a non-trivial calculation confirms that we can pass 
to new canonical coordinates 
\be \label{b.4}
\gamma=\sqrt{q_{33}},\;p_\gamma=-2\gamma p^3,\;
\delta=\sqrt{q_0},\;p_\delta=2\delta\; p^0
\ee
and after that 
\be \label{b.5}
m=\frac{p_\gamma^2}{16\delta}+\delta(1-[\frac{\delta'}{\gamma}]^2),\;
p_m=\frac{\gamma\;p_\gamma}{2\delta\Phi},\;\Phi=1-\frac{m}{\delta},\;
X=\delta,\;
p_X=p_\delta-(\gamma\; p_\gamma'+m' p_m)/\delta'
\ee
which enables us to express the constraints in the equivalent form
\be \label{b.6}
\tilde{v}_3=p_m\; m' + p_X\; X',\; \tilde{v}_0=m'
\ee
which simply enforce $p_X=0,\; m'=0$. 

The transformation (\ref{b.4}) is easy but (\ref{b.5}) is non-trivial to check,
see \cite{Kastrup, KucharSS}.
A short-cut is as follows:\\
The spatial diffeomorphism constraint $v_3=\delta'\;p_\delta-\gamma\;p_\gamma'$
identifies 
$\delta,p_\gamma$ radial scalars and  
$p_\delta,\gamma$ as radial scalars of density weight one. Thus 
$p_\delta/\delta',\gamma/\delta'$ as radial scalars. For any radial scalar
$F$ the function $\hat{F}(s)=F(r=\delta^{-1}(s))$ is spatially diffeomorphism
invariant for any value of $s$ relying on $\delta:\mathbb{R}\to 
\mathbb{R}; r\mapsto \delta(r)=s$ to be a diffeomorphism (if one 
wants to consider one asymptotic end only, one just has to resrict
$r$ to positive or negative values respectively). Explicitly
one has 
\be \label{b.6a}  
\hat{F}(s)=\int_{-\infty}^\infty\; dr\; \delta'(r)\; \Delta(\delta(r),s)\;
F(r)
\ee
where $\Delta$ is the $\delta$ distribution.
This  makes it possible to compute the Poisson brackets among 
$\hat{\gamma}(s),\hat{P}_\gamma(s)$ and with $v_3$. That computation 
shows that $\{\hat{P}_\gamma(s),\hat{\gamma}(\tilde{s})\}=
\delta(s,\tilde{s})$ and that they have vanishing Poisson brackets with 
$v_3$. The function $\hat{m}(s)$ can now be epressed just in terms of 
these
\be \label{b.6b}
\hat{m}(s)=\frac{\hat{P}_\gamma^2(s)}{16s}+s\;[1-\hat{\gamma}^{-2}(s)]
\ee
and the Hamiltonian constraint becomes $\hat{m}'(s)=0$.
As (\ref{b.6b}) has no derivatives with respect to $s$ we have 
$\{\hat{m}(s),\hat{m}(\tilde{s})\}=0$. To construct a momentum $\hat{P}_m$
conjugate to $\hat{m}$ we solve (\ref{b.6b}) for 
$\hat{P}_\gamma(s)=4\;s\;\sigma \sqrt{\frac{\hat{m}(s)}{s}
+\hat{\gamma}^{-2}(s)-1}$ with $\sigma=\pm 1$ and 
plug it into the symplectic potential 
$\hat{\Theta}=-\int\; ds\;\hat{\gamma}\;\delta\hat{P}_\gamma$. Then we take 
the functional exterior derivative 
\be \label{b.6c}
\hat{\Omega}=\delta\hat{\Theta}=
-\int\; ds\;\delta\hat{\gamma}\;\wedge\;\delta\hat{P}_\gamma= 
-\int\; ds\;[\frac{\partial\hat{P}_\gamma}{\partial \hat{m}}\;
\delta\hat{\gamma}]\;\wedge\;\delta{m}
\ee
so that $\delta\hat{p}_m$ at fixed $\hat{m}$ must be 
$-\frac{\partial\hat{P}_\gamma}{\partial \hat{m}}\;\delta\hat{\gamma}$.
This is indeed solved by 
\be \label{b.6d}
\hat{P}_m=2\sigma\frac{\sqrt{1-\hat{\gamma}^2\hat{\Phi}}}{\hat{\Phi}},\;\;
\hat{\Phi}=1-\frac{\hat{m}}{s}
\ee
modulo addition of a function that just depends on $\hat{m}$. 
It is instructive to check by hand that $\hat{P}_m,\hat{m}$ are conjugate.
For this one needs to plug (\ref{b.6b}) into (\ref{b.6d}) which yields
\be \label{b.6e}
\hat{P}_m=\frac{\hat{\gamma}}{2}
\frac{[\hat{P}_\gamma/s]}{\hat{\gamma}^{-2}-
\frac{1}{16}[\hat{P}_\gamma/s]^2}
\ee
~\\ 
The two constraints $p_X(r)=0$ and $m'(r)=0$ respectively 
bring us exactly into the situation of models A and B of the previous
section. 
In terms of these canonical coordinates and setting $p=m,q=-p_m$
we have the symplectic structure 
\be \label{b.8}
\Theta=\int\; dx\; 
[p_{X}(x)\delta X(x)+p(x)
\delta q(x)]
\ee
Note that our aim is to solve the original constraints for $p^0, p^3$ which 
are encoded in $p_X, m$ and not $p_X, p_m$ which is why we switched the 
roles of momentum and configuration coordinates with respect to $m$.

By the results of the previous section, the coordinatisation of the 
reduced phase of the equivalent set of constraints 
$p_X(x)=0,\; p'(x)=0$ 
is now very transparent: 
$X$ is pure gauge 
while 
$q$ must carry one degree of freedom $Q$. Also 
$p_X(x)=0$
fixes 
$p_X$ 
completely while 
$p'(x)=0$
retains one parameter $P$ (integration constant)
as unconstrained. The reduced phase space of the system is thus 
2-dimensional encoded by $Q,P$ and the reduced Hamiltonian is $H=\kappa P$
up to a constant $\kappa$.
Thus up to a constant $\kappa$,
the physical Hamiltonian is the ADM mass or energy as one would have expected.
The conjugate variable $Q$ thus plays the role of an intrinsic time variable
whose ticking rate is given by $\kappa$ which can be chosen to be any value 
(reparametrisation of the coordinate time).

\subsection{Gauge Fixings consistent with the existence of Q}
\label{sb.2}

The task left over to do is to pick a suitable set of gauge fixings 
which give rise to $Q$ i.e. which yields 
\be \label{b.9}
Q=\int_{-\infty}^\infty\; dr\; q(r)
=-\int_{-\infty}^\infty\; dr\; p_m(r)
\ee
We have the freedom to subtract from $Q$ an arbitrary function 
of $P$ because that will change the reduced symplectic potential 
$\Theta=P\delta Q$ only by a total differential, hence we require 
\be \label{b.9b}
Q=-\int_0^\infty\; dr\; [p_m(r)-f_P(r)]
\ee
Without specifying $q_{33}=\gamma^2, q_0=\delta^2$ for the moment, we 
can solve $m'(r)=0$ as $m(r)=P$ and solve for $p_m$ in terms of 
$\gamma, \delta, P$ which yields  
\be \label{b.10}
p_m=\frac{\gamma p_\gamma}{2\delta\Phi}
=\frac{2\gamma}{\Phi}\;\frac{p_\gamma}{4\delta}
=\frac{2\gamma\sigma}{\Phi}\;\sqrt{[\frac{\delta'}{\gamma}]^2-\Phi},\;\;
\Phi=1-\frac{P}{|\delta|}
\ee
where $\sigma$ is the sign freedom left over from solving 
the quadratic equation $m=P$ for $P_\gamma$ and we note that when solving 
that equation one finds that the argument of the square root in (\ref{b.10})
is constrained to be non-negative.

We note that in the exact GPG we would choose $\delta=r,\; \gamma=1$ which 
would give $[\frac{\delta'}{\gamma}]^2-\Phi=\frac{P}{|r|}\ge 0$ constraining
$P$ to be positive. In order to give $Q$ an independent value
that gauge must be relaxed by a one parameter family of gauges.
Among the many possible choices, we will discuss here two simple
possibilities.\\
\\
{\bf Range restriction}\\
This is motivated by section \ref{sd.3} where we construct a non-singular
spacetime by gluing a black hole and a white hole spacetime
along a cylinder $|r|=l<R=2M$. We then have for $q=-p_m,\;f_P=0$
\be \label{b.10a}
Q=\int_{|r|\ge l} \; dr\;q(r)
=\int_{-l}^l\; dr\; q(r)+\int_\mathbb{R}\; dr\; q(r)
\ee
The second integral in (\ref{b.10a}) is ill-defined as it stands
and must be defined by a limiting procedure (principal value). 
(Alternatively we can simply drop it because it just depends on $M$ 
thus drops out of the symplectic structure).
The
calculation is carried out in section \ref{sd.6}. It can be given
any value and thus can be used to define $Q$ for $l\equiv 0$, see
that section for details. Here we use that regularisation
freedom to make the second integral vanish. The first integral
then gives
\be \label{b.10b}
Q=-4\sigma\int_0^l\; dr\; \frac{\sqrt{R/r}}{1-R/r}
=-4\sigma\; R\;
[2\sqrt{\frac{l}{R}}-\ln
(\frac{1+\sqrt{\frac{l}{R}}}{1-\sqrt{\frac{l}{R}}})]
\ee
Since the reduced Hamiltonian is $H=\kappa\;M, \kappa>0$ and $M,Q$
are conjugate this gives the equations of motion
$\dot{Q}=\kappa>0,\; \dot{M}=0$, hence the variable $Q$ is eventually positive
which means that $\sigma.0$ so that for large times $\tau$ and
with $c=\kappa/4$
\be \label{b.10c}
\frac{c\tau}{R}=
-2\sqrt{\frac{l}{R}}+\ln
(\frac{1+\sqrt{\frac{l}{R}}}{1-\sqrt{\frac{l}{R}}})
\ee
Since this diverges with $\tau$, $l(\tau)\to R-$. Thus for large
$\tau$
\be \label{b.10d}
l=R\; {\sf th}^2(\frac{c\tau}{R})
\ee
i.e. the solution approaches exponentially fast the {\it Einstein
-- Rosen bridge solution} $|r|\ge R$ with two asymptotic ends. Thus
for late times the scond parameter $l$ gets frozen to $R$ and 
the metric depends only on a single parameter $R$. However, in this 
implementation, the interior of the black hole is cut out from the 
spacetime. We will therefore not consider this possibility further
in what follows.\\
\\
{\bf Local deviation from exact GPG}\\
We generalise the gauge to $\delta=r, 
\gamma=\sqrt{1+\Delta}$
where $\Delta$ will be further specified below and which will be non-vanishing
only in compact subsets of $\mathbb{R}^+$ and carries the information 
about $Q$. In particular $\Delta>-1$ in order that the 
metric stays non-degenerate. As $P$ is constrained to be a 
constant, this still imposes $P>0$. The motivation for that 
particular generalisation within the Hamiltonian analysis is that the 
spatial diffeomorphism constraint generates radial reparametrisations and 
therefore we can always choose $\delta=r$.
On the other hand the relation between temporal spacetime 
diffeomorphisms and the gauge transformations generated by the Hamiltonian 
constraint is more tricky: These two notions only coincide {\it on shell} 
i.e. when the equations of motion (e.g. vacuum Einstein equations) hold.
We will not violate those equations at all, we simply pull back the exact
GPG form of the Schwarzschild metric by a temporal diffeomorphism which 
encodes $\Delta$ and that pulled back metric then still solves the Einstein
equations. In particular we will will not at all contradict Birkhoff's theorem
because in contrast to the Hamiltonian picture, 
in the Lagrangian picture one considers {\it all} diffeomorphisms as a 
gauge transformation. 

In that parametrised GPG 
(PGPG -- by $Q$) the spatial line element reads 
\be \label{b.11}
q_{33}\;dr^2+q_0\; \Omega_{AB}\; dy^A\; dy^B= 
\gamma^2\; dr^2+r^2\; \Omega_{AB}\; dy^A\; dy^B
\ee
Then (\ref{b.10}) becomes 
\be \label{b.12}
p_m
=\frac{2\sigma}{\Phi}\;\sqrt{1-\gamma^2\Phi},\;\;
\ee
In order that the square root be well defined it is sufficient to require that 
$\gamma^2\Phi\le 1$ for $|r|>P$. This will be in particular the case for 
$\gamma^2\le 1$ for all $r$. We will choose $\Delta\not=0$ for some subset 
of $[-P,P]$ which thus satisfies this requirement. Accordingly the integral 
of $p_m$ approaches for large $|r|$ the function $|r|^{-1/2}$ and thus would 
diverge. We thus use the freedom to add to $Q$ a function that depends just 
on $P$ which does not change the reduced symplectic structure and makes 
the integral defining $Q$ converge. The natural choice is to subtract from
(\ref{b.12}) its value in the exact GPG (i.e. for $\gamma=1$), 
i.e. we define    
\be \label{b.13}
Q
=-2\sigma\int_{-\infty}^\infty\; 
\frac{\sqrt{1-\gamma^2\Phi}-\sqrt{1-\Phi}}{\Phi} 
=4\sigma\int_0^\infty\; 
\frac{\Delta}{\sqrt{1-[1+\Delta]\Phi}+\sqrt{1-\Phi}}
\ee
where we focus on even functions $\Delta(r)=\Delta(-r)$ so that 
we can restrict to the positive axis.
The integral is now confined to the support of $\Delta$ which will lie 
in the region where $\Phi<0$. We will chose $\Delta\ge 0$ there so that 
we obtain the interpretation
\be \label{b.14}
\sigma={\rm sgn}(Q)
\ee
Let now $w\ge 0$ be a function of compact support in $[0,P]$ such that 
$\int_0^\infty\; dr\;w =1$ then we pick $\Delta$ such that 
\be \label{b.15}
\frac{w\;|Q|}{2}=\frac{\Delta}{\sqrt{1-[1+\Delta]\Phi}+\sqrt{1-\Phi}}
\ee
This can be solved for $\Delta$ and yields either $\Delta=w=0$ or 
\be \label{b.16}
\Delta=|Q|\;w\;[\sqrt{1-\Phi}-|Q|\;w\;\Phi/4]
\ee
which is manifestly non negative as $\Phi<0$ in the support of $w$. 
If we want the support of $w$ to be independent of the value of $P$ we
can restrict it to the interval $[0,\ell],\; l=\epsilon \ell_P]$ 
with $\epsilon\le 1$ 
because a Planck size black hole mass is believed to be outside of regime 
of classical GR and within classical GR it is well motivated to cut out 
the region $r\le \ell_P$ from the physical manifold. Alternatively we 
may pick $\ell={\sf min}(\epsilon\ell_P,P/2)$. Then with 
$\chi_{[0,\ell]}$ the characteristic function of that interval
we make the Ansatz 
\be \label{b.17}
w(r)=\chi_{[0,\ell]} f(r),\; f\ge 0 
\ee
so that 
\be \label{b.18}
\Delta=|Q|\;\chi_{[0,\ell]}\;
[\frac{f}{r^{1/2}}\;\sqrt{P}+|Q|\;\frac{f^2}{r}\;(P-r)/4]
\ee
This is regular at $r=0$ e.g. for the choice $f(r)=h\sqrt{r}$ for some 
height amplitude $h$ and yields  
\be \label{b.18a}
\Delta=|Q|\;\chi_{[0,\ell]}\;
[h\;\sqrt{P}+|Q|\;h^2\;(P-r)/4]
\ee
The height $h$ is fixed by the requirement
\be \label{b.19}
\int_0^\infty\; dr\; w=h\; \int_0^{\ell}\; dr\;
\sqrt{r}=\frac{2h}{3} [\ell]^{3/2}=1
\ee
so that we finally obtain 
\be \label{b.19b}
\Delta=\frac{|Q|}{\ell}\;\chi_{[0,\ell]}\;
[\frac{3}{2}\;\sqrt{\frac{P}{\ell}}+\frac{9}{16}\;
\frac{|Q|}{\ell}\;\;\frac{P-r}{\ell}]
\ee
in particular
\be \label{b.20}
\Delta(0)=
\frac{3}{2}\;z^{1/2}+\frac{9}{16}\;z,\;\;z=\frac{P \; Q^2}{[\ell]^3}
\ee
which is also the maximal value that $\Delta$ can take for this particular 
gauge. Note that the range of $Q$ is all of $\mathbb{R}$ which is compatible 
with the equation of motion $\dot{Q}=\kappa$ that follows from the reduced 
Hamiltonian $H=\kappa P$. The non-differentiable step function can be mollified 
to obtain a smooth function which would yield qualitatively similar 
formulae.\\
\\
This proves that a suitable gauge for $\gamma$ exists that produces a 
given value of $Q$ independent of the value of $P$ and that
deviates from the exact GPG only very {\it locally}, i.e. 
$\gamma^2=1$ except for a 
neighbourhood of zero of 
at most of Planck size behind the event horizon.

\subsection{Relation between existence of Q and temporal diffeomorphisms}
\label{sb.3}

To relate a deviation from the strict GPG 
 to a spacetime diffeomorphism we write the Schwarzschild metric in exact GPG 
\be \label{4.a1}
ds^2=-[1-\frac{2M}{|r|}]\;d\tau^2+2\sqrt{\frac{2M}{|r|}}\;dr\;dt\; + dr^2 +
r^2\; \Omega_{AB}\; dy^A\; dy^B
\ee
and pull it back by a temporal diffeomorphism
\be \label{4.a2}
r=\rho(\tilde{\tau},\tilde{r}):=\tilde{r},\; \tau=T(\tilde{\tau},\tilde{r})
\ee
and rewriting (\ref{4.a1}) in terms of the coordinate $\tilde{\tau}$. That 
reparametrised metric still solves the Einstein equations, no matter 
what the function $T$ is, as long as $\partial_{\tilde{t}} T>0$, since 
then we have 
just carried out a diffeomorphism (\ref{4.a2}) as 
$\det(\partial(\tau,r)/\partial(\tilde{\tau},\tilde{r}))=
\partial_{\tilde{t}}T(\tilde{\tau},\tilde{r})$. That the 
pulled back metric still solves the Einstein equations 
is obvious from its tensorial character but can of course also 
be verified by hand.

The spatial 
part of the metric pulled back by this diffeomorphism starting from 
GPG $q_3=:=\gamma^2=1,q_0=r^2$ becomes 
\be \label{4.a3}
\tilde{q}_3(\tilde{\tau},\tilde{r})=
\tilde{\gamma}^2(\tilde{\tau}, \tilde{r})=1
-\Phi(\tilde{r})\;[\partial_{\tilde{r}}\;T(\tilde{\tau},\tilde{r})]^2
+2\sqrt{1-\Phi(\tilde{r})}\;[\partial_{\tilde{r}}\;T(\tilde{\tau},\tilde{r})],
\tilde{q}_A=0, \;\tilde{q}_0(\tilde{t},\tilde{r})=\;\tilde{r}^2   
\ee
where $\Phi(r)=1-2M/|r|$. It is therefore still flat in regions where 
$\partial_{\tilde{r}}\;T(\tilde{t},\tilde{r})=0$. If we compare 
(\ref{4.a3}) with $q_3=:\gamma^2=:1+\Delta, m:=2M$ from the previous 
subsection we obtain the relation 
\be \label{4.a4}
\frac{|Q|w}{2}=\partial_{\tilde{r}}\;T(\tilde{t},\tilde{r}) 
\ee
where $w$ has compact support in an at most Planck size neighbourhood of 
the origin
and $Q$ is the afore mentioned second Dirac observable which can in principle 
be an arbitary function of coordinate time $\tilde{t}$ and which is 
canonically conjugate to the mass $m$. As the physical Hamiltonian is just 
$m$ up to a constant, $Q$ is actually linear in $\tilde{\tau}$ on shell. 

The GPG lapse $\alpha=1$ and shift $\beta=\sqrt{1-\Phi}$ become upon 
pull-back
\be \label{4.a4b}
\tilde{\alpha}^2-\tilde{\beta}^2=
\Phi[\partial_{\tilde{\tau}}\;T(\tilde{\tau},\tilde{r})]^2,
\;\;\tilde{\beta}\tilde{\gamma}=[\sqrt{1-\Phi}\;
-\Phi\partial_{\tilde{r}}\;T(\tilde{t},\tilde{r})]\; 
\partial_{\tilde{\tau}}\;T(\tilde{\tau},\tilde{r})
\ee
This can be combined with (\ref{4.a3}) to 
\be \label{4.a4c}
(\tilde{\alpha}\tilde{\gamma})^2=
(\partial_{\tilde{\tau}}\;T(\tilde{\tau},\tilde{r}))^2
\ee
Here we have employed a general parametrisation of spherically symmetric 
spacetimes in 
coordinates $\tau,r$ given by
$g_{\tau\tau}=-\alpha^2+\beta^2,\; 
g_{\tau r}=\beta\gamma,\; g_{rr}=\gamma^2,\;
g_{AB}=\delta^2\; \Omega_{AB}$ and in the {\it radial gauge} $\delta=r$
chosen here the Einstein equations are equivalent to
(we drop the tilde again)
\be \label{4.a5}
1-\gamma^{-2}+\frac{\beta^2}{\gamma^2\alpha^2}=1-\frac{2M}{r},\;\;
\partial_t M=\partial_r M=0,\;\;
\partial_t \gamma=\frac{\beta}{\gamma\alpha}\;\partial_r[\gamma\alpha]
\ee
which can be combined and integrated to parametrise the metric as a function
of $\gamma$ 
\be \label{4.a6}
\beta=\sigma\alpha\sqrt{1-\gamma^2\Phi},\;
(\gamma\alpha)(\tau,r)=[\partial_\tau\;\hat{T}](\tau)-\sigma\;
\partial_\tau
[\int_0^r\;\;ds\;[\frac{\sqrt{1-\gamma^2\Phi}-\sqrt{1-\Phi}}{\Phi}](\tau,s)]
\ee
where $\hat{T}(\tau)$ is an arbitrary function of time and $\sigma$ a sign
which determines whether we consider the out/ingoing patch ($\sigma=\pm 1$).
If we compare this with the definition of $Q$ in the previous section and with 
(\ref{4.a4c}) we find 
\be \label{4.a7}
(\gamma\alpha)(\tau,r)=[\partial_\tau\; \hat{T}](\tau)+
[\partial_\tau Q]\;\frac{1}{2} [\int_{-\infty}^r\; ds\; w(s)],\;\;
\lim_{r\to\infty} \alpha\gamma=\partial_\tau[\hat{T}(\tau)+Q/2]=\partial_\tau T,
\;\; \beta\gamma\to \sigma\alpha\gamma\sqrt{1-\gamma^2\Phi}
\ee
As $\gamma\to 1$ at infinity, we see that $\alpha,\beta$ approach their 
exact lapse and shift value value in GPG up to a 
pure time
reparametrisation. By the result of the previous section 
$\kappa:=\alpha(\infty)-\alpha(-\infty)=\dot{Q}$ agrees with the 
Hamiltonian equations of motion.

Thus the physical meaning of the 
time function $Q$ has been
worked out: It is canonically conjugate to $m$ and its clicking rate 
at infinity coincides both with the asymptotic lapse value and the 
asymptotic clicking rate of the temporal diffeomorphism. The temporal 
diffeomorphism obeys $T'=|Q| w$ and 
$\dot{T}=\dot{\hat{T}}+\partial_\tau[\int\; ds T']$ and 
this PDE system is solved by 
\be \label{4.a8}
T(\tau,r)=\hat{T}(\tau)+\frac{Q(\tau)}{2}\;\int_{-\infty}^r\; ds w(s)
\ee
This diffeomorphism is generically not an asymptotic identity
even if $\hat{T}(\tau)=\tau$ unless $Q=0$ 
and thus should not be considered as a gauge transformation
but rather a symmetry transformation in agreement with the Hamiltonian
distinction between symmetry and gauge reviewed in the previous section.\\
\\
The choice of a one parameter set of gauge fixings consistent with 
$Q$ given in the previous subsection that deviate from 
the exact GPG only {\it locally} in a neighbourhood of the 
origin $r=0$ of at most Planck size has the advantage 
that it is not observaable from the outside of the black hole. 
On the other hand, it makes the 
analysis of mode functions 
in such a spacetime very hard. In the next section we therefore 
consider another one parameter set of gauge fixings consitent with $Q$ 
which has a {\it non-local} effect on the spacetime metric and which has an 
intuitively quite appealing interpretation in terms of the energy 
of timelike observers called generalised Painlev\'e 
Gullstrand coordinates.

\section{Generalised Gullstrand Painlev\'e Coordinates}
\label{sd} 

We review here the theory of radial timelike geodesics in Schwarzschild 
spacetime with mass $M$ \cite{26}. 
These define a one parameter set $e\mapsto C^\pm_e$ 
of congruences $C^\pm_e$ of free falling observers that fill the spacetime
starting (ending) at timelike infinity and ending (starting) 
at the singularity for the ingoing (outgoing) congruence $C^-_e$ ($C^+_e$)
respectively. 
The parameter $e\ge 1$ has the physical interpretation of the special 
relativistic energy per unit mass at spatial infinity (i.e. 
$e=[1-(v/c)^2]^{-1/2}$ if $v$ is the velocity at spatial infinity).
For each congruence $C^\pm_e$ the geodesics 
fill an asymptotic end and the white (black) hole 
region of the Kruskal extension respectively.
The radial geodesics $c^\pm_{e,\rho,\Omega}\in C^\pm_e$ 
are labelled, besides the angular direction $\Omega$, by a parameter 
$\rho\in \mathbb{R}$ that labels the range of the affine parameter 
$\tau$ along the geodesic where 
$\tau\in (-\infty,\rho)$ for $c^-_{e;\rho,\Omega}$ and  
$\tau\in (\rho,\infty)$ for $c^+_{e;\rho,\Omega}$ and at $\tau=\rho$ 
the geodesic intersects the singularity $r=0$. 

The set of synchronous points 
$\Sigma^\pm_{e,\tau}=\{c^\pm_{e;\rho,\Omega}(\tau);\; \pm (\tau-\rho)\ge 0,\;
\Omega\in S^2\}$ defines the leaf of a foliation of the white (black) hole 
and asymptotic region by spacelike hypersurfaces. However, since they start 
(end) at the singularity, none of them is a Cauchy surface, i.e. 
there exist inextendible causal curves not intersecting them. Thus one cannot
use them for the initial value formulation. One could use a segment of the 
singularity to turn them into Cauchy surfaces but then different leaves of 
the foliation would not be disjoint. One could use two asymptotic ends 
in the {\it same} Kruskal spacetime and join say the two ingoing geodesics 
from the two ends that hit the same point of the singularity in the black 
hole region to form Cauchy surfaces but these intersect and do not form a 
foliation. However, one can join an ingoing geodesic in the part of
a {\it past}
Kruskal spacetime covering the Schwarzschild (SS) and black hole (BH) region
with an outgoing geodesic in a 
{\it different} i.e. {\it future} Kruskal spacetime
covering the mirror Schwarzschild (MSS) and white hole (WH) region 
that hit the same point of the singularity. We may then consider 
the geodesics $c_{e;\rho,\Omega}$ with 
$c_{e;\rho,\Omega}(\tau):=c^\pm_{e;\rho,\Omega}(\tau)$ for 
$\pm(\tau-\rho)\ge 0$ and their synchronous hypersurfaces 
$\Sigma_{e,\tau}=\{c^\pm_{e;\rho,\Omega}(\tau);\; \rho\in \mathbb{R},\;
\Omega\in S^2\}$. The resulting {\it black hole white hole transition} 
spacetime $M$
consisting of the four pieces SS, BH, WH, MSS is then foliated by the 
$\Sigma_{e,\tau}$ which are Cauchy surfaces for $(M,g)$, where 
$g$ is the extension of the Schwarzschild metric just outlined, is 
then globally hyperbolic (but singular) in the usual sense in the 
future and past Kruskal patches. We also consider a non singular wormhole 
regularisation of that spacetime. 
 
Therefore $(M,g)$ is an interesting spacetime to study when analysing 
questions such as black hole white hole transition (BHWHT) and singularity 
resolution in quantum gravity as we explore both 
interior and exterior regions of spactime. Moreover, the free falling observer 
congruences labelled by $e$ define a natural 1-parameter family of 
gauge fixing conditions for black hole spacetimes and 
corresponding preferred generalised Gullstrand Painlev\'e (GGP) coordinates.
A parameter
like $e$ is motivated by the result of the previous appendix 
because the constraints depend on spatial derivatives giving rise to 
two Dirac observables even in vacuum. These will be related to the 
mass $M$ and $e$ as we will see in the course of this appendix which 
are natural observables to consider in spherically symmetric
spacetimes. Finally, 
each foliation $\tau\mapsto \Sigma_{e,\tau}$ makes it possible to use 
the machinery of QFT in CST and to define 1-particle inner products etc. 
to study Hawking radiation etc.  

\subsection{Radial timelike geodesics in spherically symmetric vacuum
spacetimes}
\label{sd.1}

We consider the exterior static region of a spherically symmetric black hole 
with mass $M>0$, usual Schwarzschild coordinates 
$t\in \mathbb{R},r>2M,\Omega=(\theta,\phi)\in S^2$ and
line element
\be \label{d.1}
ds^2=-\Phi \; dt^2+\Phi^{-1}\; dr^2+r^2\; d\Omega^2,\;\;
\Phi=1-\frac{2M}{r}
\ee
which has a timelike 
Killing vector field $\xi=\partial_t=\delta^\mu_t \partial_\mu$.

A radial $\Omega=$const., timelike $g(u,u)<0;\;u=\dot{c}$ geodesic 
$\tau\to c(\tau)$ with affine parameter $\tau$ i.e. $\nabla_u u=0$ 
obeys $\nabla_u g(u,u)=0$ and $\nabla_u (g(u,\xi)=0$ providing two 
constants of motion $K:=g(u,u)$ amd $e:=-g(u,\xi)$. As usual we may 
fix $K=-1$ by rescaling the affine parameter so that 
\be \label{d.2} 
-1=-\Phi(r(\tau))\;\dot{t}(\tau)^2+\Phi^{-1}(r(\tau))\; \dot{r}(\tau)^2,\;
e=\Phi(r(\tau))\; \dot{t(\tau)}
\ee
It follows with $R:=2M$
\be \label{d.3}
\dot{r}^2=e^2-\Phi=e^2-1+\frac{R}{r}
\ee
We are interested in geodesics that extend all the way to spatial 
infinity $r=+\infty$ which requires that 
\be \label{d.4}
e^2\ge 1
\ee
The geodesic labelled by $e$ has an outgoing and ingoing branch corresponding 
to the choice of square root of (\ref{d.3})
\be \label{d.5}
\dot{r}=\pm\; \sqrt{e^2-1+\frac{R}{r}}
\ee
Although the coordinate system is a priori only defined for $r>R$, equation 
(\ref{d.5}) is meaningful for $r\in \mathbb{R}_+$. We note that 
\be \label{d.6}
u_t=g_{t\mu} u^\mu=-\Phi u^t=-\Phi \dot{t}=-e,\;
u_r=g_{r\mu} u^\mu=\Phi^{-1} u^r=\Phi^{-1} \dot{r}=\pm \Phi^{-1}\;
\sqrt{e^2-\Phi}=:\pm\; f'(r),\; u_\theta=u_\phi=0
\ee
which means that $u_\mu=-\nabla_\mu \tau^\pm_e$ where 
\be \label{d.7}
\tau^\pm_e:=e\; t\mp f(r),
f'(r)=\Phi^{-1}(r)\; \sqrt{e^2-\Phi(r)}
\ee
Thus 
$\nabla_{[\mu} u_{\nu]}=0$ so that the geodesics are hypersurface 
orthogonal, forming a foliation by $\tau^\pm_e=$const. hypersurfaces.

The coordinates $(\tau=\tau_e^\pm,r)$ are called out(in)
going generalised Gullstrand 
Painlev\'e
(GGP) coordinates. The line element in terms of them 
is obtained from 
\be \label{d.8}
t(\tau,r)=e^{-1}[\tau \pm f(r)],\;f'(r)=\sqrt{e^2-\Phi}\;\Phi^{-1}
\;\; \Rightarrow \;\;
ds^2=-e^{-2} \Phi\; d\tau^2 \mp 2 \;e^{-2}\; \sqrt{e^2-\Phi}\;d\tau\;dr
+e^{-2}\; dr^2+r^2\;d\Omega^2
\ee
which does not require to solve for $f(r)$ explicitly. The line element
(\ref{d.8}) is no longer static but still stationary. It is
easy to check that the outgoing (ingoing) future oriented (with respect to 
$\tau$) unit timelike 
geodesics with congruence parameter $e'\ge 1$ in the outgoing (ingoing) 
version of the 
line element (\ref{d.8}) have full range $r\in \mathbb{R}^+$
but intersect $r=0$ at finite $\tau$. The future oriented ingoing (outgoing)
geodesics in the outgoing (ingoing) version 
of (\ref{d.8}) on the other hand are confined to $r>2M$.     

The ADM data of 
(\ref{d.8}) are 
\be \label{d.8a}
q_{rr}=e^{-2},\;q_{AB}=r^2\Omega_{AB},\; q_{rA}=0,\;
N^a=\mp \delta^a_r \sqrt{e^2-\Phi},\;
N=1
\ee
which gives the future oriented timelike unit normal to the $\tau=$const. 
leaves $n=N^{-1}(\partial_\tau-N^a\partial_a)=
\partial_\tau \pm \sqrt{e^2-\Phi} \partial_r$. Note that the vector 
field $\partial_\tau$ in these $(\tau,r)$ coordinates 
is a Killing vector field but it is not everywhere 
timelike and nowhere orthogonal to the $\tau=$const. foliation.\\ 
\\  
On the other hand, Gaussian or synchronous coordinates are characterised 
by unit lapse squared and vanishing shift. One obtains them most easily 
from (\ref{d.8}) by computing the function $r=a(\tau,\rho)$ 
\be \label{d.9}
ds^2=e^{-2}\; (-\Phi \mp 2\;\dot{a}\; \sqrt{e^2-\Phi}+\dot{a}^2)\; d\tau^2
+2\;a'\;e^{-2}\; (\dot{a}\mp \sqrt{e^2-1-\Phi})\;d\tau\;d\rho
+[a']^2\; d\rho^2+a^2\;d\Omega^2
\ee
with $\dot{a}=\partial_\tau a,\;a'=\partial_\rho a$. The shift vanishes iff 
\be \label{d.10}
\dot{a}=\pm\sqrt{e^2-\Phi}
\ee
It follows
\be \label{d.11}
ds^2=-d\tau^2+e^{-2}\;(a')^2\; d\rho^2+a^2\;d\Omega^2
\ee
We call the integration constant $\rho=\rho^\pm_e$ 
in (\ref{d.10}) and find 
\be \label{d.12}
\pm(\tau-\rho)=\int\; da\; [e^2-\Phi(a)]^{-1/2}
\ee
which shows that $\dot{a}=-a'$ hence without further calculation
\be \label{d.13}
ds^2=-d\tau^2+e^{-2}[e^2-\Phi(a)]\; d\rho^2+a^2\;d\Omega^2
\ee
with $a$ implicitly determined by (\ref{d.12}). 

To actually determine $a$ we have to treat 
the case $e^2=1$ seperately. The integral is elementary  
\be \label{d.14}
\pm(\tau-\rho)=\frac{2}{3}\; a^{3/2}\; R^{-1/2}
\;\;\Leftrightarrow\;\;
r=a(\tau,\rho)=[\pm\frac{3}{2} \sqrt{R}(\tau-\rho)]^{2/3}
\ee
valid for $\pm(\tau-\rho)>0$. Then (\ref{d.13}) simplifies 
\be \label{d.15b}
ds^2=-d\tau^2+\frac{R}{a}\; d\rho^2+a^2\;d\Omega^2
\ee
For $e^2>1$ we introduce the quantity
\be \label{d.13b}
z:=\sqrt{(e^2-1)\frac{a}{R}}
\ee
then 
\be \label{d.14b}
\pm(\tau-\rho)
=R\;(e^2-1)^{3/2}\;h(z),\; h(z)=[z\sqrt{z^2+1}-\ln(z+\sqrt{z^2+1})],\;
h'(z)=2\frac{z^2}{\sqrt{z^2+1}}
\ee
which determines $r=a(\tau,\rho)$ implicitly. Since $h$ is monotonouly 
increasing and $h(0)=0$, again the range of $\tau,\rho$ is constrained 
by $\pm (\tau-\rho)>0$.

Note that the metric coefficients $g_{\tau\tau},g_{\rho\rho}, 
g_{AB},\; A,B=1,2$ in (\ref{d.13}) only depend on $a$ and thus 
$\tau-\rho$.
Therefore $\xi_e:=\partial_\tau+\partial_\rho$ is a Killing vector field 
$[{\cal L}_{\xi_e} g]_{\mu\nu}=\xi_e^\sigma\; g_{\mu\nu,\sigma}=0$ 
with norm $g(\xi_e,\xi_e)=-1+1-\frac{\Phi}{e^2}=-\frac{\Phi}{e^2}$ which 
is timelike for $r=a>M$. Thus $\xi_e$ must coincide for $a>R$ with 
$\xi=\partial_t$ up to a constant, which can be confirmed. Evaluating 
the norm at spatial infinity $a=\infty$ we find $\xi_e=e^{-1} \xi$. 

Note also that the change between GGP coordinates $(\tau,r)$ and 
synchronous coordinates $\tilde{\tau},\rho$ with 
$\tau=\tilde{\tau},r=a(\tilde{\tau},\rho)$ gives 
$\partial_\tau=\partial_{\tilde{\tau}}+
([\partial_\tau b(\tau,r)]_{r=a(\tau,\rho)})_{\tau=\tilde{\tau}}\;
\partial_\rho$
where $b(\tau,a(\tau,\rho))=\rho$ inverts $r=a(\tau,\rho)$ for 
$\rho=b(\tau,r)$ at fixed $\tau$. Thus 
$0=b_{,\tau}(\tau,r=a)+b_{,r}(\tau,r=a)\;a_{,\tau},\;
1=b_{,r}(\tau,r=a)\;a_{,\rho}$ and since $a_{,\rho}+a_{,\tau}=0$ it follows 
$\partial_{\tau}=\partial_{\tilde{\tau}}+\partial_\rho$. Thus while 
$\tilde{\tau}=\tau$ there is a non-trivial transformation between 
$\partial_\tau,\partial_{\tilde{\tau}}$ as vector fields when changing 
from GGP to synchronous coordinates. This also explains why 
$\partial_\tau$ is a KVF but not hypersurface orthogonal while
$\partial_{\tilde{\tau}}$ is no KVF but hypersurface orthogonal.
With this clarification out of the way we keep the notation $\partial_\tau$
for both coordinate systems but have to remember the difference between 
the roles that $\partial_\tau$ plays in them.

We consider the geodesic congruence with congruence parameter $e'$ and
geodesic tangent $u=\partial_s=
u^\tau \partial_\tau+u^\rho\partial_\rho,\;
u^\tau=d\tau/ds,\;u^\rho=d\rho/ds$ in the coordinates 
$\tau=\tau^\pm_e,\rho=\rho^\pm_e$. Then 
\be \label{d.15}
-e'=g(u,\xi)=e[-d\tau/ds+(1-\Phi/e^2)\;d\rho/ds],\;\;
-1=g(u,u)=\frac{d\tau}{ds}^2+[1-\frac{\Phi}{e^2}]\;[\frac{d\rho}{ds}]^2 
\ee
These have two solutions. The outgoing solution for $\tau-\rho>0$ and the 
ingoing solution for $\tau-\rho<0$ respectively correspond to $\rho=$const.
and $d\tau/ds=\frac{e'}{e}$ which can be seen from the fact that 
$a(\tau,\rho)$ is monotonously increasing and respectively decreasing 
with increasing $\tau$ thanks to the monotonocity of $h$ in (\ref{d.14}).
For the geodesic congruence $e'=e$ we see that $\tau=s$ 
coincides with the proper time along the geodesics. 

Thus the out(in) going geodesic conguence
with $e'=e$ becomes especially simple in out(in)going synchronous 
coordinates $\tau=\tau^\pm_e,\rho=\rho^\pm_e$, they are just the lines 
$\rho=$const. and $s\mapsto\tau=s$ and are valid for $\tau>\rho$ and
$\tau<\rho$ respectively. All geodesic observers are synchronised 
on the $\tau=$const. 
hypersurface $\Sigma^\pm_{e,\tau}=\{\pm(\tau-\rho)>0,\;\Omega\in S^2\}$.
The hypersurfaces are mutually disjoint and cover one exterior region and 
the white (black) hole region of the Kruskal extension. Since 
$1-\Phi(a)/e^2>0$ for all $a\in \mathbb{R}_+$ and $e^2\ge 1$ the 
hypersurfaces $\tau=$const have intrinsic metric of manifestly positive 
signature and are thus spacelike. The vector field $\partial_\tau$ is 
everywhere timelike and in fact the future oriented timelike unit normal 
to the hypersurfaces, however, it is not a Killing vector field and therefore 
for the geodesic observer the metric is eigentime $\tau$ dependent both 
in the exterior and interior region. Yet, the observer (in the ideal 
limit of vanishing spatial extension) feels no tidal forces and thus 
considers herself in an inertial frame.

\subsection{Black Hole White Hole Transition}
\label{sd.2}

The geodesic congruence $C^+_e,C^-_e$ so constructed thus determines 
a spacelike foliation of the WH and MSS region or BH and SS region respectively.
However, none of the leaves $\Sigma^\pm_{e,\tau}$ 
of the foliation is a Cauchy surface for those parts of the 
Kruskal spacetime because $\rho$ is not allowed to 
take full range $\mathbb{R}$, rather it is restricted by $\pm(\tau-\rho)>0$.  
The obvious idea to turn them into Cauchy surfaces is to consider 
a gluing of a past SS and BH part of one Kruskal spacetime with a future 
WH and MSS part of another Kruskal spacetime along the singularity $a=0$.
Accordingly we consider at fixed $e$ coordinates $\tau,\rho\in \mathbb{R}$ 
and the metric 
\be \label{d.16}
ds^2=-d\tau^2+[1-\frac{\Phi(a)}{e^2}]\;d\rho^2+a^2\; d\Omega^2
\ee
where $a(\tau,\rho)$ is the function implicitly defined by 
\be \label{d.17}
|\tau-\rho|=
\left\{ \begin{array}{cc}
R\;(e^2-1)^{3/2})\;h(z),\; h(z)=[z\sqrt{z^2+1}-\ln(z+\sqrt{z^2+1})],\;
z=[(e^2-1)a/R]^{1/2} & e^2>1\\
\frac{2}{3} \frac{a^{3/2}}{R^{1/2}} & e^2=1
\end{array}
\right.
\ee
The metric (\ref{d.16}) is singular at the singularity
``hypersurface'' $\tau=\rho$ 
where $a=0$.
In the $(\tau,\rho)$ diagramme the 
geodesics $\rho=$const. cross the singularity at the proper time 
$\tau=\rho$ smoothly. The radial coordinate $a(\tau,\rho)\to 0$ 
is also continuous at $\tau=\rho$ but 
the radial velocity $da/d\tau\to \mp \infty$ as $\tau\to \rho\mp$. 
The $\tau=$const. surfaces also intersect the singularity transversally
in the $(\tau,\rho)$ diagrammme. Nevertheless they are geometrically 
tangential: The reason for this is that the co-normals $n,n'$ to the 
$\tau=$const. and the $\tau-\rho=0$ surface respectively are given by
$n_\mu=\nabla_\mu\tau=\delta^\tau_\mu,\;  
n'_\mu=\nabla_\mu(\tau-\rho)=\delta^\tau_\mu-\delta^\rho_\mu$, hence the
normal is given by 
$n^\mu=g^{\mu\nu} n_\nu=-\delta^\mu_\tau$ and  
$n^{\prime\mu}=g^{\mu\nu} n'_\nu=-\delta^\mu_\tau+\frac{a}{R} \delta^\mu_\rho$
respectively, thus $n=n'$ at $a=0$.   

We can return to GGP coordinates but need two 
radial coordinates, i.e. a past radial 
coordinate $r$ and a future radial coordinate $\bar{r}$ which are related 
to $\tau,\rho$ by 
\be \label{d.18}
r=a(\tau,\rho);\; t<\rho;\; 
\bar{r}=a(\tau,\rho);\; t>\rho
\ee
These can be combined into a single coordinate 
\be \label{d.19}
z=-{\rm sgn}(\tau-\rho) a(\tau,\rho)=
\left\{ 
\begin{array}{cc}
-\bar{r} & \tau>\rho\\
r & \tau<\rho
\end{array}
\right.
\ee 
which like $\rho$ takes full range in $\mathbb{R}$. Then $dz/d\tau <0$
(thus $z$ is monotonous along the geodesic) 
taking its minimal value $-\infty$ at $\tau=\rho$ and its maximal 
value $-\sqrt{e^2-1}$ at $\tau=\pm \infty$. 
Note that if we use the same 
$\tau,\theta,\phi$ coordinates for the whole spacetime, therefore the 
radial geodesics $\rho,\theta,\phi=$const. change to opposite direction when 
passing through the singularity.

\subsection{Non-singular spacetime}
\label{sd.3}

In terms of the synchronous coordinates, the surface $a(\tau,\rho)=0$ is 
three-dimensional but in terms of the Cartesian coordinates
$x^a=r\Omega^a,\;\bar{x}^a=\bar{r}\Omega^a,\;a=1,2,3$ which vanish at the 
singularity, it is just a one dimensional line. This can can be seen
also by considering 
the surface $\tau=\rho+\epsilon,\;\epsilon\not=0$ which has induced line element
\be \label{d.20}
ds^2=-\frac{\Phi(a)}{e^2}\;d\rho^2+a^2\; d\Omega^2
\ee
which for $\epsilon\to 0$ results in $R>a\to 0$ so that $0< -\Phi(a)\to 
+\infty$ while $a\to 0$ so that (\ref{d.19}) formally has signature 
$(1,0,0)$. 

Following ideas about non-singular and 
wormhole spacetimes such as \cite{NonSing,Wormhole} we may exclude the 
singularity in an ad hoc manner by simply restricting the range 
of $r,\bar{r}$ to $(l,\infty),\; l>0$ and perform the gluing for each 
$\tau$ at 
$r=\bar{r}=l,\; \theta=\bar{\theta},\phi=\bar{\phi}$ 
or at 
$r=\bar{r}=l,\; \theta=\pi-\bar{\theta},\phi=\pi+\bar{\phi}$. 
which now has the the topology of $\mathbb{R}\times S^2$. 
In contrast to \cite{NonSing} and similar to \cite{Wormhole} 
this still defines a {\it vacuum solution} 
for any $r,\bar{r}>l$ i.e. the energy momentum tensor 
vanishes. This has the the following 
mild disadvantage: The geodesic 
$\rho=$const. in the region $\pm (\tau-\rho)>0$ hits the value 
$r=a(\tau,\rho)=l$ 
at a value $\tau^\pm_l(\rho)=\rho\pm \epsilon_l$ where 
$a(\tau^\pm_l(\rho),\rho)=l$. Thus the geodesic that starts at $\rho=$const.
in the  $\tau<\rho$ region cannot continue as the geodesic with the same
value of $\rho$ in the $\tau>\rho$ region if the affine parameter is to 
be continuous. Rather the geodesic parameter must change to $\rho'$ 
where $\tau^+_l(\rho')=\tau^-_l(\rho)$. In the first gluing option 
the geodesic then continues in the opposite direction, in the second
gluing option it continues into the same 
direction (remember that the geodesics are in/outgoing respectively). 
If one wishes to interpret this in a $\tau,z$ diagramme 
in which the angular dimension is suppressed, then 
it appears as if the geodesic jumps between $z=\pm l$. However, 
if we consider a three dimensional diagramme in which we depict the angular 
dependence by circles then we should consider two copies of $\mathbb{R}^3$
covered by coordinates $\tau,r,\varphi$ and 
$\bar{\tau},\bar{r},\bar{\varphi}$ from which we cut out the solid 
cyinders $0\le r\le l$ and  $0\le \bar{r}\le l$ respectively. We then 
glue the surfaces $r=l$ and $\bar{r}=l$ of the cylinders at either 
$\tau=\bar{\tau},\varphi=\bar{\varphi}$ or    
$\tau=\bar{\tau},\varphi=\pi+\bar{\varphi}$. A radial inward geodesic 
$\varphi=$const. starting in the first copy then hits $r=l$ at some 
$\tau$ and continues either 
as the geodesic $\bar{\varphi}=\varphi=$const. into the opposite or
as the geodesic $\bar{\varphi}=\pi+\varphi=$const. into the same direction. 
In both cases the geodesic is continuous because of the prescription 
in which we identified the points. In the second gluing option also 
the first derivative of the geodesic is continuous. W.l.g. consider 
the ingoing radial geodesic in 1-direction $\tau\mapsto (r_\rho(\tau),0,0)$
with $\tau\in (-\infty,\tau_l),\; r_\rho(\tau_l)=l$. Then it continues as 
the outgoing geodesic $(\bar{r}_{\bar{\rho}}(\tau),0,0)$ 
with $\tau\in (\tau_l,\infty),\; r_{\bar{\rho}}(\tau_l)=l$ 
in the first option and as  
$(-\bar{r}_{\bar{\rho}}(\tau),0.0)$ in the second. We have by construction
$-\dot{r}_{\rho}(\tau_l)=\dot{\bar{r}}_{\bar{\rho}}(\tau_l)>0$. This 
spacetime is therefore geodesically complete with respect to the 
oberservers in these congruences and in that sense singularity 
free. In the first option an observer considers herself as ``bounced'' 
off $r=l$ when entering the second universe while in the second option 
she considers herself as ``gone through'' the cylinder. 

Note that
no causal geodesic can stay on the cylinder surface $r=l$ as it is spacelike.
The $r=$const. surfaces
are timelike/null/spacelike for $r>/=/< R$ as maybe seen 
easiest from (\ref{d.21}) while the hypersurfaces $\tau=const.$ are 
always spacelike. Therefore the cylinder surfaces $r=l$ are for $l<R$ 
certainly spacelike and thus causal geodesics must cross it transversally. 
The cylinder surface replaces the singularity by a spacelike surface 
with coordinates $\tau,\theta,\phi$ and thus has the topology 
$\mathbb{R}\times S^2$. It maybe disturbing that the time $\tau$ here 
serves as a coordinate on a spacelike hypersurface 
but we can interpret it as the point of 
eigentime at which geodesic observers cross the gluing cylinder between the 
universes.      

One may consider the introduction of $l$ also as a regularisation of the 
singular spacetime which maybe used to construct QFT in CST. 
This is relevant in the construction of mode systems (solutions 
of Klein Gordon like equations) and 1-particle inner products which 
rely on the presence of Cauchy surfaces such as the leaves of this 
BHWHT foliation. From that perspective the corresponding wave equations 
for a function $f(\tau,z,\Omega)=e^{i\omega\tau} f_\omega(z,\Omega)$
become stationary Schr\"odinger type of eigenvalue equations for $f_\omega$
in a singular potential as $l\to 0$ as we have shown in section \ref{s7}.

\subsection{Causal structure and Penrose diagramme}
\label{sd.4}

In order to understand the causal structure of this singular BHWHT spacetime 
we consider the simpler case $e^2=1$ for which we can write the line element 
in terms of $\tau,z,\theta,\phi$ using (\ref{d.8}), (\ref{d.20})
\be \label{d.21}
ds^2=-\Phi(|z|)\;d\tau^2+2\sqrt{\frac{R}{|z|}}\; d\tau\;dz
+dz^2+z^2 d\Omega^2
\ee
Its radial null geodesics are determined by  
\be \label{d.22}
ds^2=-|z|^{-1}\;
(d\tau\;[1+|z|^{-1/2}]+dz)\;(d\tau\;[1-|z|^{-1/2}]-dz)\;
=0
\ee
where we switched to $\hat{\tau}=\tau/R,\; \hat{z}=z/R$ and removed the hat 
again. We use $z\in \mathbb{R}$ as a parameter so that we get two 
types of null geodesics
\be \label{d.23}
\frac{d\tau}{dz}=\mp\; \frac{|z|^{1/2}}{|z|^{1/2}\pm 1}
\ee
For the upper sign, $d\tau/dz$ is everywhere regular, at 
$z=\pm\infty$ taking 
the value $-1$, at the two horizons $z=\pm 1$ taking the value $-1/2$ and 
at the singularity the value $0$. For the lower sign we have four  
kinds of null geodesics, namely those that are stuck in either of
the intervals $(1,\infty),\;(-1,1),\;(-\infty,-1)$ and those that 
are stuck at the horizions $|z|=1$. 
For the first interval
the null geodesic starts at $z=1$ with $d\tau/dz=+\infty$ moving to 
$z=\infty$ with $d\tau/dz=1$.     
For the third interval
the null geodesic ends at $z=-1$ with $d\tau/dz=+\infty$ having moved from 
$z=-\infty$ with $d\tau/dz=1$. For the second interval the geodesic 
starts at $z=1$ in the infinite past with $d\tau/dz=-\infty$,
passes through $z=0$ with $d\tau/dz=0$ and ends at $z=-1$ in the infinite 
future with $d\tau/dz=-\infty$. Accordingly, in the $\tau,z$ diagramme 
the lightcone structure is as follows: For $|z|>1$ there are ingoing and 
outgoing light rays (i.e. moving to smaller and larger $|z|$), 
for $z=1$ there is one ingoing one and one that is tangential to the horizon,
for $z=-1$ there is one outgoing one and one that is tangential to the 
horizon, for $0<z<1$ there are only ingoing lightrays (trapped region),  
for $-1<z<0$ there are only outgoing lightrays (anti trapped region) and 
for $-\infty<z<-1$ there are both ingoing and outgoing lightrays. 
This is of course exactly the BHWHT spacetime structure.
The discussion also shows that the restriction to $z>0$ ($z<0$) respectively 
covers precisely an ingoing (outgoing) GP spacetime or equivalently 
an advanced (retarded) Finkelstein spacetime (covered by 
$v,r$ or $u,r$ coordinates respectively with $v=t+r_\ast,\;u=t-r_\ast$
on the SS portion where $t$ is SS time and $r_\ast$ is the turtoise 
coordinate, i.e. there is always the ingoing (outgoing) null geodesic
$v=$const. ($u=$const.)). 

It is helpful to construct the corresponding Penrose diagramme which 
can be done analytically in the case $e^2=1$ in terms of  
Kruskal coordinates. From (\ref{d.7}) we have 
\be \label{d.24}
\tau=\bar{t}-f(\bar{r})=t+f(r),\;
f(r)=R(2y+\ln(\frac{y-1}{y+1})),\;y=\sqrt{\frac{r}{R}}
\ee
for Schwarzschild coordinates $r,\bar{r}>1$ and Schwarzschild asymptotic 
times $t,\bar{t}$ in the SS and MSS region respectively. In terms of the 
null coordinates $v=t+r_\ast,\; u=t-r_\ast,\; r_\ast=r+R\ln(\frac{r}{R}-1)$
and analogously for the barred quantities we set
\be \label{d.25}
V:=e^{\frac{v}{2R}},\; 
U:=-e^{-\frac{u}{2R}},\; 
\bar{V}:=-e^{\frac{\bar{v}}{2R}},\; 
\bar{U}:=e^{-\frac{\bar{u}}{2R}},\; 
\ee
By substituting $r_\ast/R=y+\ln(y^2-1)$ and for $t,\bar{t}$ according to 
(\ref{d.24}) one finds with $\kappa=\tau/(2R)$
\be \label{d.25b}
V=e^{\kappa+\frac{y^2}{2}-y}\;(y+1),\;
U=-e^{-\kappa+\frac{y^2}{2}+y}\;(y-1),\;
\bar{V}=-e^{\kappa+\frac{\bar{y}^2}{2}+\bar{y}}\;(\bar{y}-1),\;
\bar{U}=e^{-\kappa+\frac{\bar{y}^2}{2}-\bar{y}}\;(\bar{y}+1)
\ee
The choice of signs is here uniquely determined by the 
continuity requirement 
that at the singularity $y=\bar{y}=0$ we have $\bar{V}=V,\bar{U}=U$.
Thus for $y,\bar{y}\in \mathbb{R}^+, \; \tau\in \mathbb{R}$ we have 
$U,\bar{V}\in \mathbb{R},\; V,\bar{U}\in \mathbb{R}^+$ and the 
Kruskal relations
\be \label{d.26}
U\;V=-e^{y^2}(y^2-1),\;\;\bar{U}\;\bar{V}=-e^{\bar{y}^2}(\bar{y}^2-1)
\ee
which are bounded from above by $+1$. The SS, BH, WH, MSS regions are 
respectively covered by $V>0>U$, $U,V>0$, $\bar{U},\bar{V}>0$, 
$\bar{U}>0>\bar{V}$ separated respectively by the BH horizon $y=1$,
the singularity $y=\bar{y}=0$ and the WH horizon $\bar{y}=1$ in 
chronological order. We introduce compactified null coordinates 
\be \label{d.27}
\hat{v}={\sf arctan}(V),\; 
\hat{u}={\sf arctan}(U),\; 
\hat{\bar{v}}={\sf arctan}(\bar{V}),\; 
\hat{\bar{u}}={\sf arctan}(\bar{U}),\; 
\ee
with $\hat{v},\hat{\bar{u}}\in (0,\frac{\pi}{2}),\;
\hat{u},\hat{\bar{v}}\in (-\frac{\pi}{2},\frac{\pi}{2})$ and finally
\be \label{d.28}
\hat{t}:=\left\{ \begin{array}{cc}
\hat{v}+\hat{u} & {\sf SS, BH}\\
\pi-(\hat{\bar{v}}+\hat{\bar{u}}) & {\sf WH, MSS}
\end{array}
\right.
\;\;,\;\;
\hat{x}:=\left\{ \begin{array}{cc}
\hat{v}-\hat{u} & {\sf SS, BH}\\
\hat{\bar{v}}-\hat{\bar{u}} & {\sf WH, MSS}
\end{array}
\right.
\ee
It follows that in SS $-\pi/2\le \hat{u}\le 0 \le \hat{v}$ we have 
\be \label{d.29}
0\le 2\hat{v}=\hat{t}+\hat{x}\le \pi,\; 
-\pi\le 2\hat{u}=\hat{t}-\hat{x}\le 0,\;0\le \hat{x}\le \pi
\;\;\Rightarrow\; {\sf max}(-\hat{x},\hat{x}-\pi)
\le \hat{t}\le {\sf min}(\hat{x},\pi -\hat{x})
\ee
in BH $0\le \hat{u},\hat{v}\le \pi/2$ and 
$U\;V=\frac{
\cos(\hat{v}-\hat{u})-\cos(\hat{v}-\hat{u})}
{\cos(\hat{v}-\hat{u})+\cos(\hat{v}-\hat{u})}\le 1$ i.e. 
$\cos(\hat{v}+\hat{u})\ge 0$ i.e. $\hat{t}\le \pi/2$ and 
\be \label{d.30}
0\le 2\hat{v}=\hat{t}+\hat{x}\le \pi,\; 
0\le 2\hat{u}=\hat{t}-\hat{x}\le \pi,\;0\le \hat{t}\le \pi/2
\;\;\Rightarrow\; -\hat{t}\le \hat{x} \le \hat{t}
\ee
in WH $0\le \hat{\bar{u}},\hat{\bar{v}}\le \pi/2$ and 
$\bar{U}\;\bar{V}=\frac{
\cos(\hat{\bar{v}}-\hat{\bar{u}})-\cos(\hat{\bar{v}}-\hat{\bar{u}})}
{\cos(\hat{\bar{v}}-\hat{\bar{u}})+\cos(\hat{\bar{v}}-\hat{\bar{u}})}\le 1$
i.e. $\cos(\hat{\bar{v}}+\hat{\bar{u}})\ge 0$ 
i.e. $\hat{\bar{u}}+\hat{\bar{v}}\le \pi/2$ 
i.e. $\hat{t}\ge \pi/2$ and 
\be \label{d.31a}
0\le 2\hat{\bar{v}}=\pi-\hat{t}+\hat{x}\le \pi,\; 
0\le 2\hat{\bar{u}}=\pi-\hat{t}-\hat{x}\le \pi,\;\pi/2\le \hat{t}
\;\;\Rightarrow\; \hat{t}-\pi\le \hat{x} \le \pi-\hat{t}
\ee
and in MSS $-\pi/2\le \hat{\bar{v}}\le 0\le \hat{\bar{u}}\le \pi/2$
and $-\pi\le \hat{x} \le 0$ and
\be \label{d.31b}
-\pi\le 2\hat{\bar{v}}=\pi-\hat{t}+\hat{x}\le 0,\; 
0\le 2\hat{\bar{u}}=\pi-\hat{t}-\hat{x}\le \pi,
\;\;\Rightarrow\; {\sf max}(-\hat{x},\hat{x}+\pi)
\le \hat{t}\le {\sf min}(2\pi+\hat{x},\pi -\hat{x})
\ee
It is not difficult to see that in the $\hat{x},\hat{t}$ diagramme 
SS is a diamond with corners $b_P=(0,0),\;
i^-_P=(\pi/2,-\pi/2),\; i^0_P=(\pi,0),i^+_P=(\pi/2,\pi/2)$,
BH is a triangle with corners $b_P,i^+_P,i_F^-=(-\pi/2,\pi/2)$,
WH is a triangle with corners $b_F=(0,\pi),i_F^-, i^+_P$ and 
MSS is a diamond with corners $i^0_F=(-\pi,\pi),\; i^-_F,\;b_F,\;
i^+_F=(-\pi/2,3\pi/2)$. Here the subscripts refer to past and future 
Kruskal portions. The singularity is the line betwwen $i^-_F, i^+_P$,
the BH horizon is the line between $b_P,i^+_P$, the WH horizon is 
the line between $i^-_F,b_F$. 
Past and future null infinity in SS are 
the lines between $i^-_P,i^0_P$ and $i^0_P,i^+_P$ respectively while  
past and future null infinity in MSS are 
the lines between $i-_F,i^0_F$ and $i^0_F,i^+_F$ respectively.
All other diagonal lines are at $r=R$ or $\bar{r}=R$ respectively. 
The points $b_P,b_F$ are the bifurcation points in the full past and future 
Kruskal spacetimes. 

This BHWHT spacetime can be extended indefinitely to the 
future and the past by gluing identical pieces along the $r=R,\bar{r}=R$ 
lines. Or we can complete it by a Minkowski part of spacetime both in the 
past and the future by adding the points $I^-_P=(-\pi/2,-3\pi/2)$ and 
$I^+_F=(\pi/2,5\pi/2)$ respectively and adding the triangles with corners 
$I^-_P,i^-_P,i^-_F$ and 
$I^+_F,i^+_F,i^+_P$ respectively. The vertical lines between $I^-_P,I^+_F$ 
and the singularity then represent $r=0$ during formation of the black hole 
and evaporation of the white hole respectively. In this completed 
spacetime the free falling hypersurfaces are still Cauchy surfaces 
and we can complete the foliation in the 
Minkowski regions by segments along past and future null infinity 
between $i^0_P,I^-_P$ and $i^0_F,I^+_F$ and Cauchy surfaces in the 
Minkowksi parts. In the completed spacetime past null infinity in 
the past part and and future null infinity in the future part got 
extended by the Minkowski parts and we have two spacelike infinities 
$i^0_P,i^0_F$ and one past and future timelike infinity $I^-_P, I^+_F$
respectively.

We now explore the radial timelike 
geodesics and the free falling orthogonal foliation they 
generate. For a geodesic $\rho=$const. we are interested in the limits 
$\tau\to \pm\infty$. 
For $\tau\to +\infty$ we eventually enter the region 
$\tau>\rho$ covered by 
$\bar{y}=(\bar{r}/R)^{1/2}=[\frac{3}{2}(\tau-\rho)/R]^{1/3}$
which grows as $[\tau/R]^{1/3}$. Hence even $\bar{y}^2$ grows slower than 
$\tau/R$ and the behaviour of $\bar{V},\bar{U}$ is governed by 
$e^{\pm \tau/(2R)}$. Thus $\bar{V}\to -\infty, \bar{U}\to 0$ hence
$\hat{\bar{v}}\to -\pi/2,\;\hat{\bar{u}}\to 0$ i.e. 
$\hat{t}=\pi-\hat{\bar{v}}-\hat{\bar{u}}\to 3\pi/2,
\;\hat{x}=\hat{\bar{v}}-\hat{\bar{u}}\to -\pi/2$ 
i.e. the geodesic ends up in $i^+_F$.
For $\tau\to -\infty$ we eventually enter the region 
$\tau<\rho$ covered by 
$y=(r/R)^{1/2}=[-\frac{3}{2}(\tau-\rho)/R]^{1/3}$
which grows as $[-\tau/R]^{1/3}$. Hence even $y^2$ grows slower than 
$-\tau/R$ and the behaviour of $V,U$ is governed by 
$e^{\pm \tau/(2R)}$. Thus $V\to 0, \bar{U}\to -\infty$ hence
$\hat{v}\to 0,\;\hat{u}\to -\pi/2$ i.e. 
$\hat{t}=\hat{v}+\hat{u} \to -\pi/2,\;\hat{x}=\hat{v}-\hat{u}
\to \pi/2$ i.e. the geodesic ends up in $i^-_P$.

For the $\tau=$const. slices we are interested in $\rho\to \pm \infty$. 
For $\rho\to \infty$ we eventually enter the region $\tau-\rho<0$ 
covered by $y=[-\frac{3}{2}(\tau-\rho)/R]^{1/3}$ which grows as 
$[\rho/R]^{1/3}$ and the behaviour of $V,U$ is governed by $e^{y^2/2}$.
Thus $V\to +\infty, U\to -\infty$ i.e. $\hat{v}\to \pi/2,\; 
\hat{u}\to-\pi/2$ hence $\hat{t}\to 0,\; \hat{x}\to \pi$ i.e. we end up 
in $i^0_P$.  
For $\rho\to -\infty$ we eventually enter the region $\tau-\rho>0$ 
covered by $\bar{y}=[\frac{3}{2}(\tau-\rho)/R]^{1/3}$ which grows as 
$[-\rho/R]^{1/3}$ and the behaviour of $V,U$ is governed by $e^{\bar{y}^2/2}$.
Thus $\bar{V}\to -\infty, \bar{U}\to \infty$ i.e. $\hat{\bar{v}}\to 
-\pi/2,\; 
\hat{\bar{u}}\to \pi/2$ hence $\hat{t}\to \pi,\; \hat{x}\to -\pi$ i.e. we 
end up in $i^0_F$.  
  
Thus the following geometric picture emerges: All geodesics start at 
$i^-_P$ and end in $i^+_F$ as $\tau$ grows, all leaves start in 
$i^0_F$ and end in $i^0_P$ as $\rho$ grows. The geodesic labelled by 
$\rho$ intersects at $\tau=\rho$ the singularity in a point 
while the hypersurface 
labelled by $\tau$ intersects the singularity at coordinate label $\rho=\tau$
in a sphere. One can work out $d\hat{t},d\hat{x}$ explicitly in terms 
of the differentials $d\tau,d\rho$ using the coordinate 
transformation between these coordinates by the 
same technique as below for $T,X$ coordinates. One then shows 
by computing $\frac{d\hat{t}}{d\hat{x}}$ that in the 
Penrose diagramme all $\tau=$const. surfaces coordinatised by $\rho$
are tangent to 
the horizontal line representing the singularity ($\rho=\tau$) 
and have inclination 
of +45 degrees at the spatial infinities ($\rho=\pm \infty$)
while the geodesics 
$\rho=$const. coordinatised by $\tau$ intersect the 
singularity line at 90 degrees ($\tau=\rho$) and have inclination +45 degrees 
at the timelike infinities ($\tau=\pm \infty$). 
\begin{figure}[hbt]
\includegraphics[width=13.5cm,height=10cm]{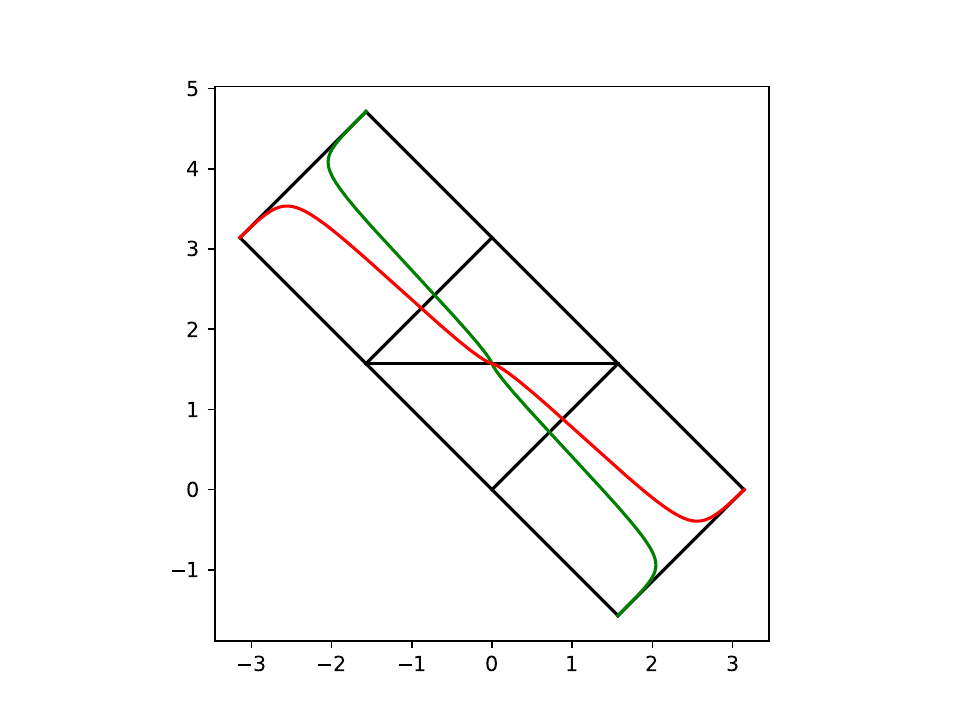}
\caption{Penrose diagramme of a globally hyperbolic region 
of a black hole -- white hole transition with timelike geodesic labelled
by $\rho=c$ (green) and Cauchy hypersurface labelled by $\tau=c$ for 
a constant $c$. 
The
geodesic starts in $i^-_P$ and ends in $i^+_F$ (bottom to top)
intersecting the 
singularity $r=0$ at proper time $\tau=c$ vertically. 
The Cauchy surface starts at 
$i^0_F$ and ends in $i^0_P$ (left to right) intersecting the singularity 
at spatial coordinate $\rho=c$ horizontally. The spacetime can be extended 
indefinitely to the future and past by gluing identical regions 
along the 45 degree lines $r=2M$ of the boundaries of the black hole and 
white hole regions respectively. Or it can be completed by adding a 
triangular Minkowski region in the past and the future along the 45 lines 
$r=2M$ between $i^-_P,i^+_F$ and the line $r=0$ respectively.}
\end{figure}
\begin{figure}[hbt]
\includegraphics[width=13.5cm,height=10cm]{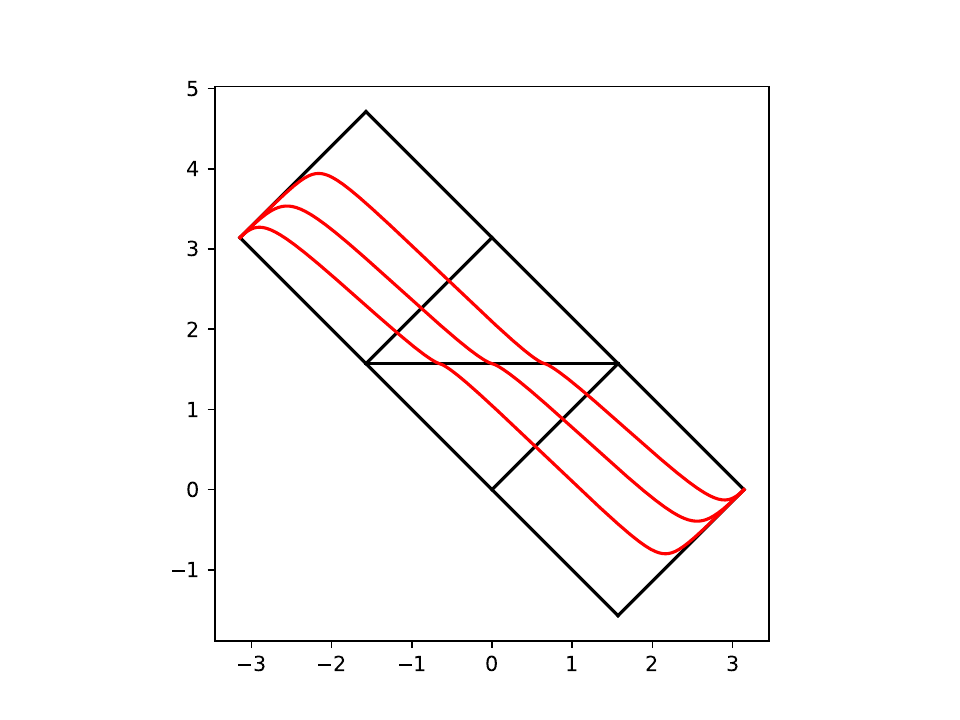}
\caption{Foliation of a globally hyperbolic portion of a 
BHWHT spacetime by synchronous proper time free falling Cauchy 
surfaces. The portion is the common domain of dependence of all
leaves of the foliation.}
\end{figure}
\begin{figure}[hbt]
\includegraphics[width=13.5cm,height=10cm]{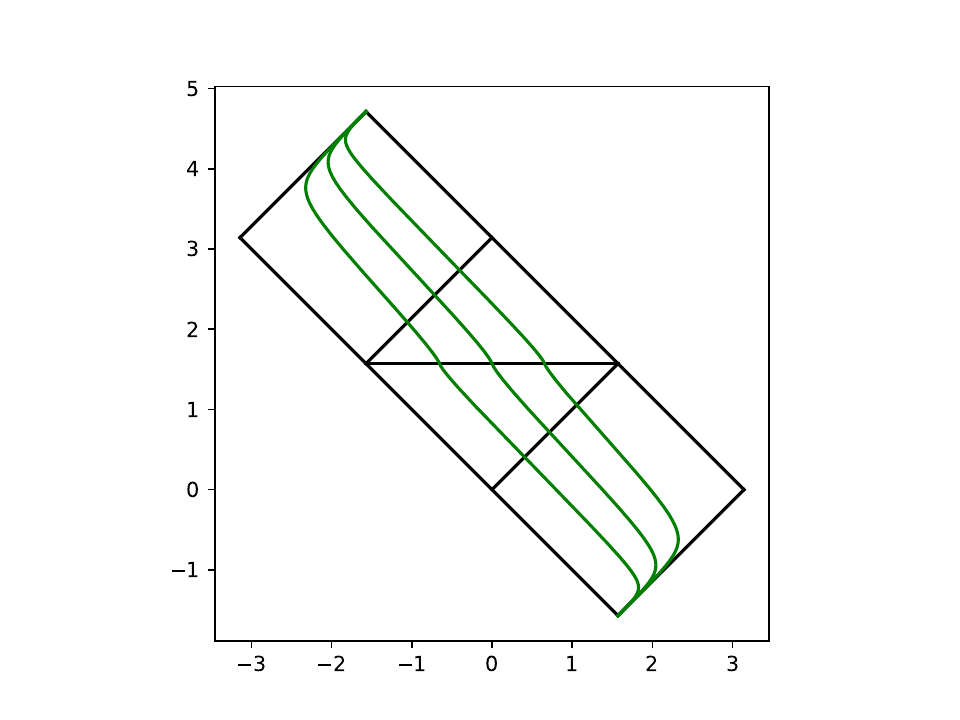}
\caption{Congruence of free falling timelike observers in a globally hyperbolic
portion of a BHWHT spacetime.}
\end{figure}
In order to determine the finer details of this intersection we introduce 
the coordinates $T,X,\bar{T},\bar{X}$ defined by 
\be \label{d.30b}
2T=V+U,\;2X=V-U,\;2\bar{T}=-(\bar{V}+\bar{U}),\;2\bar{X}=\bar{V}-\bar{U} 
\ee
Thus $\bar{X}=X,\; \bar{T}=-T$ at the singularity. Since 
$U\;V=T^2-X^2=1=\bar{V}\;\bar{U}=\bar{T}^2-\bar{X}^2$ at the singularity and
$U=\bar{U},V=\bar{V}>0$ it follows $T=\sqrt{1+X^2},\; 
\bar{T}=-\sqrt{1+\bar{X}^2}$ describes the singularity curves in terms of 
these coordinates. These have inclinations 
$dT/dX=\frac{X}{\sqrt{1+X^2}},\;
d\bar{T}/d\bar{X}=-\frac{\bar{X}}{\sqrt{1+\bar{X}^2}}$ 
respectively. 

We now compute for 
$y=y(\tau,\rho),\;\bar{y}=\bar{y}(\tau,\rho)$ respectively and 
$\tau<\rho,\;\tau>\rho$ respectively
using  
\be \label{d.31}
d y^2/2=\frac{1}{2R y}(d\rho-d\tau),\;
d \bar{y}^2/2=\frac{1}{2R \bar{y}}(d\tau-d\rho),\;
d [(y\pm 1) \; e^{y^2/2-\mp y}]=y^2 \; e^{y^2/2\mp y}\;dy
\ee
the differentials
\ba \label{d.32}
dV &=& \frac{V}{2R}\; (d\tau + \frac{1}{y+1}(d\rho-d\tau))
= \frac{V}{2R(y+1)} (y\; d\tau + d\rho)
\nonumber\\
dU &=& \frac{U}{2R}\; (-d\tau+\frac{1}{y-1}(d\rho-d\tau)
= \frac{U}{2R(y-1)}\; (-y \; d\tau+d\rho)
\nonumber\\
d\bar{V} &=& \frac{\bar{V}}{2R}\; (d\tau + \frac{1}{\bar{y}-1}
(d\tau-d\rho))
= \frac{\bar{V}}{2R(\bar{y}-1)} (\bar{y}\; d\tau - d\rho)
\nonumber\\
d\bar{U} &=& \frac{\bar{U}}{2R}\; (-d\tau+\frac{1}{\bar{y}+1}(d\tau-d\rho)
= \frac{\bar{U}}{2R(\bar{y}+1)}\; (-\bar{y} \; d\tau-d\rho)
\ea
It follows
\ba \label{d.33}
2dT &=& \frac{e^{y^2/2}}{2R}\;[e^{\kappa-y}\;(d\rho+y d\tau)
-e^{-\kappa+y}\;(d\rho-y d\tau)]
\nonumber\\
2dX &=& \frac{e^{y^2/2}}{2R}\;[e^{\kappa-y}\;(d\rho+y d\tau)
+e^{-\kappa+y}\;(d\rho-y d\tau)]
\nonumber\\
2d\bar{T} &=& \frac{e^{\bar{y}^2/2}}{2R}\;[-e^{\kappa+\bar{y}}\;(d\rho-
\bar{y} d\tau)
+e^{-\kappa-\bar{y}}\;(d\rho+\bar{y} d\tau)]
\nonumber\\
2d\bar{X} &=& \frac{e^{\bar{y}^2/2}}{2R}\;[e^{\kappa+\bar{y}}\;(d\rho-
\bar{y} d\tau)
+e^{-\kappa-\bar{y}}\;(d\rho+\bar{y} d\tau)]
\ea
This gives 
\ba \label{d.34}
\frac{dT}{dX} &=& 
\frac{
{\sf sh}(\kappa-y)\;d\rho+y\;{\sf ch}(\kappa-y)\;d\tau
}
{
{\sf ch}(\kappa-y)\;d\rho+y\;{\sf sh}(\kappa-y)\;d\tau
}
\nonumber\\
\frac{d\bar{T}}{d\bar{X}} &=& 
\frac{
-{\sf sh}(\kappa+\bar{y})\;d\rho+\bar{y}\;{\sf ch}(\kappa+\bar{y})\;d\tau
}
{
{\sf ch}(\kappa+\bar{y})\;d\rho-\bar{y}\;{\sf sh}(\kappa+\bar{y})\;d\tau
}
\ea
This enables us to conveniently compute the inclinations. For geodesics
$d\rho=0$
\be \label{d.35}
\frac{dT}{dX}={\sf coth}(\kappa-y),\;\;
\frac{d\bar{T}}{d\bar{X}}=-{\sf coth}(\kappa+\bar{y})
\ee
and for hypersurfaces $d\tau=0$
\be \label{d.36}
\frac{dT}{dX}={\sf th}(\kappa-y),\;\;
\frac{d\bar{T}}{d\bar{X}}=-{\sf th}(\kappa+\bar{y})
\ee
Thus in the $T,X$ and $\bar{T},\bar{X}$ diagramme respectively the 
geodesics and hypersurfaces have everywhere anti-reciprocal inclinations.

Since at $y=0$ we have $2 X=2{\sf sh}(\kappa)$ and 
at $\bar{y}=0$ we have $2 \bar{X}=2{\sf sh}(\kappa)$ it follows that 
the singularity inclination is 
${\sf th}(\kappa)$ in terms of $T,X$ and 
$-{\sf th}(\kappa)$ in terms of $\bar{T},\bar{X}$ respectively. It follows 
that the hypersurfaces are tangent to the singularity while the 
geodesics are transversal. Thus in these coordinates, a $\tau=$const. 
hypersurface can be described in terms of the $T,X$ coordinates until 
it intersects the singularity tangentially 
$(T={\sf ch}(\kappa),X={\sf sh}(\kappa)$. Then it continues from the 
tangential point 
$(\bar{T}=-{\sf ch}(\kappa),\bar{X}={\sf sh}(\kappa)$ in terms of 
$\bar{T},\bar{X}$ coordinates.

If one wants to avoid the jump by $-2\sqrt{1+X^2}$ between $T$ and 
$\bar{T}$ at the singularity we may substitute $T,\bar{T}$ by 
$T-\sqrt{1+X^2},\; \bar{T}+\sqrt{1+\bar{X}^2}$ which maps the singularity 
to the common line $T=\bar{T}=0$.  

\subsection{Relation between different GGP coordinates}
\label{sd.5}

A GGP coordinate system is determined by two parameters $M,e$. It determines 
a geodesic congruence $C_e$ in the BHWHT spacetime which by construction 
is isometric to two copies of two complementary halves 
(namely SS, BH and WH, MSS parts) of Kruskal spacetime where 
the latter carries a single parameter $M$. From this point of view the 
additional parameter $e$ is redundant and can be fixed to any desired value 
by a spacetime diffeomorphism. To change between two different values, say
$e,e'$ we relate them through the Schwarzschild time 
\be \label{d.36b}
\tau^\pm_e=e \; t\mp \int\; dr \;\frac{\sqrt{e^2-\Phi(r)}}{\Phi(r)}
\ee
which means 
\be \label{d.37}
\frac{1}{e}[\tau^\pm_e \pm\int\; dr \;\frac{\sqrt{e^2-\Phi(r)}}{\Phi(r)}]
=t=
\frac{1}{e'}[\tau^\pm_{e'} \pm\int\; dr \;\frac{\sqrt{(e')^2-\Phi(r)}}{\Phi(r)}]
\ee
This corresponds to a temporal diffeomorphism (consider $e,e'>0$) 
\be \label{d.38}
\tau^\pm_{e'}=\frac{e'}{e} \tau_e^\pm \pm [e']\;\int \; dr\; 
\Phi(r)^{-1}[\sqrt{1-e^{-2}\;\Phi(r)}-\sqrt{1-(e')^{-2}\;\Phi(r)}]
\ee
while the radial coordinate is unchanged. Noticing that 
$g^e_{rr}=q^e_{rr} =e ^{-2}$ we see that the right hand side of (\ref{d.38})
is exactly the second Dirac observable conjugate to the mass as derived from 
the Hamiltonian formulation. This shows that in the Lagrangian formulation
the second Dirac observable is considered as a gauge degree of freedom.

\subsection{Dirac observable conjugate to mass in GGP}
\label{sd.6}

We evaluate the Dirac observable conjugate to the mass $M$ (dropping 
inessential constants)
\be \label{d.40}
Q=\int_\mathbb{R}\; dz\; \delta'\; \frac{\sqrt{1-\gamma^2\Phi}}{\Phi},\;
q_{ab} \; dx^a\;dx^b=\gamma^2(z) dz^2+\delta(z)^2 d\Omega^2,\;\;
\Phi=1-\frac{R}{|\delta(z)|},\; R=2M
\ee
in the GGP gauge $\delta(z)=z, \; \gamma(z)=e^{-2},\; e^2\ge 1,\;e=$const.
This gives 
\be \label{d.41}
Q=2\;\int_{\mathbb{R}^+}\; dr\;\frac{\sqrt{1-\gamma^2\Phi(r)}}{\Phi(r)},\;
\Phi(r)=1-\frac{R}{r}
\ee
The integral (\ref{d.41}) is ill-defined as it stands due to a 
singularity at $r=R$ and $r=\infty$ while it is regular at $r=0$. 
It therefore needs a more detailed definition. We regularise it with 
three parameters $c<R,d<R,L>2\;R$ 
\be \label{d.41b}
\frac{Q_{c,d,L}}{2}=
\int_0^{R-c}\; dr\;\frac{\sqrt{1-\gamma^2\Phi}}{\Phi},\;
+\int_{R+d}^L\; dr\;\frac{\sqrt{1-\gamma^2\Phi}}{\Phi},\;
\ee
and eventually take $c,d,L^{-1}\to 0+$. Note that $\gamma^2\le 1$. 
We treat the case $\gamma^2=1$ separately from the case $\gamma^2<1$.\\
\\
{\bf Case $\gamma^2=1$:}\\
We have
\ba \label{d.42}
\frac{Q_{c,d,L}}{2} 
&=& \sqrt{R}\;[
\int_0^{R-c}\; dr\;\frac{\sqrt{r}}{r-R}\;
+\int_{R+d}^L\; dr\;\frac{\sqrt{r}}{r-R}
]
\nonumber\\
&=& R\;[
\int_0^{1-c/R}\; dx\;\frac{\sqrt{x}}{x-1}\;
+\int_{1+d/R}^{L/R}\; dx\;\frac{\sqrt{x}}{x-1}
]
\nonumber\\
&=& 2\;R\;[
\int_0^{\sqrt{1-c/R}}\; dy\;\frac{y^2}{y^2-1}\;
+\int_{\sqrt{1+d/R}}^{\sqrt{L/R}}\; dy\;\frac{y^2}{y^2-1}
]
\nonumber\\
&=& 2\;R\;[\sqrt{L/R}-\sqrt{1+d/R}+\sqrt{1-c/R}
-\int_0^{\sqrt{1-c/R}}\; dy\;\frac{1}{1-y^2}\;
+\int_{\sqrt{1+d/R}}^{\sqrt{L/R}}\; dy\;\frac{1}{y^2-1}
]
\nonumber\\
&=& 2\;R\;[\sqrt{L/R}-\sqrt{1+d/R}+\sqrt{1-c/R}
-\frac{1}{2}\;[\ln(\frac{1+y}{1-y})]_0^{\sqrt{1-c/R}}
+\frac{1}{2}\;
[\ln(\frac{y-1}{y+1})]_{\sqrt{1+d/R}}^{\sqrt{L/R}}
]
\nonumber\\
&=& 2\;R\;[\sqrt{L/R}-\sqrt{1+d/R}+\sqrt{1-c/R}
\nonumber\\
&& -\frac{1}{2}\;[\ln(\frac{[1+y]^2}{1-y^2})]_{y=\sqrt{1-c/R}}
-\frac{1}{2}\;[\ln(\frac{y^2-1}{[y+1]^2})]_{y=\sqrt{1+d/R}}
+\frac{1}{2}\;
[\ln(\frac{y-1}{y+1})]_{y=\sqrt{L/R}}
]
\ea
with $x=r/R=y^2$. The terms $\sqrt{1-c/R}-\sqrt{1+d/R},\;
[\ln(y+1)]_{\sqrt{1-c/R}}^{\sqrt{1+d/R}},\;
\ln((y+1)/(y-1))_{y=\sqrt{L/R}}$ vanish independently of how 
we take the limit $c,d,L^{-1}\to 0$. Thus, up to those terms, 
(\ref{d.42}) becomes 
\be \label{d.43}
\frac{Q_{c,d,L}}{2}=2\;R\;[\sqrt{L/R}-\frac{1}{2}\ln(\frac{d}{c})]
\ee
{\bf Case $\gamma^2<1$}\\
We have with $a^2:=\frac{\gamma^2}{1-\gamma^2}$
\ba \label{d.44}
\frac{Q_{c,d,L}}{2}
&=& 2\;R\;\sqrt{1-\gamma^2}\;[
\int_0^{\sqrt{1-c/R}}\; dy\;
\frac{y^2\sqrt{y^2+a^2}}{y^2-1}
+\int_{\sqrt{1+d/R}}^{\sqrt{L/R}}\; dy\;
\frac{y^2\sqrt{y^2+a^2}}{y^2-1}
]
\\
&=& 2\;R\;\sqrt{1-\gamma^2}\;[
\int_0^{\sqrt{1-c/R}}\; dy\;
\{\sqrt{y^2+a^2}
+\frac{\sqrt{y^2+a^2}}{y^2-1}\}
+\int_{\sqrt{1+d/R}}^{\sqrt{L/R}}\; dy\;
\{\sqrt{y^2+a^2}
+\frac{\sqrt{y^2+a^2}}{y^2-1}\}
]
\nonumber
\ea
We have the elementary integrals
\ba \label{d.45}
&& \int\; dy\; \sqrt{y^2+a^2}=
\frac{1}{2}[y\sqrt{y^2+a^2}+\ln(y+\sqrt{y^2+a^2})]
\nonumber\\
&& \int\; dy\; \frac{1}{\sqrt{y^2+a^2}}=
\ln(y+\sqrt{y^2+a^2})
\nonumber\\
&& 
\int\; dy\; \frac{\sqrt{y^2+a^2}}{y^2-b^2}
=\ln(y+\sqrt{y^2+a^2})
+\int\; dy\; \frac{1}{\sqrt{y^2+a^2}}\;[\frac{y^2+a^2}{y^2-b^2}-1]
\nonumber\\
&=&
\ln(y+\sqrt{y^2+a^2})
+(a^2+b^2)\;
\int\; dy\; \frac{1}{\sqrt{y^2+a^2}}\;\frac{1}{y^2-b^2}
\ea
We have
\be \label{d.46}
\frac{d}{dy}{\sf arth}(h\frac{y}{\sqrt{y^2+a^2}})
=\frac{d}{dy}{\sf arcoth}(h\frac{y}{\sqrt{y^2+a^2}})
=-\frac{1}{\sqrt{y^2+a^2}}\;\frac{ha^2}{h^2-1}\;
\frac{1}{y^2-\frac{a^2}{h^2-1}}
\ee
To match this to last integral in (\ref{d.45}) we pick 
$a^2/(h^2-1)=b^2,\; h^2=1+a^2/b^2$ so that 
$ha^2/(h^2-1)
=b\sqrt{a^2+b^2}$.
Then the argument 
of the hyperbolic function becomes 
$\sqrt{1+a^2/b^2}y/\sqrt{y^2+a^2})$ which must take values in 
$(-1,1)$ and $\mathbb{R}-[-1,1]$ respectively for hyperbolic 
tangens and cotangens 
respectively. For the tangens function this 
implies due to $b=1$ in our case that $|y|<1$ while $|y|>1$ for 
the cotangens function.  
Assembling these findings we have with $b^2=1$ in our case
\ba \label{d.47}
&& \frac{Q_{c,d,L}}{2}
= 2\;R\;\sqrt{1-\gamma^2}\;
\{
\frac{1}{2}[y\sqrt{y^2+a^2}
+3\;\ln(y+\sqrt{y^2+a^2})]_0^{\sqrt{1-c/R}}
\nonumber\\
&& +\frac{1}{2}[y\sqrt{y^2+a^2}
+3\;\ln(y+\sqrt{y^2+a^2})]_{\sqrt{1+d/R}}^{\sqrt{L/R}}
\nonumber\\
&& 
-\sqrt{a^2+1}\;
[{\sf arth}(\sqrt{a^2+1}\frac{y}{\sqrt{y^2+a^2}})]_0^{\sqrt{1-c/R}}
-\sqrt{a^2+1}\;
[{\sf arcoth}(\sqrt{a^2+1}\frac{y}{\sqrt{y^2+a^2}})]_{\sqrt{1+d/R}}^{\sqrt{L/R}}
\}
\ea
where we may use 
\be \label{d.48}
{\sf arth}(z)=\frac{1}{2}\ln(\frac{1+z}{1-z}),\;
{\sf arcoth}(z)=\frac{1}{2}\ln(\frac{z+1}{z-1}),\;
%
%
\ee
for $|z|<1, |z|>1$ respectively. Up to terms that vanish no matter how we 
take $c,d,L^{-1}\to 0$ the first two terms in (\ref{d.47}) maybe combined 
into 
\be \label{d.49}
2\;R\;\sqrt{1-\gamma^2}\;
\frac{1}{2}[y\sqrt{y^2+a^2}+3
\ln(y+\sqrt{y^2+1})]_0^{\sqrt{L/R}}
\ee
where the contribution from $y=0$ vanishes. The last two terms in (\ref{d.47})
are 
with $z(y)=\sqrt{a^2+1}\frac{y}{\sqrt{y^2+a^2}}$
\be \label{d.50}
2\;R\;\sqrt{1-\gamma^2}\;(-\frac{1}{2}\sqrt{1+a^2})
\ln(
[\frac{1+z}{1-z}]_{y=\sqrt{1-c/R}}\;
[\frac{1-z}{1+z}]_{y=0}\;
[\frac{z+1}{z-1}]_{y=\sqrt{L/R}}\;
[\frac{z-1}{z+1}]_{y=\sqrt{1+d/R}})
\ee
We have 
\be \label{d.51}
\frac{z+1}{z-1}=\frac{\sqrt{1+a^2}\;y+\sqrt{y^2+a^2}}
{\sqrt{1+a^2}\;y-\sqrt{y^2+a^2}}
\ee
which for $y=0$ equals $-1$ and for $y=\sqrt{L/R}$ equals
\be \label{d.52}
\frac{\sqrt{1+a^2}+\sqrt{1+a^2 R/L}}
{\sqrt{1+a^2}-\sqrt{1+a^2 R/L}} \to 
\frac{\sqrt{1+a^2}+1}
{\sqrt{1+a^2}-1}  
\ee
which is finite no matter how $c,d,L^{-1}\to 0$. For $y=\sqrt{1+d/R}$
(\ref{d.51}) becomes
\ba \label{d.53} 
&&\frac{\sqrt{1+a^2}\;\sqrt{1+d/R}+\sqrt{1+a^2+d/R}}
{\sqrt{1+a^2}\;\sqrt{1+d/R}-\sqrt{1+a^2+d/R}}
=\frac{[\sqrt{1+a^2}\;\sqrt{1+d/R}+\sqrt{1+a^2+d/R}]^2}
{(1+a^2)\;(1+d/R)-(1+a^2+d/R)}
\nonumber\\
&=& \frac{[\sqrt{1+a^2}\;\sqrt{1+d/R}+\sqrt{1+a^2+d/R}]^2}{a^2\;d/R}
\to \frac{4(1+a^2)^2}{a^2\; d/R}
\ea
while for 
$y=\sqrt{1-c/R}$ (\ref{d.51}) becomes
\ba \label{d.54} 
&& \frac{\sqrt{1+a^2}\;\sqrt{1-c/R}+\sqrt{1+a^2-c/R}}
{\sqrt{1+a^2}\;\sqrt{1-d/R}-\sqrt{1+a^2-d/R}}
=\frac{[\sqrt{1+a^2}\;\sqrt{1-d/R}+\sqrt{1+a^2-d/R}]^2}
{(1+a^2)\;(1-c/R)-(1+a^2-d/R)}
\nonumber\\
&=& \frac{[\sqrt{1+a^2}\;\sqrt{1-c/R}+\sqrt{1+a^2-c/R}]^2}{-a^2\;c/R}
\to -\frac{4(1+a^2)^2}{a^2\; c/R}
\ea
Hence the last two terms in (\ref{d.47}) approach 
\ba \label{d.55}
&& 2\;R\;\sqrt{1-\gamma^2}\;(-\frac{1}{2}\sqrt{1+a^2})
\ln(
\frac{4(1+a^2)^2}{a^2\; c/R}\;
\frac{\sqrt{1+a^2}+1}{\sqrt{1+a^2}-1}\;  
\frac{a^2\; d/R}{4(1+a^2)^2})
\nonumber\\
&=&
2\;R\;\sqrt{1-\gamma^2}\;(-\frac{1}{2}\sqrt{1+a^2})
\ln(
\frac{d}{c}\;
\frac{\sqrt{1+a^2}+1}{\sqrt{1+a^2}-1}) 
\ea
Altogether with $a^2=\gamma^2/(1-\gamma^2),1+a^2=1/(1-\gamma^2))$
\be \label{d.56}
\frac{Q_{c,d,L}}{2}
=
\frac{2\;R}{\sqrt{1+a^2}}\;\{
\frac{1}{2}[y\sqrt{y^2+a^2}+3
\ln(y+\sqrt{y^2+a^2})]_{y=\sqrt{L/R}}
-\frac{1}{2}\sqrt{1+a^2}\;
\ln(\frac{d}{c}\;\frac{\sqrt{1+a^2}+1}{\sqrt{1+a^2}-1}) 
\}
\ee
In the limit $\gamma\to 1-$ this becomes 
\be \label{d.57}
\frac{Q_{c,d,L}}{2}
=
2\;R\;\{
\frac{1}{2}\sqrt{L/R}
-\frac{1}{2}\ln(\frac{d}{c})
\}
\ee
which differs by a factor $1/2$ from the result (\ref{d.43})
i.e. the integral and the limit $\gamma\to 1-$ do not commute.\\
\\
Yet, we find for all values of $\gamma$ that 
\be \label{d.58}
\frac{Q_{c,d,L}}{4R}=g(\gamma,L/R)-\frac{1}{2} \ln(d/c)
\ee
where $g(\gamma,L/R)=Q_{c,d,}/(4R)+\frac{1}{2}\ln(d/c)$ is given 
explicitly in (\ref{d.43}) and (\ref{d.56}) for $\gamma^2=1,\gamma^2<1$ 
respectively and diverges as $\sqrt{L/R}$ as $L\to \infty$. 
The fact that (\ref{d.43}) is not the limit of (\ref{d.56}) as 
$\gamma\to 1-$ suggests to consider two different strategies:\\
\\
{\bf Strategy 1:}\\
We consider the exact GPG $\gamma^2\equiv 1$, i.e. $\gamma$ is not a dynamical 
variable. Then we pick the following limit $c,d,L^{-1}\to 0$ in 
(\ref{d.43})
\be \label{d.59}
\ln(\frac{d}{c})=2\sqrt{L/R}+\frac{Q}{2R},\;\;
c=e^{-L/R}
\ee
Then the large $L$ behaviour of $d$ is $d\propto e^{2\sqrt{L/R}-L/R}$, 
thus both $c,d$ decay exponentially in $L/R$.
Thus the value of $Q$ comes about simply because of the ambiguity in the 
{\it principal value regularisation} of the integral defining it. However,
$Q$ is not a parameter on which the spatial metric depends, it is
the variable conjugate to $M$ but its  existence has no further 
consequences for the theory. The advantage of this strategy is that it 
yields a consistent picture, i.e. the integral defining $Q$ is actually 
able to produce that value while the exact GPG is imposed, 
without introducing additional observable consequences. 
Furthermore, it agrees 
with the Kantowski-Sachs picture that we review in the next subsection 
which also yields two Dirac observables one of which is $M$ and the 
other one is related to a time rescaling freedom $\kappa$ which in the 
GPG is also arises, however, not as a Dirac observable but rather as a 
residual gauge freedom in choosing the physical Hamiltonian, see 
appendix \ref{sa} and \ref{sb}.\\
\\
{\bf Strategy 2:}\\
We consider $0<\gamma^2<1$ as a dynamical variable. 
Then we pick the following limit $c,d,L^{-1}\to 0$ in (\ref{d.56})
\be \label{d.59b}
\frac{1}{2}\ln(\frac{d}{c})=g(\gamma,L/R)-\frac{\zeta}{2}\;
{\sf arth}(\gamma),\;
c=e^{-L/R}
\ee
for $\zeta>0$ some numerical constant.
Then the large $L$ behaviour of $d$ is $d\propto 
e^{\sqrt{L/R}-L/R}$ 
and thus both $c,d$ decay exponentially
in $L/R$. The limit $L\to \infty$ then yields
\be \label{d.60}
\frac{Q}{R}=\zeta {\sf arth(\gamma)}
\ee
Now the physical Hamiltonian of sperically symmetric vacuum 
gravity is just $R=2M$ up to a constant which yields the equations 
of motion $M=$const. and $\dot{Q}=$const. Thus $Q$ diverges 
linearly in $\tau$. Thus $\gamma={\sf th}(Q/(\zeta R)$ approaches 
exponentially fast (the faster the smaller $\zeta$) the value $\pm 1$ from 
below/above. Thus $e^2=1/\gamma^2$ approaches the value $e^2=1$ exponentially 
fast from above. I.e. the {\it generalised} GP coordinates become 
{\it dynamically} exponentially 
fast the {\it exact} GP coordinates.

This conclusion is of course dependent on the choice of the finite part
in the regularisation (\ref{d.59}) which is a regularisation 
ambiguity. It is motivated by the desire to reconcile the fact that 
in the Lagrangian picture the parameter $e$ is a choice of gauge 
that can be removed by a temporal diffeomorphism while in the 
Hamiltonian picture it is a function of the Dirac observables 
$M,Q$ and thus cannot be gauged away. Thus the only way to bring 
both pictures into agreement with Birkhoff's theorem that there is only 
one physical degree of freedom in the Lagrangian picture is to ensure that 
in the Hamiltonian picture the additional degree of freedom {\it 
dynamically} settles to the value it can be assigned to in the Lagrangian 
picture. This can be viewed as a temporal diffeomorphism as well but 
that diffeomorphism in the Hamiltonian picture is a symmetry transformation.
This requirement still does not fix the finite part of $f(\gamma)$
(\ref{d.59}) uniquely,
any bijection $f:(-1,1)\to \mathbb{R};\; \gamma\mapsto f(\gamma)$ 
with the property that $\lim_{\gamma\to \pm\; 1\mp}=\pm \infty$ will 
do such as 
$f(\gamma)=\frac{1+\gamma}{1-\gamma}$. However, the faster $\gamma^2\to 1$
dynamically, the faster the black hole becomes truly static observationally
no matter which picture is used. 

Note that these conclusions hold only in the strictly sperically symmetric 
vacuum case. With the presence of gravitational perturbations and matter,
the physical Hamiltonian will be of the form $H=M+H_1(M,Q)$ where 
$H_1$ contains the information about perturbations and matter and which 
will depend on both $M,Q$ when expanding about the backgrond metric 
parametrised by $M,e$ and thus $M,Q$. This means that $M$ is no longer 
a constant of motion and that $Q$ is not necessarily diverging which 
means that $\gamma^2$ does not necessarily become unity as time progresses.
In this case we must use the SAPT framework \cite{SAPT,ST} to capture 
the corresponding quantum backreaction.\\ 
\\
We will not follow the second strategy in the present paper because 
it also requires to revisit the decay behaviour of the fields and 
the whole boundary structure analysis that leads to the reduced Hamiltonian
as developed in section \ref{s4} and which may lead to some restriction 
on the freedom to choose $f(\gamma)$.
However, the advantage of the second strategy is that it offers the possibility 
to change the integration constant $M$ dynamically, a possibility that 
one may want to keep in mind for future investigations.

\section{Kantowski-Sachs spacetimes}
\label{se}

The purpose of this section is to provide the link with the substantial 
amount of work \cite{LQG-BH} that has been devoted to the interior 
of quantum black holes. In particular we show that there is 
no contradiction between the presence of two independent and canonically 
conjugate Dirac observables on which the metric depends non-trivially
and Birkhoff's theorem: While the reduced or physical phase space 
is manifestly two-dimensional, one of the degrees of freedom corresponds 
to a time rescaling of the spacetime coordinates which is considered a 
gauge transformation in the Lagrangean formulation but certainly not 
in the Hamiltonian formulation (by definition a Dirac observable is 
gauge invariant). This is a general phenomenon for cosmological models
as has been pointed out in \cite{AshtekarSamuel}.   

Kantowski-Sachs (KS) spacetimes are homogeneous and spherically symmetric 
rather 
than isotropic spacetimes described by the line element
\be \label{e.1}
ds^2=-D(T)^2\;dT^2\;+A(T)^2\; dX^2\; +B(T)^2\; d\Omega^2
\ee
where $d\Omega^2$ is the standard line element of the round sphere metric 
$\Omega_{EF}$ while $X\in [-K/2,K/2]$ is a KS ``radial'' 
coordinate with spatial cut-off $K\in \mathbb{R}_+$ and $A,B,D$ 
are functions of KS time $T\in \mathbb{R}$ only. We consider all 
coordinates $T,X,\theta,\varphi$ dimension-free while $D,A,B$ have dimension 
of length. Alternatively we may want to introduce dimensionful coordinates 
$\hat{T}=L\; T,\; \hat{X}=L X$ where $L$ is some unit of length 
and $\hat{A}=\frac{A}{L},\; \hat{B}=\frac{B}{L},\;\hat{D}=\frac{D}{L}$ 
become dimensionless. 
 
To obtain the Hamiltonian description 
of these models we identify the ADM variables 
\be \label{e.2}
N=D,\; N^a=0,\; 
q_{ab}
=
A^2\;\delta^X_a\;\delta^X_b+B^2\;
\delta^E_a\; \delta^F_b \Omega_{EF},\;
=:
q_X\;\delta^X_a\;\delta^X_b+q_S\;
\delta^E_a\; \delta^F_b \Omega_{EF},\;
\ee
and 
\be \label{e.3}
k_{ab}=\frac{1}{2N}[\dot{q}_{ab}-({\cal L}_{\vec{N}} q)_{ab}]
=\frac{1}{D}[A\dot{A}\delta^X_a\;\delta^X_b+B\;\dot{B}
\delta^E_a\; \delta^F_b \Omega_{EF}]
=:k_X\;\delta^X_a\;\delta^X_b+k_S\;
\delta^E_a\; \delta^F_b \Omega_{EF},\;
\ee
whence
\ba \label{e.4}
p^{ab} &=& \sqrt{\det(q)}[q^{ac}\; q^{bd}-q^{ab}\; q^{cd}]\;k_{cd}
\nonumber\\
&=& A\; B^2 \sqrt{\det(\Omega)}\;[
\delta^a_X\delta^b_X\;(A^{-4} k_X- A^{-2}(A^{-2} k_X+2\;B^{-2} k_S))
\nonumber\\
&& +\delta^a_E\delta^b_F \Omega^{EF}\; 
(B^{-4} k_S- B^{-2}(A^{-2} k_X+2\;B^{-2} k_S))
]
\nonumber\\
&=:& \sqrt{\det(\Omega)}\;
(p^X\;\delta_X^a\;\delta_X^b+p^S\;
\delta_E^a\; \delta_F^b \Omega^{EF})
\ea
Here $q_X, q_S, k_X, k_S, p^X, p^S$ do not depend on the spatial coordinates
and $k_X, k_S$ have dimension of length so that 
$p^X, p^S$ are dimension free. 

We rescale the Einstein-Hilbert action by $\frac{1}{K}$ and 
pull back the symplectic potential
\ba \label{e.3b}
&& G_N\;\Theta=\frac{1}{K}\;\int\; d^3x\;\; p^{ab}\;[\delta q]_{ab}
=\frac{1}{K}\int\; dX\;\int\; d\theta\;d\varphi\;\sqrt{\det(\Omega)}\; 
(2A\; p^X\;[\delta A]
+4B\; \;p^S\; [\delta B])
\nonumber\\
&=& 4\pi\;
(2A\; p^X\;[\delta A]
+4B\; p^S\; [\delta B])
\ea
where $G_N$ is Newton's constant which has dimension of length squared 
in units in which $\hbar=1$. We define 
\be \label{e.4b}
p_A=2\;A\; p^X,\; p_B=4\; B\; p^S
\ee
which have dimension of length. Accordingly we have the non-vanishing 
Poisson brackets
\be \label{e.5}
\{p_A,A\}=\{p_B,B\}=g_N,\; \; g_N=\frac{G_N}{4\pi}
\ee
Next we compute the constraints 
\ba \label{e.6}
&& G_N\; C_a(N^a)=-\frac{2}{K} \int\; d^3x\; N^a D_b p^b_a\equiv 0,\;
\nonumber\\
&& G_N\; C(N)=\frac{1}{K}\;\int\; d^3x\;N\; 
[\det(q)]^{-1/2}(p^{ab}\; p_{ab}-[p^a_a]^2)-
\det(q)]^{1/2} R(q)]
\nonumber\\
&=& \frac{4\pi\;D}{A\; B^2}\;
(\frac{1}{8}\;[A\;p_A]^2-\frac{1}{4}\;[A\; p_A]\;[B\;p_B]-2 [A\;B]^2)
\ea
where the results of section \ref{s2} were used, in particular that 
$R[\Omega]=2$. The appearance of (\ref{e.5}) suggests to 
transform from $A,B>0$ to $x:=\ln(A/L),\; y:=\ln(B/L)\in \mathbb{R}$ and 
to introduce $p_x:=L^{-2}\; A p_A,\; p_y=L^{-2} \; B\; p_B$. Then
with $\tilde{D}=\frac{D\; L^2}{A\; B^2}$ 
\be \label{e.7}
\{p_x, x\}=\{p_y, y\}=g=\frac{g_N}{L^2},\;\;
C(N)=\frac{\tilde{D}}{g}\;
(\frac{1}{8}\;p_x^2-\frac{1}{4}\;p_x\;p_y-2 e^{2(x+y)})
=:\frac{\tilde{D}}{g}\;\tilde{C}
\ee
~\\
We can now develop three equivalent descriptions of the system:\\
1. The reduced phase space description in terms of a physical 
Hamiltonian and true degrees of freedom.\\
2. The description in terms of non-relational Dirac observables.\\
3. The description in terms of relational Dirac observables.

\subsection{Reduced phase space description}
\label{se.1}

Since $\tilde{C}$ is linear in momentum $p_y$ we choose $y$ as a clock
and rewrite the constraint as 
\be \label{e.8}
\tilde{D}\; \tilde{C}=\hat{D}\;\hat{C},\;\;
\hat{D}=-\frac{\tilde{D} p_x}{4},\;\;
\hat{C}=p_y+h,\;\;h=8\;\frac{e^{2(x+y)}}{p_x}-\frac{p_x}{2}
\ee
We impose the explicitly time dependent gauge fixing condition 
\be \label{e.9}
\hat{G}_T:=y-T
\ee
It is preserved in time on the constraint surface $\hat{C}=0$ iff 
(note the distinction betwee the total and explicit time derivative)
\be \label{e.9b}
\frac{d}{dT} \hat{G}_T=\{\hat{D}\hat{C}/g,\hat{G}_T\}
+\frac{\partial}{\partial T} \hat{G}_T=\hat{D}-1=0
\ee
which fixes $\hat{D}_\ast=1$. Thus the gauge degrees of freedom are 
$y,p_y$ while the true degrees of freedom are $x,p_x$. The reduced 
Hamiltonian is defined for functions $F$ depending only on $x,p_x$
by 
\be \label{e.10}
\{H,F\}:=\{\hat{D}\hat{C}/g,F\}_{\hat{D}=\hat{D}_\ast,y=T,p_y=-h}
=\hat{D}_\ast/g\{h,F\}_{y=T}=\{h_{y=T},F\}
\ee
hence 
\be \label{e.11}
H=H_T=
\frac{1}{g}\;
(8\;\frac{e^{2(x+T)}}{p_x}-\frac{p_x}{2})
\ee
which is explicitly time dependent. 

We now solve the resulting equations of motion
\be \label{e.12}
\dot{x}(T)=\{H_T(x,p_x),x\}_{x=x(T),p_x=p_x(T)},\;
\dot{p}_x(T)=\{H_T(x,p_x),p_x\}_{x=x(T),p_x=p_x(T)}
\ee
We note that 
\be \label{e.13}
\frac{d}{dT} H_T=\frac{\partial}{\partial T} H_T=
g^{-1}\; 16\; e^{2(x+T)} \; p_x^{-1}
\ee
while 
\be \label{e.14}
\frac{d}{dT} p_x=-16\; e^{2(x+T)} \; p_x^{-1}
\ee
Hence 
\be \label{e.15}
E_T(x,p_x):=H_T(x,p_x)+ g^{-1} p_x
=\frac{1}{g}\;
(8\;\frac{e^{2(x+T)}}{p_x}+\frac{p_x}{2})
\ee
is a constant of motion $E_T=\epsilon$
on the trajectories. As a function of 
the reduced phase space it is explicitly time dependent. 
Next, combining (\ref{e.14}) and (\ref{e.15}) we have 
\be \label{e.16}
\dot{p}_x=-2\; g\; E_T+p_x
\ee
which is solved by
\be \label{e.17} 
p_x(T)=\kappa\; e^T+2\;g\;E_T 
\ee
where $\kappa$ is an integration constant. 
It follows
\be \label{e.18}
16\; e^{2(x+T)}=p_x(2\;\;E_T-p_x)=-c\;e^T\;(2\;E_T+c\; e^T)\;\;\Rightarrow
\;\;
e^{2\;x(T)}=-\frac{\kappa}{16}\;(\kappa+2\;g\;E_T\; e^{-T})
\ee
which provides the general and explicit solution. Since (\ref{e.18}) 
is positive, for a solution parametrised by $\epsilon,\kappa$,
we must necessarily have $\epsilon\kappa<0$ and 
the range of $T$ becomes confined to the 
set 
\be \label{e.19}
2\;g\; \frac{|\epsilon|}{|\kappa|} \; e^{-T}>1 
\ee
We may also 
combine (\ref{e.17}) and (\ref{e.18}) into the statement that 
\be \label{e.20}
c_T(x,p_x):=-16\;\frac{e^{2x+T}}{p_x}
\ee
is an explicitly time dependent function on the phase space which is 
a constant $c_T=\kappa$ on a trajectory. 

In terms of $E_T, c_T$ the 
description of this dynamical system is therefore especially convenient.
The reduced Hamiltonian and true degrees of freedom are given by 
\be \label{e.21}
H_T=-\frac{1}{g}\;(g\;E_T+c_T\; e^T),\;\;
p_x=c_T\; e^T+2\;g\;E_T,\;
e^{2x}=-\frac{c_T}{16}\;(c_T+2\;g\;E_T\; e^{-T})
\ee
where we denoted objects with explicit time dependence with subscript $T$.
The inversion (\ref{e.21}) is given by (\ref{e.15}) and (\ref{e.20})
which yields
\be \label{e.21b}
\{E_T,c_T\}(x,p_x)
=\frac{1}{g}\;
\{8\;\frac{e^{2(x+T)}}{p_x}+\frac{p_x}{2},(-16)\;\frac{e^{2x+T}}{p_x}\}
=\{\frac{p_x}{2\;g},(-16)\;\frac{e^{2x+T}}{p_x}\}
=(-16)\;\frac{e^{2x+T}}{p_x}=c_T
\ee
To interpret $E_T,c_T$ geometrically, we express $D,A,B$ in terms of them. 
We have 
\ba \label{e.22}
[\frac{B}{L}]^2 &=& e^{2y}=e^{2T}
\nonumber\\
{[}\frac{A}{L}]^2 &=& e^{2x}
=-\frac{c_T^2}{16}\;(1+ 2\;g\;\frac{E_T}{c_T}\;e^{-T})
\nonumber\\
D &=& 
\frac{A\; B^2\;\tilde{D}}{L^2}
=-4\;\frac{A\; B^2\;\hat{D}_\ast}{L^2\; p_x}
=-4\;\frac{A\; B^2}{L^2\; (c_T\; e^T+2\;g\;E_T)}
\nonumber\\
D^2 
&=& 16\; L^2\;e^{4T}\; 
(-\frac{c_T^2}{16}\;(1+ 2\;g\;\frac{E_T}{c_T}\;e^{-T})
\frac{1}{(c_T\; e^T+2\;g\;E_T)^2}
\nonumber\\
&=& -L^2\;e^{2T}\; 
\frac{1}{1+2\;g\frac{E_T}{c_T} e^{-T}}
\ea
This suggests to introduce the new coordinates and functions 
\be \label{e.23}
r:=L\; e^T,\; t:=L\;X,\;
M_T:=-L^{-1}\;\frac{E_T}{c_T}
\ee
In terms of these the line element takes the form ($g_N=L^2 g$)
\be \label{e.24}
ds^2=-\frac{1}{\frac{2\; g_N\;M_T}{r}-1}\; dr^2
+[\frac{c_T}{4}]^2\;(\frac{2\; g_N\;M_T}{r}-1)\; dt^2
+r^2\; d\Omega^2
\ee
This is precisely the interior Schwarzschild solution with the roles 
of $r,t$ of being spatial and temporal coordinates switched since 
in the range (\ref{e.19}) i.e. $r<2 g_N M_T$ the coefficients 
of $dr^2$ and $dt^2$ are negative and positive respectively. 
The metric depends on the explicitly time $T$ dependent functions on the 
reduced phase space given by $M_T,c_T$ which are conjugate up to 
a factor of $L^{-1}$ 
\be \label{e.24b}
\{M_T, c_T\}=L^{-1}
\ee
as follows from (\ref{e.21}). Clearly $M_T$ which is a positive constant on 
solutions is nothing but the mass of the black hole while $c_T$ 
which is a dimensionless constant on solutions is nothing but a rescaling 
freedom of $t$.\\ 
\\
The interesting point is that although 1. the reduced Hamiltonian $H_T$, 2.
the mass $M_T$ and 3. the rescaling freedom $c_T$ 
are explicitly time $T$ dependent and although $M_T, c_T$ are canonically
conjugate coordinates of the {\it 2-dimensional} reduced phase space, 
nevertheless on solutions $M_T,c_T$ are in fact time $T$ independent.   

\subsection{Non-relational Dirac observables}
\label{se.2}
  
We consider the full phase space with conjugate pairs $(x,p_x),(y,p_y)$ and 
the constraint in the form $\hat{C}=p_y+h(x,p_x,p_y)$. We note that 
$x,y$ appear only in the combination $x+y$ in $h$. Thus $x-y$ is cyclic     
and therefore $p_x-p_y$ is gauge invariant, i.e. a Dirac observable.
Since $\hat{C}$ is trivially a Dirac observable also 
\be \label{e.25}
E(x,p_x,y):=g^{-1}(p_x-p_y)+\hat{C}=g^{-1}[8\frac{e^{2(x+y)}}{p_x}
+\frac{p_x}{2}]
\ee
is a Dirac observable. In the gauge $y=T$ it coincides with 
$E_T(x,p_x)$. Correspondingly we conjecture that 
\be \label{e.26}
c(x,p_x,y):=-16 \; \frac{e^{2x+y}}{p_x}
\ee
is a second independent Dirac observable because in the gauge $y=T$ 
it coincides with $c_T(x,p_x)$. This is readily confirmed 
\be \label{e.27}
\{\hat{C},c\}=-16\; g^{-1}\{8\;\frac{e^{2(x+y)}}{p_x}+p_y-\frac{p_x}{2},
\frac{e^{2x+y}}{p_x}\}
=-\frac{16}{p_x\; g}\;\{p_y-\frac{p_x}{2},e^{2x+y}\}=0
\ee
This provides an independent interpretation of the true degrees of 
freedom $M_T, c_T$ of the previous subsection: 
They correspond to the Dirac observables 
$M:=L^{-1} E/c, c$ evaluated on the gauge cut $y=T$. In particular they are 
canonically conjugate 
\be \label{e.28}
\{M,c\}=L^{-1}
\ee
Being Dirac observables, they have trivial ``evolution'' with respect to 
$\hat{C}$ by construction. This is equivalent to the statement that 
$E_T, c_T$ are constants of motion with respect to the reduced Hamiltonian
because 
\be \label{e.29}
\frac{d}{dT} E_T=
\frac{\partial}{\partial T} E_T+\{H_T, E_T\}
=\{\hat{C},E\}_{y=T}
\ee
and similar for $c$.

\subsection{Relational Dirac observables}
\label{se.3}
  
The relational Dirac observables corresponding to a function 
$F$ of the true degrees of freedom $x,p_x$ are given by 
the explicit formula
\be \label{e.30}
O_F(T):=\sum_{n=0}^\infty\; \frac{(T-y)^n}{n!}\; \{\hat{C},F\}_{(n)}
\ee
where $\{\hat{C},F\}_{(0)}=F,\; 
\{\hat{C},F\}_{(n+1)}=\{\hat{C},\{\hat{C},F\}\}$ is the iterated Poisson 
bracket. The direct evaluation of the infinite series is quite 
non-trivial. 

However, we may avoid the direct evaluation whenever 
we have a complete set of Dirac observables at our disposal as follows:
Suppose that $(x,p_x)$ are the true degrees of freedom, $(y,p_y)$ the 
gauge degrees of freedom, the constraints 
are given in the form $\hat{C}=p_y+h(x,p_x,y)$, the gauge fixing 
condition are given in the form $G=y-k(T)$ with $T$ dependent constants
$k(T)$  
and $D$ a complete set of Dirac observables,
i.e. their Hamiltonian vector fields are linearly independent on 
the constraint surface $\hat{C}=0$. Now 
for any function $F=F(x,p_x,y,p_y)$ on the full phase space we have the 
identity (see \cite{22} and references therein)
\be \label{e.31}
O_F(T):=[e^{\{s\cdot \hat{C},.\}}\cdot F]_{s=k(T)-y}=
F(O_x(T),O_{p_x}(T),T,-h(O_x(T),O_{p_x}(T),T))
\ee
Applied to the system of Dirac observables $O_D(T)=D$ we thus find the 
relations
\be \label{e.32}
D(x,p_x,y,-h(x,p_x,y))=D(O_x(T),O_{p_x}(T),T,-h(O_x(T),O_{p_x}(T),T))
\ee
which can be solved algebraically for $(O_x(T), O_{p_x}(T))$.

Applied to our system and using the Dirac observables of the 
previous section we find with $Q:=O_x(T),\; P:=O_{p_x}(T)$
\ba \label{e.33}
E(x,p_x,y) &=& 
\frac{1}{g}(8\frac{e^{2(x+y)}}{p_x}+\frac{p_x}{2}
=\frac{1}{g}(8\frac{e^{2(Q+T)}}{P}+\frac{P}{2})
\nonumber\\
c(x,p_x,y) &=& 
-16 \frac{e^{2x+y}}{p_x}
=-16 \frac{e^{2Q+T}}{P}
\ea
which can be solved for
\ba \label{e.34}
P &=& p_x+16\; \frac{e^{2(x+y)}}{p_x}(1-e^{T-y})
\nonumber\\
e^{2Q} &=& e^{2x}\;e^{-(T-y)}(1+16\frac{e^{2(x+y)}}{p_x^2}(1-e^{T-y}))
\ea
This maybe Taylor expanded in powers of $y-T$ thus providing 
explicit formulae for the iterated Poisson brackets. In particular 
the zeroth order gives $Q=x, P=p_x$ as it should be. 

The physical Hamiltonian, i.e. the Dirac observable, 
that drives the evolution of the relational observables is 
given by \cite{22}
\be \label{e.35}
H(x,p_x,y)=O_h(T)=h(x=O_x(T),p_x=O_{p_x}(T),y=T) 
\ee
and coincides with $H_T(x,p_x)$ at the gauge cut $y=T$.

\end{appendix}

\end{document}